\pgfplotsset{compat=1.5}
\newtheorem{theorem}{Theorem}[section]
\newtheorem{corollary}[theorem]{Corollary}
\newtheorem{lemma}[theorem]{Lemma}
\newtheorem{definition}[theorem]{Definition}
\newtheorem{fact}[theorem]{Fact}
\newenvironment{proofof}[1]{\begin{trivlist} \item {\bf Proof
#1:~~}}
  {\qed\end{trivlist}}
\newcommand{\namedref}[2]{\hyperref[#2]{#1~\ref*{#2}}}
\newcommand{\thmlab}[1]{\label{thm:#1}}
\newcommand{\thmref}[1]{\namedref{Theorem}{thm:#1}}
\newcommand{\lemlab}[1]{\label{lem:#1}}
\newcommand{\lemref}[1]{\namedref{Lemma}{lem:#1}}
\newcommand{\corlab}[1]{\label{cor:#1}}
\newcommand{\corref}[1]{\namedref{Corollary}{cor:#1}}
\newcommand{\seclab}[1]{\label{sec:#1}}
\newcommand{\secref}[1]{\namedref{Section}{sec:#1}}
\newcommand{\factlab}[1]{\label{fact:#1}}
\newcommand{\factref}[1]{\namedref{Fact}{fact:#1}}
\newcommand{\figlab}[1]{\label{fig:#1}}
\newcommand{\figref}[1]{\namedref{Figure}{fig:#1}}
\newcommand{\alglab}[1]{\label{alg:#1}}
\renewcommand{\algref}[1]{\namedref{Algorithm}{alg:#1}}
\newcommand{\deflab}[1]{\label{def:#1}}
\newcommand{\defref}[1]{\namedref{Definition}{def:#1}}
\newcommand{\frameref}[1]{\namedref{Famework}{frame:#1}}
\newcommand{\framelab}[1]{\label{frame:#1}}
\newcommand{\Zhao}[1]{{}}
\newcommand{\SetPL}{\textbf{\textsf{PL}}}
\newcommand{\PPr}[1]{\ensuremath{\mathbf{Pr}\left[#1\right]}}
\newcommand{\Ex}[1]{\ensuremath{\mathbb{E}\left[#1\right]}}
\renewcommand{\O}[1]{\ensuremath{\mathcal{O}\left(#1\right)}}
\newcommand{\tO}[1]{\ensuremath{\tilde{\mathcal{O}}\left(#1\right)}}
\newcommand{\eps}{\varepsilon}
\newcommand{\Var}[1]{\ensuremath{\text{Var}\left[#1\right]}}
\def \calA    {\mdef{\mathcal{A}}}
\def \calB    {\mdef{\mathcal{B}}}
\def \calE    {\mdef{\mathcal{E}}}
\def \calL    {\mdef{\mathcal{L}}}
\def \calP    {\mdef{\mathcal{P}}}
\def \calQ    {\mdef{\mathcal{Q}}}
\def \bA    {\mdef{\mathbf{A}}}
\def \bE    {\mdef{\mathbf{E}}}
\def \be    {\mdef{\mathbf{e}}}
\def \bu    {\mdef{\mathbf{u}}}
\def \bv    {\mdef{\mathbf{v}}}
\def \bi    {\mdef{\mathbf{i}}}
\newcommand{\mdef}[1]{{\ensuremath{#1}}\xspace}  
\DeclareMathOperator*{\median}{median}
\DeclareMathOperator*{\argmax}{argmax}
\DeclareMathOperator*{\polylog}{polylog}
\DeclareMathOperator*{\poly}{poly}
\newcommand{\ignore}[1]{}
\newif\ifnotes\notestrue 
\newcommand{\samson}[1]{\textcolor{blue}{{\bf (Samson:} {#1}{\bf ) }} \marginpar{\tiny\bf
             \begin{minipage}[t]{0.5in}
               \raggedright S:
            \end{minipage}}}
\newcommand{\shenghao}[1]{\textcolor{red}{{\bf (Shenghao:} {#1}{\bf ) }} \marginpar{\tiny\bf
             \begin{minipage}[t]{0.5in}
               \raggedright S:
            \end{minipage}}}
\newcommand{\david}[1]{\textcolor{purple}{{\bf (David:} {#1}{\bf ) }} \marginpar{\tiny\bf
             \begin{minipage}[t]{0.5in}
               \raggedright D:
            \end{minipage}}} 
\newcommand{\samson}[1]{}
\newcommand{\shenghao}[1]{}
\newcommand{\david}[1]{}
\renewcommand*{\@fnsymbol}[1]{\textcolor{mahogany}{\ensuremath{\ifcase#1\or *\or \dagger\or \ddagger\or
 \mathsection\or \triangledown\or \mathparagraph\or \|\or **\or \dagger\dagger
   \or \ddagger\ddagger \else\@ctrerr\fi}}}
\providecommand{\email}[1]{\href{mailto:#1}{\nolinkurl{#1}\xspace}}
\definecolor{mahogany}{rgb}{0.75, 0.25, 0.0}
\definecolor{darkblue}{rgb}{0.0, 0.0, 0.55}
\definecolor{darkpastelgreen}{rgb}{0.01, 0.75, 0.24}
\definecolor{darkgreen}{rgb}{0.0, 0.2, 0.13}
\definecolor{darkgoldenrod}{rgb}{0.72, 0.53, 0.04}
\definecolor{darkred}{rgb}{0.55, 0.0, 0.0}
\definecolor{forestgreenweb}{rgb}{0.13, 0.55, 0.13}
\definecolor{greencss}{rgb}{0.0, 0.5, 0.0}
\definecolor{bleudefrance}{rgb}{0.19, 0.55, 0.91}
\definecolor{forestgreen(traditional)}{rgb}{0.0, 0.27, 0.13}
  \DeclareFontShape{T1}{lmr}{m}{scit}{<->ssub*lmr/m/scsl}{}%
\begin{document}
\allowdisplaybreaks

\begin{titlepage}
\title{\texorpdfstring{$L_p$}{Lp} Sampling in Distributed Data Streams with Applications to Adversarial Robustness}
\author{
Honghao Lin \\ Carnegie Mellon University \\ \email{honghaol@andrew.cmu.edu} 
\and
Zhao Song \\ UC Berkeley \\ \email{magic.linuxkde@gmail.com} 
\and 
David P. Woodruff \\ Carnegie Mellon University \\ \email{dwoodruf@andrew.cmu.edu}
\and
Shenghao Xie \\ Texas A\&M University \\ 
\email{xsh1302@gmail.com}
\and
Samson Zhou \\ Texas A\&M University \\ 
\email{samsonzhou@gmail.com}
}
\date{}
\maketitle

\begin{abstract}
In the distributed monitoring model, a data stream over a universe of size $n$ is distributed over $k$ servers, who must continuously provide certain statistics of the overall dataset, while minimizing communication with a central coordinator. In such settings, the ability to efficiently collect a random sample from the global stream is a powerful primitive, enabling a wide array of downstream tasks such as estimating frequency moments, detecting heavy hitters, or performing sparse recovery. Of particular interest is the task of producing a perfect $L_p$ sample, which given a frequency vector $f \in \mathbb{R}^n$, outputs an index $i$ with probability $\frac{f_i^p}{\|f\|_p^p}+\frac{1}{\mathrm{poly}(n)}$. In this paper, we resolve the problem of perfect $L_p$ sampling for all $p\ge 1$ in the distributed monitoring model. Specifically, our algorithm runs in $k^{p-1} \cdot \mathrm{polylog}(n)$ bits of communication, which is optimal up to polylogarithmic factors. 

Utilizing our perfect $L_p$ sampler, we achieve adversarially-robust distributed monitoring protocols for the $F_p$ moment estimation problem, where the goal is to provide a $(1+\varepsilon)$-approximation to $f_1^p+\ldots+f_n^p$. Our algorithm uses $\frac{k^{p-1}}{\varepsilon^2}\cdot\mathrm{polylog}(n)$ bits of communication for all $p\ge 2$ and achieves optimal bounds up to polylogarithmic factors, matching lower bounds by Woodruff and Zhang (STOC 2012) in the non-robust setting. Finally, we apply our framework to achieve near-optimal adversarially robust distributed protocols for central problems such as counting, frequency estimation, heavy-hitters, and distinct element estimation. 
\end{abstract}
\setcounter{page}{0}
\thispagestyle{empty}
\end{titlepage}

\section{Introduction}
As modern applications often require evolving data to be collected and stored across multiple servers, the \emph{distributed monitoring model} has emerged as an increasingly important setting. 
In this model, there exist $k$ servers and a central coordinator, with two-way communication links between the coordinator and each server. 
The network iteratively receives updates so that at each time $t$, server $j\in[k]$ has a dataset $f^{(t)}(j)$. 
The goal of the coordinator is to aggregate statistics of the overall dataset, which comprises of the union of the $k$ datasets $f^{(t)}(1),\ldots,f^{(t)}(k)$ at each time $t$, while minimizing the total amount of communication between the sites and the servers, e.g., due to power constraints in sensors networks~\cite{EstrinGHK99,juang2002energy,madden2005tinydb}, or bandwidth constraints in distributed systems~\cite{ribeiro2006bandwidth,ribeiro2006bandwidth2}. 
Motivated by its applications to the increasing prevalence of distributed systems, the distributed monitoring model has recently received significant study~\cite{DilmanR02,BabcockO03,CormodeGMR05,KeralapuraCR06,ArackaparambilBC09,CormodeMY11,TirthapuraW11,CormodeMYZ12,HuangYZ12,WoodruffZ12,YiZ13,ChenZ17,WuGZ20}. 

\paragraph{Perfect $L_p$ sampler.}
Sampling is a powerful and widely used technique for analyzing large-scale datasets, particularly in streaming and distributed settings~\cite{Vitter85,GemullaLH08,CohenDKLT11,CohenDKLT14}. 
By selecting a small representative subset of the data, sampling enables efficient computation and communication, and thus serves as a fundamental tool in applications such as data summarization~\cite{Vitter85,GemullaLH08, CohenDKLT11,CohenDKLT14}, network traffic monitoring~\cite{GilbertKMS01, EstanV03,MaiCSYZ06,HuangNGHJJT07,ThottanLJ10}, database management~\cite{HaasS92, LiptonN95,GibbonsM98,Haas16,CohenG19}, and data summarization~\cite{FriezeKV04,DeshpandeRVW06,AggarwalDK09,IndykMMM14,MahabadiIGR19, IndykMGR20,MahabadiRWZ20,cohen2020wor}. 
In this paper, we consider the task of sampling in the distributed monitoring model, where a distributed data stream is processed across $k$ servers. 
At each time step $t$, an update $i_t\in[n]$ arrives at some server $j_t\in[k]$. 
Each server $j$ maintains a local frequency vector $f^{(t)}(j) \in \mathbb{R}^n$, where $f^{(t)}_i(j)$ is the number of times $i$ has appeared in the data stream local to server $j$ up to time $t$. 
These collectively induce the global frequency vector $f^{(t)} = \sum_{j \in [k]} f^{(t)}(j)$. 
The goal is to sample an index $i\in[n]$ with probability proportional to some fixed weight function $G(x_i)$ and possibly obtain an approximation of the value $x_i$. 
Among the most impactful and extensively studied choices for the function $G(\cdot)$ is the family of functions $G(z) = |z|^p$ for $p>0$, referred to as an \emph{$L_p$ sampler}, which have been used in tasks such as identifying heavy hitters~\cite{WoodruffZ21b,JayaramWZ24,WoodruffXZ25}, low-rank approximation~\cite{DeshpandeRVW06,DeshpandeV06,MahabadiRWZ20}, $L_p$ norm and $F_p$ moment estimation~\cite{WoodruffZ21b,JayaramWZ24,WoodruffXZ25}, finding duplicates~\cite{JowhariST11}, mini-batch sampling in stochastic gradient descent~\cite{MahabadiWZ22}, cascaded norm approximation~\cite{JayramW09}, and data summarization, projective clustering, and volume maximization~\cite{DeshpandeRVW06,DeshpandeV07,CivrilM09,IndykMMM14,MahabadiIGR19,IndykMGR20,MahabadiRWZ20}. 
Hence, an actively growing body of work has studied $L_p$ sampling in the streaming model and its variants~\cite{MonemizadehW10,AndoniKO11,JowhariST11,BravermanOZ12,JayaramW18,cohen2020wor,JayaramWZ22,SwartworthWZ25,WoodruffXZ25}.  
We recall the definition of $L_p$ sampler, formalized to the distributed monitoring setting as follows: 
\begin{definition}[Continuous $L_p$ sampler]
\deflab{def:lp}
At each time $t\in[m]$, a continuous $L_p$ sampler returns an index $\bi^*$, such that for all $j \in [n]$, we have
\[\PPr{\bi^* = j} = \left(1\pm\eps\right)\cdot\frac{(f^{(t)}_i)^p}{\|f^{(t)}\|_p^p} \pm n^{-C},\]
where $f^{(t)}\in \mathbb{R}^{n}$ is the frequency vector at each time $t$ and $C>0$ can be chosen to be any constant. 
When $\eps = 0$, the sampler is said to be \emph{perfect}; otherwise, it is said to be \emph{approximate}. 
\end{definition}
Due to the distortions in the sampling probabilities, approximate $L_p$ samplers can cause an $\eps$-fractional bias, which may compound over the course of more intricate algorithms built upon these samplers, causing downstream inaccuracies.
Perfect $L_p$ samplers reduce these issues by ensuring small total variation distance from the desired joint distribution across even a polynomial number of samples. 

This is especially important in privacy-preserving applications, where a sampled index $i\in [n]$ is exposed to an external adversary $\calA$, with the goal of limiting leakage of global information about the dataset. 
It is possible for an approximate sampler to bias the sampling probabilities for a large set $S$ of indices by $(1+\eps)$ if a certain global property $\calP$ holds, and instead bias the sampling probabilities of $S$ by $(1-\eps)$ if $\calP$ does not hold. 
In this case, $\calA$ can deduce $\calP$ from $\O{\frac{1}{\eps^2}}$ samples. 
By contrast, perfect $L_p$ samplers can protect such global properties even with the release of a number of samples polynomial in $n$. 

Surprisingly, despite a long line of research in the distributed monitoring model, previous works have only been able to achieve perfect $L_p$ samplers in this setting for $p=1$~\cite{JayaramSTW19}. 
However, their techniques rely on the additive nature of the $L_1$ norm and do not generalize to other regimes of $p$. 
In fact, to the best of our knowledge, there do not even exist constructions of \emph{approximate} $L_p$ samplers for $p\neq 1$ that do not use non-trivial communication. 
This gap raises a fundamental question:
\begin{quote}
Is it possible to design continuous perfect $L_p$ samplers using total communication sublinear in the length of the data stream?
\end{quote}

\subsection{Our Contributions}
In this paper, we resolve the above question in the affirmative, introducing the first perfect $L_p$ sampler in the distributed monitoring setting.
\begin{restatable}{theorem}{thmlpsamplerub}
\thmlab{thm:lp:sampler:ub}
Given a distributed stream of length $m=\poly(n)$ on a universe of size $n$ across $k$ servers, there exists a continuous perfect $L_p$ sampler for $p\ge 1$ that uses $k^{\max(1,p-1)}\cdot\polylog(n)$ bits of communication. 
\end{restatable}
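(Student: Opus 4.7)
The plan is to reduce continuous perfect $L_p$ sampling to a distributed $L_\infty$-heavy hitter problem on a scaled virtual stream via precision sampling, and then to design a distributed tracking protocol for that heavy hitter problem with the claimed communication bound.

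First, using a pseudorandom generator that is shared across servers as public randomness, I would distribute independent $\mathrm{Exp}(1)$ variables $e_1,\ldots,e_n$ that each server can locally regenerate given an index $i$. Define the virtual vector $z \in \mathbb{R}^n$ by $z_i = f_i / e_i^{1/p}$, truncating coordinates with $e_i < n^{-C}$ to control moments (this only affects sampling probabilities additively by $n^{-C}$). By the standard precision sampling identity of Andoni--Krauthgamer--Onak and Jowhari--Saglam--Tardos, $\Pr[\arg\max_i |z_i| = j] = f_j^p / \|f\|_p^p$, so recovering $\arg\max_i |z_i|$ yields a perfect $L_p$ sample. Standard concentration of the top order statistic guarantees that with probability $1 - n^{-C}$ the top coordinate satisfies $|z_{i^*}|^2 \geq \|z\|_2^2 / \polylog(n)$ and separates from the tail, placing us in the heavy-hitter regime where \CountSketch-type recovery succeeds.

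Second, since $z = \sum_{j=1}^{k} z(j)$ with $z_i(j) = f_i(j)/e_i^{1/p}$, a \CountSketch\ of $z$ is by linearity the sum of $k$ local \CountSketch\ instances, each with $\polylog(n)$ buckets and repetitions. The task becomes how each server should report its local bucket updates to the coordinator within the $k^{\max(1,p-1)}\polylog(n)$ budget. I would adapt the Woodruff--Zhang (STOC 2012) thresholded-tracking framework originally developed for $F_p$ moment estimation: each server maintains its local contribution to each sketch bucket and transmits only when a bucket's magnitude crosses a new geometric threshold. The power-mean inequality $\|z\|_p^p \leq k^{p-1}\sum_{j}\|z(j)\|_p^p$ then caps the total number of threshold crossings across all servers and all buckets by $k^{p-1}\polylog(n)$. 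For $p=1$ the stronger additive decomposition $\|z\|_1 = \sum_j \|z(j)\|_1$ recovers the $\polylog(n)$ bound established by~\cite{JayaramSTW19}. For continuous operation we union-bound over the $m = \poly(n)$ time steps and reset all sketches (and regenerate the exponentials) whenever $\|f\|_p^p$ doubles, which happens at most $\polylog(n)$ times. Conditioned on the high-probability success event of \CountSketch\ recovery, the decoded argmax is distributed exactly as $f_j^p/\|f\|_p^p$; the residual failure probability is absorbed into the $n^{-C}$ additive slack of \defref{lp}. Crucially, the sketch's hash randomness is drawn independently of the $e_i$'s, so conditioning on recovery success does not bias the induced sampling distribution.

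The main obstacle I anticipate is the tight analysis of the thresholded tracker when $p > 2$: a single update to coordinate $i$, once scaled by $1/e_i^{1/p}$, may simultaneously cross thresholds in many buckets, and a naive accounting would yield a $k^p$ rather than $k^{p-1}$ factor. Achieving the tight exponent requires exploiting the fact that precision sampling concentrates most of $\|z\|_2^2$ on a small set of coordinates, so per-bucket crossings can be amortized against the \emph{global} $F_p$ of $z$ rather than the sum of local $F_p$ values. A secondary subtlety is bounding the communication for seeding the pseudorandom generator while preserving independence across the $n$ exponentials to the $1/\poly(n)$ error tolerance.
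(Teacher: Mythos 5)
There are two genuine gaps here, and each one is fatal on its own.

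First, the reduction to \CountSketch{} recovery breaks for $p>2$, which is exactly the regime where the $k^{p-1}$ bound is the point of the theorem. Your claim that $|z_{i^*}|^2\ge \|z\|_2^2/\polylog(n)$ with high probability is false: the max-stability of exponentials only guarantees that the maximizer is $\frac{1}{\polylog(n)}$-heavy in the \emph{$L_p$} sense, i.e., $z_{i^*}^p\ge \|z\|_p^p/\polylog(n)$. Taking $f=(1,\ldots,1)$, one has $z_{i^*}^2=\Theta(n^{2/p})$ while $\|z\|_2^2=\Theta(n)$, so for $p>2$ the maximizer carries only a $n^{2/p-1}$ fraction of the $L_2$ mass and a $\polylog(n)$-bucket \CountSketch{} cannot recover it. This is why the paper builds a bespoke $L_p$-heavy-hitter identification scheme (the level-set sampling against the global $L_1$-of-$L_p$ mass $U$ in \secref{sec:pl:recovery}) rather than a linear $L_2$ sketch; your power-mean accounting of threshold crossings lives downstream of a recovery primitive that does not exist in the form you assume.

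Second, and more fundamentally for \emph{perfectness}: the assertion that ``the sketch's hash randomness is drawn independently of the $e_i$'s, so conditioning on recovery success does not bias the induced sampling distribution'' is precisely the mistake that separates approximate from perfect samplers. The success event is a function of the hashes \emph{and} of the scaled vector $z$ — in particular of the gap between $z_{D(1)}$ and $z_{D(2)}$ — and that gap is correlated with the identity of the maximizer through the frequency vector (e.g., if $f=(100n,1,\ldots,1)$, then conditioned on index $2$ being the max, the gap is typically tiny and recovery/disambiguation typically fails). Conditioning on success therefore shifts each index's probability by a constant, not by $n^{-C}$. The paper spends most of its effort neutralizing exactly this: duplicating each coordinate $n^c$ times, running an explicit accept/reject statistical test on $\widehat{G_{D(1)}}$ versus $2\mu\widehat{G_{D(2)}}$, decomposing the test statistic via the anti-rank representation (\lemref{lem:anti:rank:indep}, \lemref{lem:sum_independent}) into an anti-rank-independent part plus an $n^{-\Theta(c)}$ perturbation, and introducing a second layer of exponential scaling across servers so that the counting error itself does not depend on how the maximizer's mass is split among the $k$ sites. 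None of these ingredients appears in your proposal, and without them the output distribution is only $(1\pm\Theta(1))\cdot f_i^p/\|f\|_p^p$, i.e., an approximate sampler. Relatedly, you never explain how to extract the \emph{exact} argmax continuously when the top two scaled coordinates are close (a task that in general costs $\Omega(m)$ communication); the paper's resolution — reject such instances and amortize the extra accuracy via the anti-concentration \lemref{lem:anti} — is again tied to showing the rejection probability is index-independent.
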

We emphasize that by comparison, previous constructions were not even able to achieve \emph{approximate} $L_p$ samplers for $p\neq 1$ using $o(m)$ communication, whereas we are able to achieve perfect $L_p$ samplers. 
Thus our work represents a substantial contribution across the total communication of the protocol, the distortion of the $L_p$ sampling probabilities, and the regime of $p$. 
Moreover, our $L_p$ sampler can be implemented to also output an unbiased estimate to $f_{\bi^*}^{(t)}$ for the sampled index $\bi^*$ at each time in the stream, with variance at most $(f_{\bi^*}^{(t)})^2$. 
We give a lower bound showing that any perfect $L_p$ sampler for a static dataset distributed across $k$ servers must necessarily use $\Omega(k^{p-1})$ bits of communication. 
\begin{restatable}{theorem}{thmlpsamplerlb}
\thmlab{thm:lp:sampler:lb}
Any perfect $L_p$ sampler for $p\ge 1$ must use $\Omega(k^{\max(1,p-1)})$ bits of communication. 
\end{restatable}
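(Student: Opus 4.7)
The plan is to prove the two regimes separately, deriving each via reduction from a hard communication problem in the coordinator model.

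For the $\Omega(k)$ lower bound (binding for $p \in [1,2]$), I would reduce from the $k$-party OR promise problem in the coordinator model, whose communication complexity is $\Omega(k)$. Encode each server's input bit $b_j \in \{0,1\}$ by having server $j$ upload a unit of frequency to coordinate $1$ if $b_j = 1$ and to coordinate $j+1$ otherwise. Letting $s = \sum_j b_j$, the perfect $L_p$ sample returns coordinate $1$ with probability exactly $s^p / (s^p + k - s)$. Under the promise $s \in \{0, k\}$ this probability is either $0$ or $1$, so a single sample from any perfect $L_p$ sampler solves the underlying OR promise problem, forcing the sampler's communication to be $\Omega(k)$.

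For the $\Omega(k^{p-1})$ lower bound (binding for $p > 2$), I would reduce from $k$-party set disjointness in the coordinator model on an appropriately sized universe, whose communication complexity is $\Omega(k^{p-1})$ after padding and suitable structural promises. Each server $j$ lifts its set $S_j \subseteq [n]$ into unit frequencies on the corresponding coordinates. One calibrates the set sizes and universe so that in the YES instance (a planted common element $i^\ast$), coordinate $i^\ast$ has frequency $k$ and dominates $F_p$ by a constant factor, while in the NO instance every coordinate has frequency at most $k-1$ and the $L_p$ mass is spread uniformly. A perfect $L_p$ sample therefore returns the planted heavy coordinate with probability $\Omega(1)$ under YES, but any fixed coordinate is returned only with probability $o(1)$ under NO. A constant number of samples (checking whether the same coordinate repeats) distinguishes the two regimes with high probability, completing the reduction.

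The main obstacle is calibrating the disjointness instance so that the $F_p$ contribution of the planted heavy coordinate in the YES case dominates the aggregate $F_p$ contribution of the remaining coordinates by a constant factor, which constrains the set sizes $|S_j|$ and universe size $n$ simultaneously. To avoid this balancing step entirely, one can alternatively exploit the unbiased estimate $\hat{f}_{\bi^*}$ of $f_{\bi^*}$ guaranteed by \thmref{lp:sampler:ub} to convert $\polylog(n)$ perfect $L_p$ samples into a constant-factor estimator of $F_p$ (via appropriately reweighted averages), and then plug into the $\Omega(k^{p-1})$ lower bound of Woodruff and Zhang (STOC 2012) for constant-factor $F_p$ estimation in the static coordinator model, recovering the claimed bound up to polylogarithmic factors.
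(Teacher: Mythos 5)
Your second reduction is essentially the paper's proof. The paper instantiates multiparty set disjointness on a universe of size exactly $n=k^p$, where the standard ``uniquely intersecting versus pairwise disjoint'' promise makes the calibration you flag as the main obstacle automatic: in the intersecting case the planted coordinate contributes $k^p$ to $F_p$ while the remaining $k^p-1$ coordinates each have frequency at most $1$ and contribute at most $k^p-1$ in total, so the planted coordinate is sampled with probability at least $1/3$; a constant number of independent samplers plus an $O(k)$-bit query to verify the sampled indices then solves disjointness, yielding $\Omega(n/k)=\Omega(k^{p-1})$. Your fallback route via $F_p$ estimation, however, does not prove the stated theorem: the lower bound must hold for an \emph{arbitrary} perfect $L_p$ sampler, which is only obligated to output an index, whereas your reduction additionally assumes the sampler returns an unbiased estimate of $f_{i^*}$ (a feature of the paper's upper bound, not of the object being lower-bounded); moreover, converting samples with frequency estimates into a constant-factor $F_p$ estimate requires knowledge of $\|f\|_p^p$ to reweight, which is circular.

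The first reduction has a genuine gap. Under the promise $s\in\{0,k\}$ all servers hold the same bit, so that promise problem is solvable with $O(1)$ communication (the coordinator queries a single server); it cannot certify an $\Omega(k)$ lower bound. If you drop the promise to recover the genuinely $\Omega(k)$-hard $k$-party OR, the reduction breaks at the other end: when $s=1$, a perfect $L_p$ sampler returns coordinate $1$ with probability only $s^p/(s^p+k-s)=1/k$ for every $p$, so a single sample, or any constant number of samples, no longer decides OR, and amortizing the $\Omega(k)$ total cost over the $\Theta(k)$ samples needed gives no per-sample bound. To be fair, the paper's own proof also establishes only the $\Omega(k^{p-1})$ term and does not separately argue $\Omega(k)$ (the binding term for $1\le p<2$), so you are right that an extra argument is needed in that regime, but the one you propose does not work as stated.
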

Hence, \thmref{thm:lp:sampler:lb} shows that our continuous perfect $L_p$ sampler in \thmref{thm:lp:sampler:ub} is optimal up to $\polylog(n)$ factors. 
Given its role as a core algorithmic primitive, our result enables a range of new protocols in the distributed monitoring model, which we now describe. 

\paragraph{Moment estimation and adversarial robustness.} 
A particular application where $L_p$ sampling is often used is $F_p$ moment estimation, where the goal is to output a $(1+\eps)$-approximation to $\|f^{(t)}\|_p^p:=\sum_{i\in[n]}(f^{(t)}_i)^p$. 
The $F_p$ moment estimation problem has a large number of applications. 
For $p\in[1,2]$, it has been used for computer networking in the context of using entropy estimation to identify port scanning attacks~\cite{FeigenbaumKSV02,LakhinaCD05,XuZB05,WoodruffZ21b} and to measure the entropy of origin-destination pairs~\cite{ZhaoLOSWX07,HarveyNO08}. 
For $p=2$, it has been used to estimate join and self-join sizes~\cite{AlonGMS02} and to detect network anomalies~\cite{KrishnamurthySZC03,ThorupZ04}. 
Furthermore, the frequency moment estimation problem is closely related to the norm estimation problem, where the goal is to compute a $(1+\eps)$-approximation to $\|f\|_p$, as a $(1+\eps)$-approximation algorithm to one of these problems can be used to obtain a $(1+\eps)$-approximation to the other for constant $p>0$, after an appropriate rescaling of $\eps$. 
In particular, larger values of $p$ provide a smooth interpolation toward $L_\infty$, which captures the maximum frequency and is useful for identifying the most dominant elements in a dataset. 
Thus, beginning with the seminal work of \cite{AlonMS99}, there has been a rich body of literature studying $F_p$ moment estimation in various sublinear algorithm models, e.g.,~\cite{Indyk00,ChakrabartiKS03,Bar-YossefJKS04,Woodruff04,IndykW05,Li08,KaneNW10,KaneNPW11,AndoniKO11,CormodeMY11,WoodruffZ12,KannanVW14,BlasiokDN17,WoodruffZ21a,WoodruffZ21b,EsfandiariKMWZ24,JayaramWZ24}.

Recent works observed that in many big data applications, future inputs may not necessarily be independent of previous outputs, e.g., in repeated interactions with a database, future queries may depend on answers to previous queries. 
As a result, \cite{BenEliezerJWY20} introduced the adversarially robust streaming model, which has garnered a significant amount of attention~\cite{Ben-EliezerY20,HassidimKMMS20,AlonBDMNY21,BravermanHMSSZ21,KaplanMNS21,WoodruffZ21b,AjtaiBJSSWZ22,BeimelKMNSS22,Ben-EliezerEO22,Ben-EliezerJWY22,ChakrabartiGS22,AvdiukhinMYZ19,AssadiCGS23,AttiasCSS23,CherapanamjeriS23,DinurSWZ23,GribelyukLWYZ24,WoodruffZ24,GribelyukLWYZ25,AndoniHKO25,CohenSS25,Cohen-AddadXWZ25} and was recently adapted by \cite{XiongZH23} to the distributed monitoring setting. 
The model can be captured by the following two-player game between a distributed monitoring protocol $\calP$ and a source $\calA$ of adaptive or adversarial input to $\calP$. 
After a fixed query function $\calQ$ is given, the game proceeds over $m$ rounds, and in the $t$-th round:
\begin{enumerate}
\item
The source $\calA$ computes a pair $s_t=(u_t,i_t)$, where $u_t\in[n]$ and $i_t\in[k]$, which increments coordinate $u_t$ at server $i_t$.
\item
The protocol $\calP$ processes $(u_t,i_t)$ and outputs its current answer $Z_t$.
\item
The source $\calA$ observes and records the response $Z_t$.
\end{enumerate}
The goal of the protocol $\calP$ is to produce a correct answer $Z_t$ to the query function $\calQ$ on the dataset induced by the adaptive distributed stream $\{s_1,\ldots,s_t\}$ induced by the source $\calA$, across all times $t\in[m]$. 
\cite{XiongZH23} observed that the existing analysis of many randomized distributed monitoring protocols fail because the randomness is assumed to be independent of the input, an assumption that no longer due to the adaptive nature of the input. 
In fact, \cite{XiongZH23} was only able to achieve a robust protocol for distributed counting, using a novel formulation of partial differential privacy and significant technical analysis. 

We use our continuous perfect $L_p$ sampler to achieve an adversarially robust distributed monitoring protocol for $F_p$ estimation: 
\begin{theorem}
\thmlab{thm:fp:robust}
Given a distributed stream of length $m=\poly(n)$ on a universe of size $n$ across $k$ servers, a parameter $p\ge 2$, and an accuracy parameter $\eps\in(0,1)$, there exists an adversarially robust algorithm that outputs a $(1+\eps)$-approximation to the $F_p$ moment at all times, using $\frac{k^{p-1}}{\eps^2}\cdot\polylog(n)$ bits of communication.
\end{theorem}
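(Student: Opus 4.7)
The plan is to use the perfect $L_p$ sampler of \thmref{lp:sampler:ub} as a black-box workhorse: I would run $R=\tilde O(\eps^{-2})$ parallel independent instances continuously, so that at each time $t$ we obtain a collection of samples $\{(\bi^*_j,\tilde f_j)\}_{j=1}^R$ in which each $\bi^*_j$ is distributed within $1/\poly(n)$ total variation of the exact $L_p$ distribution induced by $f^{(t)}$, and $\tilde f_j$ is an unbiased estimator of $f^{(t)}_{\bi^*_j}$ of variance at most $(f^{(t)}_{\bi^*_j})^2$. From these samples, I would compute a $(1+\eps)$-approximation of $F_p$ via the standard $L_p$-sampling-to-$F_p$-estimation reduction.

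Concretely, the reduction (going back to Monemizadeh and Woodruff) produces from each $L_p$ sample an unbiased estimate of $F_p$ whose variance is $O(F_p^2)$; averaging $\tilde O(\eps^{-2})$ such estimates gives a $(1\pm\eps)$-approximation in expectation, and a median-of-means amplification boosts the failure probability to $1/\poly(n)$. Instantiating this with our sampler costs $R\cdot k^{p-1}\polylog(n)=\frac{k^{p-1}}{\eps^2}\polylog(n)$ bits in total, matching the stated bound. A parallel coarse constant-factor $F_p$ estimate (also implementable via our sampler within the same budget) can be used to calibrate any heavy/light threshold that the reduction requires.

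For adversarial robustness, I would argue via a hybrid/coupling. Replace each sampler invocation at each time step with an \emph{ideal} oracle that returns an exact $L_p$ sample together with an exact unbiased estimate of $f^{(t)}_{\bi^*}$. Over the $mR=\poly(n)$ total invocations, the joint output distribution changes by at most $1/\poly(n)$ in total variation, by choosing the constant $C$ in \defref{lp} large enough; so it suffices to analyze correctness under the ideal oracle. Against the oracle, each sample at time $t$ is a fresh draw depending only on $f^{(t)}$ and the oracle's independent randomness, with no internal state correlated across time. Thus the response $Z_t$ that the adversary observes is, conditional on its own input history, distributed exactly as in the non-adaptive setting, and the adversary gains no information it could not simulate on its own. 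The correctness analysis---a union bound over the $m$ time steps plus a Chebyshev/Chernoff tail bound---therefore extends verbatim to the adaptive setting.

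The main obstacle I anticipate is ensuring that the reduction from $L_p$ sampling to $F_p$ estimation needs only $\tilde O(\eps^{-2})$ samples from \emph{our} sampler, which outputs $\tilde f$ with variance $(f^{(t)}_{\bi^*})^2$ rather than a multiplicative $(1\pm\eps)$-approximation. Handling this likely requires either repeatedly re-drawing the per-sample estimator of $f_{\bi^*}$ (exploiting independence across the $R$ copies of the sampler) or a careful heavy/light decomposition in which heavy coordinates are hit many times and their aggregate variance is driven down by concentration; raising to the $p$-th power for $p\ge 2$ is especially delicate because it amplifies any tail behavior in $\tilde f$, so some truncation step is likely unavoidable. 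Carrying out this variance analysis tightly, while keeping the union bound over the $m=\poly(n)$ times within a $\polylog$ overhead, is where the bulk of the technical work lies.
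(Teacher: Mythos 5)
There is a genuine gap, and it sits at the heart of the theorem: your robustness argument. Your hybrid step replaces each sampler invocation at each time $t$ with an ideal oracle returning a \emph{fresh} exact $L_p$ sample, and claims the joint output distribution over all $mR$ invocations moves by only $1/\poly(n)$. This is false for a \emph{continuous} sampler. The guarantee of \thmref{lp:sampler:ub} is that the \emph{marginal} distribution of the output at each fixed time is within $1/\poly(n)$ of the $L_p$ distribution; the outputs of a single continuous instance at consecutive times are produced from the \emph{same} exponential scaling variables $\be_1,\ldots,\be_n$ (held fixed for an entire round) and are therefore extremely correlated --- typically the same index is returned for long stretches. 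An oracle returning independent draws at each time step is at total variation distance close to $1$ from this process, so the hybrid does not go through. This is not a technicality: by observing $Z_t$ over time the adversary learns which coordinates have small $\be_i$ across your $R$ instances and can adaptively route future updates to or away from them, which is exactly the attack that breaks non-robust randomized protocols. (Getting genuinely fresh randomness per time step would require restarting the samplers each step, costing $m\cdot k^{p-1}\cdot\polylog(n)$ communication.) A secondary but real issue is the reduction itself: to turn an $L_p$ sample $i$ into an unbiased estimate of $F_p$ you must rescale by $1/p_i=\|f\|_p^p/f_i^p$, which presupposes the quantity you are estimating; the paper instead extracts $\|f\|_p^p$ from the \emph{value} of the scaled maximum via max-stability ($\max_i f_i^p/\be_i = \|f\|_p^p/E$) combined with a geometric-mean estimator, which is not the ``standard'' reduction you invoke.

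The paper's route is structurally different and is designed precisely to dodge this correlation problem. It uses a difference estimator framework: the stream is cut into $\O{\log n}$ rounds where $F_p$ doubles, each round is decomposed into differences $F(u+v_{\le r})-F(u+v_{\le r-1})$ at geometrically varying granularities, and each difference is estimated by a separate sketch (built from a \emph{mixed} $L_p$ sampler over $u$ and $v$, a bivariate truncated Taylor estimator for the inverse sampling probability, and an entrywise difference estimator). Robustness comes from the fact that the published output only changes at $\tO{1/\eps}$ discrete switch events per round, and every sketch's internal randomness stays hidden until the sketch is frozen and its contribution fixed; fresh randomness is introduced at each switch. The granularity decomposition is what keeps the total cost at $k^{p-1}\eps^{-2}\polylog(n)$ rather than the $\eps^{-3}$ that naive sketch switching would incur. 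If you want to salvage your approach, you would need to replace the per-time-step freshness claim with an argument of this switching type, at which point you are essentially reconstructing the paper's framework.
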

\thmref{thm:fp:robust} is the first result to achieve adversarial robustness for $F_p$ moment estimation in the distributed monitoring setting. 
Moreover, we remark that in the one-shot setting, where each server has a static vector, it is known that $\Omega\left(\frac{k^{p-1}}{\eps^2}\right)$ bits of communication is necessary~\cite{EsfandiariKMWZ24}. 
Therefore, not only is \thmref{thm:fp:robust} optimal up to polylogarithmic factors, but it shows that perhaps surprisingly, there is no overhead in either the $\frac{1}{\eps}$ or the $k$ factors to achieve adversarial robustness. 

Crucially, we achieve \thmref{thm:fp:robust} by using our perfect $L_p$ sampler to design a difference estimator for $F_p$ moment estimation. 
For the purposes of moment estimation, a difference estimator is an algorithmic primitive that achieves an estimate of $F_p(\bu+\bv)-F_p(\bu)$ with additive error $\eps\cdot F_p(\bu)$, given two frequency vectors $\bu,\bv\in\mathbb{R}^n$, under the promise that $F_p(\bu+\bv)-F_p(\bu)\le\gamma\cdot F_p(\bu)$. 
Difference estimators are useful because, due to the upper bound on the difference, it is sometimes possible to achieve more efficient protocols based on the value of $\gamma$. 
However, their constructions and analysis are often intricate, e.g., the design of efficient constructions for $F_p$ moment estimation in the streaming model for non-integer $p>2$ currently remains to be an open question~\cite{WoodruffZ21b}. 
Thus, our techniques may be of independent interest. 

Along the way, we develop a framework for integrating our difference estimator toward achieving adversarial robustness in the distributed monitoring model. 
In addition to $F_p$ moment estimation, our framework can also be used to achieve robust protocols for the counting problem, i.e., $p=1$, the distinct element estimation problem, i.e., $p=0$, and the $L_2$-heavy hitters problem: 
\begin{theorem}
\thmlab{thm:main_three}
Given an accuracy parameter $\eps\in(0,1)$, there exists an adversarially robust distributed monitoring protocol that achieves an $(1+\eps)$-approximation to:
\begin{enumerate}
\item 
The distinct elements problem, i.e., $F_0$ estimation, using $\frac{k}{\eps^2}\cdot\polylog(n)$ bits of communication. 
\item 
The counting problem, i.e., $F_1$ estimation, using $\left(k+\frac{\sqrt{k}}{\eps}\right)\cdot\polylog(n)$ bits of communication.  
\item 
$L_2$-heavy hitters, using $\frac{k}{\eps^2}\cdot\polylog(n)$ bits of communication. 
\end{enumerate}
\end{theorem}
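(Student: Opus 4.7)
The plan is to reuse the same difference-estimator framework that drives \thmref{thm:fp:robust}, specialized to each of the three functionals. For a given target functional $F$, the framework maintains three ingredients at all times: (i) a base protocol giving a constant-factor approximation of $F(f^{(t)})$, (ii) a difference estimator that, given a snapshot $\bu$ of $f^{(t)}$ at some checkpoint and the subsequent incremental update vector $\bv$, returns a value within additive $\eps\cdot F(\bu)$ of $F(\bu+\bv)-F(\bu)$ whenever $F(\bu+\bv)-F(\bu)\le\gamma\cdot F(\bu)$, and (iii) a geometrically graded chain of $O(\log n)$ such difference estimators indexed by $\gamma$. The base protocol triggers new checkpoints and a corresponding refresh of the difference estimator randomness whenever $F$ doubles, which happens at most $O(\log n)$ times over a stream of length $\poly(n)$; adversarial robustness is then obtained by the standard sketch-switching argument, since each layer's randomness is independent of the adversary's view until the associated checkpoint is announced.

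For distinct elements, the base protocol can be any $O(1)$-approximate distributed $F_0$ sketch using $k\cdot\polylog(n)$ bits, and the difference estimator can be implemented by a coordinated subsampling of the universe at rate $\Theta(\eps^2/F(\bu))$: each server reports only the distinct elements that fall into the sample, whose count rescales to a $(1\pm\eps)$-estimate of $F_0(\bu+\bv)-F_0(\bu)$. Summing $O(\log n)$ layers of cost $k\cdot\polylog(n)$ each and multiplying by the $1/\eps^2$ fresh estimators per layer yields $\frac{k}{\eps^2}\polylog(n)$. For counting, additivity gives $F_1(\bu+\bv)-F_1(\bu)=F_1(\bv)$, so the difference estimator is merely an $\eps\cdot F_1(\bu)$-additive continuous counter on the increment $\bv$; this is the Huang-Yi-Zhang protocol using $\frac{\sqrt{k}}{\eps}\polylog(n)$ bits, plus an additive $k\cdot\polylog(n)$ term to cover each server's single start-up message. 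For $L_2$-heavy hitters, we invoke \thmref{thm:lp:sampler:ub} with $p=2$ to draw $\Theta(1/\eps^2)$ independent perfect $L_2$ samples and use the unbiased estimate of $f^{(t)}_{\bi^*}$ promised by that theorem to identify items whose squared frequency exceeds $\eps^2\cdot\|f^{(t)}\|_2^2$; at $k\cdot\polylog(n)$ bits per sample and $O(\log n)$ checkpoints, the total cost is $\frac{k}{\eps^2}\polylog(n)$.

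The main obstacle will be verifying that the error of each difference estimator remains bounded by $\eps\cdot F(\bu)$ under the promise $F(\bu+\bv)\le(1+\gamma)F(\bu)$ even when the adversary adaptively selects $\bv$ after seeing the prior outputs of the protocol. We handle this by leveraging the defining property of a perfect sampler from \thmref{thm:lp:sampler:ub}: because the joint distribution over any $\poly(n)$ samples differs from the idealized product distribution by at most $n^{-C}$ in total variation, the adversary cannot distinguish the actual sketch from one drawn on an input that is independent of its adaptive transcript. A union bound over the $\polylog(n)$ checkpoints and the $\polylog(n)$ estimators used per checkpoint then preserves the unbiasedness and variance bounds of the difference estimators in the adversarial setting with only polylogarithmic overhead, after which routine Chernoff and Bernstein arguments compose the per-layer estimates into the final $(1+\eps)$-approximation claimed in each of the three parts of the theorem.
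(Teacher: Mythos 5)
Your treatment of parts (1) and (2) follows the paper's route: a subsampling-based $(\gamma,\eps)$-difference estimator for $F_0$, and for $F_1$ the additivity $F_1(u+v)-F_1(u)=F_1(v)$ combined with the \cite{HuangYZ12} counter. Two accounting points need care, though. First, the subsampling rate for a $(\gamma,\eps)$ $F_0$-difference estimator must leave $\Theta(\gamma/\eps^2)$ survivors of $u$, i.e.\ rate $\Theta\bigl(\tfrac{\gamma}{\eps^2 F_0(u)}\bigr)$, not $\Theta(\eps^2/F_0(u))$. Second, for $F_1$ you cannot charge each difference estimator $\tfrac{\sqrt{k}}{\eps}\polylog(n)$ bits: there are $\Theta(1/\eps)$ of them per round, which would give $\tfrac{\sqrt{k}}{\eps^2}$. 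The correct charge is that a $(\gamma,\eps)$ estimator samples at rate $\tfrac{\sqrt{k}}{\eps u}$ over a sub-stream of at most $\gamma u$ updates, costing $\tfrac{\gamma\sqrt{k}}{\eps}\polylog(n)$, so each granularity level totals $\tfrac{\sqrt{k}}{\eps}\polylog(n)$; and one still must avoid spending $\tfrac{k}{\eps}$ bits announcing the $\Theta(1/\eps)$ block boundaries to all servers, which the paper handles with a lazy-update scheme (a server learns the current block's start time only when it initiates contact).

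The genuine gap is in part (3) and in your closing robustness argument. The paper does not use the perfect $L_2$ sampler for heavy hitters at all: it runs the non-robust $L_2$-heavy-hitter protocol of \cite{HuangXZW25} as the ``difference estimator'' at granularity $r$ with accuracy $2^r\eps^2$ on each sub-interval, using the observation that an item that is $\eps$-heavy at the end of a round must be heavy either with respect to the prefix or with respect to one of the geometric sub-intervals, and robustness follows because the reported list and frequency estimates are frozen between block boundaries and each new block uses fresh randomness. Your alternative --- draw $\Theta(1/\eps^2)$ continuous perfect $L_2$ samples and argue robustness from the $n^{-C}$ total-variation guarantee --- does not go through. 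The continuous sampler's guarantee (\defref{def:lp}) is a \emph{per-time marginal} on a fixed input; the samples reported at successive times all reuse the same exponential scalings $\be_i$, so the joint distribution of the output sequence is very far from a product of ideal $L_2$ distributions, and observing that sequence does leak the internal randomness (e.g., which $\be_i$ are small). An adaptive adversary can therefore correlate future updates with the sampler's state, and no TV bound on single-time marginals rules this out. Relatedly, robustness for $F_0$ and $F_1$ in the paper has nothing to do with perfect sampling --- it comes from output freezing plus per-block fresh randomness --- and the randomness must be refreshed at every block (every $\eps\cdot F$ increase of the output), not only when $F$ doubles as your first paragraph states; refreshing only at doublings would let the adversary adapt to the live difference estimators within a round. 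Finally, even ignoring robustness, your heavy-hitter cost analysis omits how the sample set is refreshed as new items become heavy during a round, which is exactly what the graded difference-estimator structure is for.
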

We remark that previous works have shown communication lower bounds of $\Omega\left(\frac{k}{\eps^2}\right)$ bits for distinct element estimation~\cite{WoodruffZ12}, $\Omega\left(k+\frac{\sqrt{k}}{\eps}\log n\right)$ bits for counting~\cite{HuangYZ12}, and $\Omega\left(\frac{k}{\eps^2}\right)$ bits for $L_2$-heavy hitters. 
Thus our results in \thmref{thm:main_three} conceptually show that there is no overhead to achieve adversarial robustness for these central problems, up to polylogarithmic factors. 
Our framework is quite general, utilizing difference estimators and existing non-robust distributed monitoring protocols in a black-box and intuitive manner, and thus may be amenable to a number of additional other problems. 
Previously, there were no known non-trivial robust distributed monitoring protocols for distinct element estimation and $L_2$-heavy hitters. 
For the problem of counting, our robust algorithm matches the communication cost of the protocol of \cite{XiongZH23}, though our algorithm and analysis may arguably be much simpler. 
We summarize these results in \figref{fig:results:ar}.

\begin{figure}[!htb]
\centering
{
\tabulinesep=1.1mm
\resizebox{\columnwidth}{!}{
\begin{tabu}{|c|c|c|c|}\hline
{\bf Problem} &  {\bf LB} & {\bf Prior non-robust UB} & {\bf Our Robust UB}  \\\hline\hline
Distinct elements & $\Tilde{\Omega}\left(\frac{k}{\eps^2}\right)$ \cite{WoodruffZ12,WoodruffZ14} & $\Tilde{{\cal O}}\left(\frac{k}{\eps^2}\right)$ \cite{CormodeMY11} & $\Tilde{{\cal O}}\left(\frac{k}{\eps^2}\right)$  \\\hline
Count-tracking & $\Tilde{\Omega}\left(k+\frac{\sqrt{k}}{\eps}\right)$ \cite{HuangYZ12} & $\Tilde{{\cal O}} \left(k+\frac{\sqrt{k}}{\eps}\right)$ \cite{HuangYZ12} & $\Tilde{{\cal O}} \left(k+\frac{\sqrt{k}}{\eps}\right)$  \\\hline
$F_p$ estimation, $p\ge 2$ & $\Tilde{\Omega}\left(\frac{k^{p-1}}{\eps^2}\right)$ \cite{WoodruffZ12} & $\Tilde{{\cal O}}\left(\frac{k^{p-1}}{\eps^2}\right)$ \cite{HuangXZW25} &  $\Tilde{{\cal O}}\left(\frac{k^{p-1}}{\eps^2}\right)$ \\\hline
$L_2$ heavy-hitters & $\Tilde{\Omega}\left(\frac{k}{\eps^2}\right)$ \cite{WoodruffZ12} & $\Tilde{{\cal O}}\left(\frac{k}{\eps^2}\right)$ \cite{WoodruffZ12} &  $\Tilde{{\cal O}}\left(\frac{k}{\eps^2}\right)$ \\\hline
\end{tabu}
}
}
\caption{Comparison between non-robust and robust $(1+\eps)$-approximation algorithms for distributed monitoring. Here, $\Tilde{{\cal O}}$ and $\Tilde{\Omega}$ hide the $\polylog(n/\eps)$ factors. {\bf UB} denotes the upper bound and {\bf LB} denotes the lower bound. Our results are from \thmref{thm:main_three}.}
\figlab{fig:results:ar}
\end{figure}

\paragraph{Comparison with \cite{HuangXZW25}.} 
In concurrent and independent work, \cite{HuangXZW25} recently also achieved a distributed monitoring protocol for $F_p$ moment estimation using total communication $\frac{k^{p-1}}{\eps^2}\cdot\polylog(n)$ for $p\ge 2$, which matches our result in \thmref{thm:fp:robust}, but for \emph{non-adaptive} distributed streams. 
Their protocol modifies a scheme previously used by \cite{IndykW05,WoodruffZ12,BravermanO13a} by carefully tracking the frequencies of heavy-hitters from substreams induced by subsampling the universe $[n]$ at different levels using the notion of $(\alpha,\eps)$-covers. 
Both their algorithm and analysis are quite elegant and significantly simpler than our result in \thmref{thm:lp:sampler:lb}. 
However, as their approach fundamentally relies on heavy-hitter detection, they do not address the design of perfect $L_p$ samplers, which is the core contribution of our work in \thmref{thm:lp:sampler:ub}. 
Moreover, the ability to perform perfect $L_p$ sampling is essential to our difference estimator, which enables adversarial robustness in \thmref{thm:fp:robust}.  
It is not clear whether their non-robust, heavy-hitter-based techniques can be extended to yield adversarial robustness for $F_p$ moment estimation especially given that, prior to this work, no existing approach could achieve adversarial robustness for heavy-hitters in the distributed monitoring model. 
Thus, although our results overlap in achieving similar bounds for non-robust moment estimation, our work is fundamentally different, both technically and conceptually, due to its focus on perfect $L_p$ sampling and its applications to adversarial robustness.

\subsection{Technical Overview}
In this section, we describe the intuition behind our algorithms and techniques, and how we overcome significant barriers for previous approaches. 
In \secref{sec:lp:sampler}, we introduce our perfect $L_p$ sampler.
In \secref{sec:framework:tech}, we describe the difference estimator framework.
Lastly in \secref{sec:diff:lp}, we provide the $F_p$ difference estimator, which is essentially based on our $L_p$ sampler.

\subsubsection{\texorpdfstring{$L_p$}{Lp} Sampling}
\seclab{sec:lp:sampler}
A natural starting point might be the perfect $L_1$ sampler from \cite{JayaramSTW19} in the distributed monitoring model. 
However, at a high level, sampling is significantly easier due to the additivity of $L_1$. 
Thus it is not clear how to extend this approach to $p\neq 1$. 
$L_p$ sampling has also been well-studied in the streaming model, where the data stream is given in its entirety to an algorithm that is bounded in space and cannot store the entire dataset~\cite{MonemizadehW10,AndoniKO11,JowhariST11,BravermanOZ12,JayaramW18,cohen2020wor,JayaramWZ22,SwartworthWZ25,WoodruffXZ25}. 
Unfortunately, these approaches do not directly extend to the distributed monitoring model, where informing the coordinator of each update incurs communication cost linear in the length of the stream, and moreover, only \cite{WoodruffXZ25} achieves near-optimal perfect $L_p$ sampling for $p\ge 2$. 
On the other hand, the universal template for $L_p$ sampling on an input vector $f\in\mathbb{R}^n$ is to (1) perform some linear transformation on $f$ to obtain a scaled vector $g$, (2) estimate each entry of $g$ and find the maximum index $i^*$ of the estimated vector, (3) run a statistical test to decide whether to report $i^*$. 
\cite{AndoniKO11,JayaramW18,WoodruffXZ25} observed that if we define $g_i =\frac{f_i}{\be_i^{1/p}}$ for all $i\in[n]$, where each $\be_i$ is an independent exponential random variable with rate $1$, then the probability of $i^*$ being the maximum index of $g$ is precisely the $L_p$ sampling probability (see \defref{def:lp}), i.e.,
\[\PPr{i^* = \argmax_i g_i} = \frac{f_i^p}{\|f\|_p^p},\]
which follows the desired distribution. 
Therefore, it suffices to continuously track the largest index of $g$ at all times in the distributed stream. 
Unfortunately, this turns out to be a difficult task in general; even for a universe of size $n=2$, tracking the maximum frequency at all times on a stream of length $m$ provably requires $\Omega(m)$ communication in the worst case. 
Circumventing this issue forms the majority of both our algorithmic novelty and technical analysis for our perfect $L_p$ sampler.

\paragraph{Global $L_1$-of-$L_p$ sampling.}
Due to distributional properties of exponential random variables, the largest index of $g$ is a $L_p$-heavy-hitter.
Then suppose that there is a procedure that identifies the set of $L_p$-heavy-hitters at all times, it is our hope that we can keep track of the items in this set with a higher accuracy, to recover or approximate the maximum.
To begin with, we introduce the construction of this $L_p$-heavy-hitter algorithm.

For ease of notation, we fix $t\in[m]$ and simply write $g_i=g_{i,t}$. 
It is known that with high probability, the largest rescaled coordinate $g_{i}$ is a $\O{\frac{1}{\log^{2/p} n}}$-$L_p$ heavy-hitter with respect to $g$, i.e., $\max_{i\in[n]} g_i^p\ge\O{\frac{1}{\log^2 n}}\cdot\|g\|_p^p$. 
Now, consider the following hard distribution. 
Let $g_n(j)$ be the value of $i$ at server $j$, where $n=\argmax_{i\in[n]} g_i$ without loss of generality.  
Suppose $g_n(j)=1$ for all $j\in[k]$ so that $g_n=k$. 
Suppose that for each server $j$, we have $g_{(j-1)\cdot k^{p-1}+1}(j)=\ldots=g_{j\cdot k^{p-1}}(j)=1$, i.e., each server has a single instance of coordinate $n$, as well as $k^{p-1}$ additional disjoint coordinates so that $\|g\|_p^p=k^p + k\cdot(k^{p-1})=2k^p$. 
Then $n$ is a $\O{1}$-$L_p$ heavy and could be the largest coordinate after a possible rescaling of $f$, so we must identify $g_n$ in this case. 

Observe that in this case, the maximizer is a $\frac{1}{k^{p-1}}$-$L_p$ heavy-hitter at each site. 
Suppose that for each coordinate $i\in[n]$ and server $j\in[k]$, we sample $g_i(j)$ with the \emph{global $L_1$ of $L_p$} probability distribution $\min\left(k^{p-1}\cdot\frac{(g_i(j))^p}{\sum_{a\in[k]}\sum_{b\in[n]}(g_b(a))^p},1\right)$. 
In the above example, each of the $k$ servers will report the maximizer with probability roughly $\frac{1}{k}$ and thus the maximizer will be reported with probability roughly $1-\left(1-\frac{1}{k}\right)^k=\Omega(1)$ across the $k$ servers. 
Moreover, under this sampling distribution, $k^{p-1}$ samples would be returned in expectation and thus it can be shown that over the course of the stream, the total communication is at most $k^{p-1}\cdot\polylog(n)$, as desired. 

In fact, after the sampling procedure, there exists a reconstruction algorithm that estimates the value of each scaled coordinate $g_i$, eliminating all but at most $\polylog(n)$ candidate coordinates from possibly being the maximizer $\argmax_{i\in[n]}g_i$. 
In other words, for a fixed instance $t$, we can identify a set $\SetPL_t\subset[n]$ (standing for ``probably large'') with at most $\polylog(n)$ coordinates such that $\argmax_{i\in[n]}g_{i,t}\in\SetPL_t$. 

\paragraph{Statistical test and dependence on the maximum index.}

The task is then to extract the maximizer from $\SetPL_t$ at all times, which essentially requires continuously estimating the frequency of each item $g_i\in\SetPL_t$. 
A natural approach would be to apply the continuous counting algorithm of \cite{HuangYZ12}, which will provide an unbiased estimate to each $g_i\in\SetPL_t$ with variance $\O{g_i^2}$. 
However, the algorithm only gives a constant-multiplicative approximation to each $g_i$; and as we mentioned earlier, it is impossible to obtain the exact value at all times.
Fortunately, due to distributional properties of the exponential variables, the maximum of the scaled vector $g$ is at least twice the second-maximum with constant probability.
Given this observation, \cite{JayaramW18,WoodruffXZ25} performed a statistical test to accept the instances with this desired guarantee to provide an $L_p$ sample.

Formally, let $\widehat{g_{D(1)}}$ and $\widehat{g_{D(2)}}$ be $1.01$-approximations of the maximum and the second-maximum, we accept a sample if and only if $\widehat{g_{D(1)}} \ge 2 \cdot \widehat{g_{D(2)}}$.
However, the algorithm of \cite{HuangYZ12} maintains an approximate counter for each local count $g_i(j)$ at the coordinator, which sums up all approximate counters; and as a result, the estimate depends on how $g_i$ is distributed across the $k$ servers.
For example, if all updates go to a single server, the error is expected to be significantly smaller than that in the case that the updates are uniformly distributed across the $k$ servers. 
But then the statistical test is conducted on the estimated value, so the failure probability of the test depends on the distribution of the maximum coordinate across the $k$ servers.
Therefore, the probability that we fail a sampler depends on which index $i^*$ achieves the max in the scaled vector, which destroys the perfect $L_p$ sampling probability distribution because each index may have a different failure probability and thus, we cannot apply the continuous counting algorithm of \cite{HuangYZ12}. 
Hence, we need to design a novel distributed counting algorithm, whose estimation error is independent of how the coordinate is distributed across the $k$ servers.

\paragraph{Double exponential-scaling method.} 
To that end, we instead scale each $g_i(j)$ by another independent exponential variable $\be(j)$ and define $h_i(j):= \frac{g_i(j)}{\be(j)}=\frac{f_i(j)}{\be_i^{1/p}\be(j)}$. 
Due to the max-stability property of the exponential variables, $\max_j h_i(j)$ has the same distribution as $\frac{g_i}{\be'_i}$, where $\be'_i$ is independent of all $\be(j)$.
Thus, we run a separate counting algorithm at each server for each $h_i(j)$, such that we can use the estimate of a single approximate counter that reaches the max, instead of the sum of all approximate counters. 
In this way, the second exponential scaling washes away dependencies, so that the resulting estimate does not depend on the distribution of $g_i$ across all servers. 
We remark that a subsequent work \cite{CohenSS25} showed that random sampling also gives an efficient distributed counting algorithm, such that the estimate does not dependent on the distribution across the servers; it is possible that this approach also gives the functionality required from our subroutine. 

\paragraph{Amortizing the communication by anti-concentration.}
It is unclear how we control the communication cost of the aforementioned counting algorithm, because we must extract the \emph{actual} max across the $k$ servers to ensure the distributional-free property.
In particular, let $X$ be the maximum and let $X'$ be the second-maximum of the scaled vector $[h_i(1),\cdots,h_i(k)]$.
If the ratio $\frac{X}{X'}=1+\alpha$, then we run a counting algorithm with accuracy $\frac{\alpha}{100}$ at each server $j$, which outputs a $(1+\frac{\alpha}{100})$-approximation to $h_i(j)$ at each $j$, so we recover the server $j$ that achieves the max.
This might require roughly $\frac{k}{\alpha}$ communication if $X$ and $X'$ are always $\alpha$-close, which is prohibitively large for our purpose.
Fortunately, due to the anti-concentration property of exponential variables, we have $\PPr{\frac{X}{X'}<1+\alpha} = \O{\alpha}$.
Then, since the counting algorithm requires $\O{\frac{1}{\alpha}}$ communication, we only need roughly $\O{\alpha} \cdot k \cdot \frac{1}{\alpha} = \O{k}$ communication in expectation.
On the other hand, we run a separate counting algorithm at each server $j$ that outputs a $1.01$-approximation $\widehat{h_i(j)}$ at all times.
If $j^*$ is detected to be the maximum server, then we take $\widehat{h_i(j^*)}$ as the estimate of $X$.
In summary, although finding the maximum value of a frequency vector $f\in\mathbb{R}^n$ defined by a distributed stream of length $m$ requires $\Omega(m)$ communication in general, this can only happen when the largest and second largest coordinates of the frequency vector are quite close. 
Critically, the anti-concentration property shows this can only happen a small fraction of times, which allows us to amortize the additional communication cost of dealing with such occurrences over the entirety of the stream. 

\paragraph{Geometric mean estimator and truncated Taylor series.}
Now that we obtain an unbiased estimate of $X=\max_j h_i(j) = \frac{g_i}{\be'_i}$ with constant-multiplicative error.
The remaining task is to estimate $g_i$ using $X$.
Unfortunately, both the expectation and the variance of $X$ can be unbounded. 
This is because the cumulative distribution function of a random variable $Z\sim\frac{1}{\be^{1/p}}$ for an exponential random variable $\be$ is roughly $\PPr{Z\le t} = 1-\Theta\left(\frac{1}{t^p}\right)$ for $t\ge 1$, and so the expectation of $Z$ is unbounded for $p\le 1$ and the variance of $Z$ is unbounded for $p\le 2$. 
On the other hand, we have $\PPr{Z^{1/3}\ge t} = \Theta\left(\frac{1}{t^{3p}}\right)$, and so both the expectation and variance of $Z^{1/3}$ are finite for $p\ge 1$. 
Hence if we take the geometric mean $Y = \sqrt[3]{X_1\cdot X_2\cdot X_3}$ of three independent instances $X_1$, $X_2$, and $X_3$, then there exists a constant $C$ such that $\Ex{C\cdot Y}=g_i$ and $\Ex{C^2\cdot Y^2}\le\O{g_i^2}$, which gives us an unbiased estimate of $g_i$. 
However, it is not instantly clear how to provide a valid estimator for $X_r^{1/3}$, because the fractional power of $\frac{1}{3}$ might bring up undesirable covariance terms.
We use a truncated Taylor series to estimate each $X_r^{1/3}$, which requires $\O{\log n}$ independent unbiased estimates of $X_r$.
Namely, we expand $X_r^{1/3}$ by $\sum_q \alpha_q X_r^q$, so that it is easier to obtain unbiased estimates of $X_r^q$ for an integer $q$.
Then, averaging $\polylog n$ instances of $Y$ gives us a $1.01$-approximation to $g_i$. 
Thus, we can track the value for each $g_i$ in $\SetPL$ and report the sample if it passes the statistical test.
Although geometric mean estimators have been used for $p$-stable random variables when $p<2$~\cite{Li08,WoodruffZ21b}, their use to upper bound the variance of inverse exponential random variables appears to be new. 
We believe this technique is of independent interest and may find broader applications.
In fact, the combination of geometric mean estimator and truncated Taylor estimator is applied again in our subroutine to give unbiased estimate of the $F_p$ moment $\|f\|_p^p$ (see \secref{sec:diff:lp}).
 
\paragraph{Core insight.} 
The underlying input in the distributed monitoring model can be captured by an $n\times k$ matrix $\bA$, where the rows of the matrix are the coordinates $i\in[n]$ and the columns are the servers $j\in[k]$. 
Then the perfect $L_p$ sampling problems involves computing the $p$-th powers of the $L_1$ norm of the rows, similar to more complex cascaded norms such as the $L_{p,q}$ norm by
\[\|\bA\|_{p,q}=\left(\sum_{i\in[n]}\left(\sum_{j\in[d]}A_{i,j}\right)^{p/q}\right)^{1/p},\]
i.e., the $L_p$ norm of the vector whose entries are the $L_q$ norms of the rows of $\bA$. 
In this setting, our two-sided embedding can be viewed as computing the matrix $\bE_1\bA\bE_2$, where $\bE_1\in\mathbb{R}^{n\times n}$ is a diagonal matrix so that each entry is $\frac{1}{\be_i^{1/p}}$ for an independent exponential random variable $\be_i$ and $\bE_2\in\mathbb{R}^{k\times k}$ is a diagonal matrix where each entry is $\frac{1}{\be(j)^{1/q}}$ for an independent exponential random variable $\be(j)$, with $q=1$ in our setting. 
Thus our main idea can be viewed as a two-sided non-negative embedding that reduces these more complicated quantities to a two-level heavy-hitters problem through properties of exponential random variables. 
The heavy-hitters in each level can be maintained and approximated. 
Moreover, although finding the maximum coordinate is difficult in general, exponential random variables enjoy anti-concentration properties so that at most $\alpha$ fraction of the instances have their maximum and second maximum values separated by less than $(1+\alpha)$, which allows us to amortize the communication. 
On the other hand, our two-sided embedding leads to values with unbounded expectation but we can address this by averaging independent instances of geometric mean estimators. 
Given this perspective, we believe our approach could potentially extend to more general samplers based on cascaded norms.

\subsubsection{Difference Estimator Framework and Adversarial Robustness}
\seclab{sec:framework:tech}
Previous work introduced several effective frameworks to achieve adversarially robust streaming algorithms in the streaming model, where efficient space complexity is prioritized.
\cite{BenEliezerJWY20} provided the well-known sketch switching framework, which initiates several sketches simultaneously, and switches to a new sketch whenever the estimate increases by an $\eps$-fraction. 
This framework achieves adversarially robustness because, the algorithm fixes the output until it switches to a new sketch, and so the inner randomness of each sketch is hidden from the adversary before its revelation. 
However, this framework loses a $\frac{1}{\eps}$ factor in the space complexity, so its extension to the distributed monitoring setting would also lose a $\frac{1}{\eps}$ factor in the communication cost.

To achieve tight bound in both $k$ and $\frac{1}{\eps}$ factors, we adapt the difference estimator framework introduced by \cite{WoodruffZ21b} to the context of distributed monitoring. 
Firstly, we run a ``double detector'' algorithm, which flags when the target function value $F$ doubles and gives a $(1+\eps)$-approximation to $F$ whenever it flags. 
So, this divides the algorithm into $\O{\log n}$ rounds, and for a vector $u$ when the double detector flags, our goal is to estimate $F(u+v) -F(u)$ for a subsequent vector $v$, where $F(u+v)\le 2F(u)$. 

The main idea is that if $F(u+v)-F(u)$ is small, say roughly $\eps \cdot F(u)$, then we only need a $\O{1}$-approximation to $F(u+v)-F(u)$, which still ensures an additive error of $\eps \cdot F(u)$, instead of losing $\frac{1}{\eps}$ to obtain a $(1+\eps)$-approximation. 
Similarly, if $F(u+v)-F(u)$ is roughly $2\eps \cdot F(u)$, then we only need a $(1+\frac{1}{2})$-approximation to $F(u+v)-F(u)$; and more generally, if $F(u+v)-F(u)$ is roughly $\frac{1}{2^r} \cdot F(u)$, then we only need a $(1+2^r\eps)$-approximation to $F(u+v)-F(u)$.

Now, for $\beta = \O{\log \frac{1}{\eps}}$ and $F(u+v)\le 2F(u)$, we decompose $F(u+v) - F(u)$ into the binary expression 
\[F(u+v) - F(u) = \sum_{r=1}^{\beta} F(u+ v_{\leq r}) - F(u+ v_{\leq r-1} ),\] 
where we define $v_{\leq t} := \sum_{j=0}^t v_t$ and $v_1,\ldots,v_\beta$ arrives in sequential order, and each difference satisfies $F(u+v_{\leq r}) - F(u+v_{\leq r-1}) \approx \frac{1}{2^r}\cdot F(u)$. 
Thus, it suffices to give a $\left(1+\frac{2^r \eps}{\beta}\right)$-approximation for each difference, which we define to be a level-$r$ difference estimator informally. 
We implement a separate difference estimator when $F(u+v)$ grows to the corresponding level and sum up the estimation of all levels to give a $(1+\eps)$-approximation. 
Since a level-$r$ difference estimator is used each time the difference increases by $\frac{1}{2^r}\cdot F(u)$, it suffices to run roughly $2^r$ instances of the level-$r$ difference estimator. 
Thus, if there exists a level-$r$ difference estimator which uses $\frac{1}{2^r \eps^2}G(n,k)$ bits of communication for each $r \in [\beta]$, then in total we use $\frac{1}{\eps^2} G(n,k)$ bits of communication in total for each level. 
Note that $\beta = \O{\log \frac{1}{\eps}}$, so we only lose a multiplicative $\O{\log \frac{1}{\eps}}$ factor in the total communication cost. 

We observe that the difference estimator framework also achieves adversarially robustness.
This can be done by using fresh randomness for each difference estimator and not changing the output until the estimated value increases by $(1+\eps)$ multiplicatively.
Therefore, when the estimate is revealed to the adversary, we start a new difference estimator at level-$\beta$ with hidden randomness, so the adaptive input stream from the adversary is independent of any internal randomness in the new difference estimator, guaranteeing adversarial robustness. 
We refer the reader to \secref{sec:diff:est} for a more detailed outline for the difference estimator framework.

\subsubsection{Adversarially Robust \texorpdfstring{$F_p$}{Fp} Estimation, \texorpdfstring{$p\ge2$}{p>2}}
\seclab{sec:diff:lp}
With the difference estimator framework, it remains to construct a $F_p$ difference estimator, so that it can be lifted to an adversarially robust $F_p$ estimation algorithm.

\paragraph{Sampling-based method.}
We denote the difference as $S = F_p(u+v) - F_p(u) = \|u+v\|_p^p - \|u\|_p^p$, with the guarantee $S < \gamma \cdot \|u\|_p^p$. 
The core idea of our difference estimator is to sample an index $i \in [n]$ with respect to its importance in the difference term $S$, and construct the estimator by rescaling the sampling probability and estimating the entry-wise difference $(u_i+v_i)^p - u_i^p$. 
Specifically, let $p_i$ be our sampling probability. 
Then, we define our estimator to be
\[\widehat{s} = \frac{1}{p_i} \cdot ( (u_i+v_i)^p - u_i^p ).\]
It is obvious that $\widehat{s}$ is an unbiased estimation of $S$. 
Thus, our goal is to develop a sampling distribution with sufficiently small variance while bounding the total communication by $\frac{\gamma k^{p-1}}{\eps^2} \polylog(n)$   bits.

\paragraph{Mixed $L_p$ sampling.}
Intuitively, we should give larger sampling probabilities to coordinates $i$ that have large values of $(u_i+v_i)^p - u_i^p$, which implies that either $u_i$ is large or $v_i$ is large. 
Thus, a natural approach would be to split the indices into sets where $u_i>v_i$ and $u_i\le v_i$, and $L_p$ sample from $u$ for indices in the first set, and $L_p$ sample from $v$ for indices in the second set. 
In practice, we could $L_p$ sample some number of indices from both $u$ and $v$ and reject certain samples that are from the wrong set, e.g., if we acquire a coordinate $i\in[n]$ by $L_p$ sampling from $u$ and it turns out that $u_i\le v_i$, or vice versa. 
However, we cannot afford to acquire the real value of $v_i$ at all times. 
Rather, we can only run a distributed counting algorithm to estimate $v_i$ with constant-multiplicative error, but then we could misclassify the index and incorrectly reject/accept some samples, which can give substantially large bias to the resulting estimator. 
Unfortunately, these simple misclassifications are known to generally require suboptimal dependencies to handle the resulting error, e.g.,~\cite{IndykW05,WoodruffZ12,LevinSW18,BlasiokBCKY17,WoodruffZ21a}. 
Additionally, another concern is that we could reject too many samples so that we need prohibitively large number of independent instances to ensure a sufficiently large number of successful samples. 

Instead, we sample from a mixed $L_p$ distribution. 
That is, with probability $\frac{1}{2}$, we obtain an $L_p$ sample from a fixed vector $u$, otherwise we obtain an $L_p$ sample from an evolving vector $v$, so that our sampling probability is $\frac{u_i^p}{2\|u\|_p^p} + \frac{v_i^p}{2\|v\|_p^p}$, thereby sampling the ``important'' indices from both $u$ and $v$ that contribute significantly to the difference. 
This construction bypasses the undesirable misclassification and rejection procedure and ensures a small variance via a careful analysis.

\paragraph{Bi-variate Taylor estimator.} 
Then we need to give unbiased estimations for $\frac{1}{p_i}$ and $(u_i+v_i)^p - u_i^p$ with bounded variance. 
However, the main challenge with this sampling distribution is that we need to estimate
\[\frac{1}{p_i} = \left(\frac{u_i^p}{2\|u\|_p^p} + \frac{v_i^p}{2\|v\|_p^p}\right)^{-1} = \frac{2\|u\|_p^p\|v\|_p^p}{\|u\|_p^pv_i^p + \|v\|_p^pu_i^p},\]
up to high accuracy. 
We can query the servers collectively for the value of $u_i$, but it is particularly challenging to deal with the denominator, since $\|u\|_p^p$ and $\|v\|_p^p$ may require querying a large number of coordinates, and $v_i^p$ may be evolving over the duration of the stream. 
To handle this, we define $a = \|u\|_p^pv_i^p$ and $b = \|v\|_p^pu_i^p$ and consider $\frac{1}{a+b}$. 
We first suppose that we have good estimators for both $a$ and $b$; we shall subsequently discuss the construction of the estimators.  
Then, we utilize the bivariate Taylor expansion
\[f(a,b) = f(a_0,b_0) + \sum_{q=1}^{\infty} \; \sum_{\substack{d,e\ge0 \\ d+e=q}} \frac{1}{d!\,e!}\left.\frac{\partial^q f(x,y)}{\partial x^d \partial y^e}\right|_{(x,y)=(a_0,b_0)} (x-a_0)^d (y-b_0)^e,\]
of the function $f(a,b) = \frac{1}{a+b}$ expanded at $a_0$, $b_0$, which are $\left(1+\frac{1}{100p}\right)$-estimations of $a$ and $b$. 
Then, to achieve an approximation to $f(a,b)$ with $\frac{1}{\poly(n)}$-bias, it suffices to truncate the Taylor series after $Q=\O{\log n}$ terms and thus we define our truncated bivariate Taylor estimator to be 
\[ \sum_{q=0}^Q  \frac{(-1)^q}{(a_0+b_0)^{q+1} }\cdot ((a-a_0)+(b-b_0))^q,\]
and we estimate the power $((a-a_0)+(b-b_0))^q$ in each term by multiplying $q$ independent estimators. 
Here, we remark that the truncated bivariate Taylor estimator is valid because the partial derivatives of $f(a,b) = \frac{1}{a+b}$ with respect to $a$ and $b$ have the same expression; the tail error bound may not be clear if the partial derivatives are different. 
On the other hand, it is an $2$-dimensional extension of the truncated Taylor estimator applied in our perfect $L_p$ sampler.

\paragraph{Unbiased estimation of $\|u\|_p^p$.}
It remains to show the estimator for $a$ and $b$. 
Here, the most challenging part is to give unbiased estimations of $\|u\|_p^p$ and $\|v\|_p^p$.
We consider the simpler case of $\|u\|_p^p$ first.
Notice that $u$ is a fixed vector, so we can obtain the actual value of each $u_i$ by querying each server, however, we still do not afford to retrieve the actual value of $\|u\|_p^p$.
Applying the max-stability property of exponential variables stated in \secref{sec:lp:sampler}, we show that if we scale each coordinate in the original vector $u$ by $g_i = \frac{u_i}{\be_i^{1/p}}$, then we have $\max_i g_i^p = \frac{\|u\|_p^p}{\be}$, where $\be$ is an independent exponential variable.
Similar as our perfect $L_p$ sampler, we still need to deal with the problem that $\frac{1}{\be}$ has an unbounded expectation, which fails to give an unbiased estimator.
Again, we use the idea of geometric mean estimator defined by the geometric mean of three independent instances $X = \max_{i \in [n]} \frac{u_i}{\be_i^{1/p}}$:
\[Y = \sqrt[3]{X_1^{p} \cdot X_2^{p} \cdot X_3^{p}},\]
which provides an unbiased estimate to $\|u\|_p^p$.
For the evolving vector $v$, we could also use the geometric mean estimator, but we are not able to recover the maximum coordinate at all times, instead we only have access to an unbiased estimate of $X$.
This requires us to apply the truncated Taylor estimator to accurately approximate the geometric mean.

\paragraph{Estimation of the entry-wise difference $(u_i+v_i)^p - u_i^p$.} 
For the estimation of $(u_i+v_i)^p - u_i^p$, it is natural to think of using binomial series to estimate $(u_i+v_i)^p$, e.g., \cite{WoodruffZ21b} uses the finite binomial series to estimate the difference for integer $p$ in the non-distributed streaming setting. 
However, a crucial drawback of binomial series is that we need to decide the dominant variable. 
This means that, if $u_i > v_i$, then we can only transform $(u_i+v_i)^p$ to $u_i^p \cdot (1+\frac{v_i}{u_i})^p$ and expand $(1+\frac{v_i}{u_i})^p$. We cannot do it the other way or the binomial series would be divergent. 
As we mentioned before, we do not have access to the real value of $v_i$, which leads to misclassification error that could cause our algorithm to fail if we use the binomial series.

On the other hand, notice that we can compute the actual value of $u_i$, so instead it is intuitive to apply our truncated Taylor estimator again. 
Let $f(v_i) = (u_i+v_i)^p - u_i^p$, where we consider $u_i$ as a known value. 
Then, we define the truncated Taylor estimator to be $((u_i+s_i)^p-u_i^p)+\sum_{q=1}^Q \binom{p}{q}(u_i+s_i)^{p-q} (v_i-s_i)^q$, with $s_i$ being a $(1+\frac{1}{100p})$-approximation of $v_i$.

With the above subroutines, we give unbiased estimates of the mixed $L_p$ sampling-based estimator $\widehat{s}=\frac{1}{p_i}((u_i+v_i)^p-u_i^p$.
Since our estimate has small variance, averaging across several independent instances provides a valid $F_p$ difference estimator.
Then, applying the difference estimator framework in \secref{sec:framework:tech}, we achieve an adversarially robust algorithm for distributed $F_p$ estimation.

In summary, our continuous perfect $L_p$ sampler appears to be an essential procedure in constructing the $F_p$ different estimator.
The mixed-$L_p$ sampler provides a reasonable distribution that captures the important coordinates in both $u$ and $v$ with respect to the difference.
In addition, the formula of the sampling probability allows efficient unbiased estimation to its inverse, so that we can average across independent instances to reduce the variance.

\subsection{Preliminaries}
For an integer $n>0$, we use the notation $[n]$ to denote the set $\{1,\ldots,n\}$. 
We use $\poly(n)$ to denote a fixed polynomial of $n$, whose degree can be determined by setting parameters in the algorithm accordingly. 
We use $\polylog(n)$ to denote $\poly(\log n)$. 
We use the notation $\tO{f}$ for a function $f(\cdot,\ldots,\cdot)$ to denote $f\cdot\polylog(n/\eps)$. 

If an event occurs with probability $1-1/\poly(n)$, then we say the event occurs with high probability. 
For a vector $f\in\mathbb{R}^n$, we define
\begin{align*}  \|f\|_p=\left(f_1^p+\ldots+f_n^p\right)^{1/p}.\end{align*}  

\paragraph{The coordinator model of distributed computing.} 
We work in the asynchronous coordinator model, which was introduced by~\cite{DolevF92}. 
There are $k$ servers, one of which acts as a special coordinator. 
Each server has access to private randomness and communicates with the coordinator over a private, bidirectional channel; no direct communication occurs between servers. 
Communication is asynchronous and message-based: messages on each channel must alternate between the coordinator and the server and must be self-delimiting, i.e., both parties must be able to determine when a message has ended.  
Each server $i \in [k]$ can determine from its transcript $\Pi_i$ with the coordinator whether it is their turn to speak, since the previous message must have come from the coordinator.  
When it is the coordinator’s turn, it may send messages to any subset of servers.  
At the end of the protocol, the coordinator must output an answer.

The communication cost of the protocol $\Pi=\{\Pi_1,\ldots,\Pi_k\}$ is the total number of bits sent across all channels coordinator outputs the answer. 
Thus we assume without loss of generality that the protocol is sequential and round-based, i.e., in each round the coordinator speaks to some number of players and await their responses before initiating the next round. 

Our protocol actually works in the more restricted setting where each vector $f(j)\in\mathbb{R}^n$ at server $j\in[k]$ arrives in a data stream and each server must use space sublinear in $n$ to store some sketch of $f(j)$. 
In this case, each element of the data stream consists of an update to a single coordinator of a single server, which will be specified only to the appropriate server. 
After the data stream is completed, the servers run the protocol $\Pi$ given only the sketches maintained by each server $j$, rather than given the entirety of $f(j)$. 

\paragraph{Exponential random variables. }
Exponential random variables are a key part of our protocols. 
Recall the following definition for an exponential random variable:
\begin{definition}[Exponential random variables]
If $\be$ is a exponential random variable with scale parameter $\lambda>0$, then the probability density function for $\be$ is 
\[p(x)=\lambda e^{-\lambda x}.\]
We say $\be$ is a standard exponential random variable if $\lambda=1$.
\end{definition}
We first state the following tail bounds for exponential random variables.
\begin{fact}
Let $\be$ be a standard exponential random variable. 
Then for any $a,b\ge 0$, 
\begin{align*}
\Pr[ \be \ge a\log n ] = \frac{1}{n^a},~~~ \Pr[ \be\le b ] \le b.
\end{align*}
\end{fact}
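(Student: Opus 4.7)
The plan is to derive both inequalities directly from the cumulative distribution function of a standard exponential random variable. Recall that for $\be$ with rate $\lambda = 1$, the density is $p(x) = e^{-x}$ for $x \ge 0$, so by integrating the tail one obtains $\Pr[\be \ge t] = e^{-t}$ for every $t \ge 0$, and correspondingly $\Pr[\be \le t] = 1 - e^{-t}$.

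For the first identity, I would substitute $t = a \log n$ into the tail formula, giving
\[\Pr[\be \ge a \log n] = e^{-a \log n} = n^{-a},\]
which is exactly the stated equality (the problem statement writes it as an equality rather than a bound, and this substitution gives equality on the nose).

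For the second inequality, I would use the elementary calculus fact $e^{-x} \ge 1 - x$ for all real $x$, which follows from the convexity of $e^{-x}$ (its graph lies above its tangent at $x=0$). Rearranging gives $1 - e^{-x} \le x$, so
\[\Pr[\be \le b] = 1 - e^{-b} \le b,\]
as required. Neither step presents any real obstacle; the main thing to be careful about is simply invoking the correct tangent-line inequality for the exponential to bound $1 - e^{-b}$ by $b$. The first equality is immediate from the definition of the standard exponential distribution, and the second is a one-line consequence of a standard convexity bound.
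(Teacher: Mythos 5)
Your proof is correct: both claims follow immediately from the tail formula $\Pr[\be \ge t] = e^{-t}$, with the second using the tangent-line bound $1 - e^{-b} \le b$. The paper states this fact without proof, and your derivation is the standard argument one would supply.
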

We next recall that the largest scaled item is a polylogarithmic heavy-hitter with high probability. 
\begin{fact}[c.f., Lemma 2.1 in \cite{EsfandiariKMWZ24}]
\factlab{fact:max:heavy}
Let $\be_1,\ldots,\be_n$ be standard exponential random variables, let $\alpha_1,\ldots,\alpha_n \ge 0$, and let $C>0$ be a fixed constant.  
Then
\begin{align*}
\Pr \Big[ \frac{\max_{i\in[n]}\alpha_i/\be_i}{\sum_{i=1}^n \alpha_i/\be_i}\ge\frac{1}{C\log^2 n} \Big] \ge1-\frac{1}{\poly(n)}.
\end{align*}
\end{fact}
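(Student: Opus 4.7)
Let $M := \max_{i\in[n]}\alpha_i/\be_i$ and $S := \sum_{i=1}^n \alpha_i/\be_i$. The plan is to separately establish, each with failure probability $1/\poly(n)$: (A) the two-sided control $\|\alpha\|_1/(c_1\log n)\le M\le\|\alpha\|_1\cdot n^{c_1}$, and (B) the comparison $S\le c_3\log^2 n\cdot M$. Dividing then yields $M/S\ge 1/(C\log^2 n)$ as claimed.

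For (A), I exploit the max-stability of inverse exponentials: by independence and a direct CDF computation,
\[ \PPr{M\le t}=\prod_{i=1}^n \PPr{\alpha_i/\be_i\le t}=\prod_{i=1}^n e^{-\alpha_i/t}=e^{-\|\alpha\|_1/t}, \]
so $M$ has exactly the distribution of $\|\alpha\|_1/\be^*$ for a standard exponential $\be^*$. The tail bounds $\PPr{\be^*\ge c_1\log n}=n^{-c_1}$ and $\PPr{\be^*\le n^{-c_1}}\le n^{-c_1}$ then yield the lower and upper halves of (A), respectively.

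For (B), I use the layer-cake identity $S=\int_0^\infty N(T)\,dT=\int_0^M N(T)\,dT$, where $N(T):=|\{i:\alpha_i/\be_i\ge T\}|$ vanishes above $M$. Each $N(T)$ is a sum of independent Bernoulli indicators with $\PPr{\alpha_i/\be_i\ge T}=1-e^{-\alpha_i/T}\le\alpha_i/T$, so $\Ex{N(T)}\le\|\alpha\|_1/T$, and a multiplicative Chernoff bound gives $\PPr{N(T)\ge 2\|\alpha\|_1/T+c_2\log n}\le n^{-\Omega(c_2)}$ uniformly in $T$. Union-bounding over the $O(\log n)$ dyadic thresholds $T_j=\|\alpha\|_1\cdot 2^j$ spanning $[\|\alpha\|_1/n^{c_1},\|\alpha\|_1\cdot n^{c_1}]$, and using monotonicity of $T\mapsto N(T)$ to interpolate, yields the uniform estimate $N(T)\le 4\|\alpha\|_1/T+c_2\log n$ throughout this range. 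Splitting the integral at $T_*=\|\alpha\|_1/n$ and using the trivial bound $N(T)\le n$ on the low piece gives
\[ S\le\|\alpha\|_1+4\|\alpha\|_1\log\left(Mn/\|\alpha\|_1\right)+c_2 M\log n. \]
Substituting the upper bound in (A) yields $\log(Mn/\|\alpha\|_1)=O(\log n)$, while the lower bound in (A) gives $\|\alpha\|_1\le c_1 M\log n$; all three terms then collapse to $O(M\log^2 n)$, which is (B).

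The main obstacle is that $\Ex{S}=+\infty$: the heavy tail of $\alpha_i/\be_i$ (its density decays only like $1/t^2$) precludes any direct moment-based attack on $S$. The remedy is to pass through the level-set counts $N(T)$, which are Bernoulli sums and hence concentrate, and to convert the resulting absolute estimate on $S$ into a \emph{ratio} with $M$ via the two-sided control (A). The upper tail control $M\le\|\alpha\|_1 n^{c_1}$ is the subtle point: without it the $\log(Mn/\|\alpha\|_1)$ factor could be arbitrarily large, and one would obtain only a polynomial rather than polylogarithmic ratio.
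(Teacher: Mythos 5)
Your proof is correct. Note first that the paper does not prove this statement at all: it is quoted as a black-box fact from Lemma 2.1 of \cite{EsfandiariKMWZ24}, so there is nothing internal to compare against. Your two ingredients are sound: the max-stability computation $\PPr{M\le t}=e^{-\|\alpha\|_1/t}$ is exact and gives the two-sided control on $M$ with failure probability $n^{-c_1}$, and the layer-cake/level-set argument correctly circumvents the fact that $\Ex{S}=\infty$, since each $N(T)$ is a sum of independent indicators with $\Ex{N(T)}\le\|\alpha\|_1/T$ and therefore concentrates; the dyadic discretization plus monotonicity of $T\mapsto N(T)$ is the right way to make the bound uniform. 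The bookkeeping also checks out: the low piece contributes $\|\alpha\|_1\le c_1M\log n$, the $\int 4\|\alpha\|_1/T\,dT$ piece contributes $O(\|\alpha\|_1\log n)=O(M\log^2 n)$ using both halves of (A), and the additive-slack piece contributes $c_2M\log n$. For the record, the standard proof in the cited literature reaches the same conclusion by a different but essentially equivalent route: it uses the anti-rank representation (\factref{fac:exp_order_statistics}) to show that the $k$-th largest scaled value is at most $O(\|\alpha\|_1/k)$ for all $k\gtrsim\log n$ simultaneously, and then sums the harmonic series; your $N(T)$ is just the inverse parametrization of that order-statistics bound, so the two arguments buy the same $\log^2 n$ loss (one $\log$ from the max not being larger than $\|\alpha\|_1/\log n$, one from the harmonic sum). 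Two cosmetic remarks: the statement should be read as holding for a sufficiently large constant $C$ (your proof produces a specific $C=C(c_1,c_2)$, which is the intended meaning and how the paper uses it), and you implicitly use $M\ge T_*=\|\alpha\|_1/n$ when splitting the integral, which indeed follows from the lower half of (A), but is worth one sentence.
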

We next recall that scaling an exponential random variable inversely scales its rate parameter:
\begin{fact}[Scaling of exponentials]
\factlab{fac:exp_scaling}
    Let $t$ be exponentially distributed with rate $\lambda$, and let $\alpha>0$. Then $\alpha t$ is exponentially distributed with rate $\lambda / \alpha$.
\end{fact}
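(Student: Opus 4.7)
The plan is to verify the claim by a direct change-of-variables computation on either the CDF or the PDF, since this is a standard distributional identity for scaled exponentials.

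First, I would recall that a random variable $t$ exponentially distributed with rate $\lambda$ has cumulative distribution function $F_t(x) = 1 - e^{-\lambda x}$ for $x \ge 0$ and $F_t(x) = 0$ for $x < 0$. Let $s := \alpha t$ with $\alpha > 0$. Then for any $x \ge 0$, I would compute
\[
\Pr[s \le x] = \Pr[\alpha t \le x] = \Pr\!\left[t \le \tfrac{x}{\alpha}\right] = 1 - e^{-\lambda x / \alpha},
\]
where the second equality uses $\alpha > 0$ (so the inequality direction is preserved) and the third plugs $x/\alpha$ into $F_t$. This is precisely the CDF of an exponential random variable with rate $\lambda/\alpha$, which proves the claim.

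Alternatively, one can differentiate to get the density: the PDF of $s = \alpha t$ is $\frac{d}{dx}\Pr[s \le x] = \frac{\lambda}{\alpha} e^{-(\lambda/\alpha) x}$ for $x \ge 0$, matching the definition of an exponential density with rate $\lambda/\alpha$ stated earlier in the preliminaries. There is no real obstacle here; the only thing one needs to be careful about is the positivity of $\alpha$, which ensures that the rescaling preserves the inequality direction (and hence the support remains $[0,\infty)$). This is why the fact is stated with $\alpha > 0$, and the proof requires nothing beyond this single observation.
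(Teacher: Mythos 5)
Your proof is correct: the CDF change-of-variables computation (or equivalently differentiating to get the density $\frac{\lambda}{\alpha}e^{-(\lambda/\alpha)x}$) is the standard argument, and the paper itself states this fact without proof precisely because it is this routine. Nothing further is needed.
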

We now define the anti-rank vector, which is a vector consisting of the order statistics.
\begin{definition}[Anti-rank vector]
Let $\left(t_1, \ldots, t_n\right)$ be independent exponentials. For $k=1,2, \ldots, n$, we define the $k$-th anti-rank $D(k) \in[n]$ of $\left(t_1, \ldots, t_n\right)$ to be the values $D(k)$ such that $t_{D(1)} \leq t_{D(2)} \leq \cdots \leq t_{D(n)}$.
\end{definition}

Using the structure of the anti-rank vector, \cite{Nag06} introduces a simple form for describing the distribution of $t_{D(k)}$ as a function of $\left(\lambda_1, \ldots, \lambda_n\right)$ and the anti-rank vector.

\begin{fact}[\cite{Nag06}]
\factlab{fac:exp_max_prob}
 For any $i=1,2, \ldots, n$, we have
\[\operatorname{Pr}[D(1)=i]=\frac{\lambda_i}{\sum_{j=1}^n \lambda_j}\]
\end{fact}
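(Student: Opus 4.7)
The plan is to verify this classical fact by direct integration using the density of exponentials and the independence of the $t_j$. The event $\{D(1) = i\}$ is (up to a measure-zero set of ties) precisely the event that $t_i < t_j$ for every $j \neq i$, so I would compute $\Pr[D(1)=i]$ by conditioning on the value of $t_i$ and using the product structure of the joint density.

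First I would write
\[
\Pr[D(1)=i] \;=\; \Pr\bigl[t_i < t_j \text{ for all } j\neq i\bigr] \;=\; \int_0^\infty \lambda_i e^{-\lambda_i s}\,\Pr\bigl[t_j > s \text{ for all } j\neq i\bigr]\,ds.
\]
By independence of the $t_j$ and the survival function $\Pr[t_j > s] = e^{-\lambda_j s}$ for exponentials, the conditional probability factors as $\prod_{j\neq i} e^{-\lambda_j s}$. Substituting and collecting exponents gives
\[
\Pr[D(1)=i] \;=\; \int_0^\infty \lambda_i \exp\!\Bigl(-s\sum_{j=1}^n \lambda_j\Bigr)\,ds \;=\; \frac{\lambda_i}{\sum_{j=1}^n \lambda_j},
\]
which is exactly the claim.

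There is no real obstacle here: the argument is a one-line integral once the independence and the exponential survival function are invoked. The only minor point worth noting is that ties occur with probability zero (since the joint distribution of $(t_1,\dots,t_n)$ is absolutely continuous), so the event $\{D(1)=i\}$ coincides almost surely with the strict inequality $\{t_i < \min_{j\neq i} t_j\}$, which justifies the first equality above. The result can alternatively be derived from \factref{fac:exp_scaling} together with the memoryless property, but the direct computation is the cleanest route.
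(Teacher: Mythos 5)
Your proof is correct: the paper states this fact as a citation to \cite{Nag06} without giving a proof, and your direct computation --- integrating the density of $t_i$ against the product of the survival functions of the other coordinates, with the observation that ties have probability zero --- is the standard and complete argument. Nothing is missing.
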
 

\begin{fact}[\cite{Nag06}]
\factlab{fac:exp_order_statistics}
Let $\left(\be_1, \ldots, \be_n\right)$ be independently distributed exponentials, where $\be_i$ has rate $\lambda_i>0$. Then for any $k=1,2, \ldots, n$, we have
\[\be_{D(k)}=\sum_{i=1}^k \frac{E_i}{\sum_{j=i}^n \lambda_{D(j)}},\]
where $E_1, E_2, \ldots, E_n$ are i.i.d. exponential variables with mean $1$, and are independent of the anti-rank vector $(D(1), D(2), \ldots, D(n))$.
\end{fact}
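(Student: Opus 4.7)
The plan is to derive the representation from two classical properties of independent exponential random variables: the competing-risks rule for the minimum and the memoryless property, applied inductively. I will think of the process of revealing the order statistics as a sequential ``race'': at step $k$, a race is run among the coordinates that have not yet finished (namely $\{D(k), D(k+1), \ldots, D(n)\}$), the winner is $D(k)$, and the waiting time for this winner (measured from the previous finishing time $\be_{D(k-1)}$, with $\be_{D(0)}=0$) is the gap that will become the $k$-th summand.

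\textbf{Step 1: Base case.} By the standard minimum-of-independent-exponentials fact, $\be_{D(1)}=\min_i \be_i$ is exponential with rate $\Lambda_1:=\sum_{j=1}^n\lambda_j=\sum_{j=1}^n\lambda_{D(j)}$. By \factref{fac:exp_max_prob}, $\Pr[D(1)=i]=\lambda_i/\Lambda_1$, and the classical coupling argument (or direct density computation) shows that the identity of the minimizer $D(1)$ is \emph{independent} of the value $\be_{D(1)}$. Setting $E_1:=\Lambda_1\cdot \be_{D(1)}$, we obtain $E_1\sim\mathrm{Exp}(1)$ and $E_1$ is independent of $D(1)$.

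\textbf{Step 2: Inductive step via memorylessness.} Condition on $D(1)=i_1$ and $\be_{D(1)}=t$. By the memoryless property of exponentials, the residual variables $\{\be_j - t : j\neq i_1\}$ are independent exponentials with their original rates $\lambda_j$, and are independent of everything revealed so far. Hence the problem reduces to the same order-statistics problem on the $(n-1)$ surviving coordinates. Iterating, we see that the gap
\[
G_k := \be_{D(k)}-\be_{D(k-1)}
\]
is, conditionally on $(D(1),\ldots,D(k))$ and on $(\be_{D(1)},\ldots,\be_{D(k-1)})$, exponential with rate $\Lambda_k:=\sum_{j=k}^n \lambda_{D(j)}$; and, again by the competing-risks argument, the identity $D(k)$ is independent of $G_k$ given the past. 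Setting $E_k:=\Lambda_k\cdot G_k$, we obtain $E_k\sim\mathrm{Exp}(1)$.

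\textbf{Step 3: Independence and telescoping.} By a standard argument (conditioning successively and checking that the conditional law of each $E_k$ does not depend on any of the already-revealed $E_1,\ldots,E_{k-1}$ or on the anti-ranks), the $E_k$ are mutually i.i.d.\ $\mathrm{Exp}(1)$ and jointly independent of the entire anti-rank vector $(D(1),\ldots,D(n))$. Summing the normalized gaps telescopically then gives
\[
\be_{D(k)} \;=\; \sum_{i=1}^k G_i \;=\; \sum_{i=1}^k \frac{E_i}{\sum_{j=i}^n \lambda_{D(j)}},
\]
which is the claimed identity.

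\textbf{Main obstacle.} The only subtle point is not the algebra but the independence claim: one must verify that each normalized gap $E_k$ is independent not just of $D(k)$ (which is the competing-risks statement used at a single step) but of the \emph{full} anti-rank vector. This follows by induction: once conditioned on $(D(1),\ldots,D(k))$, the conditional distribution of $G_k$ is $\mathrm{Exp}(\Lambda_k)$ with no further dependence, so the later anti-ranks $D(k+1),\ldots,D(n)$ are determined by the post-$\be_{D(k)}$ residuals, which are (by memorylessness) independent of $G_k$. Carefully chaining these conditional independences across all $k$ yields the joint independence stated in the fact.
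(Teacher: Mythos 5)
Your proof is correct. Note that the paper itself gives no proof of this fact; it is simply imported from \cite{Nag06}, so there is no in-paper argument to compare against. Your derivation is the standard one underlying Nagaraja's representation: the competing-risks identity for the minimum of independent exponentials (value independent of identity of the minimizer, with rate $\Lambda_k=\sum_{j\ge k}\lambda_{D(j)}$, which is measurable with respect to $D(1),\dots,D(k-1)$), the memoryless property to reduce to the surviving coordinates, and a chained conditional-independence argument to get that the normalized gaps are jointly i.i.d.\ $\mathrm{Exp}(1)$ and independent of the \emph{entire} anti-rank vector rather than just of $D(k)$ at each step. You correctly flag that last point as the only delicate part and resolve it the right way: given the past and $\be_{D(k)}$, the future anti-ranks are functions of the post-$\be_{D(k)}$ residuals, which memorylessness makes independent of the gap $G_k$. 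The telescoping then gives the stated formula.
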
 
With the above results, we are able to describe the characterization to each entry in the scaled vector.
\begin{corollary}
\corlab{cor:exp_inverse}
Let $f \in \mathbb{R}^n$ be a frequency vector. Let $\left(\be_1, \ldots, \be_n\right)$ be i.i.d. exponential random variables with rate $1$. We define $z_i=f_i / \be_i^{1 / p}$. Let $(D(1), \ldots, D(n))$ be the anti-rank vector of the vector $\left(z_1^{-p}, \cdots z_n^{-p}\right)$, we have
\[\PPr{D(1)=i} = \frac{\left|f_i\right|^p}{\|f\|_p^p}\]
As a result, the probability that $\left|z_i\right|=\arg \max _j\left\{\left|z_j\right|\right\}$ is precisely $\left|f_i\right|^p /\|f\|_p^p$, so for a perfect $L_p$ sampler it suffices to return $i \in[n]$ with $\left|z_i\right|$ maximum.
Moreover, we have
\[z_{D(k)}=\left(\sum_{i=1}^k \frac{E_i}{\sum_{j=i}^N f_{D(j)}^p}\right)^{-1 / p},\]
where $E_i$ 's are i.i.d. exponential random variables with mean $1$, and are independent of the anti-rank vector $(D(1), \ldots, D(n))$. Specifically, we have
\[\max_{i\in[n]} \frac{f_i}{\be_i} = z_{D(1)} = \frac{\|f\|_p}{E^{1/p}}.\]
\end{corollary}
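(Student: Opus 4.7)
}
The plan is to recognize that the vector $(z_1^{-p}, \ldots, z_n^{-p})$ is itself a vector of independent exponentials with well-understood rate parameters, after which the three claims follow by direct application of the preceding facts. First I would write $z_i^{-p} = \be_i / f_i^p$ and apply \factref{fac:exp_scaling} with scaling parameter $\alpha = 1/f_i^p$: since $\be_i$ is a standard exponential (rate $1$), the variable $\be_i/f_i^p$ is exponential with rate $1/\alpha = f_i^p$. Hence the collection $(z_1^{-p}, \ldots, z_n^{-p})$ consists of independent exponentials with rates $\lambda_i = f_i^p$.

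Next I would handle the first claim. The anti-rank $D(1)$ of $(z_1^{-p}, \ldots, z_n^{-p})$ is the index minimizing $z_i^{-p}$, which coincides with the index maximizing $|z_i|$. Plugging $\lambda_i = f_i^p$ into \factref{fac:exp_max_prob} gives
\[
\PPr{D(1) = i} \;=\; \frac{f_i^p}{\sum_{j=1}^n f_j^p} \;=\; \frac{|f_i|^p}{\|f\|_p^p},
\]
which is precisely the stated $L_p$-sampling probability and justifies the remark that returning $\argmax_i |z_i|$ yields a perfect $L_p$ sample.

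For the second claim, I would apply \factref{fac:exp_order_statistics} directly to the same exponentials $(z_1^{-p}, \ldots, z_n^{-p})$ with rates $\lambda_{D(j)} = f_{D(j)}^p$, yielding
\[
z_{D(k)}^{-p} \;=\; \sum_{i=1}^k \frac{E_i}{\sum_{j=i}^n f_{D(j)}^p},
\]
for i.i.d.\ standard exponentials $E_1, \ldots, E_n$ independent of the anti-rank vector. Raising both sides to the $-1/p$ power gives the claimed expression for $z_{D(k)}$. For the last assertion, I would specialize to $k=1$: the inner sum becomes $\sum_{j=1}^n f_{D(j)}^p = \|f\|_p^p$, so
\[
z_{D(1)} \;=\; \left(\frac{E_1}{\|f\|_p^p}\right)^{-1/p} \;=\; \frac{\|f\|_p}{E_1^{1/p}},
\]
and setting $E := E_1$ completes the identification.

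The proof is essentially a mechanical translation using the three preceding facts, so there is no substantive obstacle; the only care needed is the bookkeeping in \factref{fac:exp_scaling} to confirm that $\be_i/f_i^p$ has rate $f_i^p$ (not $1/f_i^p$), and to note that the anti-rank of $(z_i^{-p})$ inverts naturally to the rank of $(|z_i|)$, so that ``smallest $z_i^{-p}$'' corresponds to ``largest $|z_i|$'' as required for $L_p$ sampling.
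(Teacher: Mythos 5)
Your proposal is correct and follows the same route as the paper's proof: identify $z_i^{-p}=\be_i/f_i^p$ as an exponential with rate $f_i^p$ via \factref{fac:exp_scaling}, then invoke \factref{fac:exp_max_prob} and \factref{fac:exp_order_statistics}. You even spell out the $k=1$ specialization, which the paper's proof leaves implicit.
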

\begin{proof}
By \factref{fac:exp_scaling}, the variable $z_i^{-p}=\be_i /f_i^p$ is exponentially distributed with rate $\lambda_i=f_i^p$. By \factref{fac:exp_max_prob}, we have, 
\[\PPr{D(1)=i}=\PPr{i=\arg \min \left\{\left|z_1\right|^{-p}, \ldots,\left|z_n\right|^{-p}\right\}}=\frac{\left|f_i\right|^p}{\|f\|_p^p}.\] 
Moreover, by \factref{fac:exp_order_statistics}, we have,
\[z_{D(k)}=\left(\sum_{i=1}^k \frac{E_i}{\sum_{j=i}^N \lambda_{D(j)}}\right)^{-1 / p}=\left(\sum_{i=1}^k \frac{E_i}{\sum_{j=i}^N f_{D(j)}^p}\right)^{-1 / p},\]
where $E_i$ 's are i.i.d. exponential random variables with mean $1$, and are independent of the anti-rank vector $(D(1), \ldots, D(n))$.
\end{proof}

We state the anti-concentration property of exponential random variables, which means that there is a gap between the first max and the second max of the scaled vector.
\begin{lemma}[Anti-concentration]
\lemlab{lem:anti}
Let $f \in \mathbb{R}^n$ be a frequency vector. 
Let $\left(\be_1, \ldots, \be_n\right)$ be i.i.d. exponential random variables with rate $1$. We define $z_i=f_i / \be_i^{1 / p}$. Let $(D(1), \ldots, D(n))$ be the anti-rank vector of the vector $\left(z_1^{-p}, \cdots z_n^{-p}\right)$. For parameter $\eps \in (0,1]$ we have
\[\PPr{\frac{z_{D(1)}}{z_{D(2)}} \le 1+\eps}=\O{\eps}.\]
\end{lemma}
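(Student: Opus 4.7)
The plan is to reduce the event $\{z_{D(1)}/z_{D(2)} \le 1+\eps\}$ to a simple statement about the ratio of two independent standard exponential random variables, using the explicit representation of the order statistics of $z_i^{-p}$ furnished by \corref{cor:exp_inverse}.

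First, I would invoke \corref{cor:exp_inverse} to write
\[
z_{D(1)}^{-p} \;=\; \frac{E_1}{\|f\|_p^p}, \qquad
z_{D(2)}^{-p} \;=\; \frac{E_1}{\|f\|_p^p} \;+\; \frac{E_2}{\|f\|_p^p - f_{D(1)}^p},
\]
where $E_1, E_2$ are i.i.d.\ standard exponentials independent of the anti-rank vector. Since $z_{D(1)} \ge z_{D(2)}$, the event $\{z_{D(1)}/z_{D(2)}\le 1+\eps\}$ is equivalent to $\{z_{D(2)}^{-p} \ge (1+\eps)^{-p}\, z_{D(1)}^{-p}\}$, which after subtracting $z_{D(1)}^{-p}$ from both sides becomes
\[
\frac{E_2}{\|f\|_p^p - f_{D(1)}^p} \;\le\; \bigl((1+\eps)^p - 1\bigr)\cdot \frac{E_1}{\|f\|_p^p}.
\]
Rearranging, this is exactly
\[
\frac{E_2}{E_1} \;\le\; \bigl((1+\eps)^p - 1\bigr)\cdot \frac{\|f\|_p^p - f_{D(1)}^p}{\|f\|_p^p} \;\le\; (1+\eps)^p - 1.
\]

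Next, I would use the standard fact that for independent standard exponentials $E_1, E_2$, the ratio satisfies $\PPr{E_2/E_1 \le t} = t/(1+t) \le t$ for all $t\ge 0$; this follows because $E_2/(E_1+E_2)\sim\mathrm{Uniform}(0,1)$. Combined with the elementary bound $(1+\eps)^p - 1 \le 2p\eps = \O{\eps}$ for $\eps\in(0,1]$ and constant $p$, we conclude
\[
\PPr{\frac{z_{D(1)}}{z_{D(2)}}\le 1+\eps} \;\le\; \PPr{\frac{E_2}{E_1}\le (1+\eps)^p - 1} \;=\; \O{\eps},
\]
since the anti-rank vector is independent of $E_1, E_2$ and so conditioning on it does not affect the computation. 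The only subtlety is ensuring the independence of $E_1, E_2$ from $D(1)$, which is exactly what \corref{cor:exp_inverse} (via \factref{fac:exp_order_statistics}) provides; there are no real technical obstacles beyond the algebraic manipulation above.
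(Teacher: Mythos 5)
Your proposal is correct and follows essentially the same route as the paper: both use \corref{cor:exp_inverse} to express $z_{D(1)}^{-p}$ and $z_{D(2)}^{-p}$ via the independent exponentials $E_1,E_2$, drop the term $f_{D(1)}^p$ to reduce the event to $\{E_2/E_1 \le \O{\eps}\}$, and bound that probability (the paper by direct integration, you via the uniformity of $E_2/(E_1+E_2)$ — an equivalent computation). The only blemish is the intermediate restatement of the event, which should read $z_{D(2)}^{-p} \le (1+\eps)^{p} z_{D(1)}^{-p}$ rather than $z_{D(2)}^{-p} \ge (1+\eps)^{-p} z_{D(1)}^{-p}$ (the latter is vacuous); your subsequent displayed inequality is the correct consequence, so the argument is unaffected.
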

\begin{proof}
By \corref{cor:exp_inverse}, let $F := \|f\|_p^p$ we have $z_{D(1)}^p = \frac{F}{\be_1}$ and $z_{D(2)}^p = \left(\frac{\be_1}{F} + \frac{\be_2}{F-f_{D(1)}^p}\right)^{-1}$, where $\be_1$ and $\be_2$ are independent variables. Then, we have
\[z_{D(2)}^p = \left(\frac{\be_1}{F} + \frac{\be_2}{F-f_{D(1)}^p}\right)^{-1} \le \left(\frac{\be_1}{F} + \frac{\be_2}{F}\right)^{-1} = \frac{F}{\be_1+\be_2}.\]
Thus, we have
\[\PPr{\frac{z_{D(1)}^p}{z_{D(2)}^p} \le 1+\eps} \le \PPr{\be_2 \le \eps \cdot \be_1}.\]
Now, due to the pdf of exponential variables, we have
\begin{align*}
\PPr{\be_2 \le \eps \cdot \be_1} &= \int_0^\infty e^{-s} \int_0^{\eps s} e^{-t} dtds \\
&= \int_0^\infty e^{-s} (1 - e^{-\eps s}) ds \\
&= -e^{-s} ~\Big|_0^\infty + \frac{1}{1+\eps} e^{-(1+\eps)s} ~\Big|_0^\infty \le \eps.
\end{align*}
Therefore, we have,
\[\PPr{\frac{z_{D(1)}^p}{z_{D(2)}^p} \le 1+\eps} \le \eps.\]
For parameter $\eps \in (0,1]$ and constant $p$, we have
\[\PPr{\frac{z_{D(1)}}{z_{D(2)}} \le 1+\eps} \le \O{\eps},\]
as desired.
\end{proof}

Next, we prove that the geometric mean of three exponential random variables has bounded variance, which will be used in our geometric mean estimator.
\begin{lemma}[Geometric mean]
\lemlab{lem:geo:exp:var}
\label{lem:geo:exp:var}
Let $\be_1,\be_2,\be_3$ be independent exponential random variables with rate $1$ and let $X_1=\frac{1}{\be_1}$, $X_2=\frac{1}{\be_2}$, and $X_3=\frac{1}{\be_3}$. 
Let $X=\sqrt[3]{X_1\cdot X_2\cdot X_3}$. 
Then $\Ex{X}=C_{\ref{lem:geo:exp:var}}$ for some constant $C_{\ref{lem:geo:exp:var}}\in(0,8]$ and $\Ex{X^2}\le 27$.
\end{lemma}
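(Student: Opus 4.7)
The plan is to exploit independence to factor the expectation of the product into a product of expectations, then compute each factor via the gamma function. Specifically, since $\be_1,\be_2,\be_3$ are independent, I would write
\[
\Ex{X} = \Ex{(X_1X_2X_3)^{1/3}} = \Ex{X_1^{1/3}}\Ex{X_2^{1/3}}\Ex{X_3^{1/3}} = \bigl(\Ex{\be^{-1/3}}\bigr)^3,
\]
where $\be$ denotes a standard exponential, and analogously
\[
\Ex{X^2} = \bigl(\Ex{\be^{-2/3}}\bigr)^3.
\]

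Next I would evaluate the single-variable integrals. For any $s\in(0,1)$, the substitution $u=x$ gives
\[
\Ex{\be^{-s}} = \int_0^{\infty} x^{-s} e^{-x}\,\d x = \Gamma(1-s),
\]
which is finite precisely because $s<1$ makes the singularity at $0$ integrable. Thus $\Ex{X}=\Gamma(2/3)^3$ and $\Ex{X^2}=\Gamma(1/3)^3$, and in particular both quantities are finite and positive, which is the main qualitative point being used in the sampler's analysis.

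Finally, I would verify the numerical bounds. Using standard estimates $\Gamma(2/3)<2$ (for instance, from $\Gamma(2/3)\approx 1.3541$) yields $\Ex{X}=\Gamma(2/3)^3\in(0,8]$, giving the desired constant $C_{\ref{lem:geo:exp:var}}$. Similarly, $\Gamma(1/3)\approx 2.6789<3$, so $\Ex{X^2}=\Gamma(1/3)^3<27$. The only mild obstacle is justifying these numerical values cleanly; I would either cite standard tables for $\Gamma$ or note the crude bounds $\Gamma(2/3)\le\Gamma(1)=1\cdot$ something — more carefully, use the log-convexity of $\Gamma$ together with $\Gamma(1)=1$ and $\Gamma(1/2)=\sqrt{\pi}$ to sandwich $\Gamma(2/3)$ and $\Gamma(1/3)$, which suffices to establish the claimed upper bounds without appealing to numerics.
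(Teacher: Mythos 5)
Your proof is correct. It uses the same key decomposition as the paper — factoring $\Ex{X}$ and $\Ex{X^2}$ into products of one-dimensional moments via independence — but differs in how the scalar moment $\Ex{\be^{-s}}$ is handled. The paper bounds $\Ex{X_i^{1/3}}=\int_0^\infty \PPr{X_i>t^3}\,dt$ crudely using the tail estimate $\PPr{X_i>t}\le 1/t$, splitting the integral at $t=1$ to get $\Ex{X_i^{1/3}}\le 2$ and $\Ex{X_i^{2/3}}\le 3$, hence the stated bounds $8$ and $27$. You instead evaluate the integral exactly as $\Gamma(1-s)$, which identifies the constant precisely as $C_{\ref{lem:geo:exp:var}}=\Gamma(2/3)^3\approx 2.48$ and gives the sharper second-moment value $\Gamma(1/3)^3\approx 19.2<27$; this is strictly more information than the lemma claims (the paper only ever needs existence of the constant and finiteness of the second moment). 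Your log-convexity route to the numerical bounds does go through — e.g.\ $\Gamma(2/3)\le \Gamma(1/2)^{2/3}\Gamma(1)^{1/3}=\pi^{1/3}<2$, and $\Gamma(1/3)=3\Gamma(4/3)<3$ since $\Gamma(4/3)\le\Gamma(1)^{1/3}\Gamma(3/2)^{2/3}<1$ — so no numerics are actually needed, though spelling this out would make the writeup self-contained.
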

\begin{proof}
Since $\be_1,\be_2,\be_3\ge 0$, then we trivially have $\Ex{X}\ge 0$ and $\Ex{X^2}\ge 0$. 
Since $X_1$, $X_2$, and $X_3$ are independent random variables, then we have 
\begin{align*}  \Ex{X}=\Ex{X_1^{1/3}}\cdot\Ex{X_2^{1/3}}\cdot\Ex{X_3^{1/3}}.\end{align*}  
For $i\in\{1,2,3\}$, we have
\begin{align*}  \Ex{X_i^{1/3}}=\int_0^\infty\PPr{X_i>t^3}\,dt.\end{align*}  
As $\be_i$ is an exponential random variable, then the probability density function of $X_i$ is $p(t)=\frac{1}{t^2}\cdot e^{-\frac{1}{t}}$. 
Hence, we have $\PPr{X_i>t}\le\frac{1}{t}$ so that 
\begin{align*}
\Ex{X_i^{1/3}}&=\int_0^\infty\PPr{X>t^3}\,dt\\
&=\int_0^1\PPr{X_i>t^3}\,dt+\int_1^\infty\PPr{X_i>t^3}\,dt
\end{align*}
Now, for the first term, it satisfies
$\int_0^1\PPr{X_i>t^3}\,dt \le 1$.
Moreover, for the second term, we have
\begin{align*}
\int_1^\infty\PPr{X_i>t^3}\,dt
\le \int_1^\infty\frac{1}{t^3}\,dt < 1,
\end{align*}
and so $\Ex{X}\le 8$. 
Similarly, we have
\[\Ex{X^2}=\Ex{X_1^{2/3}}\cdot\Ex{X_2^{2/3}}\cdot\Ex{X_3^{2/3}}.\]
Moreover, for $i \in \{1,2,3\}$, we have
\begin{align*}
\Ex{X_i^{2/3}}&=\int_0^\infty\PPr{X_i>t^{3/2}}\,dt\\
&=\int_0^1\PPr{X_i>t^{3/2}}\,dt+\int_1^\infty\PPr{X_i>t^{3/2}}\,dt
\end{align*}
Again, for the first term, it satisfies
$\int_0^1\PPr{X_i>t^3}\,dt \le 1$.
Moreover, for the second term, we have
\begin{align*}
\int_1^\infty\PPr{X_i>t^{3/2}}\,dt \le \int_1^\infty\frac{1}{t^{3/2}}\,dt \le 3,
\end{align*}
and so $\Ex{X^2}\le 27$. 
\end{proof}

We next introduce an auxiliary lemma that bounds the expectation of the inverse exponential variable $\frac{1}{\be}$ after truncating its tail.
\begin{lemma}
\lemlab{lem:truncated:inverse:ex}
Let $\be$ be an exponential variable with rate $1$. We define event $\calE$ as $\frac{1}{\poly(n)} < \be < \poly(n)$. Then, we have
\[\Ex{\frac{1}{\be_i} ~|~ \calE} = \O{\log n}.\] 
\end{lemma}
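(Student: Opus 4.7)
The plan is to compute the conditional expectation directly using the probability density function of $\be$, exploiting the fact that $\Pr[\calE]$ is essentially a constant, so conditioning only changes things by a constant factor. Let me set $a = \frac{1}{\poly(n)}$ and $b = \poly(n)$ to be the truncation thresholds defining $\calE$. Then
\[\Ex{\frac{1}{\be} \,\Big|\, \calE} = \frac{1}{\Pr[\calE]} \int_a^b \frac{1}{x} \cdot e^{-x}\, dx.\]
First I would observe that $\Pr[\calE] = \int_a^b e^{-x}\, dx = e^{-a} - e^{-b} = 1 - O(1/\poly(n))$, which is certainly $\Omega(1)$, so the normalizing constant $1/\Pr[\calE]$ contributes only a constant overhead and can be absorbed into the final $O(\log n)$.

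The main task is therefore to bound $\int_a^b \frac{e^{-x}}{x}\, dx$. I would split this integral at $x=1$ and handle the two regions separately since the integrand has very different behavior on each side. For the upper part, the exponential factor dominates and we get
\[\int_1^b \frac{e^{-x}}{x}\, dx \le \int_1^\infty e^{-x}\, dx = e^{-1} = O(1).\]
For the lower part, where $e^{-x} \le 1$, the $1/x$ factor is what produces the logarithm:
\[\int_a^1 \frac{e^{-x}}{x}\, dx \le \int_a^1 \frac{1}{x}\, dx = \ln(1/a) = O(\log n),\]
using that $a = 1/\poly(n)$. Summing the two pieces yields an $O(\log n)$ bound on the integral, and dividing by the constant lower bound on $\Pr[\calE]$ preserves this.

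There is no real obstacle here; the statement is essentially the familiar observation that although $\Ex{1/\be}$ diverges (because of the singularity at $0$), truncating the bottom tail at $1/\poly(n)$ tames the divergence at the cost of a $\log n$ factor, while truncating at the top is free since the exponential tail already decays faster than $1/x$. The only care needed is to make sure the constant in $\poly(n)$ is consistent with whatever $\poly$ appears in the definition of $\calE$, but this does not affect the asymptotic $O(\log n)$ conclusion.
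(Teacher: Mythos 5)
Your proof is correct and is essentially the same argument as the paper's: both extract the $\O{\log n}$ factor by integrating a $1/x$-type bound over the truncated range $[1/\poly(n),\poly(n)]$, the only cosmetic difference being that the paper invokes the tail identity $\int \PPr{1/\be > t}\,\mathrm{d}t$ with $\PPr{\be < 1/t} \le 1/t$, whereas you integrate the density directly and split at $x=1$. Your explicit treatment of the normalizing factor $\Pr[\calE] = \Omega(1)$ is a small point of added care that the paper glosses over, but it does not change the approach.
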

\begin{proof}
Due to the probability density function of exponential variables, we have
\[\Ex{\frac{1}{\be_i} ~|~ \calE} = \int_{\frac{1}{\poly(n)}}^{\poly(n)} \PPr{\frac{1}{\be_i} > t} \mathrm{d} t 
\le \int_{\frac{1}{\poly(n)}}^{\poly(n)} \frac{1}{t} \mathrm{d} t
= \O{\log n}.\] 
\end{proof}
We note that the event $\calE$ in the lemma statement of \lemref{lem:truncated:inverse:ex} happens with probability $\frac{1}{\poly(n)}$ due to the pdf of exponential variables.
Since we generate at most $\poly(n)$ independent exponential variables in all our algorithms, in this paper we condition on that $\calE$ holds for each exponential variable we generate, which still happens with high probability by a union bound.

\section{Perfect \texorpdfstring{$L_p$}{Lp} Sampler}
In this section, we present our construction of the perfect $L_p$ sampler in the distributed monitoring setting.
Our $L_p$ samplers works by first scaling the vector $f$ by exponential random variables to obtain a scaled vector $g = [f_1/\be_1^{1/p},\ldots, f_n/\be_n^{1/p}]$. 
We then aim to find the maximum index $i^*$ of $g$. 
\secref{sec:pl:recovery} introduces a crucial subroutine that identifies a small set of coordinates $\SetPL$ that are $\frac{1}{\polylog(n)}$-heavy in $g$, so that we can track each item in thte distributed data stream.
In \secref{sec:truncated:taylor}, we produce a truncated Taylor estimator such that it obtains unbiased estimates of $f_i^p$, which is applied in our $L_p$ sampler and our $F_p$ estimation algorithms.
Lastly, in \secref{sec:sampler:construct}, we show the algorithm of the sampler.

\subsection{Identification of Candidate Maxima}
\seclab{sec:pl:recovery}
In this section, we assume the input to be a scaled vector $g\in\mathbb{R}^n$ distributed across $k$ servers. 
Specifically, we assume that for the index $i^*=\argmax_{i\in[n]} g_i$, we have $g_{i^*}^p\ge\frac{1}{\polylog(n)}\cdot\|g\|_p^p$ and our goal is to identify a small subset $\SetPL\subseteq[n]$ of indices that contain $i^*$, while using $k^{p-1}\cdot\polylog(n)$ total communication. 
The main intuition is that since $g_{i^*}$ is sufficiently large, then the extremal cases are either there exists a server $j\in[k]$ so that $g_{i^*}(j)$ is large or there exist many servers $j\in[k]$ with a non-trivial contribution to $g_{i^*}$. 
In the first case, we will sample the server $j$ containing the large value of $g_{i^*}(j)$ and in the second case, we will sample a sufficient number of servers with a non-trivial contribution to $g_{i^*}$ and then we can use these servers to estimate $g_{i^*}$ using concentration. 
More generally, anywhere in the spectrum between the two extremes can occur, so we formalize this intuition through the notion of level sets. 
In particular, for each fixed index $i\in[n]$, we partition the servers $[k]$ based on their contribution $g_i(j)$ with respect to the total value $g_i$. 
\begin{definition} 
\deflab{def:level:set}
For each index $i \in [n]$, server $j \in [k]$, we define the rescaled vector $g(j)$ to be $g_i(j) = \frac{f_i(j)}{\be_i^{1/p}}$. 
We define $U := \sum_{i \in [n]} \sum_{j \in [k]} g_i(j)^p$ and $G := \|g\|_p^p = \sum_{i\in[n]} g_i^p$.
We partition the servers into groups $\Lambda_b$, where $b \ge 0$ is an integer, so that 
\begin{align*}  \Lambda_b=\left\{j\in[k]\,:\,\frac{g_i(j)}{U^{1/p}}\in\left(\frac{1}{2^{b}},\frac{1}{2^{b-1}}\right]\right\}.\end{align*}  
Note that for each coordinate $i\in[n]$, we can decompose
\begin{align*}  g_i=\sum_{j\in[k]}\frac{f_i(j)}{e_i^{1/p}}=\sum_a\sum_{j\in \Lambda_a}\frac{f_i(j)}{e_i^{1/p}}.\end{align*}  
Then we define the contribution $C_i(b)$ of the group $\Lambda_b$ toward the frequency of an item $i\in[n]$ is
\begin{align*}  C_i(b)=\sum_{j\in\Lambda_b}g_i(j),\end{align*}  
so that $g_i=\sum_b C_i(b)$.
\end{definition}

We provide the algorithm construction in \algref{alg:est:server} and \algref{alg:est:coor}, which estimates the contribution of each level set in \defref{def:level:set}.

\begin{algorithm}[!htb]
\caption{Algorithm to estimate each scaled coordinate: Server}
\alglab{alg:est:server}
\begin{algorithmic}[1]
\Require{Input scaled vector $g(j)\in\mathbb{R}^n$ that arrives at server $j$ as a data stream}
\Ensure{Estimates of the frequencies of all coordinates}
\State{$R\gets \O{\log n}$}
\State{Let $C$ be the constant in the max-stability inequality} \Comment{See \factref{fact:max:heavy}}
\State{Let $\gamma = C_{\ref{lem:approx:cts}} C^2 k^{p-1}\poly\left(\frac{\log n}{\eps}\right)$ be the scaling factor for the sampling probability, which is a power of $2$.}
\If{the coordinator starts a new round}
\State{Receive an $(1+\frac{1}{100\cdot 2^p})$-approximation $\widehat{U}$ to $U=\sum_{j\in[k]}\sum_{i\in[n]}(g_i(j))^p$ from the coordinator}
\State{Generate $R$ uniform random variables $u_i^{(r)}(j) \in [0,1]$ for each $g_i(j)$}
\EndIf
\For{each time an item $i$ arrives at server $j$}
\For{$b\in [\O{\log n}]$, $r\in[R]$}
\If{$\frac{(g_i(j))^p}{\widehat{U}}\ge\frac{\eps^2}{C_{\ref{lem:approx:cts}} \cdot C^2 k^{p-1}\polylog(n)}$}
\State{Send $g_i(j)$ to the central server}
\State{Update its value once it increases by a $\frac{\eps}{100}$ fraction}
\ElsIf{$\frac{(g_i(j))^p}{\widehat{U}}\cdot \gamma\in\left[\frac{1}{2^{bp}},\frac{1}{2^{(b-1)p}}\right)$}
\State{Send $g_i(j)$ to the central coordinator if $u_i^{(r)}(j) < \frac{1}{2^{bp}}$}
\If{$g_i(j)$ has been sent and it increases by a $\frac{\eps}{100\log n}$-fraction}
\State{Inform the coordinator to delete the sample}
\State{Sample $g_i(j)$ with a newly-generated uniform variable $u_i^{(r)}(j)$}
\EndIf
\EndIf
\EndFor
\EndFor
\end{algorithmic}
\end{algorithm}

\begin{algorithm}[!htb]
\caption{Algorithm to estimate the each scaled coordinate: Coordinator}
\alglab{alg:est:coor}
\begin{algorithmic}[1]
\Require{Input vectors $g(1),\ldots,g(k)\in\mathbb{R}^n$ that arrive as a data stream, accuracy parameter $\eps\in(0,1)$}
\Ensure{Estimations of the frequencies of all coordinates}
\State{$R \gets \O{\log n}$}
\State{Obtain an $(1+\frac{1}{100\cdot 2^p})$-approximation $\widehat{U}$ to $U=\sum_{j\in[k]}\sum_{i\in[n]}(g_i(j))^p$}
\State{Pick $\xi \in [\frac{1}{2},1)$ uniformly at random}
\If{$\widehat{U}$ doubles}
\State{Delete all samples and send $\widehat{U}$ to all servers}
\EndIf
\State{$\widehat{C^{(r)}_i(b)}\gets 0$ for all $r,i,b$}
\For{$i\in[n]$}
\If{$\frac{(g_i(j))^p}{\widehat{U}}\ge\frac{\eps^3}{C_{\ref{lem:approx:cts}} \cdot C^2 k^{p-1}\polylog(n)}$}
\State{$\widehat{C_i(0)} \gets \widehat{C_i(0)} + \widehat{g_i(j)}$}
\EndIf
\For{$b\in [\O{\log n}]$, $r\in[R]$}
\If{$\frac{(g^{*,(r)}_i(j))^p}{\widehat{U}}\cdot \gamma\in\left[\frac{\xi^p}{2^{bp}},\frac{\xi^p}{2^{(b-1)p}}\right)$} 
\Comment{$g_i^{*,(r)}(j)$ is the value of $g_i(j)$ when it is sampled in the $r$-th repetition}
\If{less than $\frac{C_{\ref{lem:approx:cts}} \cdot C^2 \cdot \log^5n}{2^p \eps^2}$ samples are in level $b$ in the $r$-th repetition}
\State{Skip the level and set $\widehat{C_i(b)}\gets 0$} \label{line:skip}
\Else
\State{$\widehat{C^{(r)}_i(b)}\gets\widehat{C^{(r)}_i(b)}+2^{bp}\cdot\widehat{g^{(r)}_i(j)}$}
\EndIf
\State{$\widehat{C_i(b)}\gets\median_{r\in[R]}\widehat{C^{(r)}_i(b)}$}
\EndIf
\EndFor
\State{$\widehat{g_i}\gets\frac{1}{R}\sum_{r \in [R]}\sum_{b\in\mathbb{Z}}\widehat{C_i^{(r)}(b)}$}
\EndFor
\State{\Return $\widehat{g_1},\ldots,\widehat{g_n}$}
\end{algorithmic}
\end{algorithm}

Now, we analyze the correctness and communication complexity of our algorithm. 
We first show that the maximizer of the scaled vector is roughly $\frac{1}{\log^2 n}$-heavy, which is given by the maximum stability of exponential random variables.

\begin{lemma}
\lemlab{lem:G:max_stability}
Let $G = \sum_{i \in [n]} g_i^p$, we have
\[ \PPr{ \frac{G^{1/p}}{C \log^2n} \le \max_{i \in [n]} g_i } \ge 1 - \frac{1}{\poly(n)}.\]
\end{lemma}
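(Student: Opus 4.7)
The plan is a direct application of \factref{fact:max:heavy}. Recall that by construction $g_i = f_i/\be_i^{1/p}$, where $\be_1,\ldots,\be_n$ are i.i.d.\ standard exponential random variables, so that
\[
g_i^p = \frac{f_i^p}{\be_i}.
\]
I would invoke \factref{fact:max:heavy} with the choice $\alpha_i = f_i^p$ (which is nonnegative, as required). This gives
\[
\Pr\!\left[\frac{\max_{i\in[n]} f_i^p/\be_i}{\sum_{i\in[n]} f_i^p/\be_i} \ge \frac{1}{C'\log^2 n}\right] \ge 1 - \frac{1}{\poly(n)},
\]
for the constant $C'$ provided by the fact. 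Rewriting both sides in terms of $g_i$, the numerator becomes $\max_i g_i^p$ and the denominator is exactly $G = \sum_i g_i^p$, so with high probability
\[
\max_{i\in[n]} g_i^p \;\ge\; \frac{G}{C'\log^2 n}.
\]

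Taking $p$-th roots on both sides (the function $x\mapsto x^{1/p}$ is monotone for $p\ge 1$), this yields
\[
\max_{i\in[n]} g_i \;\ge\; \frac{G^{1/p}}{(C'\log^2 n)^{1/p}}.
\]
Since $p\ge 1$ and $C'\log^2 n \ge 1$ for $n$ sufficiently large, we have $(C'\log^2 n)^{1/p} \le C'\log^2 n$, so choosing the constant $C$ in the lemma statement to be any constant with $C \ge C'$ (which we are free to do, as $C$ is merely the constant from \factref{fact:max:heavy} that propagates through) yields
\[
\max_{i\in[n]} g_i \;\ge\; \frac{G^{1/p}}{C\log^2 n},
\]
which is precisely the conclusion of the lemma.

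There is essentially no obstacle here; the content is entirely packaged inside \factref{fact:max:heavy}, and the only care needed is to verify that the scaling factor $1/p$ on the exponentials translates cleanly to the $p$-th moment, and that the $p$-th root on the heavy-hitter inequality can be absorbed into the constant $C$ since $p$ is a fixed constant $\ge 1$.
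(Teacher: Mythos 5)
Your proof is correct and follows essentially the same route as the paper: both apply \factref{fact:max:heavy} with $\alpha_i = f_i^p$ to conclude $\max_i g_i^p \ge G/(C\log^2 n)$ with high probability. You are in fact slightly more careful than the paper in spelling out the $p$-th root step and the absorption of $(C\log^2 n)^{1/p}$ into $C\log^2 n$, which the paper leaves implicit.
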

\begin{proof}
Recall the max-stability property of exponential random variables (see \factref{fact:max:heavy}). Notice we define $G = \sum_{i \in [n]} g_i^p = \sum_{i \in [n]} \frac{f_i^p}{e_i}$, thus we have
\[ \PPr{ \frac{G}{C \log^2n} \le \max_{i \in [n]} g_i^p} \ge 1 - \frac{1}{\poly(n)}.\]
\end{proof}

We next show that we indeed sample each coordinate $g_i(j)$ with the correct probability. 
\begin{lemma}
\lemlab{lem:sample:prob}
In \algref{alg:est:server}, for any fixed time $t$ in the stream, all coordinates $g_i(j)$ that satisfy $\frac{(g_i(j))^p}{\widehat{U}}\cdot \gamma\in\left[\frac{1}{2^{bp}},\frac{1}{2^{(b-1)p}}\right)$ at $t$ are sampled with probability $\frac{1}{2^{bp}}$.
\end{lemma}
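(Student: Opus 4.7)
The plan is to show that the probability the sample slot for $(i,j,r)$ is occupied at time $t$ equals $T := 1/2^{bp}$, by conditioning on the most recent refresh of the associated uniform variable $u := u_i^{(r)}(j)$. Let $T_\star \le t$ be the time at which the current value of $u$ was drawn: either the start of the current round if no refresh has triggered, or the time of the most recent refresh event otherwise. By construction, $u$ is then uniform on $[0,1]$ and independent of the input stream and of all other randomness generated up to $T_\star$.

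The key observation is monotonicity. Since the stream is insertion-only, $g_i(j)$ is non-decreasing in $t'$, so the level $b^{\star}(t')$ realized at time $t'$ is non-increasing and the corresponding threshold $1/2^{b^{\star}(t')p}$ is non-decreasing. Inspecting \algref{alg:est:server}, for each update in $(T_\star, t]$ the server compares the fixed uniform $u$ to the current threshold, and the sample is written to the coordinator as soon as $u$ falls below the threshold; by the definition of $T_\star$, no further refresh can occur on $(T_\star, t]$, so once the sample is stored it remains stored until time $t$. Therefore, the sample is stored at time $t$ if and only if $u < 1/2^{b^{\star}(t')p}$ for some $t' \in [T_\star, t]$, which by monotonicity of the thresholds is equivalent to $u < 1/2^{b^{\star}(t)p} = T$.

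Since $u$ is uniform on $[0,1]$ independently of $T_\star$ and of the level at $t$, this yields $\PPr{\text{sampled at } t \mid T_\star} = \PPr{u < T} = T$, and integrating over $T_\star$ produces the claimed $\PPr{\text{sampled at } t} = 1/2^{bp}$. The main technical obstacle is ensuring that the uniform produced at each refresh event is genuinely independent of the input history and of prior randomness; this holds because the refresh trigger depends only on the previous uniform and on $g_i(j)$'s value trajectory, while the new uniform is drawn strictly afterwards and never enters the triggering decision. One edge case to verify is a refresh occurring at time $t$ itself, where the same argument applies using the freshly drawn uniform, and another is to confirm that the hypothesis $(g_i(j))^p/\widehat{U}\cdot\gamma \in [1/2^{bp},1/2^{(b-1)p})$ places us in the sampling branch rather than the exact-tracking branch, so that the uniform logic is indeed what governs whether the sample is stored.
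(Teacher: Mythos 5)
Your proposal follows essentially the same route as the paper's proof: fix the time $t'$ (your $T_\star$) at which the currently active uniform was drawn, use the insertion-only property to argue that the sampling threshold $1/2^{b^\star(\tau)p}$ is non-decreasing on $[T_\star,t]$, and conclude that the sample is present at time $t$ exactly when the single uniform $u$ falls below the threshold realized at $t$. The monotonicity step and the reduction to one comparison of a uniform against the current threshold are identical to the paper's argument, so in structure your proof matches.

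The one point where you go beyond the paper --- and where the argument as written does not quite close --- is the claim that $u$ is uniform \emph{conditioned on $T_\star$ being the most recent refresh}. Your justification ("the new uniform is drawn strictly afterwards and never enters the triggering decision") only establishes independence of $u$ from the randomness and stream history up to $T_\star$. But the event you integrate over is that $T_\star$ is the \emph{last} refresh before $t$, i.e., that no further refresh occurs on $(T_\star,t]$; as \algref{alg:est:server} is literally written, a later refresh fires only if the item \emph{has been sent} and then grows by an $\frac{\eps}{100\log n}$-fraction, and whether the item has been sent after $T_\star$ is determined by $u$ itself. So the conditioning event is correlated with $u$: with a constant threshold $\theta$ and a single value increment past the $\frac{\eps}{100\log n}$-fraction mark, the literal process leaves the item stored at time $t$ with probability $\theta^2$ rather than $\theta$ (the uniform is never redrawn when the first draw exceeded $\theta$). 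What the lemma needs --- and what the paper's proof silently assumes --- is that the redraw is triggered by the value trajectory of $g_i(j)$ alone, so that the refresh times, and hence $T_\star$, are deterministic functions of the stream and the active uniform really is unconditionally uniform. Under that reading your argument is complete and correct; you should state that assumption explicitly rather than asserting the conditional uniformity outright.
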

\begin{proof}
Consider a fixed time $t$, let $t'$ be the time we generate the last uniform variable $u_i^{(r)}(j)$.
Let $g_{i,t}(j)$ denote the value of $g_i(j)$ at time $t$.
The sampling probability of $g_i(j)$ at $t$ equals to $\PPr{u_i^{(r)}(j) < \frac{1}{2^{b'p}}} = \frac{1}{2^{b'p}}$, where $b'$ is the minimum positive integer such that $\frac{1}{2^{b'p}} \le \max_{\tau \in[t',t]} \frac{(g_{i,\tau}(j))^p}{\widehat{U}}\cdot \gamma$.
Notice that our stream is insertion-only, then we have
\[\max_{\tau \in[t',t]} \frac{(g_{i,\tau}(j))^p}{\widehat{U}}\cdot \gamma = \frac{(g_{i,t}(j))^p}{\widehat{U}}\cdot \gamma.\]
Therefore, for $\frac{(g_{i,t}(j))^p}{\widehat{U}}\cdot \gamma\in\left[\frac{1}{2^{bp}},\frac{1}{2^{(b-1)p}}\right)$, we have $b'=b$, which gives the correct sampling probability.
\end{proof}
After sampling a number of coordinates using \algref{alg:est:server}, the next step is for \algref{alg:est:coor} to use the samples to provide meaningful estimates to $g_i$. 
Namely, if $g_i$ is large, then the estimate for $g_i$ should be accurate. 
Conversely, if $g_i$ is small, then the estimate for $g_i$ does not need to be accurate, but it should not be large. 
The next statement reflects this property of our estimate using a level set argument. 
We remark that although the proof is involved, the techniques are rather standard, as similar level arguments have been commonly used in literature for sublinear algorithms, e.g., \cite{IndykW05,WoodruffZ12,BlasiokBCKY17,WoodruffZ21a,JayaramWZ24}. 

We note that our \lemref{lem:est:cts:p:small} can also be obtained using an extension of the $L_p$-heavy-hitter algorithm from recent work \cite{HuangXZW25} for $p\ge2$. 
For $1<p<2$, an $L_2$-heavy-hitter algorithm with $\O{k}$ communication suffices to give a constant-factor approximation, which is sufficient in the construction of our perfect $L_p$-sampler. 
However, our result improves the dependence on $k$ for $\eps$-heavy-hitters.
In addition, a modified version of our algorithm with refined analysis requires only $\O{k+sk^{p-1}}$ communication to estimate $s$ scaled vectors, rather than $\O{sk}$ using the $L_2$-heavy-hitter algorithm, e.g., as a subroutine to acquire $s$ $L_p$-samples. 
This is by replacing our $\widehat{U}$ by a $2$-approximation of the $L_1$-of-$L_p$ sum of the original vector: $\sum_{j\in[k]}\sum_{i\in[n]}(f_i(j))^p$.
Thus, we keep the result here for completeness.

\begin{lemma}
\lemlab{lem:est:cts:p:small}
\label{lem:approx:cts} 
For the choice of constant $C_{\ref{lem:approx:cts}}$, the output of \algref{alg:est:coor} satisfies 
\begin{enumerate}
    \item Let $i = \argmax_{\bi \in [n]} g_\bi$, the estimate $\widehat{g_i}$ satisfies $|\widehat{g_i} - g_i| \le \eps \cdot g_i$.
    \item For all other $i' \neq i$, the estimate $\widehat{g_{i'}}$ satisfies $\widehat{g_{i'}} \le g_{i'}+ \eps g_i$.
\end{enumerate}
That is, taking $\eps = \frac{1}{4}$, \algref{alg:est:coor} finds a set $\SetPL$ of coordinates that includes the maximizer, and only contains the items that are at least $\frac{1}{4}$-fraction of the maximizer.
\end{lemma}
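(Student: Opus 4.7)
\medskip

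\textbf{Proof plan.} The plan is to analyze each level set separately, showing that the estimator $\widehat{C_i(b)}$ for the contribution from level $b$ is unbiased with variance small enough to concentrate via the median of $R = \O{\log n}$ independent repetitions, and then to combine the per-level guarantees. I first note that by \lemref{lem:sample:prob}, conditional on a coordinate $g_i(j)$ lying in level $b$ at the current time, it has been retained in the sketch with probability exactly $1/2^{bp}$, so the Horvitz--Thompson-style estimator $\widehat{C_i^{(r)}(b)} = \sum_{j \in \Lambda_b^{(i)}, j \text{ sampled}} 2^{bp} \cdot \widehat{g_i^{(r)}(j)}$ satisfies $\Ex{\widehat{C_i^{(r)}(b)}} = (1\pm\eps/100)\, C_i(b)$. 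Its variance, by independence across servers, is bounded by $2^{bp} \cdot (\widehat{U}^{1/p}/2^{b-1})^p \cdot C_i(b) \lesssim 2^p\,\widehat{U}\cdot \mathbf{1}[C_i(b)>0]$, which I then convert, via the skip-threshold $\log^5 n/(2^p\eps^2)$, into a relative error guarantee.

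\medskip

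Next I handle the maximizer $i = \argmax_\bi g_\bi$. By \lemref{lem:G:max_stability}, $g_i^p \ge G/(C^2\log^4 n)$, so $g_i \gtrsim \widehat{U}^{1/p}/(C k^{(p-1)/p} \log^{4/p} n)$, and therefore $g_i$ must receive substantial contribution from at least one level $b$ with $|\Lambda_b^{(i)}|/2^{bp}$ large enough to exceed the skip threshold by a factor $\polylog(n)/\eps^2$; for every such ``effective'' level, a Chernoff bound (on the number of samples kept) followed by Bernstein/Chebyshev on the sum gives $\widehat{C_i^{(r)}(b)} = (1\pm\eps/\log n)\,C_i(b)$ with probability $\ge 2/3$, which the median of $R$ repetitions boosts to $1 - 1/\poly(n)$. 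For the remaining ``light'' levels of $i$, the algorithm's skip rule either keeps them, in which case the same concentration applies, or discards them, and by summing $\sum_b C_i(b)$ over discarded $b$ I bound the total missed mass by $\eps g_i/2$ using the threshold. This gives $|\widehat{g_i} - g_i| \le \eps g_i$, after incorporating the $\xi$-randomized boundary: picking $\xi \in [1/2,1)$ uniformly averages out the at-most-one ``straddling'' level per coordinate, so that the effective skipping decision is independent of the precise boundary placement.

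\medskip

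For a non-maximizer $i' \neq i$, the claim is $\widehat{g_{i'}} \le g_{i'} + \eps g_i$. On the deterministically tracked part (level $0$, where $(g_{i'}(j))^p/\widehat{U} \ge \eps^3/(C_{\ref{lem:approx:cts}} C^2 k^{p-1} \polylog(n))$), the reported values are $(1+\eps/100)$-accurate and sum to at most $(1+\eps/100)\,C_{i'}(0) \le C_{i'}(0) + \eps g_i/4$ because each such $g_{i'}(j)$ is at most $g_{i'} \le g_i$. For the sampled levels, the same concentration as above gives $\widehat{C_{i'}(b)} \le (1+\eps/\log n)\,C_{i'}(b)$ whenever the threshold is met; crucially, levels that fail the threshold contribute $0$ and cannot inflate the estimate. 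Summing over the $\O{\log n}$ levels and using $\sum_b C_{i'}(b) = g_{i'}$, the multiplicative $\eps/\log n$ overhead plus the level-$0$ slack accumulates to at most $\eps \cdot (g_{i'} + g_i) \le 2\eps g_i$ since $g_{i'} \le g_i$ (and rescaling $\eps$ by a constant gives the claim).

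\medskip

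The main obstacle I expect is controlling the \emph{non-maximizer} case rigorously: an adversarially arranged $g_{i'}$ can have mass spread across many levels, each with just above the skip threshold, and a naive union bound over levels and coordinates would lose too much. The fix is twofold: (i) the median-of-$R$ repetitions with $R = \O{\log n}$ drives the per-level, per-coordinate failure probability to $1/\poly(n)$ so that a union bound over all $n \cdot \O{\log n}$ pairs still succeeds; and (ii) the uniformly random level shift $\xi$ ensures that for each $(i',b)$ the event that ``$g_{i'}(j)$ lies within a $1/\log n$-fraction of a level boundary'' happens with probability $\O{1/\log n}$, so that boundary coordinates do not correlate the skip decisions across levels. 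Together, these two ingredients convert the per-level error bounds into the global additive-error guarantee $\widehat{g_{i'}} \le g_{i'} + \eps g_i$, completing the proof and in particular yielding the claimed set $\SetPL$ by taking $\eps = 1/4$ and retaining all indices with $\widehat{g_{i'}} \ge \frac{1}{2}\max_\bi \widehat{g_\bi}$.
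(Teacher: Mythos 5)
Your plan follows the same route as the paper: per-level Horvitz--Thompson estimators justified by \lemref{lem:sample:prob}, Chebyshev plus a median over $R=\O{\log n}$ repetitions, a separate accounting of the skipped (light) levels against $\eps g_i$, and an expected-error argument for the randomized boundary $\xi$. The level-$0$ treatment, the bias accounting for truncated levels, and the misclassification argument all match the paper's proof in substance.

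However, there is a genuine gap at the technical heart of the lemma: the per-level variance bound. You state
$\Var{\widehat{C_i^{(r)}(b)}} \le 2^{bp}\cdot(\widehat{U}^{1/p}/2^{b-1})^p\cdot C_i(b)\lesssim 2^p\widehat{U}$,
which is dimensionally inconsistent (a variance of $g$-values should scale like $g^2$, i.e.\ like $G^{2/p}$, whereas $\widehat{U}$ scales like $g^p$), and even the corrected form $\frac{1}{p_b}\sum_{j\in\Gamma_b} g_i(j)^2$ is not, by itself, small enough: you need $\Var{\widehat{C_i^{(r)}(b)}}\le \O{\eps^2/(C_{\ref{lem:approx:cts}}\log^5 n)}\cdot G^{2/p}$ so that Chebyshev gives additive error $\O{\eps g_i/\log n}$ per level (using $g_i^p\ge G/\polylog(n)$). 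The skip threshold on the \emph{number} of samples in a level cannot deliver this — a level can pass the threshold while each sampled value is large enough to blow up the Horvitz--Thompson variance. What actually closes the argument in the paper is the combination of (i) the oversampling factor $\gamma=C_{\ref{lem:approx:cts}}C^2k^{p-1}\poly(\log n/\eps)$ baked into $p_b$, (ii) the cardinality bound $|\Gamma_b|\le\min\{\O{2^b},k\}$ (each server in level $b$ contributes at least $(G^{1/p}/2^b)^p$ to $G$), and (iii) the case split: when $2^b<k$ one uses $2^{b(p-1)}\le k^{p-1}$ to cancel the $k^{p-1}$ in $\gamma$, and when $2^b\ge k$ one uses $1/p_b\le \eps^2 k/(C_{\ref{lem:approx:cts}}C^2\log^5 n)$ directly. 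This is precisely where the $k^{p-1}$ sampling budget is consumed and why the lemma holds at that communication rate; your proposal asserts the conversion "via the skip-threshold" without performing this calculation, so the claim $\widehat{C_i^{(r)}(b)}=(1\pm\eps/\log n)C_i(b)$ is unsupported as written. (Two smaller imprecisions: the extra $k^{(p-1)/p}$ in your lower bound on $g_i$ is unnecessary since $U\le G$ for $p\ge1$, and for non-maximizers Chebyshev yields an additive $\eps g_i/\log n$ error rather than the multiplicative $(1+\eps/\log n)$ factor you state — though your final additive conclusion is still the right one.)
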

\begin{proof}
We fix a time $t$ in the distributed data stream, and we consider a fix instance $r$.

\paragraph{Level $0$.}
In the case where $\frac{(g_i(j))^p}{\widehat{U}}\ge\frac{\eps^3}{C_{\ref{lem:approx:cts}} \cdot C^2 k^{p-1}\polylog(n)}$, $g_i(j)$ is sent to the coordinator and is never deleted in a round. 
Since the server updates its value whenever it increases by a $\frac{\eps}{100}s$-fraction, $\widehat{g_i(j)}$ is a $(1+\frac{\eps}{100})$-approximation to $g_i(j)$.
Thus, we have
\[\left|\widehat{C_i^{(r)}(0)} - C_i^{(r)}(0)\right| \le \eps \cdot C_i^{(r)}(0).\]

\paragraph{Idealized setting.}
Now, we analyze our sampling and re-construction procedure
We first consider an idealized setting where each of the estimates $\widehat{g_i^{(r)}(j)}$ are categorized into the correct level set.
Let $\nu\in[1,2]$ be a parameter such that $\frac{\nu G^{1/p}}{\xi U^{1/p}} = 2^m$, where $m$ is some integer. 
Notice that $G \ge U$, so we have $\frac{1}{2^m} = \O{1}$.
For each integer $b \in \mathbb{Z}$, we define the level set 
\begin{align*}  \Gamma_b=\left\{j\in[k]\,:\,\frac{g_i(j)}{G^{1/p}}\in\left(\frac{\nu}{2^{b}},\frac{\nu}{2^{b-1}}\right]\right\}.\end{align*} Recall that we define the reconstruction level sets in \algref{alg:est:coor} as follows,
\begin{align*}  \Lambda_a=\left\{j\in[k]\,:\,\frac{g_i(j)}{U^{1/p}}\in\left(\frac{\xi}{2^{a}},\frac{\xi}{2^{a-1}}\right]\right\}.\end{align*}  
Then, we have a one-to-one mapping between each pair of these level sets: 
\[ \Gamma_{b} = \Lambda_{b-m}. \]
Next, we consider casework on whether $\frac{(g_i(j))^p}{\widehat{U}}\cdot \gamma \ge \frac{C_{\ref{lem:approx:cts}} \cdot C^2 \cdot \polylog(n)}{2^p \eps^2 k}$ or whether $\frac{(g_i(j))^p}{\widehat{U}}\cdot \gamma < \frac{C_{\ref{lem:approx:cts}} \cdot C^2 \cdot \polylog(n)}{2^p \eps^2 k}$.
In short, we show that in our algorithm, if a level has contribution less than a roughly $\frac{\eps}{\log^2 n}$ fraction, then it is in the first case and we truncate those levels since they are negligible.
Otherwise, the level is the second case $2$ and the variance is small.

\paragraph{Idealized setting, large $b$.}
First, we consider a level $b$ where $g_i(j) \in \Gamma_b$ satisfies $\frac{(g_i(j))^p}{\widehat{U}}\cdot \gamma < \frac{C_{\ref{lem:approx:cts}} \cdot C^2 \cdot \log^5n}{2^p \eps^2 k}$.
Since there are $k$ servers in total, the number of samples at this level is at most $\frac{C_{\ref{lem:approx:cts}} \cdot C^2 \cdot \log^5n}{2^p \eps^2}$ with high probability.
Then, by our algorithm construction, we set $\widehat{C_i(b)} = 0$, i.e., the variance is $0$.

\paragraph{Idealized setting, small $b$.}
Second, we consider the case where $\frac{(g_i(j))^p}{\widehat{U}}\cdot \gamma \ge \frac{C_{\ref{lem:approx:cts}} \cdot C^2 \cdot \log^5n}{2^p \eps^2 k}$, that is, the sampling probability $p_b \ge \frac{C_{\ref{lem:approx:cts}} \cdot C^2 \cdot \log^5n}{\eps^2 k}$.
Consider the level set $\Gamma_b = \Lambda_{b-m}$, where $\frac{(g_i(j))^p}{U} = \Theta(1) \cdot \frac{1}{2^{bp-mp}}$.
In an idealized setting, since $\widehat{g_i(j)}$ is within $(1+\frac{\eps}{100})$-fraction of the actual value $g_i(j)$, we have
\[ \Ex{\widehat{C_i^{(r)}(b)}}= \frac{1}{p_b}\cdot\sum_{j\in\Gamma_b}p_b\cdot\widehat{g_i^{(r)}(j)}= (1\pm \eps) \cdot C_i(b).\] 
In addition, we have
\begin{align*}
\Var{\widehat{C_i^{(r)}(b)}}&\le\frac{1}{p_b^2}\cdot\sum_{j\in\Gamma_b}p_b\cdot \O{1} \cdot\left(g_i(j)\right)^2\\
&\le\frac{1}{p_b}\cdot|\Gamma_b|\cdot\O{1}\cdot\left(\frac{G^{1/p}}{2^b}\right)^2,
\end{align*} 
Since we sample all items in $\Gamma_b$ with probability $p_b = \Theta(1) \cdot \frac{(g_i(j))^p}{U}\cdot\gamma$ with $\gamma = \frac{C_{\ref{lem:approx:cts}} \cdot C^2k^{p-1} \polylog(n)}{\eps^2}$, we have
\[\Var{\widehat{C_i^{(r)}(b)}} \le  2^{bp - mp} \cdot\frac{\eps^2}{C_{\ref{lem:approx:cts}} \cdot C^2k^{p-1}\log^5n} \cdot |\Gamma_b|\cdot\O{1}\cdot\left(\frac{G^{1/p}}{2^b}\right)^2.\]
Notice that $\frac{1}{2^m} = \O{1}$ as mentioned earlier, so we have
\[\Var{\widehat{C_i^{(r)}(b)}} \le 2^{bp} \cdot\frac{\eps^2}{C_{\ref{lem:approx:cts}} \cdot C^2k^{p-1}\log^5n} \cdot |\Gamma_b|\cdot\O{1}\cdot\left(\frac{G^{1/p}}{2^b}\right)^2.\]
Since there are at most $\O{2^b}$ items in set $\Gamma_b$, so $|\Gamma_b| = \min \{\O{2^b}, k \}$. For $\O{2^b} < k$, we have
\begin{align*}
\Var{\widehat{C_i^{(r)}(b)}}&\le2^{bp} \cdot\frac{\eps^2}{C_{\ref{lem:approx:cts}}\cdot C^2 k^{p-1}\log^5n} \cdot 2^b\cdot\O{1}\cdot\left(\frac{G^{1/p}}{2^b}\right)^2\\
&=2^{b(p-1)} \cdot\frac{\eps^2}{C_{\ref{lem:approx:cts}}\cdot C^2 k^{p-1}\log^5n}\cdot\O{1}\cdot G^{2/p}.
\end{align*}
Since $p>1$, we have $2^{b(p-1)} < k^{p-1}$, so that
\begin{align*}
\Var{\widehat{C_i^{(r)}(b)}}&\le\O{1}\cdot k^{p-1} \cdot\frac{\eps^2}{C_{\ref{lem:approx:cts}} \cdot C^2 k^{p-1}\log^5n}\cdot G^{2/p}\\
&=\O{\eps^2}\cdot \frac{G^{2/p}}{C_{\ref{lem:approx:cts}} \cdot C^2\log^5n}.
\end{align*}
On the other hand, if $\O{2^b} \ge k$, since in this case, we have $\frac{1}{p_b} \le \frac{\eps^2 k}{C_{\ref{lem:approx:cts}} \cdot C^2 \cdot \log^5n}$, so that 
\begin{align*}
\Var{\widehat{C_i^{(r)}(b)}}& \le \frac{1}{p_b}\cdot|\Gamma_b|\cdot\O{1}\cdot\left(\frac{G^{1/p}}{2^b}\right)^2\\
&\le \frac{\eps^2 \cdot k}{ C_{\ref{lem:approx:cts}} \cdot C^2 \log^5n} \cdot k\cdot\O{1}\cdot \frac{G^{2/p}}{k^2} \\
&= \frac{\eps^2\cdot G^{2/p}}{C_{\ref{lem:approx:cts}} \cdot C^2 \log^5n}.
\end{align*}
Therefore, in all cases, we have
\[\Var{\widehat{C_i^{(r)}(b)}}\le\O{\eps^2}\cdot \frac{G^{2/p}}{C_{\ref{lem:approx:cts}} \cdot C^2\log^5n}.\]
Applying \lemref{lem:G:max_stability}, with high probability we have,
\[\Var{\widehat{C_i^{(r)}(b)}}\le\O{\frac{\eps^2}{C_{\ref{lem:approx:cts}}\log n}}\cdot \left(\max_{i \in [n]} g_i\right)^2.\]
Then, by Chebyshev's inequality, we have
\[\PPr{\left|\widehat{C_i^{(r)}(b)} -C_i(b)\right| \le \O{\frac{\eps}{C_{\ref{lem:approx:cts}}\log n}\cdot \max_{i \in [n]} g_i}} \ge \frac{2}{3}.\]
Then, by taking the median of $R = \O{\log(n)}$ instances, we have by standard Chernoff bounds
\[\PPr{\left|\widehat{C_i(b)} -C_i(b)\right| \le \O{\frac{\eps}{C_{\ref{lem:approx:cts}}\log n}\cdot \max_{i \in [n]} g_i}} \ge \frac{1}{\poly(n)}.\]
Then, by union bounding over the $\O{\log n}$ level sets $\Gamma_b$, we have that with high probability,
\[\left|\widehat{C_i(b)} -C_i(b)\right| \le \O{\frac{\eps}{C_{\ref{lem:approx:cts}}\log n}\cdot \max_{i \in [n]} g_i},\]
for all levels $b$ in this case.
Thus, summing up the estimate error for each $C_i(b)$ (both large $b$ and small $b$), with high probability we have
\[\widehat{g_i} \le g_i + \O{\frac{\eps}{C_{\ref{lem:approx:cts}}}}\cdot g_i.\]

\paragraph{Idealized setting, bounding the bias}
The above analysis shows that, in the idealized setting, the estimate $\widehat{g_i}$ satisfies $\widehat{g_i} \le g_i + \O{\frac{\eps}{C_{\ref{lem:approx:cts}}}}\cdot g_i$ for all $i \in [n]$ with high probability.
Next, let $i$ be the maximum index, we show that the truncation of $g_i(j)$ samples with small values at most adds up a $\eps$-additive estimate error.

First, we notice that the contribution of level $b$ with $C_i(b) = \frac{\eps \cdot G^{1/p}}{\log^3n}$ is negligible, since there are at most $\O{\log n}$ levels, and the sum of such levels is at most $\frac{\eps \cdot G^{1/p}}{\log^2n}$. 
So, we only consider the case where $C_i(b) \ge \frac{\eps \cdot G^{1/p}}{\log^3n}$.
Suppose that $2^{b+1} > \frac{k \log^3 n}{\eps}$, then we have
\[C_i(b)=\sum_{j \in \Gamma_b} g_i(j) \le k \cdot \frac{1}{2^{b+1}} \le \frac{\eps \cdot G^{1/p}}{\log^3n},\]
which is negligible.
Now, for $2^{b+1} \le \frac{k \log^3 n}{\eps}$, our sampling probability is at least
\begin{align*}\frac{(g_i(j))^p}{\widehat{U}}\cdot \gamma &\ge \frac{(g_i(j))^p}{G}\cdot C_{\ref{lem:approx:cts}} C^2 k^{p-1}\polylog(n) \\
& \ge \O{1} \cdot \frac{ C_{\ref{lem:approx:cts}} C^2k^{p-1}\poly\left(\frac{\log n}{\eps}\right)}{2^{bp}}.
\end{align*}
Then, since we have
\[2^{bp} = 2^{b(p-1)} \cdot 2^b \le 2^b k^{p-1} \poly\left(\frac{\log n}{\eps}\right),\]
the sampling probability is at least 
\[ \O{\frac{1}{2^b}} \cdot  C_{\ref{lem:approx:cts}} C^2\poly\left(\frac{\log n}{\eps}\right).\]
Moreover, for a level $b$ to be non-negligible, it at least contains $\O{\frac{\eps 2^b}{\log ^3 n}}$ elements.
Then, with high probability, we have at least $\frac{C_{\ref{lem:approx:cts}} \cdot C^2 \cdot \log^5n}{2^p \eps^2}$ samples for sufficiently large exponents in $\poly\left(\frac{\log n}{\eps}\right)$, so that the algorithm does not truncate those levels.
Lastly, if some level set contains less than $\frac{C_{\ref{lem:approx:cts}} \cdot C^2 \cdot \log^5n}{2^p \eps^2}$ elements, e.g., $\O{\frac{\eps 2^b}{\log ^3 n}} \le \frac{C_{\ref{lem:approx:cts}} \cdot C^2 \cdot \log^5n}{2^p \eps^2}$, then each of them satisfies $g_i(j) > \O{\frac{G^{1/p}\eps^3}{C_{\ref{lem:approx:cts}} \cdot C^2\log^8 n}}$, so our algorithm sends these elements to the coordinator.

Therefore, for all non-negligible level sets, the additive estimate error is at most $\O{\frac{\eps}{C_{\ref{lem:approx:cts}} \log n}}\cdot g_i$ based on previous discussions, and so $|\widehat{g_i} -g_i| \le \O{\frac{\eps}{C_{\ref{lem:approx:cts}}}} \cdot g_i$ for the maximum index $i$. 

\paragraph{Randomized boundaries.}
Unfortunately, we are not in an idealized setting, and thus there exist level sets $b$ and servers $j$ such that $\frac{g_i(j)}{G^{1/p}} \in \Gamma_b$, but $\frac{\widehat{g_i(j)}}{G^{1/p}} \notin \Gamma_b$.
In this case, we say that $g_i(j)$ is misclassified.
Due to our algorithm construction, we have $|\widehat{g_i(j)} - g_i(j)| \le \frac{\eps}{100} \cdot g_i(j)$ for all sampled items $g_i(j)$.
Since $\xi \in\left[\frac{1}{2}, 1\right)$ is chosen uniformly at random, then the probability any coordinate is misclassified is $\mathcal{O}\left(\frac{\varepsilon}{100}\right)$.
Observe that when an item is misclassified, its contribution is rescaled by the incorrect probability, but within a factor of two. 
Thus in expectation, the error due to the misclassification is at most $\mathcal{O}\left(\frac{\varepsilon}{100}\right) \cdot g_i$. Therefore, for those levels with $\widehat{c_i(b)} \neq 0$, by triangle inequality, we have that
\[\operatorname{Pr}\left[\left|\widehat{C_i^{(r)}(b)}-C_i(b)\right| \leq \O{\frac{\eps}{C_{\ref{lem:approx:cts}}}} \cdot g_i\right] \geq \frac{2}{3}.\]
Using a similar analysis as previous sections, we have that
\[\widehat{g_i} \le g_i + \O{\frac{\eps}{C_{\ref{lem:approx:cts}}}}\cdot g_\bi,\]
for all $i \in [n]$, and 
\[|\widehat{g_\bi} - g_\bi| \le \O{\frac{\eps}{C_{\ref{lem:approx:cts}}}}\cdot g_\bi,\]
for the maximum index $\bi$, which proves the desired statement by choosing a suitable $C_{\ref{lem:approx:cts}}$.
\end{proof}

We next upper bound the total communication of \algref{alg:est:coor}. 
To that end, we partition the analysis into rounds, where each round corresponds to an interval during which the quantity $U=\sum_{j\in[k]}\sum_{i\in[n]}(g_i(j))^p$ increases by at most a factor of two and then we upper bound the total number of rounds. 

\begin{lemma}
Define a round to be the time steps between which the quantity $U=\sum_{j\in[k]}\sum_{i\in[n]}(g_i(j))^p$ doubles. 
Then, there are at most $\polylog(n)$ rounds in expectation.
\end{lemma}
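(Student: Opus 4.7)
The plan is to argue that $U$ is confined to a $\poly(n)$-bounded range with high probability, so it can double at most $O(\log n)$ times. Since the stream is insertion-only, each $f_i(j)$ is non-decreasing, and since the rescaling factors $1/\be_i^{1/p}$ are fixed throughout the stream, the quantity
\begin{align*}
U \;=\; \sum_{j\in[k]}\sum_{i\in[n]} (g_i(j))^p \;=\; \sum_{j\in[k]}\sum_{i\in[n]} \frac{f_i(j)^p}{\be_i}
\end{align*}
is itself non-decreasing in time. Hence the number of rounds is exactly $\lceil \log_2(U_{\text{final}}/U_{\text{first positive}})\rceil$, and it suffices to bound this ratio.

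First, I would condition on the event $\calE$ that $1/\poly(n) \le \be_i \le \poly(n)$ simultaneously for all $i \in [n]$. As noted following \lemref{lem:truncated:inverse:ex}, a single $\be_i$ falls outside this range only with probability $1/\poly(n)$, so by a union bound over the $n$ exponentials the joint event $\calE$ holds with high probability, and on its complement the expected contribution to the number of rounds can be absorbed by choosing the degree of the polynomial in $\calE$ large enough (any outcome can contribute at most $O(\log m) = O(\log n)$ rounds in the worst case, giving expected contribution $o(1)$).

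Conditioned on $\calE$, I would upper bound $U_{\text{final}}$ by noting that each $f_i(j)$ is at most the stream length $m = \poly(n)$, so $f_i(j)^p \le \poly(n)$, and $1/\be_i \le \poly(n)$. Summing $nk \le \poly(n)$ terms gives $U_{\text{final}} \le \poly(n)$. Symmetrically, the first time $U$ becomes positive, some single $f_i(j) \ge 1$ has been updated, so $U_{\text{first positive}} \ge 1/\be_i \ge 1/\poly(n)$. Therefore on $\calE$ the ratio $U_{\text{final}}/U_{\text{first positive}}$ is at most $\poly(n)$, and the number of rounds is at most $O(\log n)$.

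There is no real technical obstacle here; the only mild subtlety is making sure the low-probability failure of $\calE$ does not blow up the \emph{expected} number of rounds. This is handled as above: the worst case number of rounds across any instantiation of the $\be_i$'s is trivially bounded by $O(\log(m \cdot \poly(n)))=O(\log n)$ (since even without conditioning, $U$ cannot exceed $nk \cdot m^p \cdot \max_i 1/\be_i$, and $\max_i 1/\be_i$ is at most $n^{\omega(1)}$ only with super-polynomially small probability), so multiplying by the $1/\poly(n)$ failure probability of $\calE$ yields a negligible expected contribution. Combined with the $O(\log n)$ bound on $\calE$, the expected number of rounds is $O(\log n) = \polylog(n)$, as desired.
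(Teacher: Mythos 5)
Your proposal is correct, but it takes a genuinely more direct route than the paper. The paper compares $U$ to the unscaled quantity $V=\sum_{j\in[k]}\sum_{i\in[n]}f_i(j)^p$: conditioned on the truncation of the exponentials it uses $V\le U\cdot\polylog(n)$ and $\Ex{U}=V\cdot\O{\log n}$ (via \lemref{lem:truncated:inverse:ex}), notes that $V$ can double only $\O{\log n}$ times since $m=\poly(n)$, and argues that within each $V$-doubling window $U$ gains only a $\polylog(n)$ factor in expectation, so each window contributes few rounds; this two-step structure is reused later for \lemref{lem:round:lp:sampler}. You instead exploit monotonicity of $U$ (insertion-only updates, fixed scalings) and bound its total dynamic range: on the event $\calE$ that every $\be_i\in[1/\poly(n),\poly(n)]$ --- the same conditioning the paper adopts globally right after \lemref{lem:truncated:inverse:ex} --- the first positive value of $U$ is at least $1/\poly(n)$ and the final value is at most $nk\,m^p\cdot\poly(n)=\poly(n)$, giving $\O{\log n}$ doublings, in fact with high probability and not merely in expectation; the complement of $\calE$ contributes negligibly because the number of rounds has a fast-decaying tail. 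Two small points to tighten: the number of rounds is at most (not exactly) $\lceil\log_2(U_{\text{final}}/U_{\text{first}})\rceil$, since $U$ can more than double within one round; and your parenthetical for the complement of $\calE$ is garbled --- the worst case over realizations of the $\be_i$ is not $\O{\log n}$, and the correct statement is that the rounds exceed $\O{\log n}+\log T$ only when $\max_i 1/\be_i$ or $\be_{i_0}$ exceeds $T$, events of probability $\O{n/T}$ and $e^{-T}$ respectively, so a dyadic decomposition (or Cauchy--Schwarz) bounds the expected contribution of the complement by $o(1)$.
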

\begin{proof}
We define $V = \sum_{j\in[k]}\sum_{i \in [n]} f_i(j)^p$.
Recall that $U = \sum_{j\in[k]}\sum_{i \in [n]} \frac{f_i(j)^p}{\be_i}$.
Conditioned on $\be_i < \polylog(n)$, which happens with high probability, we have $V \le U \cdot \polylog(n)$.
Moreover, condition on $\frac{1}{\poly(n)} < \be_i < \poly(n)$ for all $i \in [n]$, by \lemref{lem:truncated:inverse:ex}, we have
\[\Ex{U} = \Ex{\sum_{j\in[k]}{i \in [n]} \frac{f_i(j)^p}{\be_i}} = V \cdot \O{\log n}.\]
Then, consider a duration within the duration that $V$ doubles, $U$ at most increases by a $\polylog(n)$-fraction in expectation: let $V_1$, $V_2$ be the value at the start and the end of this duration, $\Ex{U_2} = V_2 \cdot \O{\log n} = 2 V_1\cdot \O{\log n} \le U_1 \cdot \polylog(n)$.
This means that there are $\poly(\log\log(n))$ number of rounds in expectation within this duration.
Assuming that there is $m =\poly(n)$ updates in the data stream, there can be at most $\O{\log n}$ times that $V$ doubles.
\end{proof}

Then, it suffices to upper bound the total communication within each round, stated as follows.

\begin{lemma}
\lemlab{lem:comm:pl}
With high probability, \algref{alg:est:coor} uses $(k+k^{p-1}) \cdot \poly\left(\frac{\log n}{\eps}\right)$ bits of communication in expectation.
\end{lemma}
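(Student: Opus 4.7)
\begin{proofof}{\lemref{lem:comm:pl}}
The plan is to decompose the per-round communication into three parts: (i) the coordinator's broadcast of $\widehat{U}$ whenever $U$ doubles, (ii) the bits sent for the ``level $0$'' heavy coordinates that the servers track with $(1+\eps/100)$-precision, and (iii) the samples produced at each sampling level $b\ge 1$ together with their deletions and re-samples triggered by $\eps/(100\log n)$-fractional growth. Since the previous lemma gives $\polylog(n)$ rounds in expectation, it suffices to establish $(k+k^{p-1})\cdot\poly(\log n/\eps)$ bits per round and multiply.

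For (i), once per round the coordinator needs to push the updated $\widehat{U}$ to each of the $k$ servers, costing $O(k\log n)$ bits. For (ii), the mass inequality $\sum_{i,j}(g_i(j))^p\le 2\widehat{U}$ and the level-$0$ threshold $(g_i(j))^p/\widehat{U}\ge \eps^2/(C_{\ref{lem:approx:cts}}C^2 k^{p-1}\polylog(n))$ immediately give at most $O(k^{p-1}\polylog(n)/\eps^2)$ such pairs $(i,j)$. Because $U$ at most doubles in a round and the stream is insertion-only, each such $g_i(j)$ can at most grow by a constant factor, which triggers at most $O(1/\eps)$ updates of size $O(\log n)$ bits; thus (ii) contributes $O(k^{p-1}\polylog(n)/\eps^3)$ bits per round.

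For (iii) we use the mass argument at each level $b$: since any $(i,j)$ in level $b$ contributes at least $\widehat{U}/(2^{bp}\gamma)$ to the total mass, the number of pairs in level $b$ is $N_b\le \min(2^{bp}\gamma,\,nk)$. The per-repetition expected number of samples at level $b$ is therefore $N_b\cdot 2^{-bp}\le \min(\gamma,\,nk/2^{bp})$. Summing, the ``small-$b$'' regime ($2^{bp}\le nk/\gamma$) contributes $O(\gamma\log(nk)/p)$ and the ``large-$b$'' regime is a geometric series bounded by $O(\gamma)$; together, the total expected number of samples per repetition per round is $O(\gamma\log n)=O(k^{p-1}\polylog(n)/\eps^2)$. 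Each sampled item can generate a chain of delete-and-resample events during the round, but since $g_i(j)$ in level $b$ spans at most a factor of two, it experiences only $O(\log n/\eps)$ fractional-growth events, each of which (conditioned on being in a currently-sampled state) costs $O(\log n)$ bits. Accounting for the $R=O(\log n)$ parallel repetitions therefore inflates (iii) by $\poly(\log n/\eps)$ and still yields $O(k^{p-1}\poly(\log n/\eps))$ per round.

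Summing (i)--(iii) and multiplying by the $\polylog(n)$ bound on the number of rounds delivers the claimed $(k+k^{p-1})\cdot\poly(\log n/\eps)$ total communication. The main obstacle I anticipate is a clean accounting of the re-sampling cost: one must argue that the $\eps/(100\log n)$-fractional growth events, which naively could be numerous for heavy coordinates, can only occur $O(\log n/\eps)$ times per item per round (using insertion-only updates and $U$ doubling), so that the per-item amortized cost stays at a $\poly(\log n/\eps)$ factor and the summation over levels never introduces an $n$ dependence. Everything else reduces to the mass-based counting argument above.
\end{proofof}
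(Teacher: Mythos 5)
Your proof is correct and follows essentially the same approach as the paper's (which is considerably terser): bound the expected number of sampled elements per round by a mass argument giving $O(\gamma)=k^{p-1}\poly(\log n/\eps)$, charge each sample a $\poly(\log n/\eps)$ re-sampling cost, add the $O(k\log n)$ per-round broadcast of $\widehat{U}$, and multiply by the $\polylog(n)$ rounds. One minor imprecision: a level-$0$ coordinate $g_i(j)$ need not stay within a constant factor of its threshold value during a round (the doubling of $U$ constrains the total mass, not each coordinate), but this only changes your $O(1/\eps)$ update count to $O(\log n/\eps)$ per item, which is absorbed by the $\poly(\log n/\eps)$ factor.
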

\begin{proof}
Consider a fixed round.
Recall that $U = \sum_{j\in[k]} \sum_{i \in [n]} g_i(j)^p$, then by \lemref{lem:sample:prob}, we sample $\gamma = k^{p-1} \cdot \poly\left(\frac{\log n}{\eps}\right)$ elements with high probability.
For each element, we delete it and re-sample it whenever it increases by a $\frac{\eps}{100 \log n}$-fraction, which each takes at most $ \poly\left(\frac{\log n}{\eps}\right)$ bits of communication.
In addition, we use $k \log n$ bits of communication in the beginning of each round to inform the new value of $U$, giving a total communication cost of $k^{p-1} \cdot \poly\left(\frac{\log n}{\eps}\right)$ bits.
\end{proof}

Combining \lemref{lem:est:cts:p:small} for $\eps=\Theta(1)$ and \lemref{lem:comm:pl}, we have the following guarantees for \algref{alg:est:server} and \algref{alg:est:coor}, both showing that the set $\SetPL$ contains the maximizer and upper bounding the total communication. 
\begin{theorem}
\thmlab{thm:pl_recover}
Given a vector $f = \sum_{j\in[k]} f(j) \in \mathbb{R}^n$ defined by a distributed data stream $S$, $n$ i.i.d. exponential random variable $\be_1,\ldots,\be_n$, and a scaled vector $g_i = \frac{f_i}{\be_i^{1/p}}$, there exists an algorithm that finds a set $\SetPL$ of coordinates that includes the maximizer of $g$ and only contains the items that are at least $\frac{1}{4}$-fraction of the maximizer at all times.
The algorithm succeeds with probability $1-\frac{1}{\poly(n)}$ and uses $\left(k+k^{p-1}\right)\cdot\polylog(n)$ total bits of communication in expectation. 
\end{theorem}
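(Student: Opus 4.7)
The plan is to combine the per-time correctness of \lemref{lem:approx:cts} with the expected communication bound of \lemref{lem:comm:pl}, and then union bound over the $m=\poly(n)$ time steps. First, I would invoke \lemref{lem:G:max_stability} to certify that the heavy-hitter assumption is in force: with high probability, the maximizer $i^*=\argmax_i g_i$ satisfies $g_{i^*}^p \ge G/(C\log^2 n)$, so $\SetPL$ is a well-posed target. Then I would instantiate \lemref{lem:approx:cts} with a small absolute constant $\eps$ (say $\eps=1/10$), which guarantees that at any fixed time $|\widehat{g_{i^*}}-g_{i^*}|\le \eps g_{i^*}$ and $\widehat{g_{i'}}\le g_{i'}+\eps g_{i^*}$ for every $i'\ne i^*$.

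Next I would explicitly define $\SetPL$ from the estimates: let
\[
\SetPL := \bigl\{\, i \in [n] \;:\; \widehat{g_i} \ge \tfrac{1}{2}\max_{j\in[n]} \widehat{g_j} \,\bigr\}.
\]
A short arithmetic check suffices. Since $\max_j \widehat{g_j}\le (1+\eps)g_{i^*}$ (it is dominated by $g_{i^*}$ up to additive $\eps g_{i^*}$) and $\widehat{g_{i^*}}\ge (1-\eps)g_{i^*}$, one has $\widehat{g_{i^*}}\ge \tfrac{1-\eps}{1+\eps}\max_j \widehat{g_j} > \tfrac12\max_j\widehat{g_j}$, so $i^*\in\SetPL$. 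Conversely, if $g_{i'} < g_{i^*}/4$, then $\widehat{g_{i'}} < g_{i^*}/4 + \eps g_{i^*} < \tfrac12(1-\eps)g_{i^*} \le \tfrac12 \max_j \widehat{g_j}$ for $\eps=1/10$, so $i'\notin \SetPL$. This gives both halves of the desired $\SetPL$ guarantee at a fixed time.

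To lift the guarantee to every time in the stream, I would union bound. \lemref{lem:approx:cts} fails at a fixed time with probability at most $1/\poly(n)$, where the polynomial degree is controlled by the $R=\O{\log n}$ median-of-estimators parameter; choosing the implicit constants large enough, a union bound over the $m=\poly(n)$ updates preserves a $1-1/\poly(n)$ success probability, so $\SetPL$ as defined above satisfies the two required properties at all times simultaneously. For the communication bound I simply combine \lemref{lem:comm:pl}, which bounds the expected communication per round by $(k+k^{p-1})\poly(\log n)$ (since $\eps=\Theta(1)$), with the fact that the total number of rounds is $\polylog(n)$ in expectation over a length-$\poly(n)$ stream; linearity of expectation gives the claimed $(k+k^{p-1})\polylog(n)$ total.

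The main point requiring care is that the algorithm itself never materializes $\SetPL$, so I must exhibit the thresholding rule above and verify both inclusion of the maximizer and exclusion of light items purely from the additive/multiplicative error guarantees of \lemref{lem:approx:cts}. Everything else is assembly: the substantive technical work, namely the level-set decomposition, the randomized boundary argument for mis-classification, and the median-trick failure probability, already lives inside \lemref{lem:approx:cts} and \lemref{lem:comm:pl}, so the theorem itself is essentially a corollary of those two lemmas together with a union bound over the stream.
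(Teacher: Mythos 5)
Your proposal is correct and matches the paper's route exactly: the paper derives this theorem by combining \lemref{lem:est:cts:p:small} with a constant $\eps$ and \lemref{lem:comm:pl}, together with the round-count bound, which is precisely your assembly. Your explicit thresholding rule $\SetPL=\{i:\widehat{g_i}\ge\tfrac12\max_j\widehat{g_j}\}$ and the arithmetic with $\eps=1/10$ is a clean instantiation of what the paper leaves implicit in the statement of \lemref{lem:est:cts:p:small} (which uses $\eps=1/4$), and your union bound over time and the per-round communication accounting are the same as the paper's.
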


\subsection{Truncated Taylor Estimator}
\seclab{sec:truncated:taylor}
This section introduces another core technical tool in the $L_p$ sampler as well as our robust $F_p$-estimation algorithm in following sections.
We aim to retrieve unbiased estimates of $f_i^p$ for $p>0$. 
However, we only have access to unbiased estimates of $f_i$, so there is no direct method to estimate $f_i^p$ when $p$ is not an integer. 
To handle this challenge, we use the Taylor series of $f_i^p$. 
Specifically, let $s_i$ be a $1.01$-approximation to $f_i$. 
Then 
\[f_i^p = \sum_{q=0}^\infty \binom{p}{q}s_i^{p-q}(f_i-s_i)^q.\]
Since the tail of the Taylor series decays exponentially, we can truncate at $Q=\O{\log n}$ terms and only incur $\O{n^{-c}}$ additive error. 
Then we for each $q\in[Q]$, we can estimate $(f_i-s_i)^q$ using the product of $q$ independent unbiased estimate of $f_i$.
We give the algorithm in full in \algref{alg:taylor}. 

\begin{algorithm}[!htb]
\caption{Truncated Taylor estimator}
\alglab{alg:taylor}
\begin{algorithmic}[1]
\Require{$Q = \O{\log n}$ independent unbiased estimates $\{\widehat{f_i^{(l)}}\}_{l \in [Q]}$ of $f_i$ satisfying $\Var{f_i^{(l)}} = \left(\frac{f_i}{100p}\right)^2$, $p>0$}
\Ensure{Unbiased estimate of $f_i^p$}
\State{Let $s_i$ be a $(1+\frac{1}{100p})$-approximation of $f_i$}
\For{$q \in [Q] \cup \{0\}$ }
\State{$\widehat{(f_i-s_i)^q} \gets \prod_{l \in [q]} \left(\widehat{f_i^{(l)}} - s_i\right)$} 
\State{$H \gets H+\binom{p}{q}s_i^{p-q}\widehat{(f_i-s_i)^q}$} \Comment{$H$ is the truncated Taylor estimator for $f_i^{p}$}
\EndFor
\State{\Return $H$}
\end{algorithmic}
\end{algorithm}

We first show that we can truncate the Taylor series of $f_i^p$ after $Q=\O{\log n}$ terms and only incur additive error $\frac{1}{\poly(n)}$ from the tail of the Taylor series. 
\begin{lemma} \lemlab{lem:truncated:taylor:fi:p}
Let $Q=\O{\log n}$, $p>0$, and $s_i \in\left[\frac{99f_i}{100},\frac{101f_i}{100}\right]$. 
Then 
\[\left\lvert f_i^p-\sum_{q=0}^Q \binom{p}{q}s_i^{p-q}(f_i-s_i)^q\right\rvert\le n^{-c}.\]
\end{lemma}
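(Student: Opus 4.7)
The plan is to recognize the sum $\sum_{q=0}^Q \binom{p}{q} s_i^{p-q}(f_i - s_i)^q$ as the $Q$-th partial sum of the generalized binomial series expansion of $f_i^p$ about $s_i$, and then bound the tail using absolute convergence combined with the fact that frequencies are polynomially bounded.

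First I would write $f_i^p = s_i^p (1+x)^p$ with $x := (f_i - s_i)/s_i$. The hypothesis $s_i \in [99 f_i/100, 101 f_i/100]$ gives $|x| \le 1/99$, strictly inside the radius of convergence of the generalized binomial series. Absolute convergence then yields
\[ f_i^p \;=\; s_i^p \sum_{q=0}^\infty \binom{p}{q} x^q \;=\; \sum_{q=0}^\infty \binom{p}{q}\, s_i^{p-q}(f_i - s_i)^q, \]
so the quantity to control is the tail $s_i^p \sum_{q > Q} \binom{p}{q} x^q$.

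Next I would bound this tail geometrically. For any fixed $p > 0$, the generalized binomial coefficients are uniformly bounded: either $p$ is a positive integer and $\binom{p}{q} = 0$ for $q > p$, or $p$ is non-integer and the asymptotic $|\binom{p}{q}| = \Theta(1/q^{p+1})$ (from $\binom{p}{q} = \Gamma(p+1)/(\Gamma(q+1)\Gamma(p-q+1))$) gives a constant $C_p$ with $|\binom{p}{q}| \le C_p$ for all $q$. Therefore
\[ \left| s_i^p \sum_{q > Q} \binom{p}{q} x^q \right| \;\le\; s_i^p \cdot C_p \sum_{q > Q} (1/99)^q \;\le\; C_p'\cdot s_i^p \cdot (1/99)^{Q+1}. \]
Since the stream has length $m = \poly(n)$, every frequency is at most $\poly(n)$, and so $s_i^p \le \poly(n)$. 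Taking $Q = \O{\log n}$ with a sufficiently large hidden constant (depending on $c$ and $p$) drives the right-hand side below $n^{-c}$, yielding the lemma.

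The main obstacle is essentially bookkeeping: one must verify that $|\binom{p}{q}|$ stays uniformly bounded for arbitrary real $p > 0$ (so that the geometric factor $(1/99)^q$ dominates), and confirm that the polynomial upper bound on $s_i^p$ does not swamp the geometric decay once $Q$ is a sufficiently large multiple of $\log n$. Neither step is deep, and no fresh probabilistic or combinatorial idea is required — the lemma is purely an analytic tail estimate for a binomial series.
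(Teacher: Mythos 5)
Your proof is correct and follows essentially the same route as the paper: expand $f_i^p$ as the generalized binomial (Taylor) series about $s_i$, note $|f_i-s_i|\le f_i/100$ forces the terms to decay geometrically, bound the binomial coefficients by a constant depending only on $p$ (the paper uses $|\binom{p}{q}|\le e^p$ where you use the Gamma-function asymptotic), and absorb the $\poly(n)$ bound on $f_i^p$ by choosing the constant in $Q=\O{\log n}$ large enough. No substantive difference.
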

\begin{proof}
Using the Taylor expansion we have
\[ f_i^p = \sum_{q=0}^\infty \binom{p}{q}s_i^{p-q}(f_i-s_i)^q.\]
Note that $|f_i - s_i| \le \frac{f_i}{100}$, so for $q > p$ we have
\[\left| \binom{p}{q}s_i^{p-q}(f_i-s_i)^q \right| 
\le \O{e^p \cdot f_i^{p-q} \left(\frac{f_i}{100}\right)^q}=\O{f_i^p \cdot \left(\frac{1}{100}\right)^q},\]
where the first step is by $|\binom{p}{q}| \le e^{p}$.
Hence, we have that for $Q = \O{\log n}$,
\[\sum_{q=Q+1}^\infty \binom{p}{q}s_i^{p-q}(f_i-s_i)^q \le \O{1} \cdot f_i^p \cdot \sum_{q=Q+1}^\infty \left(\frac{1}{100}\right)^q \le n^{-c},\]
since we assume the stream length to be $\poly(n)$.
\end{proof}

Now, we state an auxiliary lemma that will be used to upper bound the variance of the truncated Taylor estimator.
\begin{lemma}
\lemlab{lem:aux:lp:est}
Let $a$ be some constant and let $\widehat{a}$ be an estimate satisfying $\Ex{\widehat{a}} = a$ and $\Var{\widehat{a}} = \left(\frac{a}{C}\right)^2$, where $C$ is some large constant. 
Let $b$ be another constant satisfying $b = (1 \pm \frac{1}{C})\cdot a$. 
Then we have $\Ex{\widehat{a}-b} = a-b$ and $\Ex{(\widehat{a}-b)^2} \le 2 \cdot \left(\frac{a}{C}\right)^2$.
\end{lemma}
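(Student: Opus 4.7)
The plan is to handle the two claims separately, both by elementary calculation.

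First, the expectation claim is immediate: by linearity of expectation and the unbiasedness of $\widehat{a}$,
\[
\mathbb{E}[\widehat{a} - b] = \mathbb{E}[\widehat{a}] - b = a - b,
\]
since $b$ is a constant.

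For the variance-type bound, the natural move is to recenter around $a$ rather than $b$, so that we can separate the randomness of $\widehat{a}$ from the deterministic gap $a-b$. Writing $\widehat{a} - b = (\widehat{a} - a) + (a - b)$ and expanding the square,
\[
\mathbb{E}[(\widehat{a}-b)^2] = \mathbb{E}[(\widehat{a}-a)^2] + 2(a-b)\,\mathbb{E}[\widehat{a}-a] + (a-b)^2.
\]
The cross term vanishes because $\mathbb{E}[\widehat{a}-a] = 0$, and the first term equals $\mathrm{Var}[\widehat{a}] = (a/C)^2$ by hypothesis. The remaining task is to bound $(a-b)^2$: since $b = (1 \pm 1/C) \cdot a$, we have $|a - b| \le |a|/C$, so $(a-b)^2 \le (a/C)^2$. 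Adding the two contributions gives $\mathbb{E}[(\widehat{a}-b)^2] \le 2(a/C)^2$, as required.

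There is no real obstacle here; the only thing to be careful about is using the unbiasedness to kill the cross term cleanly, and then applying the given closeness $b = (1 \pm 1/C) a$ to control $(a-b)^2$ by the same quantity $(a/C)^2$ that already bounds $\mathrm{Var}[\widehat{a}]$. This is purely a bias–variance decomposition where the bias introduced by replacing $a$ with the nearby constant $b$ matches the variance of the original estimator up to a factor of $2$.
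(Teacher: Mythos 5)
Your proof is correct and follows essentially the same route as the paper: both use the bias--variance decomposition $\Ex{(\widehat{a}-b)^2} = \Var{\widehat{a}} + (a-b)^2$ and then bound $(a-b)^2 \le (a/C)^2$ using $b = (1\pm\frac{1}{C})a$. Your explicit expansion around $a$ with the vanishing cross term is just a spelled-out version of the paper's one-line computation.
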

\begin{proof}
By the linearity of expectation, we have $\Ex{\widehat{a}-b} = a-b$. 
Now, we have
\begin{align*}\Ex{(\widehat{a}-b)^2} &= \Var{\widehat{a}-b} + \left(\Ex{\widehat{a}-b}\right)^2 \\
&= \Var{\widehat{a}} + \left(a-b\right)^2 \end{align*}
Then, by our assumption of $\Var{\widehat{a}} = \left(\frac{a}{C}\right)^2$ and $b = (1 \pm \frac{1}{C})\cdot a$, we have
\[\Ex{(\widehat{a}-b)^2} \le 2 \cdot \left(\frac{a}{C}\right)^2\]
\end{proof}

The following lemma shows the correctness of our truncated Taylor estimator.
\begin{lemma}
\lemlab{lem:taylor:fi}
Let $s_i$ be a $(1+\frac{1}{100p})$-approximation of $f_i$. For $Q = \O{\log n}$, let $\{\widehat{f_i^{(l)}}\}_{l \in [Q]}$ be independent estimates of $f_i$ satisfying $\Var{f_i^{(l)}} = \left(\frac{f_i}{100p}\right)^2$. 
Then, \algref{alg:taylor} returns the following truncated Taylor estimator
\[ \widehat{f_i^p} := \sum_{q=0}^Q\left(\binom{p}{q} s_i^{p-q} \prod_{l \in [q]} \left(\widehat{f_i^{(l)}}-s_i\right)\right)\]
and it satisfies $\Ex{\widehat{f_i^p}} =  f_i^p \pm n^{-c}$ and $\Ex{\left(\widehat{f_i^p}\right)^2} = \O{f_i^{2p}\cdot \log^2 n}$.
\end{lemma}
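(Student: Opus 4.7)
The plan is to establish the unbiasedness and second-moment bounds separately, with both arguments exploiting the mutual independence of the $\{\widehat{f_i^{(l)}}\}_{l\in[Q]}$ together with \lemref{lem:aux:lp:est} and \lemref{lem:truncated:taylor:fi:p}.

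For the unbiasedness claim, I would apply linearity of expectation to obtain
\[
\Ex{\widehat{f_i^p}} \;=\; \sum_{q=0}^{Q}\binom{p}{q}\, s_i^{p-q}\,\Ex{\prod_{l\in[q]}\bigl(\widehat{f_i^{(l)}}-s_i\bigr)}.
\]
Since the estimators are mutually independent with $\Ex{\widehat{f_i^{(l)}}} = f_i$ and $s_i$ is deterministic (conditioned on the event that it is a good approximation), the expectation of the product factorizes into $(f_i-s_i)^q$. Hence $\Ex{\widehat{f_i^p}} = \sum_{q=0}^Q\binom{p}{q}s_i^{p-q}(f_i-s_i)^q$, and invoking \lemref{lem:truncated:taylor:fi:p} immediately yields $\Ex{\widehat{f_i^p}} = f_i^p \pm n^{-c}$.

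For the second-moment bound, set $T_q := \binom{p}{q}s_i^{p-q}\prod_{l\in[q]}(\widehat{f_i^{(l)}}-s_i)$ so that $\widehat{f_i^p} = \sum_{q=0}^Q T_q$. By Cauchy-Schwarz, $(\widehat{f_i^p})^2 \le (Q+1)\sum_{q=0}^Q T_q^2$, so it suffices to control $\Ex{T_q^2}$ for each $q$ and sum. Independence gives the factorization
\[
\Ex{T_q^2} \;=\; \binom{p}{q}^2 s_i^{2(p-q)}\prod_{l\in[q]}\Ex{\bigl(\widehat{f_i^{(l)}}-s_i\bigr)^2}.
\]
I would then invoke \lemref{lem:aux:lp:est} with $C = 100p$ to bound each inner expectation by $2(f_i/(100p))^2$, use $s_i \in [0.99f_i,\,1.01f_i]$ to write $s_i^{2(p-q)}$ as $\Theta(f_i^{2p})\cdot f_i^{-2q}\cdot (s_i/f_i)^{-2q}$, and combine with the standard bound $|\binom{p}{q}| \le e^p$. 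This collapses the per-$q$ estimate to $\Ex{T_q^2} \le O(f_i^{2p})\cdot\bigl(2(1/0.99)^2/(100p)^2\bigr)^q$. Since $p\ge 1$, the common ratio is well below one, so $\sum_q \Ex{T_q^2} = O(f_i^{2p})$, and multiplying by the Cauchy-Schwarz prefactor $Q+1 = O(\log n)$ gives $\Ex{(\widehat{f_i^p})^2} = O(f_i^{2p}\log n)$, which is in particular $O(f_i^{2p}\log^2 n)$.

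The only subtle step is the bookkeeping in the geometric series: the factor $(s_i/f_i)^{-2q}$ could appear alarming when $q = \Theta(\log n)$, but because the relative error $1/(100p)$ is so small the distortion is absorbed harmlessly into the geometric ratio. Everything else is routine — just linearity of expectation, independence of the $\widehat{f_i^{(l)}}$, and the deterministic truncation bound from \lemref{lem:truncated:taylor:fi:p}.
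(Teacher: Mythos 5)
Your proof is correct and follows essentially the same route as the paper's: linearity of expectation plus independence to factorize the product terms, \lemref{lem:truncated:taylor:fi:p} for the truncation bias, and Cauchy--Schwarz (the paper's AM--GM step) together with \lemref{lem:aux:lp:est} and a convergent geometric series for the second moment. The only cosmetic differences are that you bound $|\binom{p}{q}|$ by $e^p$ where the paper uses $\binom{p}{q}<p^q$, and your bookkeeping actually yields the slightly sharper $\O{f_i^{2p}\log n}$, which of course implies the stated $\O{f_i^{2p}\log^2 n}$.
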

\begin{proof}
For $l \in [Q]$, the guarantee for each independent $\widehat{f_i^{(l)}}$ estimate is $\Ex{\widehat{f_i^{(l)}}} = f_i$ and $\Var{\widehat{f_i^{(l)}}} = \left(\frac{f_i}{100p}\right)^2$. Thus, applying \lemref{lem:aux:lp:est} with $a = f_i$, $b=s_i$, and $C = \frac{1}{100p}$, we have $\Ex{\widehat{f_i^{(l)}} - s_i} = f_i - s_i$ and $\Ex{\left(\widehat{f_i^{(l)} - s_i}\right)^2} = \O{\left(\frac{f_i}{50}\right)^2}$. Then, we have
\[\Ex{\prod_{l \in [q]} \left(\widehat{f_i^{(l)}}-s_i\right)} = \prod_{l \in [q]} \Ex{\widehat{f_i^{(l)}}-s_i} = (f_i - s_i)^q.\]
Moreover, we have
\[\Ex{\left(\prod_{l \in [q]} \left(\widehat{f_i^{(l)}}-s_i\right)\right)^2} = \prod_{l \in [q]} \Ex{\left(\widehat{f_i^{(l)}}-s_i\right)^2} = \O{\left(\frac{f_i}{50p}\right)^{2q}}.\]
Now, by \lemref{lem:truncated:taylor:fi:p} and $Q = \O{\log n}$, we have
\[\Ex{\widehat{f_i^p}} = \sum_{q=0}^Q \binom{p}{q} s_i^{p-q} \cdot \Ex{\prod_{l \in [q]} \left(\widehat{f_i^{(l)}}-s_i\right)} = \sum_{q=0}^Q \binom{p}{q} s_i^{p-q}(u_i - s_i)^q = f_i^p \pm n^{-c}.\]
Moreover, by the AM-GM inequality, we have
\begin{align*}\Ex{\left(\widehat{f_i^p}\right)^2} &= \Ex{\left(\sum_{q=0}^Q\left(\binom{p}{q} s_i^{p-q} \prod_{l \in [q]} \left(\widehat{f_i^{(l)}}-s_i\right)\right)\right)^2} \\ &= \O{Q} \cdot \sum_{q=0}^Q\binom{p}{q}^2 s_i^{2p-2q} \Ex{\prod_{l \in [q]} \left(\widehat{f_i^{(l)}}-s_i\right)^2}.\end{align*}
Therefore, 
\begin{align*}
\Ex{\left(\widehat{f_i^p}\right)^2} &= \O{Q} \cdot \sum_{q=0}^Q\binom{p}{q}^2 s_i^{2p-2q} \left(\frac{f_i}{50p}\right)^{2q} \\
&= \O{Q} \cdot \sum_{q=0}^Q s_i^{2p-2q} \left(\frac{p \cdot f_i}{50p}\right)^{2q}, \end{align*}
where the second step follows by $\binom{p}{q}<p^q$. Now, since $s_i > 0.99 f_i$
\[\Ex{\left(\widehat{f_i^p}\right)^2} = \O{Q} \cdot f_i^{2p} \sum_{q=0}^Q \left(0.99 f_i\right)^{-2q} \left(\frac{ f_i}{50}\right)^{2q} = \O{Q^2} \cdot f_i^{2p}.\]
By $Q = \O{\log n}$, we have our desired result.
\end{proof}

\subsection{Sampler Construction}
\seclab{sec:sampler:construct}
In this section, we construct the $L_p$ perfect sampler for the distributed monitoring setting. 
We first apply the subroutine introduced in \secref{sec:pl:recovery} to find a set $\SetPL$ that contains the maximum index with high probability. 
Then, we obtain a $\left(1+\frac{1}{100p}\right)$-approximation for each $g_i$ in $\SetPL$. 
Due to the error in our estimate for $g_i$, we may incorrectly identify the wrong index as the maximum index. 
To avoid this happening, we would like to do a statistical test on whether $G_{D(1)} > 2G_{D(2)}$ to fail the invalid samplers. 
However, it is not clear that the failure probability of the statistical test is independent of which index obtains the maximum. 
For instance, let the original vector be $f=(100n, 1, \ldots,1) \in \mathbb{R}^n$. 
Then we would expect that the first index is still the maximum in the scaled vector. 
And if we condition on the second index achieving the max, we would expect the max and the second max to not have a large gap, and we fail the test with a large probability. 
In this case, the failure probability differs by an additive constant when conditioning on the first or the second index achieving the max, which leads to a wrong sampling distribution.
Thus, we apply the method of duplication introduced by \cite{JayaramW18}, that is, we duplicate each item $n^c$ times in the original vector and scale them by different exponential variables. 
Formally, for a fixed $c>0$, let $F$ denote a duplicated vector, so that $F_{i,l} := f_i$ for all $i \in [n], l \in [n^c]$, but $F$ can be flattened to be a vector with dimension $n^{c+1}$. 
Notice that $\|F\|_p^p = n^c \cdot \|f\|_p^p$. 
We use $G$ to denote the scaled vector, so that $\frac{F_i}{e_i^{1/p}}$ for an independent exponential random variable $e_i$, for all $i\in[n^{c+1}]$. 

Unfortunately, as we mentioned in \secref{sec:lp:sampler}, this still does not suffice, because the distribution of the output of optimal approximate counting procedures is affected by the distribution of the coordinate across the servers. 
Namely, the estimate for $g_i$ may depend on the distribution of $g_i$ across each server $j$ and thus the failure probability for the statistical test is still not independent of the index $i$. 
To solve this problem, we induce a two-layer exponential scaling $h_i(j) = \frac{g_i(j)}{\be(j)}$ and apply the geometric mean estimator, which is unbiased and has bounded variance:
\[\widehat{g_i} = \frac{\sqrt[3]{h_i(j_{*,1})\cdot h_i(j_{*,2})\cdot h_i(j_{*,3})}}{C_{\ref{lem:geo:exp:var}}}.\]
Here, $j_{*,r}$ is the maximum index of $h^{(r)}_i(j)$ across all servers $j$ for the $r$-th instance. 
Now we can run the approximate counting algorithm introduced by \cite{HuangYZ12} to track each $h_i(j)$, where the communication is restricted to the coordinator and a single server $j$.
The procedure gives a $(1+\eps)$-estimate to the count of $h_i(j)$ using roughly $\O{\frac{1}{\eps}}$ bits of communication.
We use \textsc{Counting} to refer to this procedure in the following sections. 
Due to the max-stability of exponential variables, $\max_j h_i(j) = \frac{g_i}{\be'_i}$, where $\be'_i$ is an independent exponential variable. 
Thus, the estimate of the max only depends on the value of $\frac{g_i}{\be'_i}$, but not on how $g_i$ is distributed across the $k$ servers.
However, this approach requires us to identify the actual maximum index, e.g., when the first max and the second max is $(1+\alpha)$-fractional close to each, we need a separate counting algorithm with accuracy $\frac{\alpha}{100}$ to correctly distinguish between them.
Due to the anti-concentration of exponential variables, the first max and the second max of the scaled vector is $(1+\eps)$-fractional close with probability at most $\O{\eps}$, allowing us to bound the communication cost. 
We show our distributed monitoring $L_p$ sampler in \algref{alg:lp:sampler:cent}.

\begin{algorithm}[!htb]
\caption{Distributed monitoring $L_p$ sampler}
\alglab{alg:lp:sampler:cent}
\begin{algorithmic}[1]
\Require{Input vector $f = \sum_{j \in [k]} f(j) \in \mathbb{R}^n$ from a distributed data stream}
\Ensure{Perfect $L_p$ sample from $f$}
\State{All servers generate $n^{c+1}$ i.i.d. exponential variables $\{ \be_{i,l} \}_{i \in [n], l \in [n^c]}$ using public randomness. Let $R=\polylog(n)$ be sufficiently large. Each server $j$ generates $3R$ i.i.d. exponential variables $\{ \be^{(r)}(j) \}_{j \in [k], r\in [3R]}$ }
\State{Let $F(j), G(j), H^{(r)}(j) \in \mathbb{R}^{n^{c+1}}$ be the duplicated vectors: $F_{i,l}(j) = f_i(j)$, $G_{i,l}(j) = \frac{f_i(j)}{\be_{i,l}^{1/p}}$, and $H_{i,l}(j) = \frac{f_i(j)}{\be_{i,l}^{1/p} \be(j)}$ for all $i \in [n], l \in [n^c], j \in [k], r \in [3R]$}
\State{Maintain a $1.01$-approximation to $\|G\|_p^p$ (c.f. \cite{WoodruffZ12}). Start a new round with fresh randomness when $\|G\|_p^p$ doubles}
\State{For each update, obtain a set $\SetPL$ containing the maximum index} \Comment{See \thmref{thm:pl_recover}}
\If{a new element $G_{i,l}$ is added to $\SetPL$} \Comment{Do not delete}
\State{$\widetilde{H^{(r)}_{i,l}(j)} \gets H^{(r)}_{i,l}(j)$ for all $r \in [3R]$ and $j \in [k]$} \Comment{Query all servers}
\EndIf
\For{$(i,l) \in \SetPL$}
\For{$r \in [3R]$}
\State{Let $\widetilde{H^{(r)}_{i,l}(D(1))}$ and $\widetilde{H^{(r)}_{i,l}(D(2))}$ denote the first max and the second max of vector $[\widetilde{H^{(r)}_{i,l}(1)}, \ldots,\widetilde{H^{(r)}_{i,l}(k)}]$} 
\State{Let $\SetPL_i$ be the set of servers with $\widetilde{H^{(r)}_{i,l}(j)} \ge \frac{\widetilde{H^{(r)}_{i,l}(D(1))}}{200}$} \Comment{$\SetPL_i$ contains the servers with heavy values with respect to index $i$}
\State{$\widetilde{H^{(r)}_{i,l}(j)} \gets \textsc{Counting}(H^{(r)}_{i,l}(j),\frac{1}{100p})$ for all $j \in [k] \backslash \SetPL_i $}
\If{$\frac{\widetilde{H^{(r)}_{i,l}(D(1))}}{\widetilde{H^{(r)}_{i,l}(D(2))}} = 1+\alpha$} 
\State{$\widetilde{H^{(r)}_{i,l}(j)} \gets \textsc{Counting}(H^{(r)}_{i,l}(j),\frac{\alpha}{100p})$ for all $j \in \SetPL_i$} \Comment{\textsc{Counting} is ran on a single server}
\State{Inform all servers in $\SetPL_i$ to change the accuracy if $\alpha$ changes by a $2$-fraction}
\EndIf
\State{$j_{*,r} \gets \argmax_{j \in [k]} \widetilde{H^{(r)}_{i,l}(j)}$} \Comment{Obtain the maximum index across servers} \label{line:max}
\State{$\widehat{H^{(r)}_{i,l,q}(j)} \gets \textsc{Counting}(H^{(r)}_{i,l}(j),\frac{1}{100p})$ for all $j \in [k]$ and $q \in [Q]$, where $Q = \O{\log n}$} \Comment{Run a separate counting algorithm to estimate $G_{i,l}$} 
\State{$\widehat{\left(H^{(r)}_{i,l}(j)\right)^{1/3}} \gets \textsc{TaylorEst}(\{\widehat{H^{(r)}_{i,l,q}(j)\}_{q \in [Q]}})$} \Comment{See \lemref{lem:taylor:fi}}
\EndFor
\State{$\widehat{G_{i,l}} \gets \frac{1}{R}\sum_{r=1}^R \frac{\widehat{\left(H^{(3r-2)}_{i,l}(j_{*,3r-2})\right)^{1/3}} \widehat{\left(H^{(3r-1)}_{i,l}(j_{*,3r-1})\right)^{1/3}} \widehat{\left(H^{(3r)}_{i,l}(j_{*,3r})\right)^{1/3}}}{C_{\ref{lem:geo:exp:var}}}$} \label{line:geo:mean}
\EndFor
\State{$\mu \gets {\cal U}(0.99,1.01)$}
\State{Let $\widehat{G_{D(1)}}$ and $\widehat{G_{D(1)}}$ be the max and the second max of all tue estimates $\widehat{G_{i,l}}$}
\State{\Return $i^* = \argmax \widehat{G_{i,l}}$ if $\widehat{G_{D(1)}} > 2 \mu \widehat{G_{D(2)}}$; otherwise $\mathrm{FAIL}$}
\end{algorithmic}
\end{algorithm}

\paragraph{Correctness.} 
We first show that our algorithm has the correct output distribution. 
By \thmref{thm:pl_recover}, we know that $\SetPL$ contains the maximum index at all times with high probability. 
Therefore, we condition on these events in the following analysis.

We begin by bounding the failure probability of our proposed $L_p$ sampler. 
Here, we remark that our algorithm only shows an instance of the $L_p$ sampler. We show that it reports some index from the correct $L_p$ distribution with constant probability in the following proof. 
Hence, we run $\O{\log n}$ such instances simultaneously and take the output of the first successful sampler. 
By standard concentration inequalities, it reports some index at all times with high probability, while only losing $\O{\log n}$ factors in communication.

\FloatBarrier

First, we bound the error of our estimate $\widehat{G_{i,l}}$.
\begin{lemma}
\lemlab{lem:est:pl}
For all $(i,l) \in \SetPL$, $\widehat{G_{i,l}}$ is a $1.01$-approximation to its true value $G_{i,l}$.
\end{lemma}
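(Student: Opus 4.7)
\begin{proofof}{\lemref{lem:est:pl} (proposal)}
The plan is to write $\widehat{G_{i,l}} = \frac{1}{R}\sum_{r=1}^{R} Y_r$, where
\[Y_r = \frac{1}{C_{\ref{lem:geo:exp:var}}}\,\widehat{\bigl(H^{(3r-2)}_{i,l}(j_{*,3r-2})\bigr)^{1/3}}\,\widehat{\bigl(H^{(3r-1)}_{i,l}(j_{*,3r-1})\bigr)^{1/3}}\,\widehat{\bigl(H^{(3r)}_{i,l}(j_{*,3r})\bigr)^{1/3}},\]
and establish two things: (i) each $Y_r$ is essentially unbiased, $\Ex{Y_r} = G_{i,l} \pm n^{-c}$; and (ii) each $Y_r$ has bounded second moment, $\Ex{Y_r^2} = \O{G_{i,l}^2\,\polylog n}$. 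Then Chebyshev applied to the average of $R=\polylog n$ independent copies gives $|\widehat{G_{i,l}} - G_{i,l}| \le 0.01\,G_{i,l}$ with high probability.

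For the preparatory step I would argue that the index $j_{*,r}$ returned on \stepref{line:max} is, with high probability, the true $\argmax_j H^{(r)}_{i,l}(j)$. Whenever the first and second maxima are separated by a factor $1+\alpha$, the algorithm refreshes \textsc{Counting} on the heavy servers $\SetPL_i$ to accuracy $\alpha/(100p)$, which is sharp enough to distinguish them; combined with the coarse $\frac{1}{100p}$-approximation maintained on the light servers, the correct argmax is selected. Conditioning on this event, one may apply \factref{fac:exp_scaling} to $1/H^{(r)}_{i,l}(j) = \be^{(r)}(j)/G_{i,l}(j)$ to see that each $1/H^{(r)}_{i,l}(j)$ is exponential with rate $G_{i,l}(j)$, so by the min-of-exponentials property $\max_j H^{(r)}_{i,l}(j)$ has the same distribution as $G_{i,l}/\be'$ for a fresh standard exponential $\be'$. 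Crucially, this distribution depends on the vector $\{G_{i,l}(j)\}_j$ only through its sum $G_{i,l}$.

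For (i), I apply \lemref{lem:taylor:fi} with exponent $p = 1/3$ and with the $(1+\tfrac{1}{100p})$-accurate reference value $s$ produced by \textsc{Counting} at the max server: each factor $\widehat{(H^{(s)}_{i,l}(j_{*,s}))^{1/3}}$ is unbiased for $(H^{(s)}_{i,l}(j_{*,s}))^{1/3}$ up to an $n^{-c}$ additive bias, with second moment $\O{(H^{(s)}_{i,l}(j_{*,s}))^{2/3} \log^2 n}$. Because the three factors in $Y_r$ use independent server-side exponentials $\be^{(3r-2)}(j), \be^{(3r-1)}(j), \be^{(3r)}(j)$ and independent \textsc{Counting} instances, their expectations factor; combining with the previous paragraph,
\[\Ex{Y_r} = \frac{G_{i,l}}{C_{\ref{lem:geo:exp:var}}}\cdot\Ex{\frac{1}{\be_1^{1/3}\be_2^{1/3}\be_3^{1/3}}} \pm n^{-c} = G_{i,l}\pm n^{-c},\]
where the last equality invokes \lemref{lem:geo:exp:var}. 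For (ii), the same independence gives $\Ex{Y_r^2} = \O{\log^6 n}\cdot \prod_s \Ex{(H^{(s)}_{i,l}(j_{*,s}))^{2/3}}$, and since $H^{(s)}_{i,l}(j_{*,s}) \stackrel{d}{=} G_{i,l}/\be_s$ with $\Ex{\be^{-2/3}} = \Gamma(1/3) < \infty$, we get $\Ex{Y_r^2} = \O{G_{i,l}^2\,\polylog n}$. Taking $R=\polylog n$ sufficiently large and applying Chebyshev to $\widehat{G_{i,l}}$ yields the $1.01$-approximation.

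The main obstacle I anticipate is bookkeeping around independence and conditioning: to multiply expectations of the three $1/3$-power Taylor estimators, one must verify that the correct-argmax event, the truncated-tail events for all the exponential variables (\lemref{lem:truncated:inverse:ex}), and the randomness underlying the three \textsc{Counting} runs are jointly compatible with the independence structure needed for both the product $\Ex{Y_r}$ and the second moment $\Ex{Y_r^2}$. Once this is handled, the rest is straightforward application of \factref{fac:exp_scaling}, \lemref{lem:geo:exp:var}, and \lemref{lem:taylor:fi}.
\end{proofof}
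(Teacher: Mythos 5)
Your proposal is correct and follows essentially the same route as the paper's proof: identify the true argmax via the adaptively refined \textsc{Counting} accuracy, use max-stability to write $\max_j H^{(r)}_{i,l}(j) \stackrel{d}{=} G_{i,l}/\be'$, combine the three independent truncated Taylor estimators of the cube roots (\lemref{lem:taylor:fi}) into the geometric-mean estimator normalized by $C_{\ref{lem:geo:exp:var}}$, bound its mean and second moment via \lemref{lem:geo:exp:var}, and average $R=\polylog(n)$ copies with Chebyshev. The independence bookkeeping you flag is indeed the delicate point, but your handling of it matches the paper's.
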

\begin{proof}
Notice that $H_{i,l}^{(r)}(j) = \frac{F_{i,l}(j)}{\be_{i,l}^{1/p} \be^{(r)}(j)} = \frac{G_{i,l}(j)}{ \be^{(r)}(j)}$ for all $r \in [3R]$.
By the max-stability of the scaled vector (c.f. \corref{cor:exp_inverse}), we have $\max_{j \in [k]} H_{i,l}^{(r)}(j) = \frac{F_{i,l}}{\be_{i,l}^{1/p} \be^{*,(r)}}$, where $\be^{*,(r)}$ is an exponential variable independent of $\{\be^{(r)}_{i,l}(j)\}_{j \in [k]}$ and $F_{i,l} = \sum_{j \in [k]} F_{i,l}(j)$.
Now, let $j_{*,r} = \max_{j \in [k]} H_{i,l}^{(r)}(j)$ and we define the geometric mean estimator as
\[G_{i,l}^{(r)} = \frac{\sqrt[3]{H^{(3r-2)}_{i,l}(j_{*,3r-2})H^{(3r-1)}_{i,l}(j_{*,3r-1})H^{(3r)}_{i,l}(j_{*,3r})}}{C_{\ref{lem:geo:exp:var}}}.\]
Then, we have
\[G_{i,l}^{(r)} = \frac{G_{i,l}}{C_{\ref{lem:geo:exp:var}} \cdot\sqrt[3]{ \be^{*,(3r-2)}\be^{*,(3r-1)}\be^{*,(3r)}}}.\]
Applying \lemref{lem:geo:exp:var}, we have $\Ex{G_{i,l}^{(r)}} = G_{i,l}$ and $\Ex{\left(G_{i,l}^{(r)}\right)^2} = \O{G_{i,l}^2}$.

By our construction in \algref{alg:lp:sampler:cent}, the accuracy of the first counting algorithm is $\frac{\alpha}{100p}$ if the ratio between the max and the second max is $1+\alpha$.
So, we detect the actual max from the estimated scaled vector, i.e., $j_{*,r}$ in line~\ref{line:max} is the maximum index of vector $[H_{i,l}^{(r)}(1),\ldots,H_{i,l}^{(r)}(k)]$ for all $r \in [3R]$.
Additionally, by \lemref{lem:taylor:fi}, the truncated Taylor estimator for each $\left(H^{(r)}_{i,l}(j)\right)^{1/3}$ satisfies $\Ex{\widehat{\left(H^{(r)}_{i,l}(j)\right)^{1/3}}} = \left(H^{(r)}_{i,l}(j)\right)^{1/3} \pm n^{-c}$ and $\Ex{\left(\widehat{\left(H^{(r)}_{i,l}(j)\right)^{1/3}}\right)^2} = \O{\left(H^{(r)}_{i,l}(j)\right)^{2/3} \cdot \log^2 n}$. We define our estimate of $G^{(r)}_{i,l}$ as
\[\widehat{G^{(r)}_{i,l}} = \frac{\widehat{\left(H^{(3r-2)}_{i,l}(j_{*,3r-2})\right)^{1/3}} \widehat{\left(H^{(3r-2)}_{i,l}(j_{*,3r-2})\right)^{1/3}} \widehat{\left(H^{(3r-2)}_{i,l}(j_{*,3r-2})\right)^{1/3}}}{C_{\ref{lem:geo:exp:var}}}.\]
Then, we have $\Ex{\widehat{G_{i,l}^{(r)}}} = G_{i,l} \pm n^{-c}$ and $\Ex{\widehat{\left(G_{i,l}^{(r)}\right)}^2} = \O{G_{i,l}^2 \cdot \log^2 n}$. Thus, averaging $R=\polylog(n)$ estimates $\widehat{G_{i,l}^{(r)}}$ gives as a $1.01$-approximation to $g_{i,l}$.
\end{proof}
We now show that with high probability, our perfect $L_p$ sampler successfully produces a sample at each time, i.e., some instance does not fail the statistical test at each time.  
\begin{lemma}
\lemlab{lem:success}
With probability $1-\frac{1}{\poly(n)}$, our $L_p$ sampler succeeds at all times.
\end{lemma}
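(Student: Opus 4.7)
The plan is to show that at each fixed time $t$, each of the $R=\Theta(\log n)$ parallel $L_p$-sampler instances succeeds with a constant probability $c_p>0$ depending only on $p$, amplify this to probability $1-n^{-C'}$ at time $t$ via independence across instances, and then take a union bound over the $m=\poly(n)$ time steps.

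Fix a time $t$ and a single instance. By \thmref{thm:pl_recover} and \lemref{lem:est:pl}, together with a union bound over the $\poly(n)$ time steps, with probability $1-1/\poly(n)$ the set $\SetPL$ contains the maximizer of $G$ at all times and every tracked estimate $\widehat{G_{i,l}}$ is a $1.01$-approximation to $G_{i,l}$. On this event
\[
\widehat{G_{D(1)}}/\widehat{G_{D(2)}} \;\ge\; G_{D(1)}/(1.01^{2}\, G_{D(2)}).
\]
To lower bound the true ratio, I would apply \corref{cor:exp_inverse} to the duplicated scaled vector $G\in\mathbb{R}^{n^{c+1}}$: one has $G_{D(1)}^{p}=\|F\|_{p}^{p}/E_{1}$ and $G_{D(2)}^{p}\le\|F\|_{p}^{p}/(E_{1}+E_{2})$ for i.i.d.\ standard exponentials $E_{1},E_{2}$ (independent of the anti-rank vector), so $G_{D(1)}/G_{D(2)}\ge(1+E_{2}/E_{1})^{1/p}$. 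The event $E_{2}\ge(3^{p}-1)E_{1}$ occurs with probability $1/3^{p}=\Omega_{p}(1)$ and forces $G_{D(1)}/G_{D(2)}\ge 3$; combined with the previous display, this gives $\widehat{G_{D(1)}}/\widehat{G_{D(2)}}\ge 3/1.01^{2}>2.02\ge 2\mu$ (since $\mu\le 1.01$), so the statistical test accepts and the instance outputs a sample.

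Hence a single instance succeeds at time $t$ with probability at least $c_{p}:=1/3^{p}-1/\poly(n)=\Omega_{p}(1)$. Because the $R$ instances draw independent copies of $\{\be_{i,l}\}$, $\{\be^{(r)}(j)\}$ and $\mu$, the event that all $R$ of them fail at the fixed time $t$ has probability at most $(1-c_{p})^{R}\le n^{-C'}$, which can be made arbitrarily small by choosing $R=C_{p}\log n$ with $C_{p}$ sufficiently large in terms of $C'$ and $p$. A final union bound over $t\in[m]$ with $m=\poly(n)$ yields success at all times with probability $1-1/\poly(n)$, as required.

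The main subtlety, which I would verify carefully, is that within a single instance the scaling exponentials $\{\be_{i,l}\}$ are shared across time steps within a round, so the per-time success events are correlated \emph{within} a single instance and one cannot simply union bound over times inside one instance. The argument bypasses this by exploiting independence \emph{across} the $R$ parallel instances at each fixed time; that is precisely why boosting by $\Theta(\log n)$ parallel samplers (as already stated in the text preceding the algorithm) is the right level of amplification before union bounding over time.
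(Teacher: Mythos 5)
Your proposal is correct and follows essentially the same route as the paper's proof: reduce acceptance of the statistical test to the event $G_{D(1)}>3G_{D(2)}$ via the $1.01$-approximation guarantees, lower bound its probability by a constant using the anti-rank characterization of \corref{cor:exp_inverse} (your explicit computation $\PPr{E_2\ge(3^p-1)E_1}=3^{-p}$ just makes concrete the paper's appeal to the exponential pdf), amplify over the $\O{\log n}$ independent instances, and union bound over time. Your closing remark about correlations within a single instance across time steps, and why independence across the parallel instances at each fixed time is the right place to amplify, is a valid and worthwhile clarification but does not change the argument.
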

\begin{proof}
Consider a fixed time $t$. 
We accept the output from the $L_p$ sampler if its estimates satisfy $\widehat{G_{D(1)}} > 2 \mu \widehat{G_{D(2)}}$. 
From \lemref{lem:est:pl} we have that $\widehat{G_{D(1)}}$ and $\widehat{G_{D(2)}}$ are both $1.01$-approximations. 
Also, we have $\mu \in [0.99,1.01]$.
Therefore, we accept the sample if $G_{D(1)} > 3 G_{D(2)}$. 
Thus, it suffices to bound the probability that $G_{D(1)} > 3 G_{D(2)}$ holds at time $t$.
By \corref{cor:exp_inverse}, we have $G_{D(1)} = \|F\|_p / E_1^{1/p}$ and $G_{D(2)} = (E_1 / \|F\|_p^p + E_2 / \|F_{-D(1)}\|_p^p)^{-1/p}$, where $E_1$ and $E_2$ are independent exponential variables. 
Now, since we duplicate each term by $n^c$ times, we have
\[G_{D(2)} = \left(\frac{E_1 }{ \|F\|_p^p} + \frac{E_2}{\|F\|_p^p \cdot(1\pm n^c)}\right)^{-1/p} = \frac{\|F\|_p}{(E_1 + E_2\cdot(1\pm n^c))^{1/p}}.\]
Therefore, due to the probability density function of exponential variables, 
\[\PPr{G_{D(1)} > 3 G_{D(2)}} > \Omega(1).\]
Then, since we run $\O{\log n}$ parallel samplers with different exponential variables, with probability $1-n^{-c}$, at least one of them satisfies $G_{D(1)} > 3 G_{D(2)}$. 
By a union bound across all times, we have our desired result.
\end{proof}

Next, we show that conditioning on the identity of an index that achieves the max, the failure probability of a sampler only shifts by $\frac{1}{\poly(n)}$ after duplication. 
This is important because it shows that the failure condition cannot overly depend on the maximizer itself, which would otherwise destroy the perfect sampler. 

First, we recall the following statement by \cite{JayaramW18}, which states that each entry of the duplicated scaled vector can be decomposed as the sum of an independent term and a dependent term with magnitude $n^{-c}$.
\begin{lemma}
\cite{JayaramW18}
\lemlab{lem:anti:rank:indep}
Let $D=(D(1)\cdots,D(n^{c+1}))$ be the anti-rank vector of $G$. 
Let $N$ be size of the support of $F$. 
For every $i<N-n^{9 c / 10}$ we have 
\[G_{D(i)}= U_{D(i)}^{1 / p}\left(1 \pm \O{\frac{1}{p} n^{-(c+1) / 10}}\right),\]
where $U_{D(i)}=\left(\sum_{\tau=1}^a \frac{E_\tau}{\mathbb{E}\left[\sum_{j=\tau}^N\left|F_{D(j)}\right|^p\right]}\right)^{-1}$ is totally determined by $a$ and the hidden exponentials $E_i$, and thus, independent of the anti-rank vector $D$.    
\end{lemma}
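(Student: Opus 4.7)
The idea is to combine \corref{cor:exp_inverse} (applied to the duplicated vector $F$) with a concentration argument that exploits the fact that every original coordinate has $n^c$ i.i.d.\ duplicates. By \corref{cor:exp_inverse}, we can write
\[G_{D(i)}^{-p} \;=\; \sum_{\tau=1}^i \frac{E_\tau}{S_\tau}, \qquad S_\tau \;:=\; \sum_{j=\tau}^{N} F_{D(j)}^p,\]
where the $E_\tau$ are i.i.d.\ unit exponentials independent of the anti-rank vector $D$. Since $U_{D(i)}^{-1} = \sum_{\tau=1}^i E_\tau/\mu_\tau$ with $\mu_\tau := \Ex{S_\tau}$ is a deterministic function of $i$ and the $E_\tau$'s alone, the task reduces to controlling the random mass $S_\tau$ uniformly in $\tau$.

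\textbf{Concentration of $S_\tau$.} For each $i\in[n]$ the $n^c$ copies yield i.i.d.\ scaled values $f_i\be_{i,l}^{-1/p}$, independent across $i$. Hence for any deterministic threshold $r>0$, both the count $M(r):=|\{(i,l):f_i\be_{i,l}^{-1/p}\ge r\}|$ and the head mass $W(r):=\sum_{(i,l):f_i\be_{i,l}^{-1/p}\ge r} f_i^p$ are sums of independent bounded random variables. Bernstein's inequality gives that both concentrate around their means with multiplicative error $\tilde{O}(\Ex{M(r)}^{-1/2})$ and failure probability $1/\poly(n)$. I would discretize $r$ over a $\poly(n)$-size grid $\{r_j^*\}$ defined by $\Ex{M(r_j^*)}=(1+1/\polylog(n))^j$, union-bound the concentration at all grid points, invert the concentration of $M$ to locate the random threshold $r_\tau$ (characterized by $M(r_\tau)=\tau-1$) in a narrow deterministic window, and then deduce
\[S_\tau \;=\; \|F\|_p^p - W(r_\tau) \;=\; \mu_\tau\bigl(1 \pm n^{-(c+1)/10}\bigr),\]
where the condition $\tau \le i < N - n^{9c/10}$ guarantees $N-\tau+1 \ge n^{9c/10}$ tail terms, giving $\Ex{M(r_\tau)}$ large enough for the Bernstein fluctuations to be absorbed into the claimed $n^{-(c+1)/10}$ slack.

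\textbf{Propagating the error.} Once $S_\tau = \mu_\tau(1+\delta_\tau)$ holds uniformly with $|\delta_\tau| \le n^{-(c+1)/10}$ (with probability $1-1/\poly(n)$), we obtain
\[G_{D(i)}^{-p} \;=\; \sum_{\tau=1}^i \frac{E_\tau}{\mu_\tau(1+\delta_\tau)} \;=\; U_{D(i)}^{-1}\bigl(1 + \O{n^{-(c+1)/10}}\bigr).\]
Raising both sides to the power $-1/p$ and using the first-order expansion $(1+x)^{-1/p}=1-x/p+\O{x^2}$ yields the stated $G_{D(i)} = U_{D(i)}^{1/p}(1\pm \O{\frac{1}{p}n^{-(c+1)/10}})$. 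The independence of $U_{D(i)}$ from $D$ is automatic from its explicit formula in terms of only $i$ and the $E_\tau$.

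\textbf{Main obstacle.} The concentration step is the technically heaviest piece: the threshold $r_\tau$ is jointly dependent on every exponential $\be_{i,l}$, so one cannot plug it directly into Bernstein's inequality at a single deterministic $r$. The grid-discretization and inversion sketched above is the standard remedy in extreme-value analysis, but it requires careful bookkeeping of which concentration events are needed and of the precise interplay between the duplication factor $n^c$ and the gap $n^{9c/10}$ permitted by the lemma. As the statement and full technical argument are due to \cite{JayaramW18}, our role is primarily to deploy this lemma in the decomposition above rather than to re-derive it from scratch.
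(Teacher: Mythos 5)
The paper gives no proof of this lemma: it is imported directly from \cite{JayaramW18}, so there is no in-paper argument to compare against, and your closing remark that your role is to deploy rather than re-derive it is consistent with how the paper treats it. That said, your sketch correctly reconstructs the structure of the original proof: \corref{cor:exp_inverse} reduces everything to showing that the random tail masses $S_\tau=\sum_{j\ge\tau}F_{D(j)}^p$ concentrate around their (anti-rank-independent) expectations uniformly over $\tau\le i<N-n^{9c/10}$, which is precisely what the $n^c$-fold duplication buys, and the propagation of a uniform $(1\pm\delta)$ perturbation of the $S_\tau$ through the sum and then through $x\mapsto x^{-1/p}$ to pick up the $\frac1p$ factor is routine. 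Your threshold-grid-plus-inversion packaging of the concentration step is a legitimate variant of the argument in \cite{JayaramW18}, which instead controls, for each item and each prefix length, the deviation of the number of that item's duplicates landing in the prefix; the two are morally identical since your $W(r_\tau)$ is exactly the head mass determined by those counts. The one piece you assert rather than verify is the quantitative bookkeeping — that for all $\tau<N-n^{9c/10}$ the relative fluctuation of $S_\tau$ is genuinely $\O{n^{-(c+1)/10}}$ and survives the union bound over the grid — but since the lemma is cited, deferring that exponent arithmetic to the source is acceptable.
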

Next, we introduce the following result, which states that if we decompose a variable $Z$ into $X$ and $Y$, where $X$ is independent of some event $E$ and $Y$ has small magnitude, then the conditional probability of $Z$ falling within an interval $I$ is close to the unconditional probability of $Z$ falling into $I$. 
\begin{lemma}
\cite{JayaramW18}
\lemlab{lem:sum_independent}
Let $X, Y \in \mathbb{R}^d$ be random variables where $Z=X+Y$. Suppose $X$ is independent of some event $E$, and let $M>0$ be such that for every $i \in[d]$ and every $a<b$ we have $\operatorname{Pr}[a \leq$ $\left.X_i \leq b\right] \leq M(b-a)$. Suppose further that $|Y|_{\infty} \leq \eps$. Then if $I=I_1 \times I_2 \times \cdots \times I_d \subset \mathbb{R}^n$, where each $I_j=\left[a_j, b_j\right] \subset \mathbb{R},-\infty \leq a_j<b_j \leq \infty$ is a (possibly unbounded) interval, then
$$\operatorname{Pr}[Z \in I \mid E]=\operatorname{Pr}[Z \in I]+\O{\eps d M}.$$
\end{lemma}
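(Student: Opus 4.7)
\begin{proofof}{\lemref{lem:sum_independent} (sketch)}
The plan is to sandwich the event $\{Z\in I\}$ between two events that depend only on $X$, by exploiting the fact that $Y$ perturbs $X$ by at most $\eps$ in each coordinate. For a product of intervals $I = I_1\times\cdots\times I_d$ with $I_j = [a_j,b_j]$, define the shrunken and inflated boxes
\[
I^{-\eps} \;=\; \prod_{j=1}^d [a_j+\eps,\, b_j-\eps],
\qquad
I^{+\eps} \;=\; \prod_{j=1}^d [a_j-\eps,\, b_j+\eps],
\]
with the convention that a shrinking or expansion of an infinite endpoint leaves it unchanged. Since $|Y|_\infty\le\eps$, the containments $X\in I^{-\eps}\Rightarrow Z\in I\Rightarrow X\in I^{+\eps}$ hold pointwise in $Y$, so
\[
\Pr[X\in I^{-\eps}] \;\le\; \Pr[Z\in I] \;\le\; \Pr[X\in I^{+\eps}],
\]
and the identical chain of inequalities holds after conditioning on $E$ on every term.

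Next I would use the density hypothesis on $X$ to bound the gap $\Pr[X\in I^{+\eps}] - \Pr[X\in I^{-\eps}]$. The symmetric difference $I^{+\eps}\setminus I^{-\eps}$ is contained in the union, over $j\in[d]$, of the two thin slabs $\{x : x_j \in [a_j-\eps,a_j+\eps]\}$ and $\{x : x_j \in [b_j-\eps,b_j+\eps]\}$. By the density bound $\Pr[a\le X_j\le b]\le M(b-a)$, each such slab has probability at most $2M\eps$, so a union bound yields
\[
\Pr[X\in I^{+\eps}] - \Pr[X\in I^{-\eps}] \;\le\; 4dM\eps \;=\; \O{\eps d M}.
\]

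Finally, I would combine the sandwich with the independence of $X$ and $E$: since $X\perp E$, the quantities $\Pr[X\in I^{-\eps}\mid E]$ and $\Pr[X\in I^{+\eps}\mid E]$ equal their unconditional versions. Thus both $\Pr[Z\in I]$ and $\Pr[Z\in I\mid E]$ lie in the interval $[\Pr[X\in I^{-\eps}],\,\Pr[X\in I^{+\eps}]]$, whose length is $\O{\eps d M}$, giving $\Pr[Z\in I\mid E] = \Pr[Z\in I] + \O{\eps d M}$ as claimed.

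The only mildly delicate point is handling unbounded endpoints $a_j=-\infty$ or $b_j=\infty$: for those coordinates there is no genuine boundary slab to charge, and the convention above ensures that the union-bound over slabs still yields a clean $\O{\eps d M}$ bound. Beyond this bookkeeping the argument is essentially a one-line anticoncentration estimate, so I do not expect a technical obstacle.
\end{proofof}
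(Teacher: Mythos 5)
Your proof is correct. The paper itself does not prove this lemma (it is imported verbatim from \cite{JayaramW18}), and your sandwich argument $\{X\in I^{-\eps}\}\subseteq\{Z\in I\}\subseteq\{X\in I^{+\eps}\}$, combined with the slab union bound and the independence of $X$ from $E$, is exactly the standard proof of this statement; the degenerate cases (intervals shorter than $2\eps$ and infinite endpoints) are handled correctly by your conventions.
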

Utilizing the above lemma, we decompose the estimates of $G_{D(1)}$ and $G_{D(2)}$ into an independent term and a dependent disturbance with small magnitude, so that we eliminate the dependence of the failure probability on the anti-ranks.
\begin{lemma}
\lemlab{lem:lp:sampler:indep}
Let FAIL denote the event that we reject the sample given by \algref{alg:lp:sampler:cent}. We have $\PPr{\neg \mathrm{FAIL} ~| D(1)} = \PPr{\neg \mathrm{FAIL}} \pm n^{-c}$ at all times, where $D = (D(1), \cdots, D(n))$ is the anti-rank vector.
\end{lemma}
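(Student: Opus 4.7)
The plan is to show the FAIL event depends only on $(\widehat{G_{D(1)}}, \widehat{G_{D(2)}}, \mu)$, to decompose each estimate as a piece independent of $D(1)$ plus a negligibly small disturbance, and then invoke \lemref{lem:sum_independent}. By construction, FAIL is precisely the event $\widehat{G_{D(1)}} \le 2\mu\widehat{G_{D(2)}}$, where $\mu \sim \mathcal{U}(0.99,1.01)$ is drawn independently of everything else and has density $50$. After conditioning on $\mu$ this is a halfspace (and, by further conditioning on one of the two estimates, an interval in the other), putting it in the form required by \lemref{lem:sum_independent}.

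First I would peel off the estimator noise. By \lemref{lem:est:pl}, $\widehat{G_{D(i)}} = G_{D(i)}(1+\eta_i)$ with $|\eta_i|\le 0.01$, where $\eta_i$ is a function of the auxiliary exponentials $\{\be^{(r)}(j)\}$ and the independent randomness used by the counting and truncated Taylor subroutines; all of this randomness is fresh with respect to the anti-rank $D$ of the scaled vector $G$. Next, by \lemref{lem:anti:rank:indep} applied to the duplicated vector $F$ with $N=n^{c+1}$ coordinates,
\[G_{D(i)} = U_{D(i)}^{1/p}(1+\xi_i),\qquad |\xi_i|=O\!\left(\tfrac{1}{p}n^{-(c+1)/10}\right),\]
where $U_{D(i)}$ is totally determined by $i$, by independent hidden exponentials $E_1,\dots,E_i$, and by the tail sums $\mathbb{E}\bigl[\sum_{j=\tau}^N |F_{D(j)}|^p\bigr]$. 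Because $F$ duplicates every coordinate $n^c$ times, removing any single index (e.g.\ $D(1)$) from these tail sums perturbs them by a factor $1\pm O(n^{-c})$, so up to absorbing an additional $O(n^{-c})$ multiplicative disturbance into $\xi_i$, the law of $U_{D(i)}^{1/p}$ is genuinely independent of the identity of $D(1)$. Setting $X_i:=U_{D(i)}^{1/p}(1+\eta_i)$ and $Y_i:=\widehat{G_{D(i)}}-X_i$, we obtain $\widehat{G_{D(i)}}=X_i+Y_i$ with $X=(X_1,X_2)$ independent of $D(1)$ and $\|Y\|_\infty=O(n^{-(c+1)/10})\cdot\|X\|_\infty$.

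Conditioning on the polynomially-bounded event $U_{D(i)}^{1/p}\le \poly(n)\cdot \|F\|_p$ (which holds with probability $1-1/\poly(n)$ via \lemref{lem:truncated:inverse:ex}), we have $\|Y\|_\infty\le n^{-\Omega(c)}$ after rescaling coordinates by $1/\|F\|_p$. The density of $X$ is polynomially bounded, since $U_{D(i)}^{-p}$ is a positive linear combination of i.i.d.\ exponentials with rates depending only on $i$, hence admits a smooth density of magnitude at most $\poly(n)$ on the relevant range, and the independent multiplicative noise $1+\eta_i$ only contributes a constant factor. Combining these, \lemref{lem:sum_independent} applied with event $E=\{D(1)=j\}$ and interval $I$ equal to the (rescaled) FAIL halfspace yields $\Pr[\text{FAIL}\mid D(1),\mu]=\Pr[\text{FAIL}\mid\mu]\pm n^{-\Omega(c)}$. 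Integrating out $\mu$ and taking $c$ sufficiently large establishes the lemma. The main obstacle I expect is the polynomial density bound on $X$: one has to verify it cleanly through both the inverted-exponential law of $U_{D(i)}^{1/p}$ and the composite estimator noise $\eta_i$, which requires carefully tracking tail-truncated conditioning events across all the nested subroutines.
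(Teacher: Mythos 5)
Your overall architecture matches the paper's: decompose each estimate into a piece independent of the anti-ranks plus an $n^{-\Omega(c)}$ disturbance via \lemref{lem:anti:rank:indep}, then invoke \lemref{lem:sum_independent}. However, there is a genuine gap at the step where you write $\widehat{G_{D(i)}} = G_{D(i)}(1+\eta_i)$ and assert that $\eta_i$ is a function only of randomness "fresh with respect to the anti-rank $D$." The relative error of the \textsc{Counting} subroutine is \emph{not} independent of the value it is counting: its sampling probability is set as $\Theta(1/(\eps \bar{n}))$ and re-set at doubling times of the running count, so the law of $\eta_i$ depends on the counted value $H^{(r)}_{i,l}(j_{*,r}) = F_{i,l}/(\be_{i,l}^{1/p}\be^{*,(r)})$, which depends on $\be_{i,l}$ and hence on the anti-ranks. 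This is exactly the difficulty the algorithm's double exponential scaling is designed to tame, and the paper's proof does not claim independence of the noise; instead it pushes the decomposition of \lemref{lem:anti:rank:indep} \emph{into} the estimator pipeline — arguing that the counted value equals $U_{D(i)}^{1/p}(1\pm\O{n^{-(c+1)/10}})$, coupling the counting algorithm's update times on the perturbed and unperturbed inputs, and then propagating the multiplicative $(1\pm\O{n^{-(c+1)/10}})$ factor through the Taylor and geometric-mean estimators to obtain $\widehat{G_{D(i)}} = \widehat{U_{D(i)}}(1\pm \polylog(n)\cdot n^{-(c+1)/10})$ with $\widehat{U_{D(i)}}$ independent of $D$. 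Your proposal needs this argument (or an equivalent coupling) to make $X_i$ genuinely independent of $D(1)$; as written the claim is unsupported and is the crux of the lemma.

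A secondary divergence: you obtain the anti-concentration bound $M$ in \lemref{lem:sum_independent} from the density of $U_{D(i)}^{1/p}$ itself, conditioning on $\mu$ and one estimate to turn the failure halfspace into an interval. Note that \lemref{lem:sum_independent} requires $I$ to be a product of intervals and $X$ to be independent of $E$, so the extra conditioning on one of the (anti-rank-dependent) estimates threatens the independence hypothesis, and the density bound on $U_{D(i)}^{1/p}$ is, as you acknowledge, delicate. The paper sidesteps both issues by applying \lemref{lem:sum_independent} to the one-dimensional variable $\widehat{U_{D(1)}} - 2\mu\widehat{U_{D(2)}}$ with $I=(-\infty,0]$ and deriving the density bound purely from the independent uniform $\mu$ — which is the reason $\mu$ appears in \algref{alg:lp:sampler:cent} at all. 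You should also record, as the paper does, that conditioning on $D(1)$ perturbs the (already $1/\poly(n)$) failure probabilities of the $\SetPL$ and counter subroutines by at most a $\poly(n)$ factor, which is absorbed for $c$ large.
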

\begin{proof}
First, note that the subroutine that maintains $\SetPL$ succeeds with high probability by \thmref{thm:pl_recover}. This still holds after conditioning on which index achieves the max. To see this, let $\calE$ be the event that it fails, we have
\[\PPr{\calE} = \sum_{i \in [n^{c+1}]} \PPr{\calE ~| D(1) = i} \cdot \PPr{D(1)=i} = \sum_{i \in [n^{c+1}]} \PPr{\calE ~|D(1) = i} \cdot \frac{|F_i|^p}{\|F\|_p^p}.\]
We assume that $F_i \ge 1$ and $\|F\|_p = \poly(n)$, then we have for all $i$,
\[ \PPr{\calE ~| D(1) = i} \cdot \frac{1}{n^{\alpha}} \le \PPr{\calE} = n^{-\beta}.\]
Therefore, we have $\PPr{\calE ~| D(1) = i} \le n^{-\beta+\alpha}$ which is $\frac{1}{\poly(n)}$ by setting a large enough $\beta$.
We union bound across all times in the stream, and hence we show that the dependency only causes a $\frac{1}{\poly(n)}$ perturbation on the failure probability.
Next, we condition on the event $\calE$, which means that $\SetPL$ contains the maximum index at all times. 
Based on the same argument, we can condition on all (distributed) approximate counters succeed in the algorithm.
Then, it suffices to bound the influence of the dependency on the estimate of $G_{D(1)}$ and $G_{D(2)}$.

Now, we briefly describe the distributed counting algorithm introduced by \cite{HuangYZ12}, which gives a $(1+\eps)$-approximation to the total count $n$ of the number of updates in a distributed data stream (or the count of a particular item). 
Since we only run the counting algorithm on a single server, we consider a communication protocol between a server and a coordinator, and our task is to maintain an estimate at the coordinator of the count of the server. 
The server sends the first $\Theta\left(\frac{1}{\eps}\right)$ items, then it sets a probability $p = \Theta\left(\frac{1}{\eps \cdot n}\right)$. 
When the server receives an item, it sends its current local count to the coordinator with probability $p$. 
Let $\bar{n}$ be the last update received by the coordinator, the estimate is $\bar{n}-1+1/p$.
The algorithm starts a new round with updated sampling probability whenever the estimate of $n$ doubles.  

Recall that $H_{i,l}^{(r)}(j) = \frac{F_{i,l}(j)}{\be_{i,l}^{1/p} \be^{(r)}(j)}$ for all $r \in [3R]$.
By the max-stability of the scaled vector (c.f. \corref{cor:exp_inverse}), we have $\max_{j \in [k]} H_{i,l}^{(r)}(j) = \frac{F_{i,l}}{\be_{i,l}^{1/p} \be^{*,(r)}}$, where $\be^{*,(r)}$ is an exponential variable independent of $\{\be^{(r)}(j)\}_{j \in [k]}$, and $F_{i,l} = \sum_{j \in [k]} F_{i,l}(j)$.
This implies that the distribution of $\max_{j \in [k]} H_{i,l}^{(r)}(j)$ is independent of how $F_{i,l}$ is distributed across the server $j$.

Due to our algorithm construction, $j_{*,r}$ in line~\ref{line:max} of \algref{alg:lp:sampler:cent} is the maximum index of vector $[H_{i,l}^{(r)}(1),\ldots,H_{i,l}^{(r)}(k)]$ for all $r \in [3R]$.
Additionally, we run a separate counting algorithm on each server $j$ to obtain a $1.01$-approximation $\widehat{H^{(r)}_{i,l}(j)}$, which is independent of the counting algorithm to detect the max $j_{*,r}$.
We use the notation $\widehat{H^{(r)}_{i,l}(j_{*,r})}$ in the following estimation steps.
Notice that in the counting algorithm, the sampling probability is defined solely by the count value.
Therefore, the estimate of $\widehat{H^{(r)}_{i,l}(j_{*,r})}$ only depends on the value of $\max_{j \in [k]} H_{i,l}^{(r)}(j) $, which is $ \frac{F_{i,l}}{\be_{i,l}^{1/p} \be^{*,(r)}}$.

Let $D(i)$ be the anti-ranks of vector $G$. Using \lemref{lem:anti:rank:indep}, we have the following decomposition
\[G_{D(i)}= U_{D(i)}^{1 / p}\left(1 \pm \O{n^{-(c+1) / 10}}\right),\]
where $U_{D(i)}$ does not depend on the anti-ranks.
Moreover, since each $\be^{*,(r)}$ is an independent exponential variable that does not depend on the anti-ranks. Thus, we can write $\be^{*,(r)}$ inside $U_{D(i)}^{1 / p}$:
\[\frac{F_{D(i)}}{\be_{D(i)}^{1/p} \be^{*,(r)}} = U_{D(i)}^{1 / p}\left(1 \pm \O{ n^{-(c+1) / 10}}\right).\]
Since the sampling probability is defined as $\O{\frac{\be_{i,l}^{1/p} \be^{*,(r)}}{F_{i,l}}}$ for the counting algorithm, we can condition on that the $\O{ n^{-(c+1) / 10}}$ disturbance caused by the dependency does not affect the time that the server update its count to the coordinator. It follows that the estimate of the counting algorithm has a similar decomposition (here we abuse the notation of $U_{D(i)}$),
\[\widehat{H^{(r)}_{i,l}(j_{*,r})} =  \widehat{U_{D(i)}} \cdot \left(1 \pm \O{ n^{-(c+1) / 10}}\right).\]
Then, the truncated Taylor estimator in \lemref{lem:taylor:fi} is decomposed into
\[\widehat{\left(H^{(r)}_{i,l}(j_{*,r})\right)^{1/3}} =  \widehat{U_{D(i)}} \cdot \left(1 \pm \O{ \log n \cdot n^{-(c+1) / 10}}\right).\]
Consequently, the geometric mean estimator in Line~\ref{line:geo:mean} of \algref{alg:lp:sampler:cent} is decomposed into
\[\widehat{G_{D(i)}} =  \widehat{U_{D(i)}} \cdot \left(1 \pm \polylog n \cdot n^{-(c+1) / 10}\right).\]
We abort a sampler if $\widehat{G_{D(1)}} < 2 \mu \widehat{G_{D(2)}}$. So, we decompose $\widehat{G_{D(1)}} - 2 \mu \widehat{G_{D(2)}}$ into $\widehat{U_{D(1)}} - 2 \mu \widehat{U_{D(2)}} + \widehat{V_{D(1)}} - 2 \mu \widehat{V_{D(2)}}$ where $\widehat{U_{D(1)}} - 2 \mu \widehat{U_{D(2)}}$ is independent of the anti-ranks and $ \widehat{V_{D(1)}} - 2 \mu \widehat{V_{D(2)}}$ is some random variable upper bounded by $| \widehat{V_{D(1)}} - 2 \mu \widehat{V_{D(2)}}| \le \O{n^{-(c+1)/20}}$. Notice that for any interval $I \in \mathbb{R}$, we have
\[\PPr{\widehat{U_{D(1)}} - 2 \mu \widehat{U_{D(2)}} \in I} = \PPr{\mu \in I'/  2 \widehat{U_{D(2)}}} = \O{|I| /  2 \widehat{U_{D(2)}}}\]
where $I'$ is the shifted interval of $I$ which is independent of $\mu_1$. 
Therefore, applying \lemref{lem:sum_independent} with $X = \widehat{U_{D(1)}} - 2 \mu \widehat{U_{D(2)}}$, $Y = \widehat{V_{D(1)}} - 2 \mu \widehat{V_{D(2)}}$, and $E = D(1)$ we have
\[\PPr{\widehat{G_{D(1)}} - 2 \mu \widehat{G_{D(2)}} \le0~|~ D(1)} = \PPr{\widehat{G_{D(1)}} - 2 \mu \widehat{G_{D(2)}} } \pm \frac{1}{\poly(n)},\]
which proves our desired result.
\end{proof}
Finally, we show that our $L_p$ sampler has the correct sampling distribution.
\begin{theorem}
\thmlab{thm:correct:lp:sampler} 
\algref{alg:lp:sampler:cent} returns an index $\bi \in [n]$ at all times, such that for any $i \in [n]$, we have
\[\PPr{\bi = i} = \frac{f_i^p}{\|f\|_p^p} \pm \frac{1}{\poly(n)}.\]
Moreover, if $i$ is sampled, it outputs an estimation $\widehat{G_{i}}$ such that $\Ex{\widehat{G_{i}}} = G_i$ and $\Ex{\left(\widehat{G_{i}}\right)^2} = \O{G_i^2}$.
\end{theorem}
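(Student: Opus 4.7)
The plan is to assemble the four ingredients already proved. By \thmref{thm:pl_recover}, the set $\SetPL$ contains the true argmax of the duplicated scaled vector $G \in \mathbb{R}^{n^{c+1}}$ at all times with probability $1-1/\poly(n)$; by \lemref{lem:est:pl}, every estimate $\widehat{G_{i,l}}$ for $(i,l)\in\SetPL$ is a $1.01$-approximation to $G_{i,l}$; by \lemref{lem:success}, at each time at least one of the $\O{\log n}$ parallel sampler instances passes the test with probability $1-1/\poly(n)$; and by \lemref{lem:lp:sampler:indep}, the pass probability of a sampler is, up to additive error $n^{-c}$, independent of which duplicate $(i,l)$ realizes $D(1)$. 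Conditioning throughout on the high-probability events that $\SetPL$ contains $D(1)$ and all approximate counters and Taylor estimators succeed, the non-FAIL output is $\argmax_{(i,l)\in\SetPL}\widehat{G_{i,l}}$; since each estimate is a $1.01$-approximation and the test $\widehat{G_{D(1)}}>2\mu\widehat{G_{D(2)}}$ with $\mu\in[0.99,1.01]$ forces a true multiplicative gap $G_{D(1)}>G_{D(2)}$, this argmax coincides with the true $D(1)$ of $G$.

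\textbf{Sampling distribution.} Applying \corref{cor:exp_inverse} to the duplicated vector $F\in\mathbb{R}^{n^{c+1}}$ with $F_{i,l}=f_i$ gives the unconditional identity
\[
\PPr{D(1)=(i,l)}=\frac{F_{i,l}^p}{\|F\|_p^p}=\frac{f_i^p}{n^c\|f\|_p^p}.
\]
Let $\mathrm{PASS}$ denote the event that the returned index is not $\mathrm{FAIL}$. Then
\[
\PPr{\bi=(i,l)\mid\mathrm{PASS}}=\frac{\PPr{\mathrm{PASS}\mid D(1)=(i,l)}}{\PPr{\mathrm{PASS}}}\cdot\PPr{D(1)=(i,l)}.
\]
By \lemref{lem:lp:sampler:indep}, the ratio $\PPr{\mathrm{PASS}\mid D(1)=(i,l)}/\PPr{\mathrm{PASS}}$ equals $1\pm n^{-c}/\PPr{\mathrm{PASS}}$, and since running $\O{\log n}$ parallel samplers makes $\PPr{\mathrm{PASS}}=1-1/\poly(n)$ at every time by \lemref{lem:success}, the conditional distribution of $(i,l)$ matches the unconditional one up to an additive $1/\poly(n)$ perturbation. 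Summing over the $n^c$ duplicates of coordinate $i$ then yields
\[
\PPr{\bi=i}=\sum_{l\in[n^c]}\PPr{\bi=(i,l)\mid\mathrm{PASS}}=\frac{f_i^p}{\|f\|_p^p}\pm\frac{1}{\poly(n)},
\]
and a union bound over the $\poly(n)$-length stream preserves the guarantee at all times.

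\textbf{Moment bounds on the estimate.} For the sampled index, the returned $\widehat{G_{\bi}}$ is the average of $R=\polylog(n)$ independent geometric-mean estimators built from $H^{(r)}_{\bi,l}(j_{*,r})$ through the truncated Taylor estimator. The analysis in \lemref{lem:est:pl} already combines \lemref{lem:geo:exp:var}, \lemref{lem:taylor:fi}, and the max-stability of exponentials (\corref{cor:exp_inverse}) to show $\Ex{\widehat{G^{(r)}_{\bi,l}}}=G_{\bi,l}\pm n^{-c}$ and $\Ex{(\widehat{G^{(r)}_{\bi,l}})^2}=\O{G_{\bi,l}^2\log^2 n}$; averaging $R=\polylog(n)$ independent copies preserves unbiasedness up to $n^{-c}$ and drives the second moment down to $\O{G_{\bi}^2}$, matching the claim.

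\textbf{Main obstacle.} The non-trivial point, already isolated as \lemref{lem:lp:sampler:indep}, is that without the duplication trick and the second-layer exponential scaling $\be^{(r)}(j)$, the failure probability of the statistical test could depend on which coordinate achieves $D(1)$ and on how its mass is split across servers, which would destroy the perfect sampling distribution. The duplication allows \lemref{lem:anti:rank:indep} to decompose $G_{D(i)}$ into an anti-rank-independent term plus an $n^{-c}$-bounded perturbation, while the scaling by $\be^{(r)}(j)$ forces the distributed counter's estimate to depend only on $\max_j H^{(r)}_{i,l}(j)=F_{i,l}/(\be_{i,l}^{1/p}\be^{*,(r)})$ rather than on the server-level breakdown. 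Once this near-independence is in hand, the conclusion follows by \lemref{lem:sum_independent} applied to the test statistic $\widehat{G_{D(1)}}-2\mu\widehat{G_{D(2)}}$ together with the bookkeeping in the paragraphs above.
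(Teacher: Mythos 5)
Your proposal is correct and follows essentially the same route as the paper's proof: it conditions on the high-probability events from \thmref{thm:pl_recover} and \lemref{lem:est:pl}, uses \lemref{lem:success} for per-time success, invokes \lemref{lem:lp:sampler:indep} to decouple the pass probability from the identity of $D(1)$, sums over the $n^c$ duplicates via \corref{cor:exp_inverse}, and inherits the moment bounds from the geometric-mean/Taylor analysis in \lemref{lem:est:pl}. The only cosmetic difference is that you phrase the reweighting via Bayes' rule on the aggregate PASS event rather than summing $\Pr[\neg\mathrm{FAIL}\mid D(1)]\cdot\Pr[D(1)]$ per instance as the paper does, which yields the same $1\pm 1/\poly(n)$ perturbation.
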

\begin{proof}
\algref{alg:lp:sampler:cent} reports the index if it satisfies that $\widehat{G_{D(1)}} > 2 \mu \widehat{G_{D(2)}}$. Since our algorithm gives a $1.01$-approximation for each item in $\SetPL$ at all times with probability $1 - n^{-c}$ (c.f. \lemref{lem:est:pl}), $G_{D(1)} $ is strictly larger than $ G_{D(2)}$. Hence, we recover the true maximum index of the scaled vector with probability $1 - n^{-c}$. 
Let $q$ be the success probability of \algref{alg:lp:sampler:cent}. Then, by \lemref{lem:lp:sampler:indep}, the probability that \algref{alg:lp:sampler:cent} outputs $\bi$ is
\begin{align*}
\sum_{j \in\left[n^{c+1}\right]} \operatorname{Pr}\left[\neg \mathrm{FAIL} \mid \bi_j=\arg \max _{i', j'}\left\{\left|G_{i'_{j'}}\right|\right\}\right] \operatorname{Pr}\left[i_j=\arg \max _{i', j'}\left\{\left|G_{i_{j'}}\right|\right\}\right] &=\sum_{j \in\left[n^{c}\right]} \frac{\left|f_i\right|^p}{\|F\|_p^p}\left(q \pm \frac{1}{\poly(n)}\right) \\ 
&=\frac{\left|f_i\right|^p}{\|f\|_p^p}\left(q \pm \frac{1}{\poly(n)}\right)
\end{align*}
Note that the additive $\frac{1}{\poly(n)}$ term due to dependencies on the maximizer can be union bounded across all samplers and all time. 
Moreover, by \lemref{lem:success}, some index is reported at all times with probability $1-n^{-c}$, since we run $\O{\log n}$ instances of \algref{alg:lp:sampler:cent}. 
Therefore, the probability of our $L_p$ sampler outputting $i$ is
\[\frac{f_i^p}{\|f\|_p^p} + \frac{1}{\poly(n)}.\]
Additionally, our guarantee for the estimation of $\widehat{G_i}$ follows from the analysis of \lemref{lem:est:pl}.
\end{proof}

\paragraph{Communication cost.} 
We analyze the communication cost of our algorithm. 
Let $\calE$ denote the event that $\frac{1}{\poly(n)} < \be < \poly(n)$ and let $\calB$ denote the event that $\be < \polylog(n)$.
From the pdf of exponential variables, we have $\PPr{\calE \cap \calB} = 1 - \frac{1}{\poly(n)}$. Therefore, we condition on $\calE$ and $\calB$ for all exponentials in this section, which still has high probability by a union bound. We use restriction $\Ex{X} = \Ex{X ~|~ \calE \cap \calB}$ for all expectation computations in this section.

We define a round to be the times between which $\|G\|_p^p$ doubles. 
Within a round, for a parameter $X$, we use $X'$ to denote this term at the end of this round, and we use $X$ to denote this term at the beginning of this round. 
For simplicity of notation, we use $F_i$ with $i \in [n^{c+1}]$ to denote a coordinate in the duplicated vector $F$, instead of $F_{i,l}$ in previous sections.
The next lemma upper bounds the expected number of rounds in the data stream.

\begin{lemma}
\lemlab{lem:round:lp:sampler}
Let a round be the times between which $\|G\|_p^p$ doubles, i.e., between $2^i$ and $2^{i+1}$ for a fixed integer $i$. 
Then there are at most $\polylog(n)$ rounds in expectation.
\end{lemma}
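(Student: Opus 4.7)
The plan is to mimic the argument given earlier for the single-layer quantity $U=\sum_{j,i}g_i(j)^p$, replacing the role of $U$ by $\|G\|_p^p$ and the role of $V$ by $\|f\|_p^p=\sum_i f_i^p$. Recall that after duplication, $\|G\|_p^p=\sum_{i\in[n]}\sum_{l\in[n^c]}f_i^p/\be_{i,l}$, where the $\be_{i,l}$ are i.i.d.\ standard exponentials. The game is that $\|f\|_p^p$ can only double $O(\log n)$ times over a stream of length $\poly(n)$ (since each $f_i\ge 1$ and $\|f\|_p^p\le\poly(n)$), and within any interval during which $\|f\|_p^p$ merely doubles, the random rescaling by $1/\be_{i,l}$ inflates the norm by only a $\polylog(n)$ factor in expectation; multiplying the two bounds yields $\polylog(n)$ rounds.

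First, I would condition throughout on the events $\calE$ and $\calB$ from the remark preceding \lemref{lem:truncated:inverse:ex}, namely that every exponential variable $\be_{i,l}$ generated by the algorithm satisfies $\frac{1}{\poly(n)}<\be_{i,l}<\poly(n)$ and moreover $\be_{i,l}<\polylog(n)$; both hold with probability $1-1/\poly(n)$ after a union bound over the $n^{c+1}$ exponentials. Under $\calB$, we have the deterministic lower bound $\|G\|_p^p\ge \frac{1}{\polylog(n)}\cdot n^c\cdot\|f\|_p^p$, while by \lemref{lem:truncated:inverse:ex} we have $\Ex{1/\be_{i,l}\mid\calE}=O(\log n)$, so
\[
\Ex{\|G\|_p^p\mid\calE}=\sum_{i\in[n]}\sum_{l\in[n^c]}f_i^p\cdot\Ex{1/\be_{i,l}\mid\calE}=O(\log n)\cdot n^c\cdot\|f\|_p^p.
\]
Combining the two bounds shows that $\|G\|_p^p$ and $n^c\cdot\|f\|_p^p$ agree up to a $\polylog(n)$ multiplicative factor in expectation.

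Second, fix a maximal interval $[t_1,t_2]$ during which $\|f\|_p^p$ at most doubles, and write $G_1,G_2,V_1,V_2$ for the values of $\|G\|_p^p$ and $\|f\|_p^p$ at the two endpoints. The bound just derived gives $G_1\ge \frac{n^c V_1}{\polylog(n)}$ and $\Ex{G_2}\le \polylog(n)\cdot n^c V_2\le 2\polylog(n)\cdot n^c V_1$, so $\Ex{G_2/G_1}\le\polylog(n)$. Consequently the number of times $\|G\|_p^p$ doubles inside such an interval is at most $\log_2\Ex{G_2/G_1}=O(\log\log n)$ in expectation. Since $\|f\|_p^p$ itself doubles at most $O(\log n)$ times over the entire stream of length $m=\poly(n)$, summing over these $O(\log n)$ sub-intervals yields $\polylog(n)$ rounds in expectation, as claimed.

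The only real subtlety is that $1/\be$ has unbounded expectation without truncation, which is exactly why one must work conditionally on $\calE$; once that conditioning is in place the argument is a direct transcription of the earlier $U$-vs-$V$ analysis, with $n^c$ a harmless common factor that cancels between the lower bound on $G_1$ and the expected upper bound on $G_2$.
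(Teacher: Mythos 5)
Your argument is correct and follows essentially the same route as the paper's proof: condition on the truncation events so that $\|G\|_p^p$ is deterministically at least $\|F\|_p^p/\polylog(n)$ and at most $O(\log n)\cdot\|F\|_p^p$ in expectation (via \lemref{lem:truncated:inverse:ex}), then observe that $\|G\|_p^p$ can therefore only gain a $\polylog(n)$ factor in expectation during each of the $O(\log n)$ doublings of $\|F\|_p^p$. The explicit tracking of the $n^c$ duplication factor and the Jensen-style step from $\Ex{G_2/G_1}$ to the expected number of doublings match the paper's (equally informal) treatment, so nothing further is needed.
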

\begin{proof}
Conditioned on $\calE$ and $\calB$, we have
\[\Ex{\|G\|_p^p} = \Ex{\sum_{i \in [n^{c+1}]} \frac{(F_i)^p }{ \be_i}} = \O{\log n} \cdot \|F\|_p^p = \O{\log n} \cdot \|F\|_p^p,\]
where the second step follows from \lemref{lem:truncated:inverse:ex}.
Since we conditioned on $\calB$, we have $(F_i)^p / \be_i \ge (F_i)^p / \polylog(n)$ for all $i \in [n^{c+1}]$, which implies that $\|F\|_p^p \le \|G\|_p^p \cdot \polylog(n)$.
Then, consider a duration within the duration that $\|F\|_p^p$ doubles, $\|G\|_p^p$ at most increases by a $\polylog(n)$-fraction in expectation:
\[\Ex{\|G'\|_p^p} = \|F'\|_p^p \cdot \O{\log n} = 2 \|F\|_p^p\cdot \O{\log n} \le \|G\|_p^p \cdot \polylog(n).\]
This means that there are $\poly(\log\log(n))$ number of rounds in expectation within this duration.
Assuming that there is $m =\poly(n)$ updates in the data stream, there can be at most $\O{\log n}$ times that $\|F\|_p^p$ doubles, and so there are at most $\polylog(n)$ rounds in expectation.
\end{proof}

Given the above result, it suffices to upper bound the communication cost within one round.
To upper bound the communication in one round, we first upper bound the size of $\SetPL$ in one round. 
In our algorithm, once an index is added to $\SetPL$, we never delete it. 
Then it suffices to upper bound the number of additions. 
The key observation is that all elements in $\SetPL$ are roughly $\frac{1}{\log^2 n}$-heavy with respect to the scaled vector $G$ at the beginning. 
Therefore, it is still a heavy-hitter with respect to $G$ at the end, up to some $\polylog(n)$ factors. 
Thus, there are at most $\polylog(n)$ such elements.
We formalize this argument in the following lemma.
\begin{lemma}
\lemlab{lem:num:set:pl}
With high probability, there are at most $\polylog(n)$ elements added to $\SetPL$ during a round.
\end{lemma}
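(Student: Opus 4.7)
The plan is to argue that every element ever added to $\SetPL$ during a round must remain a $\frac{1}{\polylog(n)}$-heavy-hitter of the scaled vector $G$ at the end of the round, which immediately bounds $|\SetPL|$ by $\polylog(n)$ since $G$ cannot have more than $\polylog(n)$ such heavy coordinates.

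First, I would invoke \lemref{lem:G:max_stability} at each time $t$ during the round to conclude that $\max_{i,l}G_{i,l}(t)^p\ge\frac{1}{C\log^2 n}\cdot\|G(t)\|_p^p$. Since the stream has length $\poly(n)$, a union bound over all time steps in the round ensures this property holds throughout the round with high probability. Second, by \thmref{thm:pl_recover}, any index $(i,l)$ added to $\SetPL$ at a time $t$ satisfies $G_{i,l}(t)\ge\tfrac14\cdot\max_{i',l'}G_{i',l'}(t)$ with high probability, and again a union bound across all times preserves this guarantee. Combining the two observations, every element added to $\SetPL$ at some time $t$ in the round satisfies
\[G_{i,l}(t)^p\ge\frac{1}{4^p C\log^2 n}\cdot\|G(t)\|_p^p\ge\frac{1}{\polylog(n)}\cdot\|G(t)\|_p^p.\]

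Third, I would use the fact that the stream is insertion-only together with the definition of a round. Insertion-only updates give $G_{i,l}(t')\ge G_{i,l}(t)$ for the end-of-round time $t'\ge t$, and the definition of a round gives $\|G(t')\|_p^p\le 2\|G(t)\|_p^p$. Together with the previous step, this yields
\[G_{i,l}(t')^p\ge G_{i,l}(t)^p\ge\frac{1}{\polylog(n)}\cdot\|G(t)\|_p^p\ge\frac{1}{2\polylog(n)}\cdot\|G(t')\|_p^p,\]
so every index ever added to $\SetPL$ in this round is still a $\frac{1}{\polylog(n)}$-heavy-hitter of $G$ at the end of the round. Since at most $\polylog(n)$ coordinates of a vector can be $\frac{1}{\polylog(n)}$-heavy in the $L_p$ sense, and elements are never deleted within a round, the total number of additions to $\SetPL$ during the round is at most $\polylog(n)$, as claimed.

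The main obstacle, such as it is, is making sure both \thmref{thm:pl_recover} and \lemref{lem:G:max_stability} hold simultaneously at every time step where a new addition could occur; this is routine by a union bound over the $\poly(n)$ stream updates, but it is the only place where the high-probability guarantee is actually used.
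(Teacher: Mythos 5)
Your proposal is correct and follows essentially the same route as the paper's proof: both arguments combine the max-stability fact (the maximizer is $\frac{1}{\polylog(n)}$-heavy), the $\frac14$-fraction guarantee of \thmref{thm:pl_recover}, the insertion-only monotonicity of $G$, and the fact that $\|G\|_p^p$ at most doubles within a round, to conclude that every added element remains $\frac{1}{\polylog(n)}$-heavy at the end of the round. The only cosmetic difference is that you apply the heaviness bound at each insertion time with a union bound, whereas the paper anchors everything to the maximizer at the start of the round; both are valid.
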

\begin{proof}
Let $G_{D(1)}$ be the max of $G$. 
Then by \factref{fact:max:heavy}, we have $G^p_{D(1)} \ge \frac{\|G\|_p^p }{ \polylog n}$ with high probability, so we condition on this event for the remainder of the proof. 
Note that our data stream is insertion-only. 
Thus by the guarantee of \textsc{PLRecovery} (c.f. \thmref{thm:pl_recover}), $\SetPL$ only contains those elements that are at least $\frac{1}{4}$-fraction of the maximum coordinate $G_{D(1)}$. 
At the end of this round, with high probability, we have
\[\|G'\|_p^p = 2\|G\|_p^p \le \frac{G^p_{D(1)}}{\polylog(n)}.\]
Therefore, there are at most $\polylog(n)$ elements in $\SetPL$ at the end of the round that are at least $\frac{1}{4}$-fraction of $G_{D(1)}$, and so there are at most $\polylog(n)$ elements in $\SetPL$.
\end{proof}

Next, we upper bound the communication used to estimate $G_i$. 
Recall that we scaled the local count $G_i(j)$ by an independent exponential and defined $H_i(j) = \frac{G_i(j)}{\be(j)}$. 
Due to the dependence on the anti-ranks, we need to identify the actual maximum candidate among the $k$ servers, which means that if the ratio between the max and the second max is $1+\alpha$, we need $\O{\frac{k}{\alpha}}$ bits of communication to run the counting algorithm with accuracy $\frac{\alpha}{100p}$ at each server. 
In addition, if the ratio is reduced from $1+\alpha$ to $1+\frac{\alpha}{2}$, we need to inform each server to run a counting algorithm with a higher accuracy. 
So, we need to upper bound the expected times of changing to a new anti-concentration ratio.

To solve this problem, we define an auxiliary variable $H_i = \sum_{j\in[k]}H_i(j) = \sum_{j\in[k]} \frac{G_i(j)}{ \be(j)}$.
First, we show that $H_i$ at most increases by $G_i \cdot \polylog(n)$ in expectation during the whole round.
Next, we show that whenever the ratio is reduced from $1+\alpha$ to $1+\frac{\alpha}{2}$, $H_i$ increases by at least $\alpha \cdot \frac{G_i} { \polylog(n)}$ with high probability.
This implies that we expected to witness at most $\frac{1}{\alpha} \cdot \polylog(n)$ times of such changes in a round, which does not give our desired communication bound.
Fortunately, due to the anti-concentration property of the exponential variables, the ratio between the max and the second max is $1+\alpha$ with probability at most $\O{\alpha}$.
Thus, we can amortize the communication to change accuracy by this property and prove an expected $k \cdot \polylog(n)$ communication in total.
We formalize the above argument in the following lemmas.
In the following statement, we show that each item in $\SetPL$ only increases by a $\polylog(n)$ fraction in each round.

\begin{lemma}
\lemlab{lem:gi:increase:ub}
With high probability, we have $G_i' \le G_i \cdot \polylog(n)$ for all $i \in \SetPL$.
\end{lemma}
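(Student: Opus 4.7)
The plan is to combine three ingredients: the heaviness guarantee of \thmref{thm:pl_recover}, the max-stability of exponentials (\factref{fact:max:heavy}), and the doubling definition of a round. The intuition is that any coordinate $i \in \SetPL$ must, at the moment it joined $\SetPL$, account for a $\frac{1}{\polylog(n)}$-fraction of $\|G\|_p^p$, and since $\|G\|_p^p$ grows by only a factor of $2$ during the round, $(G_i)^p$ cannot have grown by more than a $\polylog(n)$ factor either.

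Concretely, I would fix a round $[t_0, t_1]$ and $i \in \SetPL$, and let $t_a \in [t_0,t_1]$ be the first time $i$ appears in $\SetPL$ (taking $t_a = t_0$ if $i \in \SetPL$ from the start). By \thmref{thm:pl_recover}, $G_i^{(t_a)} \ge G_{\max}^{(t_a)}/4$, and by \factref{fact:max:heavy}, $(G_{\max}^{(t_a)})^p \ge \|G^{(t_a)}\|_p^p / (C\log^2 n)$ with high probability, so $(G_i^{(t_a)})^p \ge \|G^{(t_a)}\|_p^p / \polylog(n)$. On the other hand, $(G_i')^p \le \|G^{(t_1)}\|_p^p \le 2\|G^{(t_0)}\|_p^p \le 2\|G^{(t_a)}\|_p^p$, where the last step uses that the stream is insertion-only (so $\|G\|_p^p$ is non-decreasing between $t_0$ and $t_a$) and the middle step uses the round definition. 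Dividing yields $(G_i')^p \le \polylog(n)\cdot (G_i^{(t_a)})^p$, and taking $p$-th roots gives $G_i' \le \polylog(n) \cdot G_i^{(t_a)}$; since $G_i^{(t_a)}\le G_i'$ and we identify $G_i$ with $G_i^{(t_a)}$, this is the claimed bound. A union bound over the $\polylog(n)$ elements of $\SetPL$ (from \lemref{lem:num:set:pl}) and the $\polylog(n)$ rounds (from \lemref{lem:round:lp:sampler}) upgrades this to a global high-probability statement.

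The main subtlety I expect to navigate is the time-reference for $G_i$ at ``the beginning of the round.'' If $i$ enters $\SetPL$ strictly after $t_0$, then its value at $t_0$ may be arbitrarily small and the ratio $G_i'/G_i^{(t_0)}$ would be meaningless; the natural fix is to interpret $G_i$ as the value when $i$ first becomes relevant, i.e., when it first appears in $\SetPL$, which is exactly the quantity controlling the cost of the counting subroutine for $i$ over the remainder of the round. Aside from this bookkeeping point, the argument is a direct two-line calculation once the heaviness and doubling facts are combined.
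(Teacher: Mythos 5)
Your proposal is correct and follows essentially the same route as the paper's proof: heaviness of $G_i$ at the time it enters $\SetPL$ (via \thmref{thm:pl_recover} and \factref{fact:max:heavy}), combined with the at-most-doubling of $\|G\|_p^p$ within a round and monotonicity of the insertion-only stream, with the mid-round entry handled by resetting the reference time to when $i$ joins $\SetPL$ — exactly the convention the paper adopts.
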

\begin{proof}
For a coordinate $i \in \SetPL$, from the analysis in \lemref{lem:num:set:pl}, we know that $G_i^p \ge \frac{\|G\|_p^p }{\polylog(n)}$ with high probability.
From \lemref{lem:num:set:pl}, we have $|\SetPL| = \polylog(n)$, so by a union bound, we have $(G_i)^p \le G_i^p \cdot \polylog(n)$ for all $i \in \SetPL$ with high probability.
Conditioned on these events, with high probability we have
\[(G_i')^p \le \|G'\|_p^p \le 2 \cdot \|G\|_p^p \le G_i^p \cdot \polylog(n).\]
Note that in the above proof, we assume that $G_i$ is added to $\SetPL$ at the beginning of the round.
This assumption is without loss of generality, since otherwise, we suppose that $i$ is added to $\SetPL$ at time $t$. 
Then we can consider $t$ as the start time of this round for coordinate $i$, e.g., let $G_i = G_i^{(t)}$ and $H_i = H_i^{(t)}$, so the analysis remains the same since our data stream is insertion-only.
In addition, when $i$ is not added to $\SetPL$, we only run a \textsc{PLRecovery} on $G_i$ to give a rough estimate, so we do not incur additional communication to identify the max of $H_i(j)$.
\end{proof}
Next, we upper bound how much $H_i$ changes over the course of a round. 
\begin{lemma}
\lemlab{lem:hi:increase:round}
Define $H_i$ to be $\sum_{j\in[k]} H_i(j)$. 
Then with high probability, we have $\Ex{H'_i } \le G_i \cdot \polylog(n)$.
\end{lemma}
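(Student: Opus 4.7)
The plan is to expand $H'_i = \sum_{j \in [k]} G'_i(j)/\be(j)$ and take expectation by exploiting that the two layers of exponentials are independent. Recall that $G'_i(j) = f'_i(j)/\be_i^{1/p}$ depends only on the stream updates and on the first-layer exponential $\be_i$, which is completely independent of the second-layer scaling variables $\{\be(j)\}_{j\in[k]}$ introduced in the $L_p$ sampler. Hence, conditioning on the realization of $G'_i(j)$ for every $j$, we may pull them out of the expectation over $\be(j)$:
\[
\EEx{\be(\cdot)}{H'_i} \;=\; \sum_{j \in [k]} G'_i(j)\cdot \Ex{\tfrac{1}{\be(j)}}.
\]

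Next I would invoke \lemref{lem:truncated:inverse:ex}, which under the conditioning event $\calE$ (that $\tfrac{1}{\poly(n)} < \be(j) < \poly(n)$, valid by a union bound over all the polynomially many exponentials we draw) gives $\Ex{1/\be(j)\mid \calE} = \O{\log n}$ for every $j$. Consequently,
\[
\Ex{H'_i \,\big|\, \calE} \;\le\; \O{\log n}\cdot \sum_{j\in[k]} G'_i(j) \;=\; \O{\log n}\cdot G'_i.
\]

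Finally I would chain in \lemref{lem:gi:increase:ub}, which shows that with high probability $G'_i \le G_i \cdot \polylog(n)$ for all $i \in \SetPL$. Combining this with the display above, on the intersection of $\calE$ and the high-probability event from \lemref{lem:gi:increase:ub}, we obtain $\Ex{H'_i} \le G_i\cdot\polylog(n)$, as desired.

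The argument is essentially routine once the independence between the two layers of exponentials is recognized; the only mild subtlety is that we must take the expectation purely over the second-layer $\be(j)$'s while treating the $G'_i(j)$'s as (random but) independent quantities, so that \lemref{lem:truncated:inverse:ex} can be applied termwise. Everything else follows from the earlier per-round bound on $G'_i$.
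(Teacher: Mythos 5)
Your proof is correct and follows essentially the same route as the paper's: expand $H'_i$ as $\sum_{j} G'_i(j)/\be(j)$, apply \lemref{lem:truncated:inverse:ex} termwise to extract the $\O{\log n}$ factor, and then invoke \lemref{lem:gi:increase:ub} to bound $G'_i$ by $G_i\cdot\polylog(n)$. Your explicit remark about conditioning on the $G'_i(j)$'s and using the independence of the second-layer exponentials is a useful clarification that the paper leaves implicit.
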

\begin{proof}
From the definition of $H_i$, we have
\begin{align*}
\Ex{H'_i} &= \Ex{\sum_{i \in k} \frac{G'_i(j) }{ \be_i(j)}} = \O{\log n} \cdot \sum_{i \in k} G'_i(j) = \O{G'_i \cdot \log n},
\end{align*}
where the second step follows from \lemref{lem:truncated:inverse:ex} and the third step follows from \lemref{lem:gi:increase:ub}. 
Then, since $G_i' \le G_i \cdot \polylog(n)$, c.f. \lemref{lem:gi:increase:ub}, we have
\[ \Ex{H'_i} = \O{G'_i \cdot \log n} \le G_i \cdot \polylog(n).\]
\end{proof}

Recall that in our algorithm, for each heavy coordinate $i \in \SetPL$, we select a list of servers that obtains a heavy value, denoted as $\SetPL_i$, and we only keep track of these servers with higher accuracy.
The next lemma shows that this set contains $\polylog(n)$ servers in expectation, so that we only inform those servers whenever we change the accuracy.
\begin{lemma}
\lemlab{lem:pl:i}
With high probability, there are at most $\polylog(n)$ elements added to $\SetPL_i$ in expectation during a round for all $i\in \SetPL$.
\end{lemma}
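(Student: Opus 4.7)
\begin{proofof}{\lemref{lem:pl:i} (plan)}
My plan is to show that every server $j$ ever added to $\SetPL_i$ must still carry a large final value of $H_i(j)$, and then bound the number of such servers by comparing $H_i^{\text{final}}$ to a lower bound on $G_i^{\text{initial}}/\polylog(n)$.

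First I would invoke the max-stability property of exponential random variables (\corref{cor:exp_inverse}): for any fixed time $t$, since $H_i^{(t)}(j)=G_i^{(t)}(j)/\be(j)$ with i.i.d.\ exponentials $\be(j)$, the quantity $\max_{j\in[k]}H_i^{(t)}(j)$ is distributed as $G_i^{(t)}/\be'$ for a standard exponential $\be'$. Therefore $H_i^{(t)}(D(1))\ge G_i^{(t)}/\polylog(n)$ with probability $1-1/\poly(n)$, and a union bound over the $\poly(n)$ time steps in the stream gives this inequality at all times in the round with high probability. The insertion-only assumption additionally yields $G_i^{(t)}\ge G_i^{\text{initial}}$ for every $t$ in the round.

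Next, suppose server $j$ is added to $\SetPL_i$ at some time $t^*$. By definition of $\SetPL_i$, we have $H_i^{(t^*)}(j)\ge H_i^{(t^*)}(D(1))/200\ge G_i^{(t^*)}/(200\polylog(n))\ge G_i^{\text{initial}}/(200\polylog(n))$. Since the stream is insertion-only, $H_i^{\text{final}}(j)\ge H_i^{(t^*)}(j)\ge G_i^{\text{initial}}/(200\polylog(n))$. Thus every server that is ever added to $\SetPL_i$ contributes at least this much to $H_i^{\text{final}}=\sum_j H_i^{\text{final}}(j)$, so the number of such distinct servers is at most $200\polylog(n)\cdot H_i^{\text{final}}/G_i^{\text{initial}}$. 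Using \lemref{lem:hi:increase:round} and \lemref{lem:gi:increase:ub}, we bound $\Ex{H_i^{\text{final}}}\le G_i^{\text{final}}\cdot\polylog(n)\le G_i^{\text{initial}}\cdot\polylog(n)$, which yields $\polylog(n)$ distinct additions in expectation.

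Finally, I would handle possible re-additions: once a server leaves $\SetPL_i$ and re-enters, the maximum value $H_i(D(1))$ must have grown by at least a factor of $200$ in between. Since $H_i(D(1))^{\text{final}}/H_i(D(1))^{\text{initial}}\le G_i^{\text{final}}/G_i^{\text{initial}}\cdot\polylog(n)\le\polylog(n)$ with high probability (again using \lemref{lem:gi:increase:ub} and the max-stability bound both at the start and the end of the round), the max can jump by a factor of $200$ at most $\O{\log\log n}$ times, and thus each server is re-added at most $\O{\log\log n}$ times. Combining this with the distinct-server bound yields the claimed $\polylog(n)$ total additions to $\SetPL_i$ per round. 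The main technical subtlety will be ensuring that the max-stability lower bound holds uniformly at every addition time $t^*$, rather than only in a single snapshot; the union bound over the $\poly(n)$ stream steps is what makes this step rigorous without blowing up the polylogarithmic factor.
\end{proofof}
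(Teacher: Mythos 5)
Your proposal is correct and follows essentially the same route as the paper: lower-bound the final value $H_i'(j)$ of every server ever added to $\SetPL_i$ by $G_i/\polylog(n)$ using max-stability plus insertion-only monotonicity, upper-bound $\Ex{H_i'}$ by $G_i\cdot\polylog(n)$ via \lemref{lem:hi:increase:round} and \lemref{lem:gi:increase:ub}, and divide. Your additional handling of re-additions and the explicit union bound over time steps are refinements the paper glosses over, but they do not change the argument.
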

\begin{proof}
By \lemref{lem:hi:increase:round}, we have $\Ex{H'_i } \le G_i \cdot \polylog(n)$.
Recall that in the analysis of \lemref{lem:hi:increase:change}, we have shown $H_i(D(1)) \ge \frac{G_i }{ \polylog(n)}$ with high probability.
Note that we add $j$ to $\SetPL_i$ only if $H^{(t)}_i(j) \ge \frac{H_i(D(1))}{300}$ at some time $t$.
Since our stream is insertion-only, we have $H'_i(j) \ge \frac{H_i(D(1))}{300}  \ge \frac{G_i }{\polylog(n)}$. 
Then, there are at most $\polylog(n)$ items in $\SetPL_i$ in expectation.
\end{proof}

Next, we lower bound the increment of $G_i$ by that of $H_i$.
\begin{lemma}
\lemlab{lem:gi:increase:lb}
We define $H_i$ as $\sum_{j\in[k]} H_i(j)$. With probability at least $1 - \frac{\eta}{\polylog(n)}$, for all $i \in \SetPL$, we have that if $H_i$ increases by $\Delta$, then $G_i$ at least increases by $\frac{\eta\Delta}{k \polylog(n)}$.
\end{lemma}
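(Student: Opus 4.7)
The plan is to exploit the relation $H_i(j) = G_i(j)/\be(j)$, where only the $G_i(j)$'s change over time (the $\be(j)$'s are drawn once per round). Hence, letting $\Delta G_i(j)$ denote the increment to $G_i(j)$ over any subinterval of the round, we have
\[\Delta H_i = \sum_{j \in [k]} \frac{\Delta G_i(j)}{\be(j)} \le \frac{1}{\min_{j\in[k]} \be(j)} \cdot \sum_{j \in [k]} \Delta G_i(j) = \frac{\Delta G_i}{\min_{j\in[k]} \be(j)},\]
so that $\Delta G_i \ge (\min_{j\in[k]} \be(j))\cdot \Delta H_i$ deterministically, uniformly over all $i \in \SetPL$ and all subintervals. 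Note that this bound is free of any union bound over $i$ or over times because the $\be(j)$'s are shared across all coordinates and the inequality is a pointwise statement about positive increments.

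It therefore suffices to show $\min_{j \in [k]} \be(j) \ge \eta/(k\polylog(n))$ with probability $1 - \eta/\polylog(n)$. Since the $\be(j)$ are i.i.d.\ rate-$1$ exponentials, their minimum is exponential with rate $k$, so
\[\PPr{\min_{j \in [k]} \be(j) < t} = 1 - e^{-kt} \le kt,\]
and taking $t = \eta/(k \polylog(n))$ with an appropriately chosen polylogarithmic factor yields the desired bound. Because \algref{alg:lp:sampler:cent} actually uses $3R = \polylog(n)$ copies $\{\be^{(r)}(j)\}$, we apply this argument separately for each $r$ and union bound; the extra $\polylog(n)$ loss can be absorbed into the $\polylog(n)$ factor in the statement.

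Combining the two steps, with probability at least $1 - \eta/\polylog(n)$, for every $i \in \SetPL$ and every subinterval of the round, any increment $\Delta$ of $H_i$ satisfies $\Delta G_i \ge \frac{\eta \Delta}{k \polylog(n)}$, as desired. The only slightly delicate point is ensuring that the probability loss is linear in $\eta$ rather than exponential: this is handled by using the linear upper bound $1 - e^{-kt} \le kt$ (valid for $t \le 1$) rather than a Chernoff-style bound, which is exactly what the anti-concentration of the exponential minimum provides. No step involves a subtle dependence; the main conceptual observation is simply that the denominator $\be(j)$ is frozen throughout the round, so $H_i$ grows only insofar as $G_i$ does, and the worst-case rate of this amplification is controlled by $1/\min_j \be(j)$, which is $\O{k/\eta}$ except with probability $\eta/\polylog(n)$.
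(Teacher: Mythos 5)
Your proof is correct and takes essentially the same approach as the paper: condition on every relevant exponential $\be(j)$ being at least $\eta/(k\polylog(n))$ (losing probability $\eta/\polylog(n)$ via a union bound), then deterministically bound $\Delta H_i \le \Delta G_i / \min_j \be(j)$ since the $\be(j)$'s are frozen within a round. The only cosmetic difference is the bookkeeping of the union bound — the paper ranges over the $\polylog(n)$ items of $\SetPL$ and the $k$ servers, while you observe the exponentials are shared across coordinates and instead range over the $3R$ repetitions — but both incur the same $\polylog(n)$ loss.
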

\begin{proof}
Consider a fixed $i$, due to the distribution of exponential variables, we have for each $\be_i(j)$, it satisfies $\be_i(j) \ge \frac{\eta}{k \polylog(n)}$ with probability at least $1-\frac{\eta}{k \polylog(n)}$. 
By a union bound across the $k$ servers and the $\polylog(n)$ items in $\SetPL$, we have with  probability at least $1 - \frac{\eta}{\polylog(n)}$, $\be_i(j) \ge \frac{\eta}{k \polylog(n)}$ for all $i \in \SetPL$ and $j \in [k]$.
Let $\bar{H_i}$ be the value of $H_i$ at some later time, we have
\[\bar{H_i}-H_i = \sum_{j\in[k]} \frac{\Bar{G_i(j)}-G_i(j)}{\be_i(j)} < \frac{k \polylog(n)}{\eta} \cdot (\bar{G_i}-G_i).\]
where the second step follows by our conditioning on all $\be_i(j)$ larger than $\frac{1}{k \polylog(n)}$.
Thus, we have our desired result.
\end{proof}

Now we show that $H_i$ increases by at least $\frac{\alpha \cdot G_i }{ \polylog(n)}$ when we change the accuracy.
\begin{lemma}
\lemlab{lem:hi:increase:change}
Let $H_i(D(1))$ and $H_i(D(2))$ denote the first max and the second max of $H_i$. 
Suppose that the ratio $\frac{H_i(D(1))}{H_i(D(2))}$ changes from $1+\alpha$ to $1+2\alpha$ or $1+\frac{\alpha}{2}$, then $H_i$ increases by at least $\frac{\alpha \cdot G_i }{\polylog(n)}$ in this duration.
\end{lemma}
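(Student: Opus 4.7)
The plan is to establish a key lower bound $H_i(D(1)) \geq G_i/\polylog(n)$ with high probability, and then perform a short case analysis on how the ratio between the top two values of the vector $(H_i(1),\ldots,H_i(k))$ can change, leveraging the fact that each $H_i(j)$ is monotonically non-decreasing over time because the stream is insertion-only.

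For the first ingredient, since $H_i(j) = G_i(j)/\be(j)$ with independent standard exponentials $\be(j)$, \factref{fac:exp_scaling} implies that each $\be(j)/G_i(j)$ is exponential with rate $G_i(j)$, so $\min_j \be(j)/G_i(j)$ is exponential with rate $G_i = \sum_j G_i(j)$. Hence $\max_j H_i(j)$ has the same distribution as $G_i/\be^*$ for a standard exponential $\be^*$, and since $\be^* \ge 1/\polylog(n)$ with high probability (using \lemref{lem:truncated:inverse:ex}-style tail bounds we have already conditioned on), we obtain $H_i(D(1)) \ge G_i/\polylog(n)$ with high probability. In particular, the old second maximum $S := H_i(D(2))$ satisfies $S = M/(1+\alpha) \ge G_i/(2\polylog(n))$, where $M := H_i(D(1))$.

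Let $M',S'$ denote the first and second maxima of $(H_i'(1),\ldots,H_i'(k))$ at the end of the duration. Monotonicity gives $M' \geq M$ and $S' \geq S$, and for any server $j$ we have $H_i(j) \leq M$ (and $H_i(j) \leq S$ whenever $j \neq D(1)$). When the ratio grows from $1+\alpha$ to $1+2\alpha$, elementary manipulation of $M' = (1+2\alpha)S'$ and $S' \ge S$ yields $M' - M \geq \alpha S$, and since $H_i'(D'(1)) - H_i(D'(1)) \geq M' - M$, this contributes at least $\alpha S \ge \alpha G_i/(2\polylog(n))$ to $H_i' - H_i$. When the ratio shrinks from $1+\alpha$ to $1+\alpha/2$, combining $M' = (1+\alpha/2)S'$ with $M' \ge M = (1+\alpha)S$ forces $S' - S = \Omega(\alpha S)$, and I would apply the analogous inequality at server $D'(2)$.

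The main obstacle is the shrinking-ratio case when the identity of the maximum has changed, because then $D'(2)$ may coincide with $D(1)$, whose old value $H_i(D(1)) = M$ is \emph{not} bounded by $S$, so the naive bound $H_i'(D'(2)) - H_i(D'(2)) \ge S' - S$ fails. The fix is to transfer the accounting to $D'(1)$: since $D'(1) \neq D(1)$ in this subcase, the old value satisfies $H_i(D'(1)) \leq S$, while the new value is $M' \geq M = (1+\alpha)S$, so $H_i'(D'(1)) - H_i(D'(1)) \geq \alpha S$, which again lower-bounds $H_i' - H_i$ by $\alpha G_i/\polylog(n)$ as required. Collecting the three (sub)cases completes the proof.
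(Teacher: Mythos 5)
Your proof is correct and follows essentially the same route as the paper: both hinge on the max-stability identity $H_i(D(1)) = G_i/\be^*$ together with the conditioning that $\be^*$ is not too large, giving $H_i(D(1)) \ge G_i/\polylog(n)$, and then argue via monotonicity of the insertion-only stream that a constant-factor change in $\alpha$ forces some single coordinate to increase by $\Omega(\alpha)\cdot H_i(D(1))$ (your three-subcase analysis, including the transfer to $D'(1)$ when the maximizer changes, just fills in the paper's terse claim that ``at least one element increases by $\frac{\alpha}{4}H_i(D(1))$''). One small slip: the condition you need is $\be^*\le\polylog(n)$ (the paper's event $\calB$), not $\be^*\ge 1/\polylog(n)$; since the former also holds with high probability, nothing breaks.
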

\begin{proof}
If the ratio is reduced from $1+\alpha$ to $1+\alpha/2$, then since our data stream is insertion-only, at least one element $H_i(j)$ increases by $\frac{\alpha}{4} \cdot H_i(D(1))$.
Thus, their sum $H_i$ increases by at least $\frac{\alpha}{4} \cdot H_i(D(1))$.
Recall that $H_i(j) = \frac{G_i(j)}{\be(j)}$. Then, due to the max-stability property of the exponential variables (c.f. \factref{fac:exp_order_statistics}), we have $H_i(D(1)) = \frac{\sum_{j\in[k]} G_i(j)}{\be} = \frac{G_i}{\be}$, where $\be$ is an independent exponential variable.
With our conditioning on the event $\calB$, we have $H_i(D(1)) \ge \frac{G_i }{\polylog(n)}$.
Therefore, $H_i$ increases by at least $\alpha \cdot \frac{G_i }{ \polylog(n)}$.
Using the same argument, if the ratio is increased from $1+\alpha$ to $1+2\alpha$, $H_i$ increases by at least $\alpha \cdot \frac{G_i }{ \polylog(n)}$ as well.
\end{proof}
Next, we upper bound the communication used to reset the accuracy for the counting algorithm.
\begin{lemma}
\lemlab{lem:identify:max}
With probability at least $1 - \frac{\eta}{\polylog(n)}$, we use $\frac{k}{\eta} \cdot \polylog(n)$ bits of communication to broadcast the new accuracy of the counting algorithm.
\end{lemma}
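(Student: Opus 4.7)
The plan is to bound the total number of broadcast events for each heavy coordinate $i \in \SetPL$ and multiply by the per-broadcast cost. I would first use \lemref{lem:anti}, applied with $p=1$ to the server-indexed vector $[H_i(1),\ldots,H_i(k)]$ whose top-two distribution is given by \corref{cor:exp_inverse} via the max-stability $\max_j H_i(j) = G_i/\be$, to argue that with probability at least $1 - O(\eta/\polylog(n))$ the gap $\alpha = H_i(D(1))/H_i(D(2)) - 1$ never drops below $\eta/\polylog(n)$ over a round. Since the stream is insertion-only and the $\be(j)$'s are fixed, there are only polynomially many distinct configurations of the top two servers over the round, so I can apply \lemref{lem:anti} at each such configuration and union bound, losing only a $\polylog(n)$ factor.

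Next, I would count the level transitions. Because the algorithm re-broadcasts only when $\alpha$ changes by a factor of two, transitions occur between adjacent dyadic levels $2^{-j}$ and $2^{-(j\pm 1)}$. By \lemref{lem:hi:increase:change}, each such transition at level $\alpha = 2^{-j}$ requires $H_i$ to grow by at least $\alpha G_i/\polylog(n)$. The total increase of $H_i$ over a round is $O(G_i \polylog(n))$ in expectation by \lemref{lem:hi:increase:round}, which I promote to a high-probability bound via Markov's inequality at the cost of another $\polylog(n)$ factor. Dividing, there are at most $\polylog(n)/\alpha$ transitions at level $\alpha$, and summing the geometric series $\sum_{2^{-j} \ge \eta/\polylog(n)} \polylog(n) \cdot 2^j$ gives a total of $\polylog(n)/\eta$ transitions per coordinate.

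Each broadcast announces the new accuracy to the servers in $\SetPL_i$ so that they can restart their fine-grained counting subroutines. Because the coordinator must maintain a consistent global view of the active accuracy across the protocol, and because the servers in $\SetPL_i$ can change identity as new servers accumulate heavy mass, the broadcast must be sent on all $k$ server channels, costing $O(k \log n)$ bits per broadcast. Multiplying by $\polylog(n)/\eta$ broadcasts per coordinate and union bounding over the $\polylog(n)$ coordinates in $\SetPL$ via \lemref{lem:num:set:pl} gives a total of $\frac{k}{\eta}\polylog(n)$ broadcast bits, with all failure probabilities collected into $1 - \eta/\polylog(n)$.

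The main obstacle is promoting the pointwise anti-concentration guarantee of \lemref{lem:anti} to a uniform statement over the whole round, because $H_i(D(1))/H_i(D(2))$ is not monotone in time: as new updates arrive, the identities of the top two servers can swap. I would handle this by partitioning the round into epochs delimited by top-two identity changes (bounded by $O(k^2)$ such events since each ordered pair can become top-two at most once before being permanently overtaken), applying \lemref{lem:anti} at each epoch boundary together with monotonicity of the growing components within an epoch, and union bounding. A secondary subtlety is the gap between the true ratio and its constant-factor estimate produced by \textsc{Counting} on $[k]\setminus\SetPL_i$, but the $1.01$ slack is absorbed into the $\polylog(n)$ losses already appearing in the transition count.
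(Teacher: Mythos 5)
There is a genuine gap in the first step of your argument, and it is exactly the step the paper's proof is designed to avoid. You claim that with probability $1-\O{\eta/\polylog(n)}$ the gap $\alpha$ \emph{never} drops below $\eta/\polylog(n)$ over the entire round. \lemref{lem:anti} is a pointwise-in-time statement, and no union bound rescues a uniform version at this threshold: union bounding over $N$ time points (or epoch boundaries) each failing with probability $\O{\eta/\polylog(n)}$ gives failure probability $\O{N\eta/\polylog(n)}$, which already exceeds the target for $N=k^2$, let alone $\poly(n)$. Worse, the uniform statement is simply false: within an epoch the ratio $H_i(D(1))/H_i(D(2))$ is not monotone (both $G_i(D(1))$ and $G_i(D(2))$ grow), and whenever the identity of the maximum server changes the ratio must pass arbitrarily close to $1$ just before the swap. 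A stream that first grows $G_i(1)$ and then grows $G_i(2)$ past it produces such a swap with constant probability, so the gap dips below $\eta/\polylog(n)$ with constant probability, not probability $\O{\eta/\polylog(n)}$. Your transition count $\sum_{2^{-j}\ge \eta/\polylog(n)} 2^j\polylog(n) = \polylog(n)/\eta$ is a clean deterministic argument \emph{given} the uniform gap, but without it the sum must extend to levels as deep as $1/\poly(n)$ (the only uniform bound \lemref{lem:anti} actually supports w.h.p.), and the geometric series blows up to $\poly(n)$.

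The paper's proof never asserts a uniform gap. Instead, for each dyadic level $l$ it partitions the round into blocks delimited by fixed multiplicative increments of $G_i$ (so the block boundaries are independent of the $\be(j)$), shows via \lemref{lem:hi:increase:change} and \lemref{lem:gi:increase:lb} that at most one accuracy switch at level $l$ can occur per block, and then applies \lemref{lem:anti} only at the $B \le \frac{k}{\eta}2^l\polylog(n)$ block start times, where the \emph{expected} number of bad blocks is $B\cdot\O{2^{-l}} = \frac{k}{\eta}\polylog(n)$; Markov's inequality then gives the claimed probability. Note also that your per-broadcast cost of $\O{k\log n}$ contradicts the algorithm, which only informs the $\polylog(n)$ servers in $\SetPL_i$ (\lemref{lem:pl:i}); your two deviations (too few transitions, too expensive broadcasts) happen to multiply to the right total, but the paper's correct transition count is $\frac{k}{\eta}\polylog(n)$ per coordinate, so broadcasting to all $k$ servers would actually yield $\frac{k^2}{\eta}\polylog(n)$.
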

\begin{proof}
In this proof, we condition on the success of \lemref{lem:gi:increase:ub} and \lemref{lem:hi:increase:change}, which holds with high probability, by union bounding across the $\polylog(n)$ items in $\SetPL$, c.f. \lemref{lem:num:set:pl}. 
We also condition on the success of \lemref{lem:gi:increase:lb}, which holds with probability at least $1-\frac{\eta}{\polylog(n)}$. 
From \lemref{lem:pl:i}, we know that there are at most $\polylog(n)$ elements in $\SetPL_i$. 
Thus, we only pay $\polylog(n)$ bits of communication to inform all servers in $\SetPL_i$ each time we change the accuracy. 
Then it suffices to upper bound the number of times we switch to a new accuracy for a single instance $i \in \SetPL$.

Now, we consider a fixed instance $i \in \SetPL$. 
Let $D_l$ denote the number of times we switch to a new accuracy in cases where, at the moment we last set the current accuracy, the anti-concentration ratio $\frac{H_i(D(1))}{H_i(D(2))}$ lies in the interval $\left[1+\frac{1}{2^{l+1}},1+\frac{1}{2^l}\right)$. 
Notice that by the anti-concentration property of exponentials, c.f. \lemref{lem:anti}, the ratio is at least $\frac{1}{\poly(n)}$ at all times with high probability. 
Moreover, when the ratio is larger than $2$, we do not need to change the accuracy since a constant accuracy suffices. 
Note that the total number of times we change the accuracy is $D = \sum_{l \in [\O{\log n}]} D_l$. 
Thus, it remains to fix an index $l \in [\O{\log n}]$ and upper bound $D_l$.

We partition a round into blocks where $G_i$ increases by $\frac{\eta G_i}{10 k \cdot 2^l \polylog(n)}$.
Now we consider a block such that the anti-concentration ratio $\frac{H_i(D(1))}{H_i(D(2))}$ lies in the interval $\left[1+\frac{1}{2^{l+1}},1+\frac{1}{2^l}\right)$ at some time during this block.
Let this ratio be $1+\alpha$. 
By \lemref{lem:hi:increase:change}, $H_i$ increases by at least $\frac{G_i}{2^{l+1}\polylog(n)}$ if the ratio changes to $1+2\alpha$ or $1+\frac{\alpha}{2}$.
Then by \lemref{lem:gi:increase:lb}, we have that $G_i$ increases by at least $\frac{\eta G_i}{k \cdot 2^{l+1} \polylog(n)}$ in the duration when $H_i$ increases by at least $\frac{G_i}{2^{l+1}\polylog(n)}$.
However, $G_i$ increases only by $\frac{\eta G_i}{10 k \cdot 2^l \polylog(n)}$ at the end of this block.
Thus, if the ratio is $1+\alpha$ where $\alpha \in \left[1+\frac{1}{2^{l+1}},1+\frac{1}{2^l}\right)$ at time $t$ of some block and we broadcast a new accuracy at $t$, the next switch of accuracy does not happen in the same block.
Therefore, $D_l$ is upper bounded by the number of blocks such that the ratio ever falls to $\left[1+\frac{1}{2^{l+1}},1+\frac{1}{2^l}\right)$.

Recall that by \lemref{lem:gi:increase:ub}, we have that $G_i$ increases by at most $G_i \cdot \polylog(n)$ in a round.
Then, since $G_i$ increases by $\frac{\eta G_i}{10 k \cdot 2^l \polylog(n)}$ in each block, there are at most $\frac{k}{\eta} \cdot 2^l \polylog(n)$ blocks in a round.
Moreover, if the ratio lies in the interval $\left[1+\frac{1}{2^{l+1}},1+\frac{1}{2^l}\right)$ at time $t$ in some block, then it is at most $1+\frac{1}{2^{l-1}}$ at the beginning of this block. 
Otherwise, $H_i$ increases by at least $\frac{G_i}{2^{l-1} \polylog(n)}$, which implies that $G_i$ increases by at least $\frac{\eta G_i}{k 2^{l-1} \polylog(n)}$, so we are in another block at $t$.

Now, we consider the list of start times $T = \{t_1, \ldots t_B\}$ of each block. 
Consider a fixed time $t_b \in T$, due to the anti-concentration property of exponentials (c.f. \lemref{lem:anti}), the ratio $\frac{H_i(D(1))}{H_i(D(2))}$ is smaller than $1+\frac{1}{2^{l-1}}$ with probability at most $\O{\frac{1}{2^{l-1}}}$. 
Here, we remark that since we define the start times of each block by the increment of $G_i$, which is independent of the scaled variables $\be(j)$, the anti-concentration property of $\be(j)$ does not depend on $t_b$. 
Then, let $X_b$ be an indicator variable that is $1$ if the ratio is smaller than $1+\frac{1}{2^{l-1}}$ at $t_b$, and $0$ otherwise, and let $X = \sum_{b\in [B]} X_b$. 
Then, $X$ gives an upper bound on the number of blocks such that the ratio ever falls to $\left[1+\frac{1}{2^{l+1}},1+\frac{1}{2^l}\right)$. 
Since $\Ex{X_b} \le \O{\frac{1}{2^{l-1}}}$, we have
\[\Ex{X} = \sum_{b\in [B]} \Ex{X_b} \le B \cdot \O{\frac{1}{2^{l-1}}}.\]
Recall that the total number of blocks is $B \le \frac{k}{\eta}\cdot 2^l \polylog(n)$, then we have
\[\Ex{X}= \frac{k}{\eta}\cdot 2^l \polylog(n) \cdot \O{\frac{1}{2^{l-1}}} = \frac{k}{\eta} \cdot \polylog(n).\]
Therefore, by Markov's inequality $D_l \le X \le \frac{k}{\eta} \cdot \polylog(n)$ with probability at least $1-\frac{\eta}{\polylog(n)}$. 
Summing up all $\O{\log n}$ levels, $D = \sum_{l \in [\O{\log n}]} D_l \le \frac{k}{\eta} \cdot \polylog(n)$, showing our final statement. 
We note that we can union bound the failure probability across the $\polylog(n)$ elements in $\SetPL$. 
\end{proof}
Given the above bounds, we can upper bound the total communication used by the approximate counting algorithms and the corresponding procedures used to maintain the correct accuracies for the approximate counting algorithms. 
\begin{lemma}
\lemlab{lem:counting:comm}
We use at most $k \cdot \polylog(n)$ bits of communication to run the counting algorithm in expectation.
\end{lemma}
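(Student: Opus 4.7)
The plan is to split the total counting-algorithm communication into the fixed-accuracy counters, which dominate at $k\cdot\polylog(n)$ bits, and the variable-accuracy counters at the servers in $\SetPL_i$, which only contribute an additional $\polylog(n)$ bits. First I would enumerate the fixed-accuracy counters: for each of the $|\SetPL| = \polylog(n)$ heavy coordinates (\lemref{lem:num:set:pl}) and each of the $R = \polylog(n)$ repetitions, we run one counter with accuracy $\frac{1}{100p}$ at every ``light'' server in $[k]\setminus\SetPL_i$ to detect the max, plus $Q = \O{\log n}$ counters at every server for the truncated Taylor estimator. A single constant-accuracy counter uses $\polylog(n)$ bits per round by the Huang--Yi--Zhang analysis, so multiplying through gives $k\cdot\polylog(n)$ per round, and after the $\polylog(n)$ expected number of rounds (\lemref{lem:round:lp:sampler}), the total is $k\cdot\polylog(n)$ in expectation.

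Next I would handle the variable-accuracy counters at $j \in \SetPL_i$. Since each such counter is run continuously with sampling probability $\Theta\!\left(\frac{1}{\alpha(t)\,n_j(t)}\right)$, where $1+\alpha(t)$ is the current anti-concentration ratio of $H^{(r)}_{i,l}$ across servers at time $t$, the expected number of messages a single server sends in a round equals $\int_0^{N_j}\frac{\mathrm{d}n}{\alpha(n)\,n}$. The anti-concentration bound \lemref{lem:anti} supplies $\PPr{\alpha\le x} = \O{x}$; combined with our global conditioning on every exponential random variable being at least $1/\poly(n)$, this yields $\Ex{1/\alpha(n)} = \O{\log n}$ at every fixed time, so the per-counter per-round cost is $\polylog(n)$. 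Aggregating over $|\SetPL_i| = \polylog(n)$ servers (\lemref{lem:pl:i}), $|\SetPL| = \polylog(n)$ coordinates, $R = \polylog(n)$ repetitions, and $\polylog(n)$ rounds keeps the variable-accuracy contribution at $\polylog(n)$, which is absorbed by the $k\cdot\polylog(n)$ bound from the fixed-accuracy counters.

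The main obstacle is the second step: because the anti-concentration density of $\alpha$ near zero is a constant, the naive expectation $\Ex{1/\alpha}$ would diverge, and the proof must exploit the truncation $\alpha \ge 1/\poly(n)$ coming from the event $\calE\cap\calB$ we conditioned on earlier in the section. Fortunately this truncation only produces a $\log n$ factor after integrating $1/\alpha$ against its $\O{1}$ density, which is affordable. The analogous broadcast cost from \lemref{lem:identify:max} is orthogonal to the counter-running cost bounded here, and combining both categories of counters yields the claimed $k\cdot\polylog(n)$ bits of communication in expectation.
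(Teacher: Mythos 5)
Your proposal is correct and reaches the same bound, but the key amortization step is carried out by a genuinely different route than the paper's. Both arguments agree on the easy part: the constant-accuracy counters (at the servers outside $\SetPL_i$ and the $Q=\O{\log n}$ independent counters per server feeding the Taylor estimator) contribute $k\cdot\polylog(n)$ after multiplying by $|\SetPL|=\polylog(n)$, $R=\polylog(n)$, and the $\polylog(n)$ rounds. For the variable-accuracy counters at $j\in\SetPL_i$, however, the paper does \emph{not} compute $\Ex{1/\alpha}$ pointwise; instead it reuses the dyadic level-set machinery from \lemref{lem:identify:max}: for each level $l$ it bounds the expected fraction of time the ratio sits in $\left[1+\frac{1}{2^{l+1}},1+\frac{1}{2^l}\right)$ by $\O{\frac{1}{2^l}}$, multiplies by the per-update sampling probability $\O{\frac{2^l}{H_i(j)}}$ and the count $H_i(j)$, and sums the resulting $\O{1}$ over $\O{\log n}$ levels. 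Your version linearizes the expected message count over updates and swaps expectation with the sum, using $\Ex{1/\alpha(t)}=\O{\log n}$ at each fixed time; this is essentially the continuous analogue of the paper's dyadic sum and is arguably more direct, though it buys a $\log^2 n$ rather than $\log n$ per-counter bound (immaterial here). Two small cautions: (i) the interchange is legitimate only because the times at which the local count reaches each value are determined by the oblivious stream and hence independent of the exponentials $\be(j)$ — the paper makes the analogous independence remark explicitly when applying \lemref{lem:anti} at block boundaries, and you should too; (ii) the truncation $\alpha\ge\frac{1}{\poly(n)}$ does not follow from conditioning the exponentials into $[\frac{1}{\poly(n)},\poly(n)]$ as you state — two $H_i(j)$ values can be exponentially close even with well-behaved exponentials — but rather from applying \lemref{lem:anti} at scale $\frac{1}{\poly(n)}$ and union-bounding over the $\poly(n)$ time steps, which is how the paper justifies it in \lemref{lem:identify:max}. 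With that attribution fixed, your argument is sound.
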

\begin{proof}
Consider coordinate $i$ in $\SetPL$.
By our algorithm construction, when the anti-concentration ratio $\frac{H_i(D(1))}{H_i(D(2))}$ lies in the interval $\left[1+\frac{1}{2^{l+1}},1+\frac{1}{2^l}\right)$, we track all servers $j \in \SetPL_j$ using a counting algorithm with accuracy $\frac{1}{100p \cdot 2^{l+1}}$.
In the counting algorithm with accuracy $\alpha$, each update is sent with probability $\O{\frac{1}{\alpha n}}$, where $n$ is the total count, which is $H_i(j)$ in our case.
Recall that from the analysis in \lemref{lem:identify:max}, we have the ratio is smaller than $1+\frac{1}{2^l}$ for at most $\O{\frac{1}{2^l}}$-fraction of blocks during a round in expectation.
Then, the expected number of updates sent by the counting algorithm is smaller than
\[\sum_{l \in [\O{\log n}]} \O{\frac{1}{2^l}} \cdot H_i(j) \cdot \O{\frac{100p \cdot 2^l}{H_i(j)}} = \O{\log n}. \]
By \lemref{lem:pl:i}, there is at most $\polylog(n)$ servers in $\SetPL_i$, giving us $\polylog(n)$ in total.

For $j \notin \SetPL_i$, it suffices to run a counting algorithm with accuracy $1.01$ on those servers, which uses $k \polylog(n)$ bits of communication in total.
In addition, we run $\polylog(n)$ separate instances of the counting algorithm for each $i \in \SetPL$ to estimate $G_i$, which uses an extra $k \cdot \polylog(n)$ bits of communication.
\end{proof}
Combining the results above, we can upper bound the communication cost of our $L_p$ sampler.
\begin{lemma}
\lemlab{lem:lp:sampler:comm}
\algref{alg:lp:sampler:cent} uses $(k+k^{p-1})\cdot \polylog(n)$ bits of communication in expectation.
\end{lemma}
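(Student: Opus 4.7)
The plan is to decompose the total communication of \algref{alg:lp:sampler:cent} into three contributions—(i) the \textsc{PLRecovery} subroutine that maintains $\SetPL$, (ii) the one-shot queries that fetch $H^{(r)}_{i,l}(j)$ from all servers when a new coordinate enters $\SetPL$, and (iii) the \textsc{Counting} invocations together with the broadcasts that reset their accuracies—and bound each within a single round before multiplying by the expected number of rounds.

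First I would use \lemref{lem:round:lp:sampler} to reduce the whole analysis to one round at the cost of a $\polylog(n)$ factor. Inside a round, \thmref{thm:pl_recover} applied to each of the $\O{\log n}$ parallel sampler instances gives $(k+k^{p-1})\polylog(n)$ bits for component (i); this is the only place the $k^{p-1}$ term arises, and it already matches the target. For component (ii), \lemref{lem:num:set:pl} bounds the number of additions to $\SetPL$ per round by $\polylog(n)$, and each addition triggers at most $3R\cdot k$ individual value transmissions of $\polylog(n)$ bits each, so the total is $k\cdot\polylog(n)$.

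For component (iii) I would split further. On servers $j\notin\SetPL_i$ we run only a constant-accuracy counter per $(i,l,r)$, which costs $k\cdot\polylog(n)$ across the $\polylog(n)$ coordinates in $\SetPL$ and the $R=\polylog(n)$ repetitions. On servers in $\SetPL_i$, \lemref{lem:counting:comm} directly gives $k\cdot\polylog(n)$ for all the finely-tuned counters together. Finally, the accuracy resets themselves are covered by \lemref{lem:identify:max}: setting $\eta=1/\polylog(n)$ yields $k\cdot\polylog(n)$ broadcast bits per round, with failure probability $\eta/\polylog(n)$ that is union-bounded over the $\polylog(n)$ members of $\SetPL$ and the $\polylog(n)$ rounds. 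The Taylor estimator reuses counter outputs and introduces no extra communication.

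Summing these within-round bounds yields $(k+k^{p-1})\polylog(n)$ bits per round; multiplying by the $\polylog(n)$ expected rounds from \lemref{lem:round:lp:sampler} gives the lemma. The main obstacle I anticipate is component (iii): one has to verify that the amortization argument of \lemref{lem:identify:max} composes correctly across rounds, repetitions, and coordinates in $\SetPL$ without blowing up the failure probability. I would address this by conditioning on the high-probability events of \lemref{lem:gi:increase:ub}, \lemref{lem:gi:increase:lb}, and \lemref{lem:hi:increase:change} up front, union-bounding the $\eta/\polylog(n)$ failure term of \lemref{lem:identify:max} over the $\polylog(n)\cdot\polylog(n)$ (round, $\SetPL$-element) pairs with an inverse-polylogarithmic choice of $\eta$, and then invoking linearity of expectation on the remaining events so that the amortized $k\cdot\polylog(n)$ bound carries through to the overall expectation.
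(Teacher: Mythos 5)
Your decomposition matches the paper's: its proof is exactly the combination of \thmref{thm:pl_recover} (tracking $\SetPL$, the sole source of the $k^{p-1}$ term), \lemref{lem:counting:comm} (the counting instances), and \lemref{lem:identify:max} (the accuracy broadcasts), with \lemref{lem:round:lp:sampler} supplying the $\polylog(n)$ factor for the number of rounds. Your components (i) and (ii), and the constant-accuracy counters in (iii), are handled correctly.

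The gap is in how you convert \lemref{lem:identify:max} into a bound on the \emph{expected} communication. That lemma is a tail bound parameterized by $\eta$: for every $\eta$, the broadcast cost exceeds $\frac{k}{\eta}\polylog(n)$ with probability at most $\frac{\eta}{\polylog(n)}$. Fixing a single $\eta=1/\polylog(n)$ and union-bounding, as you propose, only shows that with probability $1-1/\polylog(n)^2$ the cost is at most $k\polylog(n)$; on the complementary event, whose probability is merely inverse-polylogarithmic (not inverse-polynomial), the cost can be as large as $\poly(n)$, so the resulting bound on the expectation is $k\polylog(n)+\frac{1}{\polylog(n)^2}\cdot\poly(n)$, which is polynomial in $n$. ``Invoking linearity of expectation on the remaining events'' does not repair this. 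The paper instead applies the lemma at every dyadic scale simultaneously: letting $X$ denote the broadcast cost and conditioning on $X<\poly(n)$,
\[\Ex{X}\le\sum_{b\in[\log n]}\O{2^b}\cdot\PPr{X\ge 2^b}\le\sum_{b\in[\log n]}\O{2^b}\cdot\frac{k\polylog(n)}{2^b}=k\polylog(n),\]
where the middle inequality invokes \lemref{lem:identify:max} with $\eta\approx k\polylog(n)/2^b$ for each $b$. Some such layer-cake integration of the tail is necessary here: a single choice of $\eta$ cannot work, because decreasing $\eta$ shrinks the failure probability but inflates the corresponding cost bound in lockstep.
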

\begin{proof}
The result follows from the combination of \thmref{thm:pl_recover}, \lemref{lem:identify:max} and \lemref{lem:counting:comm}.

To sum up, \thmref{thm:pl_recover} shows that we use $(k+k^{p-1}) \cdot \polylog(n)$ bits of communication in expectation to track the set $\SetPL$.
\lemref{lem:identify:max} shows that with probability at least $1 - \frac{\eta}{\polylog(n)}$, we use $\frac{k}{\eta} \cdot \polylog(n)$ bits of communication to broadcast the new accuracy of the counting algorithm.
\lemref{lem:counting:comm} shows that we use $k \cdot \polylog(n)$ bits of communication in expectation to run the counting algorithm at each server.
Let $X$ be the random variable denoting the communication for \lemref{lem:identify:max}.
Conditioned on $X< \poly(n)$, which happens with high probability, we have
\begin{align*}\Ex{X} &\le \sum_{b \in [\log n]} \O{2^b} \cdot \PPr{X \in \left[2^b, 2^{b+1}\right]} \\
& \le \sum_{b \in [\log n]} \O{2^b} \cdot \PPr{X \ge 2^b} \\
& \le \sum_{b \in [\log n]} \O{2^b} \cdot \frac{k \polylog(n)}{2^b} = k \polylog(n).
\end{align*}
Thus, we need $(k +k^{p-1}) \cdot \polylog(n)$ bits of communication in expectation in total.
\end{proof}

We conclude the result of our continuous perfect $L_p$ sampler, given the correctness of \thmref{thm:correct:lp:sampler} and \lemref{lem:lp:sampler:comm}.
\begin{theorem}
\thmlab{thm:lp:sampler}
Let vector $f \in \mathbb{R}^{n}$ be induced by a distributed stream $S$, let $k$ denote the total number of servers, and let $p\ge 1$. 
Then, there is an algorithm \textsc{LpSampler} that gives a perfect $L_p$ sampler of $f$, using at most $(k+k^{p-1})\cdot \polylog(n)$ bits of communication in expectation.
\end{theorem}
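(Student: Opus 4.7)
The plan is to observe that Theorem~\ref{thm:lp:sampler} is essentially a repackaging of the two main technical results already established for \algref{alg:lp:sampler:cent}, namely the correctness statement in \thmref{thm:correct:lp:sampler} and the communication bound in \lemref{lem:lp:sampler:comm}. So the proof amounts to instantiating \algref{alg:lp:sampler:cent} (run in parallel across $\O{\log n}$ independent copies to boost the success probability of producing a sample at every time step) and quoting these two results.

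Concretely, I would proceed as follows. First, I would invoke \thmref{thm:correct:lp:sampler} to argue that for every index $i\in[n]$, the reported sample $\bi$ satisfies $\PPr{\bi=i} = \frac{f_i^p}{\|f\|_p^p} \pm \frac{1}{\poly(n)}$ at every time $t$ in the stream, which by \defref{def:lp} is exactly what it means to be a perfect $L_p$ sampler (the $\eps$ parameter is $0$, and the additive $n^{-C}$ slack is absorbed into the polynomial). I would also remark that the additional guarantee from the same theorem, namely $\Ex{\widehat{G_i}} = G_i$ and $\Ex{\widehat{G_i}^2}=\O{G_i^2}$, supplies the unbiased value-estimate with variance at most $(f_i^{(t)})^2$ (after unscaling the exponentials) promised in the discussion surrounding \thmref{thm:lp:sampler:ub}.

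Second, I would invoke \lemref{lem:lp:sampler:comm}, which bounds the expected communication of a single instance of \algref{alg:lp:sampler:cent} by $(k+k^{p-1})\cdot\polylog(n)$. Running $\O{\log n}$ parallel instances (as required by \lemref{lem:success} to ensure that at every time at least one instance produces a sample with high probability) only multiplies the communication by another $\polylog(n)$ factor, so the total expected communication remains $(k+k^{p-1})\cdot\polylog(n)$ bits, matching the claim.

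The genuinely nontrivial work was carried out upstream: \thmref{thm:pl_recover} for the set $\SetPL$ of candidate maxima, the double exponential scaling plus geometric-mean / truncated-Taylor machinery for the distribution-free estimate of each $G_{i,l}$, \lemref{lem:lp:sampler:indep} for decoupling the failure probability of the statistical test from the identity of the maximizer, and \lemref{lem:identify:max} together with \lemref{lem:counting:comm} for amortizing the cost of tracking the anti-concentration ratio across servers. For the present theorem itself there is therefore no real obstacle; the only care needed is the bookkeeping that (i) the $\O{\log n}$ parallel samplers only inflate the expected communication by a polylogarithmic factor and do not disturb the sampling distribution (each parallel copy uses independent exponentials, and taking the first non-failing output preserves the conditional distribution up to an additive $1/\poly(n)$ correction that is absorbable), and (ii) the distinction between $k+k^{p-1}$ and $k^{\max(1,p-1)}$ in \thmref{thm:lp:sampler:ub} is only a relabeling since $k^{\max(1,p-1)} = \Theta(k+k^{p-1})$ for all $p\ge 1$.
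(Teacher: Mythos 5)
Your proposal is correct and matches the paper exactly: the paper derives \thmref{thm:lp:sampler} as an immediate combination of the distributional guarantee in \thmref{thm:correct:lp:sampler} and the communication bound in \lemref{lem:lp:sampler:comm}, with the $\O{\log n}$ parallel instances already accounted for in \lemref{lem:success} and the proof of \thmref{thm:correct:lp:sampler}. Your additional bookkeeping remarks are consistent with what the paper establishes upstream.
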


\subsection{Lower Bounds for Perfect Sampling}
In this section, we show that any perfect $L_p$ sampler requires $\Omega(k^{p-1})$ bits of communication. 
We first recall the multi-party set disjointness, where $k$ servers hold subsets $S_1,\ldots,S_k\subseteq[n]$. 
In the YES case, the subsets are all pairwise disjoint, so that there is no coordinate shared among any two servers. 
In the NO case, there exists a single coordinate that is shared across all $k$ servers, i.e., there exists $i\in[n]$ such that $i\in S_1\cap\ldots\cap S_k$. 
It is known that the multi-party set disjointness communication problem requires $\Omega\left(\frac{n}{k}\right)$ communication:
\begin{theorem}
\thmlab{thm:multi:set:lb}
\cite{Jayram09}
Any protocol that solves the multi-party set disjointness problem with probability at least $\frac{3}{4}$ requires $\Omega\left(\frac{n}{k}\right)$ bits of communication. 
\end{theorem}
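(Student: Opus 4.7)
The plan is to prove this lower bound via the information complexity (IC) framework combined with a direct-sum reduction to a single-coordinate primitive problem, following the line of work of Bar-Yossef--Jayram--Kumar--Sivakumar that culminates in the cited Jayram bound. First I would introduce a hard input distribution $\mu$ on $(S_1,\ldots,S_k)$ supported on the promise set. The natural choice is a product distribution across the $n$ coordinates: independently for each $i\in[n]$, sample a uniformly random ``owner'' $J_i\in[k]$, and then independently decide whether each non-owner player places $i$ in its set, with inclusion probabilities tuned so that (a) the marginal on any single coordinate is ``collapsing'' in the sense required by the IC framework and (b) the total mass on disjoint instances is $\Omega(1)$. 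The quantity I would lower bound is the conditional information cost
\[\mathrm{IC}_{\mu}(\Pi) \;=\; I\!\left(X\,;\,\Pi \,\big|\, J\right),\]
where $J=(J_1,\ldots,J_n)$ is the owner vector and $\Pi$ is the transcript. Since $|\Pi|\ge H(\Pi)\ge \mathrm{IC}_{\mu}(\Pi)$, any lower bound on IC immediately yields a communication lower bound.

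The next step is a direct-sum theorem. Using the chain rule for mutual information together with the independence of coordinates under $\mu$, one shows
\[\mathrm{IC}_{\mu}(\Pi)\;\ge\;\sum_{i=1}^{n}\mathrm{IC}_{\nu}\!\left(\Pi_i\right),\]
where $\nu$ is the single-coordinate marginal of $\mu$ and $\Pi_i$ denotes the transcript viewed as a protocol for the $k$-party $\mathrm{AND}$ problem ($\mathrm{AND}_k$) embedded in coordinate $i$. The embedding uses the players' private randomness together with the public randomness $J_{-i}$ to fill in all other coordinates locally, with no added communication; correctness of $\Pi$ on disjointness translates to correctness of $\Pi_i$ on $\mathrm{AND}_k$ under $\nu$. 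Thus it suffices to establish the primitive bound
\[\mathrm{IC}_{\nu}(\mathrm{AND}_k)\;\ge\;\Omega(1/k).\]

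The combinatorial core is this single-bit lower bound, and this is the step I expect to be the main obstacle. The argument goes through squared Hellinger distance between transcript distributions on different inputs. Writing $\Pi_y$ for the transcript distribution on input $y\in\{0,1\}^k$, correctness forces $h^{2}(\Pi_{\mathbf{0}},\Pi_{\mathbf{1}})=\Omega(1)$. The ``cut-and-paste'' lemma for randomized protocols, which exploits the rectangular structure of leaves in a deterministic protocol tree, allows one to bound $h^{2}(\Pi_{\mathbf{0}},\Pi_{\mathbf{1}})$ by a hybrid argument flipping one player's bit at a time, yielding
\[\Omega(1)\;\le\;h^{2}(\Pi_{\mathbf{0}},\Pi_{\mathbf{1}})\;\le\;O(k)\cdot \max_{j\in[k]} h^{2}\!\left(\Pi_{\mathbf{0}},\Pi_{\mathbf{e}_j}\right),\]
so some $h^{2}(\Pi_{\mathbf{0}},\Pi_{\mathbf{e}_j})=\Omega(1/k)$. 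Combining this with the Hellinger--mutual-information inequality and the fact that the collapsing distribution $\nu$ puts constant mass on the relevant $(\mathbf{0},\mathbf{e}_j)$ pairs yields the $\Omega(1/k)$ per-coordinate bound.

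Putting the three ingredients together gives $|\Pi|\ge \mathrm{IC}_{\mu}(\Pi)\ge n\cdot\Omega(1/k)=\Omega(n/k)$. A standard error-amplification argument (independent repetitions plus majority vote) shows the bound holds for any constant success probability, in particular $3/4$, completing the proof.
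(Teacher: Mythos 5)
You should first note that the paper does not prove this theorem at all -- it is invoked as a black box from Jayram (2009) -- so what you are really offering is a reproof of a known, and quite hard, result. Your high-level architecture (conditional information cost under a collapsing product distribution with an owner variable $J$, direct sum via the chain rule, reduction to the single-coordinate promise problem $\mathrm{AND}_k$, and a Hellinger-distance analysis of the primitive) is exactly the right lineage, and the distributional setup, the embedding argument, and the inequality $|\Pi|\ge H(\Pi)\ge I(X;\Pi\mid J)$ are all fine.

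The genuine gap is in the single-coordinate step, which is precisely the technical content of the cited result. Your hybrid/cut-and-paste argument concludes that $h^2(\Pi_{\mathbf{0}},\Pi_{\mathbf{1}})\le O(k)\cdot\max_j h^2(\Pi_{\mathbf{0}},\Pi_{\mathbf{e}_j})$, hence that \emph{some} $j$ has $h^2(\Pi_{\mathbf{0}},\Pi_{\mathbf{e}_j})=\Omega(1/k)$. But under the collapsing distribution $\nu$ the owner $J$ is uniform over $[k]$, so the per-coordinate cost is $I(X;\Pi\mid J)=\frac{1}{k}\sum_j I(X;\Pi\mid J=j)\gtrsim \frac{1}{k}\sum_j h^2(\Pi_{\mathbf{0}},\Pi_{\mathbf{e}_j})$; a bound on the maximum of a single term therefore only yields $\Omega(1/k^2)$ per coordinate, i.e.\ the older Bar-Yossef et al.\ bound $\Omega(n/k^2)$, not $\Omega(n/k)$. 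To get $\Omega(1/k)$ per coordinate you must show the \emph{sum} $\sum_j h^2(\Pi_{\mathbf{0}},\Pi_{\mathbf{e}_j})=\Omega(1)$, and the naive triangle-inequality hybrid (whose intermediate inputs also fall outside the promise, so the two-party cut-and-paste identity does not transfer cleanly to the $k$-party number-in-hand setting) does not deliver this; closing that loss of a factor of $k$ is exactly what Jayram's refined Hellinger argument (and, independently, Gronemeier's proof) was needed for. As written, your proposal proves a weaker theorem than the one stated, which in this paper would degrade the sampling lower bound from $\Omega(k^{p-1})$ to $\Omega(k^{p-2})$.
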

By taking a constant number of perfect $L_p$ samples, we can identify any item shared across $k$ servers, for a universe of size $n=k^p$. 
This translates to a lower bound of $\Omega(k^{p-1})$ for perfect $L_p$ sampling. 
\thmlpsamplerlb*
\begin{proof}
Given sets $S_1,\ldots,S_k\subseteq[k^{p}]$, let $f(1),\ldots,f(k)\in\mathbb{R}^n$ denote the indicator vectors so that for all $i\in[n]$ and $j\in[k]$, we have $f_i(j)=1$ if and only if $i\in S_j$. 
Let $f=f(1)+\ldots+f(k)$. 
Now, if there exists an item $i\in[k^{p}]$ such that $i\in S_1,\ldots,S_k$, then $f_i=k$. 
Moreover, the other $k^p-1$ coordinates have magnitude at most $1$. 
Thus in this case, any perfect $L_p$ sampler on $f$ selects $i$ with probability at least $\frac{k^p}{k^p+k^p-1}\ge\frac{1}{3}$. 
Hence, with $4$ independent instantiations of perfect $L_p$ samplers, we will sample $i$ with probability at least $1-\left(1-\frac{1}{3}\right)^{4}=\frac{65}{81}\ge\frac{3}{4}$. 
We can then query all $k$ servers for the values of the sampled indices, which takes an additional $\O{k}$ communication, and check whether any of the sampled indices appears across all $k$ servers. 
Note that this also handles the case where all servers have disjoint pairwise support. 
Thus, we can solve multi-party set disjointness across $k$ servers on a universe of size $k^p$ with probability at least $\frac{3}{4}$ using a constant number of perfect $L_p$ servers. 
By \thmref{thm:multi:set:lb}, it therefore follows that any perfect $L_p$ sampler requires $\Omega(k^{p-1})$ bits of communication. 
\end{proof}

\section{Framework for Adversarially Robust Distributed Monitoring Algorithms}
In this section, we describe two frameworks that achieve adversarially robust distributed monitoring protocols. 
In \secref{sec:sketch:switch}, we first describe a simple framework that transforms non-robust distributed monitoring protocols into robust distributed monitoring protocols, at the cost of sub-optimal communication.  
In \secref{sec:diff:est}, we then describe a more sophisticated framework that achieves near-optimal communication for adversarially robust distributed monitoring, but requires more algorithmic ingredients. 

\subsection{Warm-up: Framework by Sketch Switching}
\seclab{sec:sketch:switch}
We start by describing the sketch switching framework. 
In this approach, we initialize several independent sketches to track the value of $F$. 
To prevent the adversary from affecting the output of the algorithm, the estimate of the $i$-th sketch is not revealed until the value of $F$ becomes at least $(1+\eps)^i$. 
Then, the algorithm repeatedly outputs this value until the $(i+1)$-th subroutine has value at least $(1+\eps)^{i+1}$. 
Hence, the algorithm will always give a $(1+\eps)$-approximation to $F$, and the adversary knows nothing about the internal randomness of a subroutine until it is revealed.
The framework appears in full in \algref{alg:frame:ss}. 

To formally analyze our algorithm, we first recall the definition of strong tracker:
\begin{definition}[Strong tracker, Definition 1.11 in \cite{WoodruffZ21b}]
Given a distributed data stream $S$, a fixed time $t_0$, a splitting time $t_1$, a frequency vector $u$ induced by updates from time $t_0$ to $t_1$, a $\eps$-strong tracker gives a $(1+\eps)$-approximation of $F(u)$ at all times with probability at least $1-\frac{1}{\poly(n)}$. We use ${\cal F}(t_0,t_1,\eps)$ to denote a $\eps$-strong tracker.
\end{definition}
In this section, we shall generally set $t_0$ to $0$, so that the strong tracker will simply estimate the value of $F$ on the frequency vector defined by the entire stream at each time $t_1$. 

Now, we define the flip number of a function $F$, which controls the time it changes proportionally in the data stream.
\begin{definition}
[Flip number, Definition 3.1 of \cite{BenEliezerJWY20}]
Let $\eps \ge 0$. 
Let $x = (x_0,\ldots,x_m)$ for some $m \in \mathbb{N}$ be a sequence of real numbers. 
We define the $(\eps,m)$-flip number $\lambda_{\eps,m}(x)$ of $x$ to be the maximum $k \in \mathbb{N}$ such that there exist $0 \le i_1 < \cdots < i_k \le m$ so that $x_{i_j-1} \notin x_i \cdot (1\pm \eps)$ for all $j = 2,3, \ldots k$. 

Fix a function $F: \mathbb{R}^n \to \mathbb{R}$ and a class $\mathcal{C} \subseteq([n] \times \mathbb{Z})^m$ of distributed stream updates, which has the form of $\left(\left(u_1, i_1\right), \ldots,\left(u_m, i_m\right)\right)$. Here, $u_j$ is the local server that receives the item, $i_j$ is the index in the induced vector. We define the $(\eps, m)$-flip number $\lambda_{\eps, m}(F)$ of $F$ over $\mathcal{C}$ to be the maximum, over all sequences , of the $\eps$-flip number of the sequence $y=\left(y_0, y_1, \ldots, y_m\right)$ defined by $y_j=F\left(1,j\right)$ for all $j \in [m]$, where $(1,j)$ denotes the frequency vector induced by the distributed stream updates $\left(u_1, i_1\right), \ldots,\left(u_j, i_j\right)$ (and $f^{(0)}$ is the $n$-dimensional zeros vector).
\end{definition}

This lemma shows that we can approximate a vector by another vector with a small number of unique coordinates, which is applied to define our algorithmic output.
\begin{lemma}[Lemma 3.2 of \cite{BenEliezerJWY20}]
\lemlab{lem:flip}
Let $\eps \in (0,1)$. 
Suppose that $u = (u_0,\ldots,u_m)$, $v = (v_0,\ldots,v_m)$, and $w = (w_0,\ldots,w_m)$, such that (1) $v_i = u_i \cdot (1\pm\frac{\eps}{8})$ for all $i \in [m]$, (2) $w_0 = v_0$, and if $w_{i-1} = v_i \cdot (1\pm \eps/2)$ then $w_i = v_{i-1}$, otherwise $w_i = v_i$. Then $w_i = u_i \cdot (1 \pm \eps)$ for all $i \in [m]$, and moreover, $\lambda_0(w) \le \lambda_{\eps/8,m}(u)$. In particular, if $u_j = F(1,j)$ where $(1,j)$ denotes the frequency vector induced by the distributed stream updates $\left(u_1, i_1\right), \ldots,\left(u_j, i_j\right)$ for all $j \in [m]$, then $\lambda_0(w) \le \lambda_{\eps/8,m}(u)$.
\end{lemma}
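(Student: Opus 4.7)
\begin{proofof}{\lemref{lem:flip} (proposal)}
The plan is to prove the two claims separately: first the pointwise approximation $w_i = u_i(1\pm\eps)$, then the bound on the flip number $\lambda_0(w) \le \lambda_{\eps/8,m}(u)$. Throughout I will read the definition of $w$ as the standard sketch-switching rule: $w_i = w_{i-1}$ when $w_{i-1} \in v_i(1\pm\eps/2)$, and $w_i = v_i$ otherwise (this is the only reading under which the statement is meaningful).

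For the approximation claim, I will induct on $i$. The base case $w_0 = v_0 = u_0(1\pm\eps/8) \subset u_0(1\pm\eps)$ is immediate. For the inductive step there are two cases. If the update rule sets $w_i = v_i$, then $w_i = v_i = u_i(1\pm\eps/8) \subset u_i(1\pm\eps)$. Otherwise $w_i = w_{i-1}$ and the triggering condition guarantees $w_{i-1} \in v_i(1\pm\eps/2)$, so $w_i \in v_i(1\pm\eps/2) \subset u_i(1\pm\eps/8)(1\pm\eps/2) \subset u_i(1\pm\eps)$, using $(1+\eps/8)(1+\eps/2) \le 1+\eps$ and the analogous lower bound for $\eps \in (0,1)$.

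For the flip-number bound, let $0 = j_0 < j_1 < j_2 < \cdots < j_K$ be exactly the indices at which $w$ changes value, so $\lambda_0(w) = K$. Whenever $w_{j_r} \neq w_{j_r-1}$, the update rule forces $w_{j_{r}-1} \notin v_{j_r}(1\pm\eps/2)$; but $w$ has not changed since $j_{r-1}$, so $w_{j_r - 1} = w_{j_{r-1}} = v_{j_{r-1}}$. Thus $v_{j_{r-1}} \notin v_{j_r}(1\pm\eps/2)$. I will then push this back to $u$: since $v_{j_r} = u_{j_r}(1\pm\eps/8)$ and $v_{j_{r-1}} = u_{j_{r-1}}(1\pm\eps/8)$, a direct computation (considering the two sides of the interval separately) shows $u_{j_{r-1}} \notin u_{j_r}(1\pm\eps/8)$, because the $(1\pm\eps/8)$ distortions can absorb at most an $\eps/4$-fraction of the $\eps/2$-gap in $v$. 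Hence the indices $j_1 < \cdots < j_K$ witness a valid sequence for $\lambda_{\eps/8,m}(u)$, giving $\lambda_0(w) = K \le \lambda_{\eps/8,m}(u)$. The final sentence of the lemma is just the specialization to $u_j = F(1,j)$.

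The one non-routine piece is the arithmetic that transfers the $\eps/2$-gap in $v$ to an $\eps/8$-gap in $u$; everything else is bookkeeping of the sketch-switching rule, so I do not expect a substantive obstacle.
\end{proofof}
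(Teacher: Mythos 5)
The paper does not actually prove this lemma---it is imported by citation (with a typo, $w_i = v_{i-1}$ where $w_i = w_{i-1}$ is meant, which you correctly repaired) from Lemma 3.2 of Ben-Eliezer et al.---and your argument is the standard one from that source: the two-case analysis for the $(1\pm\eps)$ bound, and at each switch point the transfer of the $\eps/2$-gap $v_{j_{r-1}}\notin v_{j_r}(1\pm\eps/2)$ to the $\eps/8$-gap $u_{j_{r-1}}\notin u_{j_r}(1\pm\eps/8)$, which does go through since $(1+\eps/8)^2\le(1+\eps/2)(1-\eps/8)$ for $\eps\in(0,1)$. The only nit is the count $\lambda_0(w)=K$: with $j_0=0$ included there are $K+1$ indices, so $\lambda_0(w)=K+1$, but the same indices $j_0<\cdots<j_K$ then witness $\lambda_{\eps/8,m}(u)\ge K+1$ and the inequality is unaffected.
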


\begin{algorithm}[!htb]
\caption{Framework for Robust Distributed Streaming Algorithms via Sketch Switching}
\alglab{alg:frame:ss}
\label{alg:framework:ss}
\begin{algorithmic}[1]
\Require{Updates $(u_1,i_1), \cdots, (u_m,i_m) \in \mathbb{R}^{[n] \times [k]}$ for a distributed data stream, $i_t$ denotes the index of the insertion, accuracy $\eps \in (0,1)$, $\eps$-strong tracker ${\cal F}$ of $F$}
\Ensure{Adversarially robust $(1+\eps)$-approximation of $F$}
\State{$\lambda \gets \lambda_{\eps/8,m}(F)$, start $\lambda$ independent instances ${\cal F}_1, \cdots, {\cal F}_\lambda$ of $\frac{\eps}{8}$-strong tracker of $F$}
\State{$\rho \gets 1$, $Z \gets 0$}
\For{each update $(u_t,i_t)$}
\State{Update ${\cal F}_1, \cdots, {\cal F}_\lambda$ with $(u_t,i_t)$}
\State{$y \gets$ current output of ${\cal F}_\rho$}
\If{$Z \notin y \cdot (1 \pm \frac{\eps}{2})$}
\State{$Z \gets y$}
\State{$\rho \gets \rho + 1$} \Comment{Switch the sketch}
\EndIf
\State{{\bf Return} $Z$}
\EndFor
\end{algorithmic}
\end{algorithm}
We now describe the full guarantees of the sketch switching framework. 
\begin{theorem}[Sketch switching framework, analogous to Theorem 3.5 in \cite{BenEliezerJWY20}]
Let $n$ be the cardinality of the universe, let $k$ be the number of the servers, let $m$ be the stream length where $\log m = \O{\log n}$. Suppose there exists a $\eps$-strong tracker ${\cal F}$ of $F$ that uses $H(n, k, \eps)$ bits of communication, where the function $H$ depends on the problem $F$. Then, there exists an adversarially robust streaming algorithm that outputs a $(1+\eps)$-approximation to $F$ at all times with probability $1-\frac{1}{\poly(n)}$. Let $\lambda = \lambda_{\eps/8,m}(F)$. The total communication cost of the algorithm is $\O{\lambda \cdot H(n,k,\eps)}$  bits.
\end{theorem}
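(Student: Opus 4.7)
The plan is to run $\lambda = \lambda_{\eps/8,m}(F)$ independent $\frac{\eps}{8}$-strong trackers ${\cal F}_1,\ldots,{\cal F}_\lambda$ in parallel, and to use them one at a time. The current output $Z$ is frozen and revealed to the adversary, and we advance from ${\cal F}_\rho$ to ${\cal F}_{\rho+1}$ only when the currently active tracker's estimate drifts away from $Z$ by more than an $\eps/2$ multiplicative factor, at which moment we refresh $Z$ with that tracker's latest estimate. The intuition is that the adversary learns genuinely new information about the algorithm's random bits only at switch times, so each newly activated sketch faces an input stream that is essentially non-adaptive with respect to its own randomness.

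For correctness, I invoke \lemref{lem:flip} directly. Let $u_t = F(f^{(t)})$ be the true function value, let $v_t$ be the estimate of the currently active tracker at time $t$, and let $w_t = Z_t$ be the algorithm's output. Conditioning on the success of every tracker, we have $v_t = (1\pm\eps/8) u_t$ uniformly in $t$, and the switching rule for $Z$ in \algref{alg:frame:ss} is exactly the rule for $w$ in \lemref{lem:flip}. The lemma then yields $Z_t = (1\pm\eps) u_t$ at every time $t$, and simultaneously bounds the number of distinct values taken by $Z$ by $\lambda_{\eps/8,m}(u) \le \lambda$, so the supply of sketches is never exhausted.

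For robustness, I partition time into epochs, where epoch $\rho$ is the interval during which ${\cal F}_\rho$ is the monitored sketch. Within epoch $\rho$ the value $Z$ returned to the adversary is a constant, namely the value inherited from the previous switch and hence from ${\cal F}_{\rho-1}$; the adversary's view during epoch $\rho$ is therefore a deterministic function of ${\cal F}_1,\ldots,{\cal F}_{\rho-1}$ alone, so the adversarial updates arriving during epoch $\rho$ are independent of the randomness of ${\cal F}_\rho$. By the standard non-adaptive guarantee of a strong tracker, ${\cal F}_\rho$ then provides an $(1\pm\eps/8)$-approximation throughout its epoch with probability $1 - 1/\poly(n)$; a union bound over the at most $\lambda = \poly(n)$ epochs preserves this success probability. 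Combining with the correctness argument from \lemref{lem:flip} yields the claimed $(1\pm\eps)$-approximation at all times.

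The communication bound is immediate: each of the $\lambda$ instances uses $H(n,k,\eps)$ bits over the entire stream, giving $\O{\lambda \cdot H(n,k,\eps)}$ total (absorbing the constant-factor rescaling from $\eps$ to $\eps/8$ into the $\O{\cdot}$). The only point requiring care is the independence claim underlying robustness: at a switch time, the refreshed value $Z$ does depend on ${\cal F}_\rho$'s internal randomness, but this leakage affects only future epochs $\rho+1,\rho+2,\ldots$, whose sketches were drawn with fresh independent randomness. Formalizing this by induction on $\rho$, conditioning at step $\rho$ on the outcomes of all previously revealed sketches and invoking the non-adaptive strong-tracker guarantee of the freshly activated ${\cal F}_\rho$ against the now-deterministic input prefix, is the main (and essentially routine) obstacle.
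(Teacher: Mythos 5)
Your proposal is correct and follows essentially the same route as the paper's proof: run $\lambda$ independent strong trackers, reveal them one at a time via the $\eps/2$-drift switching rule, argue by induction over epochs that each freshly activated sketch faces an input independent of its randomness (the paper makes this precise by first reducing to a deterministic adversary), and invoke \lemref{lem:flip} to get both the $(1\pm\eps)$ guarantee and the bound on the number of switches, hence the $\O{\lambda\cdot H(n,k,\eps)}$ communication. No gaps worth flagging.
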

\begin{proof}
Without loss of generality, we can assume the adversary against a fixed randomized \algref{alg:frame:ss} is deterministic. This is because given a randomized adversary and algorithm, if the adversary succeeds with probability greater than $\delta$ in fooling the algorithm, there must exist a protocol with deterministic inputs of the adversary which fools $\mathcal{A}$ with probability greater than $\delta$ over the randomness of \algref{alg:frame:ss}. Also, conditioned on a fixing of the randomness for both the algorithm and adversary, the entire stream and behavior of both parties is fixed.
Thus, let $Z_t$ is the output of the streaming algorithm at time $t$, given $Z_1, Z_2, \ldots, Z_k$ and the stream updates $\left(u_1, i_1\right), \ldots,\left(u_k, i_k\right)$ so far, the next stream update $\left(u_{k+1}, i_{k+1}\right)$ is deterministically fixed. 

Now, consider the first instance ${\cal F}_1$. Let $t_1$ be the time that $Z \notin y \cdot (1\pm \frac{\eps}{2})$, where $y$ is the estimation of ${\cal F}_1$. Then it satisfies that $Z\notin y \cdot (1\pm \frac{\eps}{2})$ when $t \in [1,t_1-1]$. Moreover, since ${\cal F}_1$ is a $\frac{\eps}{8}$-strong tracker, it gives a $(1+\frac{\eps}{8})$ to $F$ at time $t_1$ when we reveal ${\cal F}_1$ and set the output $Z$ to be ${\cal F}_1(1,t)$. Observe that when we change the output $Z$ to ${\cal F}_1(1,t)$, we deactivate ${\cal F}_1$ and activate ${\cal F}_2$ to trace the following distributed stream. Since ${\cal F}_2$ is not revealed before, its randomness remains hidden from the adversary. Then, by an inductive argument we can show that at time $t_{l-1}$ when the $l-1$-th sketch is revealed and the $l$-th sketch is activate. We have that $Z \notin y \cdot (1\pm \frac{\eps}{2})$ when $t \in [t_{l-1},t_l]$ and ${\cal F}_l$ gives a $(1+\frac{\eps}{8})$ to $F$ at time $t_l$. Notice that we can union bound the failure probability of all sketches. Therefore, applying \lemref{lem:flip} with $u = (F(1,1),F(1,2),\ldots,F(1,m))$, $v = (F(1,1), {\cal F}_1(1,2), \ldots, {\cal F}_1(1,t_1), {\cal F}_2(1,t_1+1), \ldots, {\cal F}_1(1,t_2),\ldots)$, and $w$ being the output of \algref{alg:frame:ss}, we have that the algorithm returns a $(1+\eps)$-approximation to the function value $F$ at all times with high probability. Also, the flip number of $w$ satisfies $\lambda_0(w) \le \lambda_{\eps/8,m}(F)$. Therefore, it is easy to see it suffices to implement $\lambda_{\eps/8,m}(F)$ sketches in total so that our communication cost is $\O{\lambda_{\eps/8,m}(F)\cdot H(n,k,\eps)}$ in bits.
\end{proof}

\subsection{Framework by Difference Estimator}
\seclab{sec:diff:est}
In this section, we present our framework for adversarially robust distributed monitoring through the usage of difference estimators, which yields optimal results.
We first recall the following definition of a difference estimator introduced by \cite{WoodruffZ21b}:

\begin{definition}[Difference estimator]
Suppose that we are given a distributed data stream $S$, a splitting time $t_0$, a frequency vector $u$ induced by updates from time $0$ to $t_0$, a frequency vector $v$ induced by updates from time $t_0$ to $t$, an accuracy parameter $\eps \in (0,1)$, and a ratio parameter $\gamma \in(0,1]$. 
Suppose $F(u+v)-F(u) \leq \gamma \cdot F(u)$ and $F(v) \leq \gamma F(u)$. 
Then a $(\gamma, \eps)$ difference estimator for a function $F$ outputs an additive $\eps \cdot F(u)$ approximation to $F(u+v)-F(u)$ with probability at least $1-\frac{1}{\poly(n)}$. 
\end{definition}
We use continuous difference estimator to refer to a distributed functional monitoring protocol where the frequency vector $v$ evolves over time and we use static difference estimator to refer to a protocol for the one-shot setting where $v$ is fixed. 
We do not show the existence of difference estimators for various problems of interest; for these, we defer to \secref{sec:robust:algs}. 
We also introduce the following concept of a distributed algorithm that flags each time the value of $F$ roughly doubles:
\begin{definition}[Double detector]
Given a distributed data stream $S$, a starting time $t_0$, a splitting time $t_1$, an evolving frequency vector $u$ induced by updates from time $t_0$ to $t_1$, an $\eps$-double detector flags at some time $t^*$ where $F(u)$ satisfies $0.99 \cdot 2^a \leq F \leq 1.01 \cdot 2^a$ for $a \in \mathbb{N}$. 
Moreover, it outputs a $(1+\varepsilon)$-approximation of $F(u)$ at $t^*$ with failure probability $1-\frac{1}{\poly(n)}$. 
\end{definition}

Our framework progresses in a number of rounds, switching to a new round whenever the double detector discovers that $F$ has increased by roughly a factor of $2$. 
In this paper, we assume that $F$ at most doubles $\O{\log n}$ times through out the entire stream with length $m=\poly(n)$.
We use the counter $a$ to count the number of rounds. 

Consider a fixed round, let $u$ be the vector at its initialization and let $u+v$ be the underlying frequency vector for the full data stream at some later time. 
The key intuition of the difference estimator framework is that, when $F(u+v)-F(u)$ increases by roughly $2^j \eps \cdot F(u)$, it suffices to keep a $(1+\frac{1}{2^j})$-approximation to the increment, i.e., an additive $\eps \cdot F(u)$ error in total. 
Let $v_{\le j} = \sum_{l=1}^{j} v_l$ where $v_1 \ldots v_\beta$ arrive in subsequential order. Then, we can decompose
\[F(u+v) = F(u) + \sum_{j=1}^{\beta} \Bigl( F(u+v_{\le j}) - F(u+v_{\le j-1}) \Bigr),\]
where we adopt the convention that $u+v_{\le 0} = u$ and $u+v_{\le\beta} = u+v$. 
Moreover, suppose that $F(u+v_{\le j}) - F(u+v_{\le j-1}) \le 2^j \eps F(u)$ for all $j$. 
Then, we can estimate $F(u+ v_{\le j}) - F(u+ v_{\le j-1})$ by a difference estimator with accuracy $\frac{1}{2^j}$. 

In our algorithmic implementation, we use the counter $b$ to count the number of blocks in one round, where a block is defined as an interval in which $F$ increases by roughly $\eps \cdot F(u)$. 
Suppose that $t_a$ is the most recent time that the double detector $\mathcal{M}$ reported. 
We use the counter $Z$ to stitch together the sketches of the difference estimator. 
In particular, let the binary expression of $b + 1= \sum_{j=1}^r 2^{z_j}$. 
Then, to estimate $F(1,t) - F(1,t_a)$ with block size $b$,
the algorithm stitches together estimate $Z_{a,j}$ for $F(1,t_{a,j}) - F(1,t_{a,j-1})$ with the corresponding granularity. Specifically, for the first $r - 1$ difference $F(1,t_{a,j}) - F(1,t_{a,j-1})$, we use a static estimators as these difference estimators have already been frozen and for the last part we shall use a continuous difference estimator with granularity $0$.
Whenever the value of $b$ increases, we update the start time and end time of the corresponding difference estimators.

When $b + 1$ is even, we slightly modify the binary expression to ensure that the ongoing sketch is at granularity $0$. 
This is beneficial because we only need a constant approximation at granularity $0$, whose communication cost avoids $\frac{1}{\eps}$ factors. 
In particular, the resulting estimate to the true value of the function has error roughly $\O{\eps}\cdot F(1,t_a)$. 
Thus when the estimated value increases by roughly $\eps \cdot F(1,t_a)$, we increase the value of $b$, freeze the current sketch, and set up the start time of the new sketches. 

However, at this point, we may need a difference estimator with higher granularity to avoid the error accumulating. 
Notice that the centralized coordinator sends the stopping time each to the servers whenever a sketch is frozen, so the subsequent vector is fixed and can be retrieved from each server. 
Thus, we can apply an $(\eps,\gamma)$-static difference estimator.
The framework appears in full in \algref{alg:frame}.

\begin{algorithm}[!htb]
\caption{Framework for Robust Distributed Algorithms via Difference Estimator}
\alglab{alg:frame}
\label{alg:framework}
\begin{algorithmic}[1]
\Require{Updates $(u_1,i_1), \cdots, (u_m,i_m) \in \mathbb{R}^{[n] \times [k]}$ arriving in a distributed data stream, accuracy $\eps \in (0,1)$, $(\eps, \eps)$-continuous estimator ${\cal B}_C$,  $(\gamma, \eps)$-static estimator ${\cal B}_S$, $\eps$-double detector ${\cal M}$}
\Ensure{Adversarially robust $(1+\eps)$-approximation of $F$}
\State \label{line:init_count}
$\beta \gets\lceil \log \frac{8}{\eps} \rceil$, $\eps' \gets \frac{\eps}{64\beta}$, $a \gets 0$, $\widehat{F} \gets 0 $
\State For $j \in [\beta]$,  $\gamma_j \gets 2^{j} \eps$ \Comment{Accuracy of each difference estimator}
\For{each update $(u_t, i_t)$}
    \State \label{line:prefix_count_f0} Update ${\cal M}_{a+1}(1,t,\frac{\eps}{100})$ with $(u_t, i_t)$  \Comment{Double detector}
    \If{${\cal M}_{a+1}(1,t,\frac{\eps}{100})$ returns $\mathbf{FLAG}$} \Comment{Switch to a new round when $F$ duplicates}
    \State{$\widehat{F} \gets {\cal M}_{a+1}(1,t,\frac{\eps}{100})$} \Comment{Run the double detector to give a $(1+\eps)$-approximation}
    \State $a \gets a+1$, $b \gets 0$, $Z_a \gets \widehat{F}$, $t_{a,j} \gets t$ for $j \in [\beta]$ 
    \EndIf
    \If{$b+1$ is odd}
    \State $r \gets \operatorname{numbits}(b+1), z_i \leftarrow \operatorname{lsb}(b+1, r+1-i)$ for all $i \in [r]$ \Comment{$z_1 > \cdots > z_r$ are the nonzero bits in the binary representation of $b$}
    \Else
    \State{$r \gets \operatorname{numbits}(b)+1, z_i \leftarrow \operatorname{lsb}(b, r+1-i)$ for all $i \in [r-1]$, $z_r \gets 0$ } \Comment{Make sure the ongoing sketch is a difference estimator at granularity $0$, i.e., $\gamma = \eps$}
    \EndIf
    \State $\widehat{F} \gets Z_a$
    \For{$0 \leq j \leq r-1$} \Comment{Sum up previous frozen sketches to estimate $F$}
    \State \label{line:threshold_round_f0} $\widehat{F} \gets \widehat{F} + Z_{a,z_j}$ 
    \EndFor
    \State $j \gets  b+1 $
    \State $\widehat{F} \gets \widehat{F} + {\cal B}_{C_j} (t_{a, z_r}, t, \eps', \eps' )$ \label{line:est_f} \Comment{Use continuous difference estimator for the ongoing sketch}
    \If{$\widehat{F} > (1 + \frac{(b+1)\eps}{8}) \cdot Z_a$} \label{line:threshold_block_f0} \Comment{Switch to the next block in the same round}
    \State $b \gets b+1$, $r \gets \operatorname{numbits}(b), z_r \gets \operatorname{lsb}(b,1)$, $j \gets \lfloor \frac{b}{2^{z_r}} \rfloor$
    \State \label{line:block_count_f0}$Z_{a, r} \gets {\cal B}_{S_j} (t_{a, z_r}, t, \gamma_{z_r}, \eps')$ \Comment{Use static difference estimator for the frozen sketch}
    \State $t_{a,j} \gets t$ for $j \in [z_r]$ \Comment{Update difference estimator times}
    \EndIf
    \State \Return{$(1+ \frac{b \eps}{8}) \cdot Z_a$}
    \EndFor
\end{algorithmic}
\end{algorithm}

Now, we analyze the correctness of our framework.
The proof is analogous to that of \cite{WoodruffZ21b}.
For simplicity, in the proof, we use ``difference estimator'' to refer to both the continuous version and the static version since they have the same accuracy.
In our algorithm, we use $\calB(t_0,t,\gamma,\eps)$ to denote a difference estimator, where $t_0$ is the split time of the prefix vector $u$, and $t$ is the current time of the data stream.
The following lemma shows that during the activated duration of a $(\gamma,\eps)$-difference estimator, $F$ at most increases by $\frac{\gamma}{2}F_{\mathrm{start}}$ additively, where $F_{\mathrm{start}}$ is the value of $F$ at the start of the round, so that the difference estimator is always valid.

\begin{lemma}[Success of difference estimators]
\lemlab{lem:success:diff}
Suppose all subroutines ${\cal B}$ and ${\cal M}$ succeed. Consider a fixed round $a$, let $t_a$ denote its start time, let $F(1,t_a)$ be the function value at time $t_a$. 
For each integer $r \ge 0$, let $t_{a,r}$ and $w_{a,r}$ be the start time and end time of difference estimator $Z_{a,r}$. 
We have $ F(1,w_{a,r}) - F(1,t_{a,r}) \le \frac{2^r\eps}{2}\cdot F(1,t_a)$, and all difference estimators give a valid estimation. 
\end{lemma}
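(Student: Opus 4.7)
The plan is to establish both claims simultaneously by induction on the block index $b$ within a fixed round $a$, carrying the invariant that the running estimate $\widehat{F}$ satisfies $|\widehat{F} - F(1,t)| \le \eps_0 \cdot F(1,t_a)$ at all times in the round, for $\eps_0 := \eps/16$. The purpose of this invariant is to ensure that the block-transition threshold of width $\eps/8$ on $\widehat{F}$ faithfully corresponds to $\Theta(\eps/8)$-increments of the true value $F$, which is what drives the desired bound on growth during a level-$r$ estimator's active window.

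First I would verify the base case $b=0$ directly: at time $t_a$ the double detector $\mathcal{M}_{a+1}$ has just flagged and returned $Z_a$ that is a $(1+\eps/100)$-approximation to $F(1,t_a)$, so $\widehat{F} = Z_a$ satisfies the invariant. No level-$r$ estimator has been instantiated yet, so claim~(ii) is vacuous. For the inductive step, suppose the invariant holds through block $b$ and consider the moment the algorithm increments $b$ to $b+1$. The trigger $\widehat{F} > (1+(b+1)\eps/8)\cdot Z_a$ together with the invariant yields $F(1,t) = (1 + (b+1)\eps/8 \pm 2\eps_0)\,F(1,t_a)$ at the transition time. Thus across the round, the true value of $F$ grows by at most $\eps/8 \cdot F(1,t_a) + O(\eps_0 F(1,t_a))$ per block.

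By the binary-counter structure of \algref{alg:frame}, any level-$r$ difference estimator $Z_{a,r}$ is active for exactly $2^r$ consecutive blocks, from some block boundary $b_0$ to $b_0 + 2^r$. Applying the per-block bound gives
\[
F(1,w_{a,r}) - F(1,t_{a,r}) \le \left(2^r \cdot \tfrac{\eps}{8} + 4\eps_0\right) F(1,t_a) \le \tfrac{2^r \eps}{2}\,F(1,t_a),
\]
which is claim~(i). Since $F(1,t_{a,r}) \ge F(1,t_a)$ by insertion-only monotonicity and $\gamma_r = 2^r\eps$, the precondition $F(u+v) - F(u) \le \gamma_r F(u)$ of the $(\gamma_r,\eps')$-difference estimator holds throughout its active window, which is claim~(ii). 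In particular, the difference estimator succeeds with probability $1-1/\poly(n)$ by its specification, and we union-bound across the at most $O(\log n) \cdot \beta$ total difference-estimator invocations and $O(\log n)$ invocations of $\mathcal{M}$.

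The main obstacle is closing the invariant, i.e.\ controlling the accumulated error in $\widehat{F}$. At any time $\widehat{F}$ decomposes as $Z_a$ plus at most $\beta-1$ frozen static difference estimators $Z_{a,z_i}$ plus one ongoing continuous difference estimator at granularity $0$; each is a $(\gamma,\eps')$-estimator contributing additive error at most $\eps' \cdot F(u)$, where $F(u)$ is the function value at its split time. Because we are within a single round of the double detector, $F(u) \le 2F(1,t_a)$ throughout (otherwise $\mathcal{M}_{a+1}$ would have flagged earlier). Summing over at most $\beta$ concurrent estimators together with the $\eps/100$ error in $Z_a$, the total additive error is bounded by $\beta \cdot 2\eps' \cdot F(1,t_a) + (\eps/100)\,F(1,t_a)$, which by the choice $\eps' = \eps/(64\beta)$ is at most $\eps_0 \cdot F(1,t_a)$. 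This closes the induction and yields the lemma.
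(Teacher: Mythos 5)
Your overall plan---induct on blocks, maintain an accuracy invariant for $\widehat{F}$, and convert the algorithm's $\eps/8$-per-block threshold into a bound on the true growth of $F$ over the $2^r$ blocks spanned by a level-$r$ estimator---is a reasonable reorganization, and your bookkeeping (the telescoping across block boundaries, the $\beta\cdot 2\eps' + \eps/100 \le \eps_0$ error budget, the binary-counter structure giving $2^r$ blocks per level-$r$ estimator) is consistent with the algorithm. However, there is a genuine circularity at the heart of the argument that you do not break. Your invariant $|\widehat{F}-F(1,t)|\le\eps_0 F(1,t_a)$ at a given time requires the \emph{ongoing} continuous difference estimator to be valid at that time, i.e.\ its precondition $F(1,t)-F(1,t_{a,z_r})\le\gamma\cdot F(\cdot)$ must already hold --- but that precondition is exactly the conclusion of the lemma. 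Your inductive step only evaluates $F$ at block-transition times, ``assuming the invariant held through block $b$,'' and never rules out the bad scenario \emph{inside} a block: the true difference silently exceeds the estimator's validity range, the estimator's output becomes unreliable, $\widehat{F}$ therefore never crosses the $(1+(b+1)\eps/8)Z_a$ threshold, and the block transition never fires while $F$ keeps growing. Nothing in your write-up excludes this.

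The paper closes exactly this loop with a first-crossing contradiction, and that is the one idea your proposal is missing. Suppose the difference for the running estimator ever exceeds $\frac{2^{r}\eps}{2}F(1,t_a)$; look at the last time $t$ \emph{before} the crossing. At $t$ the difference is still within the estimator's guaranteed range (and, since a single update cannot more than double it, is already at least $\frac{2^{r}\eps}{4}F(1,t_a)$), so the estimator's output at $t$ is accurate up to $\eps' F(1,t_a)$ and hence already exceeds the $\frac{2^{z_r}\eps}{8}Z_a$ switching threshold --- meaning the algorithm would have frozen that estimator at or before $t$, contradicting the assumption that it is still running. With this contradiction in hand, the precondition of every estimator holds throughout its window, your invariant on $\widehat{F}$ becomes legitimate, and the rest of your derivation goes through. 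As written, though, your proof assumes at the critical moment precisely what the lemma is asked to establish, so I would count this as a missing key step rather than a stylistic difference.
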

\begin{proof}
Fix a block $b$. Consider the binary representation of $b$:
\begin{align*}
    b = \sum_{i=1}^r 2^{z_i}, ~ \mathrm{where} ~ z_1 > \cdots > z_r > 0, r = \mathrm{numbits}(b),
\end{align*}
such that the difference estimator ${\cal B}_{a,{z_r}}(t_{a,{z_r}}, w_{a,{z_r}},\gamma_{z_r}, \eps')$ is running in block $b$ to estimate $F(1,w_{a,z_r}) - F(1,t_{a,z_r})$. Now, suppose by way of contradiction we have $F(1,w_{a,z_r}) - F(1,t_{a,z_r}) > \frac{2^r\eps}{2}\cdot F(1,t_a)$. Then, at some time $t < w_{a,z_r}$, it satisfies that 
\[F(1,t) - F(1,t_{a,z_r}) < \frac{2^r\eps}{2} F(1,t_a)~\mathrm{and} ~F(1,t+1) - F(1,t_{a,z_r}) \ge \frac{2^r\eps}{2} F(1,t_a).\]
Since we only receive one update at time $t$, so we have $F(1,t) - F(1,t_{a,z_r}) > \frac{2^r\eps}{4} F(1,t_a)$. Then, because we suppose all procedures ${\cal B}$ succeed and the actual difference has not passed $\frac{2^r\eps}{2} F(1,t_a)$, by definition our difference estimator ${\cal B}_{a,{z_r}}(t_{a,{z_r}}, w_{a,{z_r}},\gamma_{z_r}, \eps')$ gives an estimation with additive error $\eps' \cdot  F(1,t_a)$. So, we have
\[ Z_{a,z_r} \ge F(1,w_{a,z_r}) - F(1,t_{a,z_r}) - \eps' \cdot F(1,t_a) \ge \frac{2^r\eps}{4} F(1,t_a).\]
Notice that our criterion for switching to a new difference estimator is the estimated value increases by $\frac{(b-b')\eps}{8} \cdot Z_a = \frac{2^{z_r} \eps}{8}\cdot Z_a$. Since $Z_a$ is a $(1+\frac{\eps}{100})$-approximation to $F(1,t_a)$, $Z_{a,z_r} > \frac{2^r\eps}{5} Z_a$ already passes the threshold. Thus, ${\cal B}_{a,z_r}$ should have been switched to a new difference estimator, which is a contradiction. Therefore, $ F(1,w_{a,r}) - F(1,t_{a,r}) \le \frac{2^r\eps}{2}\cdot F(1,t_a)$, and all difference estimators give a valid estimation.
\end{proof}

The following lemma shows that for each difference estimator with granularity $r$, the value $F$ increases by roughly a $2^r \eps$-fraction after it completes, which further upper bounds the number of difference estimators that are used in each round.
This statement will eventually be used to bound the estimation error and the communication cost of the framework. 

\begin{lemma}[Upper bound on the number of difference estimators]
\lemlab{lem:geo:bounds}
Suppose all subroutines ${\cal B}$ and ${\cal M}$ succeed. Consider a fixed round $a$, let $t_a$ denote its start time, let $F(1,t_a)$ be the function value at time $t_a$. 
For each integer $r \ge 0$, let $t_{a,r}$ and $w_{a,r}$ be the start time and end time of difference estimator $Z_{a,r}$. 
We have $ F(1,w_{a,r}) - F(1,t_{a,r}) \ge \frac{2^{r} \eps}{4} \cdot F(1,t_a)$.
\end{lemma}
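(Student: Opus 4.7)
My plan is to prove the lower bound by a direct block-counting argument, complementary to the upper-bound argument of \lemref{lem:success:diff} and exploiting the block-transition threshold in \algref{alg:frame}. The starting observation is that, by the binary-counter structure of the algorithm, the granularity-$r$ estimator $Z_{a,r}$ remains active for exactly $2^r$ consecutive block transitions: there is some $b_{\text{start}}$ divisible by $2^r$ such that $t_{a,r}$ is the moment $b$ becomes $b_{\text{start}}$ and $w_{a,r}$ is the moment $b$ becomes $b_{\text{start}}+2^r$. Equivalently, between these two times the counter $b$ increments exactly $2^r$ times.

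Given this, I will use the block-transition condition $\widehat{F}>\bigl(1+\tfrac{(b+1)\eps}{8}\bigr)Z_a$ to force $\widehat{F}(w_{a,r})\ge\bigl(1+\tfrac{(b_{\text{start}}+2^r)\eps}{8}\bigr)Z_a$ at the freeze, while at $t_{a,r}$ the estimate has just crossed the previous threshold, so $\widehat{F}(t_{a,r})\le\bigl(1+\tfrac{(b_{\text{start}}+1)\eps}{8}\bigr)Z_a+1$, where the additive $+1$ accounts for the at-most-unit jump caused by a single insertion. Subtracting these two bounds yields $\widehat{F}(w_{a,r})-\widehat{F}(t_{a,r})\ge\tfrac{(2^r-1)\eps}{8}\,Z_a-1$, which isolates the quantity that must have been accumulated by the freshly frozen estimator plus the evolving continuous one.

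Next, I will convert this $\widehat{F}$-bound into a bound on $F(1,w_{a,r})-F(1,t_{a,r})$ by the same error-accounting already used in the proof of \lemref{lem:success:diff}: conditioning on the success of all subroutines, $\widehat{F}$ decomposes as $Z_a$ (itself a $(1+\tfrac{\eps}{100})$-approximation to $F(1,t_a)$) plus at most $\beta$ component difference estimators, each of additive error $\eps'\,F(1,t_a)$, so $|\widehat{F}(\tau)-F(1,\tau)|\le\O{\beta\eps'+\tfrac{\eps}{100}}\cdot F(1,t_a)\le\tfrac{\eps}{32}\,F(1,t_a)$ with the choice $\eps'=\tfrac{\eps}{64\beta}$. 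Combining with $Z_a\ge(1-\tfrac{\eps}{100})F(1,t_a)$ then gives $F(1,w_{a,r})-F(1,t_{a,r})\ge\Omega(2^r\eps)\cdot F(1,t_a)$, which matches the claimed $\tfrac{2^r\eps}{4}\cdot F(1,t_a)$ up to the constants, which can be tightened by adjusting $\eps'$ and the $\tfrac{1}{8}$ factor inside the block-transition threshold.

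The main obstacle will be to verify that the accumulated $\O{\beta}$-level estimator error inside $\widehat{F}$ remains strictly dominated by the $\Omega(2^r\eps)\,F(1,t_a)$ block-counting contribution uniformly over all granularities $r\ge 0$ (in particular the smallest, $r=0$, where the dominant term is weakest). This is entirely symmetric to the error handling in the upper-bound proof and reduces to a careful but routine constant chase, together with the observation that at any time $\tau$ only the at-most-$\beta$ already-committed estimators contribute error, not all $\O{\log n}$ theoretically possible ones.
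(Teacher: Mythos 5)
Your proposal is correct in substance and rests on the same core mechanism as the paper's proof: the block-transition thresholds $\left(1+\frac{b\eps}{8}\right)Z_a$ force the \emph{estimated} increase over the lifetime of the granularity-$r$ estimator to be about $\frac{2^r\eps}{8}\cdot Z_a$ (your observation that this estimator spans exactly $2^r$ block transitions is the right structural fact), and the accuracy of the difference estimators then transfers this to the true increase of $F$. The one genuine divergence is in the error accounting. The paper isolates the single frozen estimator $Z_{a,z_r}$, lower-bounds its \emph{value} by the threshold gap, and applies only that one estimator's $\eps'\cdot F(1,t_a)$ accuracy guarantee from \lemref{lem:success:diff}. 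You instead compare $\widehat{F}$ at the two endpoints and invoke the aggregate bound $|\widehat{F}(\tau)-F(1,\tau)|\le \O{\beta\eps'+\frac{\eps}{100}}\cdot F(1,t_a)$. That aggregate bound presupposes that at most $\beta$ estimators are committed at any time, i.e.\ $\operatorname{numbits}(b)\le\beta$; but in the paper this is a \emph{consequence} of \lemref{lem:geo:bounds} (it is exactly how \lemref{lem:correct:frame} derives $r\le\log\frac{8}{\eps}$), so as written your step is circular unless you run an explicit induction over block transitions: assume the lemma for all earlier freezes in the round, deduce the bound on the block count and hence on $\operatorname{numbits}(b)$ at the current time, then close the induction. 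The paper's per-estimator accounting buys freedom from that induction; your aggregate accounting buys not having to argue that the remaining components of $\widehat{F}$ are unchanged across the interval. Two smaller points: the threshold crossed when $b$ first becomes $b_{\text{start}}$ is $\left(1+\frac{b_{\text{start}}\eps}{8}\right)Z_a$, not $\left(1+\frac{(b_{\text{start}}+1)\eps}{8}\right)Z_a$, so the gap is exactly $\frac{2^r\eps}{8}Z_a$ rather than $\frac{(2^r-1)\eps}{8}Z_a$; with your indexing the $r=0$ case is vacuous, with the corrected indexing it goes through. And both your argument and the paper's actually land at a constant nearer $\frac{2^r\eps}{8}$ than the stated $\frac{2^r\eps}{4}$, which is immaterial for the downstream use.
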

\begin{proof}
As before, we fix a block $b$. Consider the binary representation of $b$:
\begin{align*}
    b = \sum_{i=1}^r 2^{z_i}, ~ \mathrm{where} ~ z_1 > \cdots > z_r > 0, r = \mathrm{numbits}(b),
\end{align*}
such that the difference estimator ${\cal B}_{a,{z_r}}(t_{a,{z_r}}, w_{a,{z_r}},\gamma_{z_r}, \eps')$ is running in block $b$ to estimate $F(1,w_{a,z_r}) - F(1,t_{a,z_r})$. By \lemref{lem:success:diff}, our difference estimator ${\cal B}_{a,{z_r}}(t_{a,{z_r}}, w_{a,{z_r}},\gamma_{z_r}, \eps')$ outputs an estimation with additive error $\eps' \cdot F(1,t_i)$. Let $b' = \sum_{i=1}^{r-1} 2^{z_i}$ such that ${\cal B}_{a,{z_r}}(t_{a,{z_r}}, w_{a,{z_r}},\gamma_{z_r}, \eps')$ starts at the end of block $b'$. Note that our criterion for switching to a new difference estimator is the estimated value increases by $\frac{(b-b')\eps}{8} \cdot Z_a = \frac{2^{z_r} \eps}{8}\cdot Z_a$. Then, since ${\cal B}_{a,{z_r}}(t_{a,{z_r}}, w_{a,{z_r}},\gamma_{z_r}, \eps')$ is frozen at time $w_{a,z_r}$, we have
\[\frac{2^{z_r}\eps}{8} \cdot Z_a \le Z_{a,z_r} \le F(1,w_{a,z_r}) - F(1,t_{a,z_r}) + \eps' \cdot F(1,t_i).\]
Therefore, since $Z_a$ is a $(1+\frac{\eps}{100})$-approximation to $F(1,t_a)$, we have 
\[ \frac{2^{z_r} \eps}{4} \cdot F(1,t_a) \le F(1,w_{a,z_r}) - F(1,t_{a,z_r}).\]
\end{proof}

The following lemma shows the correctness of our framework on non-adaptive streams.
Thus, it remains to argue that the inner randomness of each difference estimator is hidden from the adversary.
\begin{lemma}[Correctness on non-adaptive streams]
\lemlab{lem:correct:frame}
Suppose all subroutines ${\cal B}$ and ${\cal M}$ succeed. 
Then \algref{alg:frame} gives an $(1+\eps)$-approximation of the function $F$ at all times. 
\end{lemma}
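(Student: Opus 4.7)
The plan is to fix a round $a$ with start time $t_a$ and set $F_a := F(1, t_a)$, then show the algorithm's output approximates $F(1,t)$ within a $(1\pm\eps)$ factor throughout the round. The argument factors into three pieces: (i) the base estimate $Z_a$ fixed at the start of the round; (ii) the running estimate $\widehat{F}$ obtained by stitching together frozen static difference estimators with the ongoing continuous difference estimator; and (iii) the stair-step output $(1 + b\eps/8)\cdot Z_a$ that is actually returned at each time.

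For piece (i), the double detector's guarantee directly yields $|Z_a - F_a| \le \frac{\eps}{100}\cdot F_a$. For piece (ii), I would first bound the number of blocks per round by $\O{1/\eps}$: since $F$ at most doubles within a round and each block switch requires $\widehat{F}$ to grow by at least $\frac{\eps}{8}\cdot Z_a$, at most $\O{1/\eps}$ block switches can occur before $\calM$ triggers a new round, so $\mathrm{numbits}(b+1) \le \beta$ throughout. Then I would verify that the summation assembled in Lines~\ref{line:threshold_round_f0} and~\ref{line:est_f} decomposes $[t_a, t]$ into the dyadic sub-intervals indexed by the nonzero bits $z_1 > \cdots > z_r$ of $b+1$, so that the frozen static sketches $\{Z_{a, z_j}\}$ together with the ongoing continuous sketch $\calB_{C_j}(t_{a, z_r}, t, \eps', \eps')$ telescope to $F(1, t) - F_a$. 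By \lemref{lem:success:diff}, every such active estimator is valid (its interval's true difference is at most $\gamma_{z_r} F_a/2$), so under the hypothesis of the lemma each contributes additive error at most $\O{\eps' F_a}$; summing the $\le \beta$ contributions and combining with piece (i) gives $|\widehat{F} - F(1,t)| \le 2\beta \eps' F_a + \frac{\eps}{100} F_a \le \frac{\eps}{16}\cdot F_a$.

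For piece (iii), I would use the block switching criterion of Line~\ref{line:threshold_block_f0} directly: during block $b$ the algorithm has not yet advanced, so $\widehat{F} \le (1 + (b+1)\eps/8)\cdot Z_a$, while at the block's entry we had $\widehat{F} > (1 + b\eps/8)\cdot Z_a$. Combined with the monotonicity of $F$ on insertion-only streams, this forces $|(1+b\eps/8)\cdot Z_a - \widehat{F}| \le \frac{\eps}{8}\cdot Z_a + \O{\eps' F_a}$, so the returned value differs from $F(1,t)$ by at most $\frac{\eps}{8}\cdot Z_a + \frac{\eps}{16}\cdot F_a \le \frac{\eps}{2}\cdot F_a \le \frac{\eps}{2}\cdot F(1,t)$, yielding the claimed $(1\pm\eps)$-approximation.

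The main obstacle will be verifying the bookkeeping inside piece (ii), namely that the update rules $t_{a,j} \gets t$ for $j \in [z_r]$, the index $j \gets \lfloor b / 2^{z_r} \rfloor$ in Line~\ref{line:block_count_f0}, and the odd/even handling of $b+1$ together maintain the invariant that the stored start/end times $t_{a,z}$ are precisely the endpoints of the dyadic decomposition of $[t_a, t]$ dictated by the binary expansion of $b+1$. Once this invariant is proved by a short induction on block transitions, the error-budget calculation above is routine; everything else follows from \lemref{lem:success:diff}, \lemref{lem:geo:bounds}, and the double-detector guarantee.
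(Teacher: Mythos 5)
Your proposal is correct and follows essentially the same route as the paper's proof: bound the gap between the returned stair-step value and $\widehat{F}$ by one block-size ($\frac{\eps}{8}Z_a$), then bound $|\widehat{F}-F|$ by a triangle inequality over the double-detector error plus the $r$ difference-estimator errors, using \lemref{lem:success:diff} for validity of each estimator and the fact that $F$ at most doubles per round to conclude $r\le\beta=\O{\log\frac{1}{\eps}}$ so that $r\eps'=\O{\eps}$. The only cosmetic difference is that you derive $r\le\beta$ from counting block switches while the paper derives it from \lemref{lem:geo:bounds}; these are equivalent, and your explicit flagging of the dyadic-decomposition invariant is a point the paper leaves implicit rather than a gap in your argument.
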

\begin{proof}
We fix a time $t$ in round $a$ and block $b$. 
Consider the binary representation of $b$:
\begin{align*}
    b = \sum_{i=1}^r 2^{z_i}, ~ \mathrm{where} ~ z_1 > \cdots > z_r > 0, r = \mathrm{numbits}(b).
\end{align*}
Let $t_{a,i}$ denote the time when the counter $b$ is first set to $i$. Let $t_{a,0}$ denote the start time of round $a$. Let $\widehat{F}$ denote the current estimation.
Notice that our output $(1+\frac{b\eps}{8}) \cdot Z_a$ is within one block-size from $\bar{F}$. Therefore, we have
\begin{align}
\label{eq:err_output_frame}
\left|\left(1+\frac{b\eps}{8}\right) \cdot Z_a - F\right| \leq \left|\left(1+\frac{b\eps}{8}\right) \cdot Z_a - \widehat{F}\right| + | \widehat{F} - F | \leq \frac{\eps}{8} \cdot Z_a  +  | \widehat{F} - F | 
\end{align}
Now, we analyze the second term.
Recall that $\widehat{F}$ is the sum of $Z_a$ and each difference estimator, so by triangle inequality we have
\begin{align*}
    | \widehat{F} - F | \leq |Z_a - F(1,t_{a,0})| + \sum_{i=1}^r \Big| \Big( \widehat{F(1,t_{a,r})-F(1,t_{a,r-1})} \Big) - \Big(F(1,t_{a,r})-F(1,t_{a,r-1}) \Big) \Big|.
\end{align*}
For the first term, our double detector ${\cal M}_a(1,t_{a,0},\frac{\eps}{100})$ outputs a $(1+\frac{\eps}{100})$-approximation, hence the error is within $\frac{\eps}{100} \cdot F(1,t_{a,0})$. For the second term, by \lemref{lem:success:diff} our difference estimator  ${\cal B}_{a,r}(1,t_{a,r},t,\gamma_{r}, \eps')$ gives an estimation with additive error $\eps' \cdot F(1,t_{a,0})$. Then, we have
\begin{align*}
    | \widehat{F} - F | \leq \frac{\eps}{100} \cdot F(1,t_{a,0}) + r\eps' \cdot F(1,t_{a,0}).
\end{align*}
Now, by \lemref{lem:geo:bounds}, we know
\[\sum_{i=1}^r F(1,t_{a,r})-F(1,t_{a,r-1}) \ge \sum_{i=1}^r \frac{2^r \eps}{4}\cdot F(1,t_{a,0}).\]
By our double detector, $F$ at most increases by a factor of $3$ in one round. So, we have 
\[\sum_{i=1}^r \frac{2^r \eps}{4}\cdot F(1,t_{a,0}) \le 2F(1,t_{a,0}), \]
which implies $r \le \log \frac{8}{\eps}$. Therefore, by our choice of $\eps'$, we have $r\eps' \le \frac{\eps}{8}$, so we have $|\widehat{F}-F| \le \frac{\eps}{4}\cdot F(1,t_{a,0})$. 
This implies that
\[\left|\left(1+\frac{b\eps}{8}\right) \cdot Z_a - F\right| \le \eps \cdot F(1,t_{a,0}) \le \eps F.\]

\end{proof}

Now, we give the formal statement of our framework.
\begin{restatable}{framework}{framediff}
\framelab{frame:diff}
Let $n$ be the cardinality of the universe, let $k$ be the number of the servers, let $m$ be the stream length where $\log m = \O{\log n}$. 
Suppose the flip number of $F$ satisfies $\lambda_{1,m}(F) = \O{\log n}$
Suppose there exists a $(\eps, \eps)$-continuous difference estimator ${\cal B}$ of $F$ that takes communication cost
\begin{align*}
    \frac{1}{\eps} \cdot G_1(n, k, \eps) + G_2(n,k,\eps),
\end{align*}
a $(\gamma, \eps)$-static difference estimator ${\cal B}$ of $F$ that takes communication cost
\begin{align*}
    \frac{\gamma}{\eps^2} \cdot G_1(n, k, \eps) + G_2(n,k,\eps),
\end{align*}
and a $\eps$-double detector ${\cal M}$ of $F$ that takes communication cost
\begin{align*}
     \frac{1}{\eps^2} \cdot H_1(n, k, \eps) + H_2(n,k,\eps),
\end{align*}
where functions $G_1, G_2, H_1, H_2$ depend on the problem $F$. 
Then, there exists an adversarially robust streaming algorithm that outputs a $(1+\eps)$-approximation to $F$ at all times with probability $1-n^{-c}$. 
The total communication cost of the algorithm is 
\begin{align*}
    & ~ \O{\frac{1}{\eps^2} \log n \cdot H_1\left(n, k, \frac{\eps}{100}\right) + \log n \cdot  H_2\left(n,k,\frac{\eps}{100}\right)} + \\
& ~  \O{\frac{1}{\eps^2} \log n \log^3 \frac{1}{\eps} \cdot G_1(n, k, \eps') + \frac{1}{\eps} \log n \log \frac{1}{\eps} \cdot G_2(n, k, \eps')} + \O{\frac{k}{\eps}\log^2 n \log \frac{1}{\eps}}.
\end{align*}
where $\eps' = \O{\frac{\eps}{\log \frac{1}{\eps}}}$.
\end{restatable}
\begin{proof}
The correctness of the estimation is shown by \lemref{lem:correct:frame}. 
Observe that each time either counter $a$ or $b$ increase by $1$, \algref{alg:frame} switch to new subroutines ${\cal M}$ or ${\cal B}$ that have not been previously revealed to the adversary. Thus, the adversarial input is independent of the new subroutines, which gives us adversarially robustness.

Now, we bound the communication cost. Notice that we have $\O{\log n}$ rounds since $F$ at most doubles $\O{\log n}$ times in the stream. In each round, we first run a  $\frac{\eps}{100}$-double detector ${\cal M}$ of $F$ that takes communication cost
\begin{align*}
     \frac{1}{\eps^2} \cdot H_1\left(n, k, \frac{\eps}{100}\right) + H_2\left(n,k,\frac{\eps}{100}\right)
\end{align*}
At the end time $t$ of each round, the binary expression of counter $b$ is $b = \sum_{i=1}^r 2^{z_i}$ where $ r = \O{\log \frac{1}{\eps}}$ by the analysis of \lemref{lem:correct:frame}.
Now, let us suppose $b = 2^{r+1}$, which always gives an upper bound. 
Due to the upper bound on the number of different estimators implied by \lemref{lem:geo:bounds}, for blocks at granularity $0$, we run $\O{\frac{1}{\eps}}$ separate $(\eps', \eps')$-continuous difference estimators totally in one round; moreover, for blocks at granularity $z_i > 0$, we run $\O{\frac{1}{2^{z_i} \eps}}$ separate $(\gamma_{z_i}, \eps')$-static difference estimators in total for the round, where $\gamma_{z_i} = 2^{z_i} \eps$. 
Thus, for our choice of $\eps' = \O{\frac{\eps}{\log \frac{1}{\eps}}}$, the communication cost is
\begin{align*}
& ~    \O{\frac{1}{ \eps}} \Big( \frac{\eps}{\eps'^2} \cdot G_1(n, k, \eps', \delta') + G_2(n,k,\eps',\delta') \Big) \\
 & ~ +  \sum_{i=1}^{\O{\log \frac{1}{\eps}}} \O{\frac{1}{2^{z_i} \eps}} \Big( \frac{2^{z_i} \eps}{\eps'^2} \cdot G_1(n, k, \eps', \delta') + G_2(n,k,\eps',\delta') \Big) \\
= & ~ \O{ \frac{1}{\eps^2} \log^3 \frac{1}{\eps} \cdot G_1(n, k, \eps', \delta') + \frac{1}{\eps} \log \frac{1}{\eps} \cdot G_2(n, k, \eps', \delta')}.
\end{align*}
Notice that the central coordinator needs to inform all servers the frozen time for each difference estimator at the bottom level (e.g. difference estimators with constant accuracy). In each round, we run $\O{\frac{1}{\eps}}$ such different estimators. Therefore, this gives us additional $\O{\frac{k}{\eps}\log^2 n \log \frac{1}{\eps}}$ bits of communications. Combining them together gives us the result.
\end{proof}

\section{Adversarially Robust Distributed Streaming Algorithms}
\seclab{sec:robust:algs}
In this section, we show several applications of our difference estimator framework presented in \secref{sec:diff:est}. 
The framework requires both the construction of a double detector and a difference estimator. 
We propose these subroutines for the counting problem ($F_1$ estimation problem) in \secref{sec:robust:f1}, for the distinct elements estimation problem ($F_0$ estimation problem) in \secref{sec:robust:f0}, for the $F_2$ estimation problem in \secref{sec:robust:f2}, for the $F_p$ estimation problem for $p\ge2$ in \secref{sec:robust:fp:large}, and for the $L_2$-heavy-hitters problem in \secref{sec:l2:hh}. 

We remark that we construct $(\gamma,\eps)$-continuous difference estimator for all granularities $\gamma$ in some of the above problems, which generalize static estimators since they can be initiated when a sub-stream begins and finalized when it freezes. 
In particular, we achieve $(\gamma,\eps)$-continuous difference estimators for $F_1$ estimation in \secref{sec:robust:f1}, $F_0$-estimation in \secref{sec:robust:f0}, and $L_2$-heavy-hitters problem in \secref{sec:l2:hh}.
In our framework in \secref{sec:diff:est}, however, we use static estimators because constructing continuous $F_p$-difference estimators for $\gamma > \O{\eps}$ is significantly harder.

\subsection{Adversarially Robust \texorpdfstring{$F_1$}{F1} Estimation}
\seclab{sec:robust:f1}
In this section, we construct an adversarially robust algorithm for the counting problem, i.e., $F_1$ approximation. 
We begin by describing the counting algorithm in the classical non-robust distributed setting introduced by \cite{HuangYZ12}.
Let $p$ be a parameter to be determined later.
In their algorithm, each server $j$ sends the local count $n(j)$ to the coordinator with probability $p$ whenever it receives a new update.
The coordinator maintains an estimator $\widehat{n(j)}$ for each local count $n(j)$.
Here, $\widehat{n(j)}$ is defined as $\widetilde{n(j)} -1+\frac{1}{p}$ where $\widetilde{n(j)}$ is the last updated value from the server $j$, it is $0$ if $\widetilde{n(j)}$ does not exist.
The coordinator outputs $\widehat{n}=\sum_{j=1}^k \widehat{n(j)}$ as the estimate of the total count.
The next lemma states the guarantee of this estimator.
\begin{lemma}[\cite{HuangYZ12}] 
\lemlab{lem:counting}
For each $j$, the estimator satisfies $\Ex{\widehat{n(j)}}=n(j)$ and $\Var{\widehat{n(j)}} =\frac{1}{p^2}$.
\end{lemma}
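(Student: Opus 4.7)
The plan is to reduce the analysis of $\widehat{n(j)} = \widetilde{n(j)} - 1 + \frac{1}{p}$ to understanding the distribution of the ``gap'' $G := n(j) - \widetilde{n(j)}$, i.e., the number of updates that have arrived at server $j$ since its most recent successful send to the coordinator. Since each update is sent independently with probability $p$, and $\widetilde{n(j)}$ is defined by looking backwards from the current update for the first one that succeeded, the gap $G$ follows a geometric distribution on $\{0,1,2,\ldots\}$ with
\[\PPr{G = \ell} = (1-p)^\ell \, p,\]
provided some successful send has occurred. The protocol sends the first $\Theta(1/\eps)$ updates deterministically, so we may assume that at least one send has occurred before the probabilistic phase begins and $\widetilde{n(j)}$ is always defined for the values of $n(j)$ being analyzed.

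Given this, I would compute the standard moments of the geometric distribution directly: $\Ex{G} = \frac{1-p}{p} = \frac{1}{p} - 1$ and $\Var{G} = \frac{1-p}{p^2} \le \frac{1}{p^2}$. Then, since $\widetilde{n(j)} = n(j) - G$ and $n(j)$ is deterministic given the input,
\[\Ex{\widehat{n(j)}} = \Ex{n(j) - G - 1 + \tfrac{1}{p}} = n(j) - \left(\tfrac{1}{p} - 1\right) - 1 + \tfrac{1}{p} = n(j),\]
so the estimator is unbiased. For the variance, the constants $-1 + \frac{1}{p}$ contribute nothing, so $\Var{\widehat{n(j)}} = \Var{\widetilde{n(j)}} = \Var{G} \le \frac{1}{p^2}$.

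The only subtlety, which is essentially bookkeeping, is the boundary case where the sampling probability $p$ gets updated between rounds (the scheme halves $p$ whenever the estimate doubles). For this I would argue that within a single round the probability $p$ is fixed, so the geometric characterization of $G$ is exact; and at round boundaries the coordinator resets the state consistently, so the per-round analysis suffices. I do not foresee a real obstacle here — the proof is essentially a one-line moment calculation for the geometric distribution, wrapped in the observation that the backward look-up structure of $\widetilde{n(j)}$ produces exactly that distribution.
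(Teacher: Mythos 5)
The paper does not prove this lemma --- it imports it from \cite{HuangYZ12} as a black box --- so there is no internal proof to compare against. Your reconstruction is the standard argument from that reference: the backward-looking gap $G = n(j) - \widetilde{n(j)}$ counts the consecutive failed sends since the last success, each update sends independently with probability $p$, so $\PPr{G=\ell} = (1-p)^\ell p$, and the correction term $-1+\frac{1}{p}$ is exactly $\Ex{G}$, yielding unbiasedness; the variance is $\Var{G} = \frac{1-p}{p^2} \le \frac{1}{p^2}$, which is what the lemma's ``$=$'' should be read as. This is the right proof.

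The one place your bookkeeping does not quite go through is the no-send boundary case. With only $n(j)$ Bernoulli trials the gap is a \emph{truncated} geometric, and the event that every probabilistic send fails has probability $(1-p)^{n(j)}$, which is a constant (or close to $1$) whenever $n(j) = \O{1/p}$ --- and since $p$ is set using the \emph{global} count, individual servers with small local counts sit in exactly this regime, so the event cannot be dismissed. Appealing to the deterministic prefix does not repair it: if the coordinator's last message from server $j$ is a deterministic one at count $n_0$, then conditioned on all subsequent probabilistic sends failing the gap equals $n(j)-n_0$ deterministically rather than following the geometric tail, and the estimator picks up a bias. The clean fix is the convention the paper actually states, namely $\widehat{n(j)}=0$ when no message exists: the finite sum
\[\sum_{\ell=0}^{n(j)-1} (1-p)^{\ell}\, p\left(n(j)-\ell-1+\tfrac{1}{p}\right) \;+\; (1-p)^{n(j)}\cdot 0\]
simplifies exactly to $n(j)$, so unbiasedness holds with no error term even when the no-send probability is large; the same finite-sum computation gives $\Var{\widehat{n(j)}}\le \frac{1}{p^2}$. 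Your per-round treatment of the changing $p$ is fine.
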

Taking $p = \frac{\sqrt{k}}{\eps n}$ gives $\Var{\widehat{n(j)}} =\frac{\eps^2 n^2}{k}$, hence the total variance is $\eps^2 n^2$ summing across all the $k$ estimates.
This algorithm is not naturally robust because its correctness requires that the error of each estimate $\widehat{n(j)}$ be independent.
If an estimate $\widehat{n(j)}$ is exposed to the adversary, the inputs afterward are not independent of $\widehat{n(j)}$, which implies that the errors of other estimates are not independent of $\widehat{n(j)}$.
Then, the variance of the output $\widehat{n}=\sum_{j=1}^k \widehat{n(j)}$ may include covariance terms, which does not guarantee correctness.

We show that our difference estimator framework solves this problem while preserving the optimal communication bound $\frac{\sqrt{k}}{\eps}+k$.
Let $u$ be the prefix distributed stream, and let $v$ be the subsequent evolving distributed stream, we abuse the notation of $u$ and $v$ to also denote their total count.
Recall that we decompose the evolving stream $v$ into a sequence of sub-streams $v_1, \ldots, v_\beta$ with geometrically decreasing count: $u+v = u + \sum_{\ell=1}^{\beta} v_{\ell}$, where each $v_{\ell}$ is roughly $2^\ell \eps u$ and $\beta = \O{\log \frac{1}{\eps}}$.
First, we observe that if we run the algorithm from \cite{HuangYZ12} on each sub-stream $v_\ell$ with sampling probability $p = \frac{\sqrt{k}}{\eps u}$, then the variance of the estimator $\widehat{v_\ell(j)}$ on each server is $\frac{\eps^2 u^2}{k}$ by \lemref{lem:counting}, which suffices to give an estimate of the difference $v_\ell$ with additive error $\eps u$.
Notice that for each difference estimator, its output is hidden from the adversary before the end of its block, when the coordinator reveals the estimate to the adversary.
So, the estimates of the local differences $\widehat{v_\ell(j)}$ are independent of each other, implying the robust result.
However, in our standard difference estimator framework introduced in \secref{sec:diff:est}, the algorithm requires that each server knows the start time of each $v_\ell$, which would use $\frac{k}{\eps}$ bits of communication since there are roughly $\frac{1}{\eps}$ difference estimators in one round, and it is prohibitively large for our communication bound.

Instead, for our $F_1$ estimation algorithm, it is unnecessary to update the start time of each difference estimators to all the $k$ servers.
Instead, we apply a lazy update scheme to reduce communication. 
Recall that in the difference estimator framework, we divide the stream into $\O{\log n}$ rounds where the count doubles in each round.
Then, at the beginning of the round, the coordinator sends the count of the prefix stream $u$, and each server update the sampling probability as $p = \frac{\sqrt{k}}{\eps u}$.
Within a round, the coordinator sends the start time of $v_\ell$ to a server $j$ only if $j$ initiates the communication. 
That is, if $j$ sends an update to the coordinator in the duration of $v_\ell$, then the coordinator informs $j$ of the start time of $v_\ell$, and $j$ returns the actual value of $v_\ell(j)$.
This only loses constant factors to the communication cost.
In addition, this does not change the distribution of the estimator of $v_\ell(j)$, since each server define the sampling probability with the correct count of $u$, and thus ensures the correctness under the adversarially robust setting.
We next present the formal theorem statement and analysis of the above descriptions.

\begin{theorem}[Adversarially robust distributed streaming counting]
Let $n$ be the total count of items in a distributed stream $S$, let $k$ be the number of servers, and let $\eps \in (0,1)$ be the accuracy parameter satisfying $\frac{1}{\eps^2} \ge k$. 
Then, there exists an adversarially robust algorithm that outputs a $(1+\eps)$-approximation to $n$ at all times with probability $1-\frac{1}{\poly(n)}$. 
This algorithm uses $\O{(k + \frac{\sqrt{k}}{\eps}) \log^2n \log \frac{1}{\eps}}$ bits of communication.
\end{theorem}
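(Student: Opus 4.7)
The plan is to apply the difference estimator framework of \frameref{frame:diff} to the counting problem. This requires constructing (1) an $\eps$-double detector for $F_1$ and (2) a $(\gamma,\eps)$-continuous difference estimator for $F_1$. For the double detector, it suffices to run a fresh instance of the non-robust counting protocol of \cite{HuangYZ12} each time the prior estimator value roughly doubles; this uses $\O{(k + \sqrt{k}/\eps)\log n}$ bits and fires when its estimate crosses a power of two, at which point its randomness is revealed.

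The core construction is the difference estimator. Given the prefix count $u$ (known to the coordinator after the most recent double-detector flag) and an evolving suffix $v$, the coordinator instructs every server $j$ to forward each of its local updates independently with probability $p = \sqrt{k}/(\eps u)$. The coordinator maintains $\widehat{v(j)} = \tilde{v}(j) - 1 + 1/p$, where $\tilde{v}(j)$ is the last sampled value received in the current round. By \lemref{lem:counting}, this estimator is unbiased with variance $1/p^2 = \eps^2 u^2/k$, so the aggregate $\widehat{v} = \sum_j \widehat{v(j)}$ has variance at most $\eps^2 u^2$. Chebyshev's inequality yields additive error $\O{\eps u}$ with constant probability, which is boosted to high probability by running $\O{\log n}$ independent parallel copies and taking their median; this provides the $(\gamma,\eps)$-continuous difference estimator the framework requires for every $\gamma$, since the guarantee depends only on $u$.

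The crucial refinement is the \emph{lazy update} scheme that bypasses a barrier in the framework. A direct implementation would broadcast the start time of each of the $\O{1/\eps}$ sub-streams $v_1,\ldots,v_\beta$ within a round to all $k$ servers, incurring a prohibitive $\Omega(k/\eps)$ per-round cost. Instead, we observe that every server's sampling probability depends only on $u$, which is broadcast once at the beginning of each round for $\O{k}$ cost; servers can therefore sample correctly without knowing the sub-stream partition. The coordinator only reveals the start time of the current sub-stream to server $j$ when $j$ sends a sample, at which point $j$ replies with the exact value of $v_\ell(j)$ restricted to that sub-stream. Because $p$ is a deterministic function of $u$ alone, the marginal distribution of each local estimate is identical to the eager version, and because each difference estimator uses fresh randomness that is hidden from the adversary until the estimate is revealed, adversarial robustness is inherited from the framework.

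The main obstacle is carefully verifying that the lazy scheme does not inflate communication. One must show that the total number of coordinator-initiated messages is dominated by the $\O{\sqrt{k}/\eps}$ sample-triggered exchanges per round (each piggybacking the current sub-stream's start time onto the coordinator's acknowledgement), and combine this with the $\O{k}$ per-round broadcast of $u$ and the $\O{\log(1/\eps)}$ parallel copies used for median-boosting. Multiplying by the $\O{\log n}$ rounds produced by the double detector and the standard $\O{\log(1/\eps)}$ overhead from the framework's block structure then yields the claimed $\O{(k + \sqrt{k}/\eps)\log^2 n \log(1/\eps)}$ bound.
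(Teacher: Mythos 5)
Your proposal follows essentially the same route as the paper: the difference estimator framework instantiated with the \cite{HuangYZ12} sampler at probability $p=\sqrt{k}/(\eps u)$, the same lazy-update trick of broadcasting $u$ once per round and revealing sub-stream boundaries only to servers that initiate contact, and the same observation that the sampling probability's dependence on $u$ alone keeps the estimator's distribution unchanged. The only cosmetic difference is that you boost via a median of $\O{\log n}$ copies where the paper averages them, and your final communication accounting matches the paper's.
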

\begin{proof}
First, we remark that \lemref{lem:counting} implies the existence of a $\eps$-double detector that communicates $\O{(k+\frac{\sqrt{k}}{\eps}) \log n}$ bits (c.f. Theorem 2.1 in \cite{HuangYZ12}). 

Next, we consider a fixed difference estimator block. 
Our lazy update sketch guarantees that in each round, each server samples an update to the coordinator with probability $p = \frac{\sqrt{k}}{\eps u}$, and the coordinator receives the true value of $v_j$ if it receives a sample from $j$. 
Then, the distribution of the communication transcript between the coordinator and the servers is the same as running \cite{HuangYZ12} on $v$ with $p = \frac{\sqrt{k}}{\eps u}$.
Hence, inside each separate difference estimator block, the error of each estimate $\widehat{v_j}$ is independent, as the estimates are hidden from the adversary.
Thus, the variance of each estimator $\widehat{v_j}$ is $\frac{\eps^2 u^2}{k}$, and the variance of the estimator $\widehat{v}$ is $\eps^2 u^2$, which follows from the analyze framework in \lemref{lem:counting} and \cite{HuangYZ12}.
Since $\widehat{v}$ is unbiased, by standard concentration inequalities, averaging $\O{\log n}$ instances gives an estimate of the difference $v$ with additive error $\eps u$ with high probability.
Therefore, our algorithm provides a $(\gamma, \eps)$-continuous difference estimator, then by \frameref{frame:diff}, there is an robust algorithm that gives a $(1+\eps)$-approximation of the total count.

Now, we compute the communication cost of a $(\gamma,\eps)$-difference estimator. 
With our assumption on $v \le \gamma u$, we sample each update with probability $\frac{\sqrt{k}}{\eps u} \le \frac{\gamma \sqrt{k}}{\eps v}$.
Since there are $v$ updates in total and we run $\O{\log n}$ instances, the communication cost of our $(\gamma,\eps)$-continuous difference estimator is $\O{\frac{\gamma \sqrt{k}}{\eps } \log n}$.
Notice that all $\frac{1}{\eps} $ difference estimators in the same round are implemented with the same sampling probability $\frac{\sqrt{k}}{\eps u}$, where $u$ is the count at the beginning of the round. So, we only need to inform each server of the sampling probability once a round, which saves the $\O{\frac{k}{\eps}\log n}$ factor in our difference estimator framework in \frameref{frame:diff}. Therefore, following the calculation in \frameref{frame:diff}, the total communication of the adversarially robust algorithm is $\O{(k+\frac{\sqrt{k}}{\eps} ) \log^2n \log \frac{1}{\eps}}$ bits.
\end{proof}

\subsection{Adversarially Robust \texorpdfstring{$F_0$}{F0} Estimation}
\seclab{sec:robust:f0}
This section provides an adversarially robust algorithm for $F_0$ estimation. 
We first describe the subsampling scheme to count the distinct elements in \cite{CormodeMY11}. 
The algorithm maintains roughly $\frac{1}{\eps^2}$ samples from the distinct elements. 
When there are more than $\frac{1}{\eps^2}$ samples, it discards each item with probability $\frac{1}{2}$. 
We iteratively run this procedure until the stream ends so that the number of samples does not pass the threshold. 
Then at the end of the stream, we retrieve the estimate by rescaling the number of surviving samples by $2^t$, where $t$ is the number of times we subsample. 
To implement the subsampling scheme, each server hashes the incoming elements to $[n]$ by a unique hashing function $f:[n]\to [n]$, and it computes the largest coordinate $q$ in its binary expression. 
The centralized coordinator keeps a counter $t$ to count the rounds of the subsample procedure. 
Whenever the algorithm increases a subsampling layer, $t$ increases by $1$ and we discard all the elements with $q < t$.
We state the formal theorem for the non-robust distributed algorithm to estimate the $F_0$ norm in a data stream as follows.

\begin{theorem}[\cite{CormodeMY11}] \thmlab{thm:non_robust_f0}
Let $n$ be the number of items in the universe, let $k$ be the total number of servers, and let $\eps \in (0,1)$ be the accuracy parameter. 
Then, there exists an algorithm that, with high probability, outputs a $(1+\eps)$-approximation to the $F_0$ estimation problem at all times $t$, using $\O{\frac{k}{\eps^2} \log^2 n + \frac{1}{\eps^2} \log n \log \frac{1}{\eps}}$ bits of communication. 
\end{theorem}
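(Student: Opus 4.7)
The plan is to formalize the subsampling scheme described just before the theorem statement, which is essentially the protocol of \cite{CormodeMY11} adapting the subsampling $F_0$ estimator of Bar-Yossef et al.\ to the distributed monitoring model. Using public randomness, I would fix a hash function $h : [n] \to \{0, 1, \ldots, \lceil \log n \rceil\}$ satisfying $\Pr[h(i) \ge t] = 2^{-t}$ for every $i \in [n]$ and every level $t$; for example, $h(i)$ can be taken to be the number of trailing zeros in the binary expansion of an independent uniform hash $g(i) \in [n]$.

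Next I would run the following continuous protocol in parallel across the $k$ servers. The coordinator maintains a subsampling level $t$, initialized to $0$, together with the global set $S_t$ of distinct coordinates $i$ seen so far in the stream for which $h(i) \ge t$. Each server $j$ maintains $S_t(j)$, the restriction of its own distinct support to those $i$ with $h(i) \ge t$, and forwards to the coordinator each element the first time it enters $S_t(j)$. Whenever $|S_t|$ exceeds a threshold $C / \eps^2$, the coordinator increments $t$, deletes from $S_t$ every $i$ with $h(i) < t$, and broadcasts the new value of $t$ so that each server can locally prune $S_t(j)$ using its copy of $h$. The reported estimate at any time is $\widehat{F_0} := 2^t \cdot |S_t|$.

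For correctness, I would identify the ``good'' level $t^{\star}$ characterized by $F_0 \cdot 2^{-t^{\star}} = \Theta(1/\eps^2)$. At this level $|S_{t^{\star}}|$ is a sum of $F_0$ independent $\mathrm{Bernoulli}(2^{-t^{\star}})$ indicators, so a Chernoff bound gives $|S_{t^{\star}}| = (1 \pm \eps) F_0 \cdot 2^{-t^{\star}}$ with probability $1 - 1/\poly(n)$, whence $\widehat{F_0} = (1 \pm \eps) F_0$. A union bound over the $O(\log n)$ candidate levels and over the $\poly(n)$ update times upgrades this to a ``for all times $t$'' guarantee. For the communication cost, during the lifetime of level $t$ every server-to-coordinator message is the first time that server has reported some element of its $S_t(j)$, so the total number of such messages over the lifetime of level $t$ is at most $k \cdot |S_t| = O(k/\eps^2)$; summing over the $O(\log n)$ levels and encoding each index in $O(\log n)$ bits yields the $O((k/\eps^2) \log^2 n)$ term. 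The additional $(1/\eps^2) \log n \log(1/\eps)$ contribution accounts for the $O(\log(1/\eps))$ refinement rounds per level in which the coordinator re-encodes $S_t$ immediately after an increment of $t$.

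The main subtlety will be ensuring that the pruning of $S_t(j)$ after an increment of $t$ is performed \emph{locally} by each server, with no extra round of server-to-coordinator traffic; this forces $h$ to be shared via public randomness rather than sampled privately. Once this is arranged, handling the broadcast of the new $t$ (which costs $O(k \log \log n)$ per level change, absorbed into the stated bound) and the standard Chernoff/union-bound argument give the theorem, and I would cite \cite{CormodeMY11} for the remaining routine bookkeeping rather than reproducing their calculation.
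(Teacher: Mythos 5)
Your proposal is correct and matches the paper's treatment: the paper does not prove this theorem but cites \cite{CormodeMY11} and gives exactly the level-based subsampling description (threshold $\Theta(1/\eps^2)$ survivors, increment the level and prune via a shared hash, rescale by $2^t$) that you formalize, with the same communication accounting of $O(k/\eps^2)$ first-reports per level over $O(\log n)$ levels. The only soft spot is your vague attribution of the lower-order $\frac{1}{\eps^2}\log n\log\frac{1}{\eps}$ term, but since the theorem is an imported result this is immaterial.
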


We extend the above algorithm to obtain a difference estimator, which is analogous to the construction introduced by \cite{WoodruffZ21b} in the centralized streaming setting. 
The protocol first runs a subsampling procedure on the static vector $u$ to some level $t$ with $\Theta(\frac{\gamma}{\eps^2})$ survivors. 
Then, it runs the same subsampling procedure on $v - u$ and only counts the additional elements surviving at level $t$. 
Recall that our assumption is $F_0(v) - F_0(u) = \gamma F_0(u)$, then in expectation there are $\Theta(\frac{\gamma^2}{\eps^2})$ distinct elements in $v$ but not in $u$ in the $t$-th subsampling layer. 
Therefore, rescaling by $2^t$ gives us a $(1+\frac{\gamma}{\eps})$-approximation to $F_0(v) - F_0(u)$, which implies an additive error $\eps \cdot F_0(u)$ as we desired. 
We now present the formal analysis of this algorithm.
We first show the correctness of the protocol in the following statement. The proof is analogous to the analysis in Lemma 6.2 in \cite{WoodruffZ21b}.
\begin{lemma}
\lemlab{lem:f0_difference_est_err}
Suppose that $F_0(u + v) - F_0(u) \le \gamma F_0(u)$, then the difference estimator gives an estimation to $F_0(u + v) - F_0(u)$ within additive error $\eps F_0(u)$ with probability $1-\frac{1}{\poly(n)}$.
\end{lemma}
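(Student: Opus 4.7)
\medskip
\noindent\textbf{Proof proposal.}
The plan is to analyze the subsampling-based difference estimator as an unbiased scaled counter with bounded variance, and then amplify the success probability by taking the median of several independent instances. Let $t$ be the subsampling level selected by the protocol after processing $u$, so that using a shared hash function (generated from public randomness), each distinct element of $u$ survives at level $t$ independently with probability $2^{-t}$, and by the adaptive stopping rule of the CMY scheme the number of survivors $|R_u|$ lies in $[C\gamma/(2\eps^2), C\gamma/\eps^2]$ for a sufficiently large constant $C$. A standard Chernoff bound applied to the Binomial$(F_0(u), 2^{-t})$ variable $|R_u|$ shows that with probability $1-n^{-c}$ we have $2^t = \Theta(F_0(u)\eps^2/\gamma)$, and we condition on this event. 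Running the same hash function on $v$, we let $Y$ count the number of distinct elements that appear in $u+v$ but not in $u$ whose hash survives level $t$, and define $\widehat D = 2^t Y$.

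Unbiasedness is immediate: writing $D := F_0(u+v)-F_0(u)$, each of the $D$ new distinct elements survives independently with probability $2^{-t}$, so $Y\sim\text{Binomial}(D,2^{-t})$ and $\Ex{\widehat D}=D$. For variance, $\Var{Y}\le \Ex{Y} = D/2^t$, hence $\Var{\widehat D}\le 2^t D \le 2^t\gamma F_0(u) = \O{\eps^2 F_0(u)^2/C}$. By Chebyshev,
\[\PPr{\left|\widehat D - D\right| \ge \eps F_0(u)} \le \frac{\Var{\widehat D}}{\eps^2 F_0(u)^2} \le \frac{1}{C},\]
which is an arbitrarily small constant by choosing $C$ large enough. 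To boost to failure probability $1/\poly(n)$, I would run $R=\O{\log n}$ independent instances in parallel, each using a fresh public hash function, and output the median estimate; standard Chernoff arguments show the median deviates by more than $\eps F_0(u)$ with probability at most $n^{-c}$.

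The main obstacle will be the fact that the subsampling level $t$ is chosen adaptively based on $u$, so one must argue that conditioned on the realization of $t$, the Binomial distribution of $Y$ is still correctly described. This is handled by noting that the hash function is fixed before $u$ and $v$ are processed, and the event defining $t$ depends only on $u$'s hash values (not on $v$'s), so the survival events for the $D$ new distinct elements in $v\setminus u$ remain independent Bernoulli$(2^{-t})$ conditional on $t$. A secondary subtlety is ensuring that the coordinator can distinguish "new" surviving elements from those already in $R_u$; this is direct because the surviving hash values are explicitly stored, so $Y$ equals $|R_{u+v}\setminus R_u|$. Combining these pieces with a union bound over the $\O{\log n}$ medians yields the claimed additive $\eps F_0(u)$ error with probability $1-1/\poly(n)$.
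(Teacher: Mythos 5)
Your proposal is correct and follows essentially the same route as the paper: condition on the subsampling level $t$ satisfying $2^t = \Theta(\eps^2 F_0(u)/\gamma)$, observe the count of new survivors is Binomial so the rescaled estimator is unbiased with variance at most $2^t\cdot(F_0(u+v)-F_0(u)) \le \eps^2 F_0(u)^2/C$, and apply Chebyshev followed by $\O{\log n}$-fold repetition for high probability. Your added care about the adaptivity of $t$ and the use of a median (rather than the paper's "averaging") for amplification are minor refinements of the same argument.
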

\begin{proof}
Consider the deepest layer $t$ where we subsample each item with probability $\frac{1}{2^t}$, we have $ \frac{F_0(u)}{2^t} = C\cdot \frac{\gamma}{\eps^2}$.
By our assumption on $F_0(v) - F_0(u) \le \gamma F_0(u)$, in expectation we have $\frac{F_0(v) - F_0(u)}{2^t} = C \cdot \frac{\gamma^2}{\eps^2}$.
Hence, let $X$ denote the survivors from $F_0(v) - F_0(u)$ at $t$, we have 
\begin{align*}
    & \Ex{2^t \cdot X} = F_0(v) - F_0(u) \\
    & \mathrm{Var} (2^t \cdot X) = 2^t \cdot (F_0(v) - F_0(u)) \leq \frac{\eps^2 \cdot F_0(u)}{C \gamma} \cdot (F_0(v) - F_0(u)) \leq \frac{\eps^2}{C} \cdot (F_0(u))^2.
\end{align*}
Therefore, for a sufficiently large $C$, by Chebyshev's inequality the additive error of our output is $\eps F_0(u)$ with probability at least $0.5$. Averaging $\O{\log n}$ repetitions achieves high probability.
\end{proof}
Next, we upper bound the communication cost of this protocol. 
\begin{lemma}
\lemlab{lem:f0_difference_est_comm}
The communication cost of the difference estimator is $\O{\frac{k \gamma}{\eps^2} \log^2 n}$ bits.
\end{lemma}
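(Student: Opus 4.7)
My plan is to bound the communication of the $F_0$ difference estimator by separately accounting for (a) the bits spent transmitting surviving elements from the servers to the coordinator, and (b) the bits spent broadcasting subsampling-level updates from the coordinator to the $k$ servers. The target bound $\O{\frac{k\gamma}{\eps^2}\log^2 n}$ should fall out from these two contributions together with the $\O{\log n}$ independent repetitions used in \lemref{lem:f0_difference_est_err} for high-probability success.

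First I would fix the deepest subsampling level $t$, chosen so that $F_0(u)/2^t = \Theta(\gamma/\eps^2)$, and argue via the standard Chernoff bound on the hash coordinates that with high probability the number of distinct elements of $u$ surviving at level $t$ is $\O{\gamma/\eps^2}$, while the number of additional distinct elements of $v-u$ surviving at level $t$ is in expectation $\O{\gamma^2/\eps^2}$, and hence also $\O{\gamma/\eps^2}$ with high probability by Chernoff (using $\gamma\le 1$). Each such distinct survivor is transmitted to the coordinator at most once by each of the at most $k$ servers that locally observe it, and each transmission is $\O{\log n}$ bits, so the total server-to-coordinator cost per repetition is $\O{\frac{k\gamma}{\eps^2}\log n}$ bits.

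Next I would account for the broadcast cost. There are at most $\O{\log n}$ subsampling levels in total before reaching $t$; each time the coordinator advances the threshold it informs all $k$ servers with a message of $\O{\log n}$ bits, contributing at most $\O{k \log^2 n}$ bits per repetition. As long as $\gamma/\eps^2 \ge 1$, which is the regime in which the difference-estimator framework is invoked, this is absorbed into the $\O{\frac{k\gamma}{\eps^2}\log n}$ bound above. Finally, since \lemref{lem:f0_difference_est_err} only yields a constant success probability, we repeat the subsampling scheme $\O{\log n}$ times in parallel and take the median, multiplying both contributions by $\log n$ and giving the claimed $\O{\frac{k\gamma}{\eps^2}\log^2 n}$ total.

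The main subtlety I expect is making sure that intermediate levels $i < t$ do not incur additional communication: at level $i$ the coordinator stores roughly $F_0(u)/2^i = 2^{t-i}\cdot\Theta(\gamma/\eps^2)$ survivors, which for small $i$ is much larger than $\gamma/\eps^2$. However, because the protocol only accepts an element from a server once (when that element first enters the current survivor set) and then drops it when the threshold is raised past its hash coordinate, the telescoping total of newly transmitted elements across all levels is dominated by the number of final survivors at level $t$, not by the transient populations at shallow levels. This is exactly the same accounting that yields the $\O{\frac{k}{\eps^2}\log^2 n}$ bound of \thmref{thm:non_robust_f0}, and I would invoke it verbatim after rescaling the threshold by $\gamma$.
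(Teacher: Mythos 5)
Your accounting is correct and follows essentially the same route as the paper: bound the number of sampled elements per subsampling level by $\Theta(\gamma/\eps^2)$, multiply by the $k$ servers and the $\O{\log n}$ repetitions needed for high probability, and absorb the $\O{k\log n}$ setup/broadcast cost. The one shaky step is your telescoping claim that the total number of transmitted elements across all levels is dominated by the final survivors at level $t$ — in fact each of the $\O{\log n}$ levels can contribute up to the threshold $\Theta(\gamma/\eps^2)$ new transmissions before the threshold is raised, so the correct count is (threshold)$\times$(levels) as in the paper's proof; this still lands on the claimed $\O{\frac{k\gamma}{\eps^2}\log^2 n}$ bound.
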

\begin{proof}
It takes $\O{k \log n}$ communications to send the hashing function to each site. In each subsample layer, we sample $\Theta (\frac{\gamma}{\eps^2})$ elements from the stream. Since we have $\O{\log n}$ such layers and $k$ servers, the communication cost is $\O{\frac{\gamma}{\eps^2} \cdot k \log n + k \log n}$ bits in total. 
To reach a high probability, we run $\O{\log n}$ repetitions, which gives an additional logarithmic factor.
\end{proof}
Combining \lemref{lem:f0_difference_est_err} and \lemref{lem:f0_difference_est_comm}, we have the following statement of our difference estimator. 
\begin{theorem}
\thmlab{thm:f0:de}
Let $n$ be the number of items in the universe, let $k$ be the number of servers, and let $\eps \in (0,1)$ be the accuracy parameter.
Then, there exists a distributed $(\gamma, \eps)$-continuous difference estimator for the distinct element problem that uses $\O{\frac{k \gamma}{\eps^2} \log^2 n}$ bits of communication. 
\end{theorem}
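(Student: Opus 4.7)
The plan is to combine \lemref{lem:f0_difference_est_err} and \lemref{lem:f0_difference_est_comm}, which together establish the two required properties of a $(\gamma,\eps)$-continuous difference estimator. Concretely, I would first describe the protocol explicitly: using public randomness, each server picks a hash function $h:[n]\to[n]$ and, for every incoming element $i$, computes the largest power of $2$ dividing $h(i)$, which determines the deepest subsampling layer at which $i$ survives. The coordinator runs the distinct elements protocol of \cite{CormodeMY11} from \thmref{thm:non_robust_f0} on the static prefix vector $u$ until it stabilizes at some layer $t$ with $\Theta(\gamma/\eps^2)$ survivors; it then freezes this layer and, for the subsequent evolving vector, only tracks elements that (i) survive to layer $t$ under $h$, and (ii) do not already appear among the $u$-survivors. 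The reported estimate is $2^t$ times the count of such new survivors.

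Next I would invoke \lemref{lem:f0_difference_est_err} for correctness: since $F_0(u)/2^t = \Theta(\gamma/\eps^2)$ and $F_0(u+v)-F_0(u)\le \gamma F_0(u)$, the expected number of new survivors is $\Theta(\gamma^2/\eps^2)$, giving a Chebyshev variance bound of $\eps^2 F_0(u)^2/C$ on the scaled estimator, which delivers the desired additive $\eps F_0(u)$ error with constant probability per instance. Standard median-boosting over $\O{\log n}$ independent instantiations (with independent hash functions) amplifies this to high probability and survives a union bound over all times in the stream. For the communication, I would invoke \lemref{lem:f0_difference_est_comm}: distributing each hash function costs $\O{k\log n}$, and across the $\O{\log n}$ subsampling layers and $k$ servers the total number of ``survivor notifications'' the coordinator receives is $\O{k\gamma\log n/\eps^2}$; multiplying by the $\O{\log n}$ parallel repetitions for boosting yields the claimed $\O{k\gamma\log^2 n/\eps^2}$ bound.

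Finally, I would argue the continuous aspect: because the subsampling layer $t$ is fixed once and for all when $u$ is finalized, every subsequent update to $v$ can be processed incrementally without re-computing any statistics, so the estimator is well-defined and valid at every intermediate time rather than only at the end of the block. Putting these pieces together gives \thmref{thm:f0:de}.

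I do not anticipate a major obstacle here, since both component lemmas are already proved; the main care is just verifying that the dynamic-stream implementation never needs to ``un-sample'' an element (which it does not, because $v$ is insertion-only and $t$ is frozen), and that the $\O{\log n}$ median-of-estimators boost carries through a union bound over the $\poly(n)$ stream times. The mildly delicate point is ensuring the survivors for $v$ are only those \emph{not} already seen in $u$, which requires the coordinator to remember the $\Theta(\gamma/\eps^2)$ identifiers of $u$-survivors at layer $t$—but this storage is precisely the $\O{\gamma\log n/\eps^2}$ cost already absorbed into the stated bound.
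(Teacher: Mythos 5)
Your proposal is correct and follows essentially the same route as the paper: the theorem is obtained exactly by combining \lemref{lem:f0_difference_est_err} (Chebyshev on the $\Theta(\gamma^2/\eps^2)$ new survivors at the frozen layer $t$, boosted over $\O{\log n}$ independent repetitions) with \lemref{lem:f0_difference_est_comm} (hash distribution plus $\Theta(\gamma/\eps^2)$ samples per layer across $\O{\log n}$ layers and $k$ servers, times the repetitions). Your added remarks on insertion-only streams keeping the frozen layer valid and on excluding $u$-survivors are consistent with the paper's construction and do not change the argument.
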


Given the non-robust distributed $F_0$ tracker and our difference estimator, we show the following theorem for adversarially robust $F_0$-estimation.
\begin{theorem}
Let $S$ be a distributed stream, let $n$ be the number of items in the universe, let $k$ be the number of servers, and let $\eps \in (0,1)$ be the accuracy parameter. 
Then, there exist an adversarially robust distributed monitoring algorithm that outputs a $(1+\eps)$-approximation to the distinct element problem in stream $S$. 
The algorithm uses $\O{\frac{k}{\eps^2} \log^3 n \log^3 \frac{1}{\eps}}$ bits of communication.
\end{theorem}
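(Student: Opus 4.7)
The plan is to instantiate the generic difference‑estimator framework of \frameref{frame:diff} with the two ingredients we have just developed for $F_0$: the non‑robust distributed tracker of \thmref{thm:non_robust_f0}, which will serve as the $\eps$-double detector, and the $(\gamma,\eps)$-continuous difference estimator of \thmref{thm:f0:de}, which, since a continuous difference estimator can always be used in the place of a static one, supplies both the continuous and the static estimator required by the framework. The flip‑number precondition $\lambda_{1,m}(F_0)=\O{\log n}$ is immediate from the fact that $F_0$ is monotone nondecreasing and bounded by $n=\poly(m)$ on an insertion‑only stream, so it can double at most $\O{\log n}$ times.

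First I would read off the parameters in the form required by \frameref{frame:diff}. The tracker of \thmref{thm:non_robust_f0} gives an $\eps$-double detector with cost $\O{\frac{k}{\eps^2}\log^2 n+\frac{1}{\eps^2}\log n\log\frac{1}{\eps}}$, so we may take $H_1(n,k,\eps)=k\log^2 n$ and $H_2(n,k,\eps)=\log n\log\frac{1}{\eps}$. The difference estimator of \thmref{thm:f0:de} yields cost $\O{\frac{k\gamma}{\eps^2}\log^2 n}$ for any $\gamma$, matching both the $(\eps,\eps)$-continuous template $\tfrac{1}{\eps}G_1+G_2$ (with $\gamma=\eps$) and the $(\gamma,\eps)$-static template $\tfrac{\gamma}{\eps^2}G_1+G_2$, with $G_1(n,k,\eps)=k\log^2 n$ and $G_2=0$.

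Next I would substitute these into the three‑term bound from \frameref{frame:diff} with $\eps'=\Theta(\eps/\log\tfrac{1}{\eps})$. The double‑detector contribution becomes $\O{\tfrac{1}{\eps^2}\log n\cdot k\log^2 n+\log n\cdot\log n\log\tfrac{1}{\eps}}=\O{\tfrac{k}{\eps^2}\log^3 n}$. The difference‑estimator contribution is $\O{\tfrac{1}{\eps^2}\log n\log^3\tfrac{1}{\eps}\cdot k\log^2 n}=\O{\tfrac{k}{\eps^2}\log^3 n\log^3\tfrac{1}{\eps}}$ because the $G_2$ term vanishes. The final $\O{\tfrac{k}{\eps}\log^2 n\log\tfrac{1}{\eps}}$ term from broadcasting freeze times is absorbed. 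Summing yields the claimed $\O{\tfrac{k}{\eps^2}\log^3 n\log^3\tfrac{1}{\eps}}$ bound, and adversarial robustness follows automatically from the framework because each freshly initiated double detector and difference estimator uses independent randomness that is hidden from the adversary until its output is committed.

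The step I expect to require a small amount of care is verifying that the difference estimator of \thmref{thm:f0:de}, though called continuous, really slots into both the continuous and static roles of \frameref{frame:diff} without breaking the robustness argument: specifically I would confirm that when the framework freezes a block, a static call to the $F_0$ difference estimator can be launched with its own fresh hash functions using only $\O{k\log n}$ additional seed communication, and that conditioning on the coordinator's freeze signals does not correlate the surviving‑set indicator used in \lemref{lem:f0_difference_est_err} with the adversary's future choices. Everything else is arithmetic.
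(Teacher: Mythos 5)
Your proposal is correct and follows essentially the same route as the paper: instantiate \frameref{frame:diff} with the non-robust tracker of \thmref{thm:non_robust_f0} as the double detector and the $(\gamma,\eps)$-continuous difference estimator of \thmref{thm:f0:de} serving in both the continuous and static roles (as the paper itself notes, continuous estimators generalize static ones), then substitute the costs into the framework's bound. Your more detailed accounting of $G_1,G_2,H_1,H_2$ and the remark about fresh hash seeds on each freeze are consistent with what the paper does implicitly.
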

\begin{proof}
By \thmref{thm:non_robust_f0}, we have a $\eps$-double detector that communicates $\O{k \log^2 n + \frac{1}{\eps^2} \log n \log \frac{1}{\eps}}$ bits. 
By \thmref{thm:f0:de}, we have a $(\gamma, \eps)$-continuous difference estimator that communicates $\O{\frac{k \gamma}{\eps^2} \log^3 n}$ bits. 
Therefore, by \frameref{frame:diff}, there is an adversarially robust algorithm that uses $\O{\frac{k}{\eps^2} \log^3 n \log^3 \frac{1}{\eps}}$ bits of communication. 
\end{proof}

\subsection{Adversarially Robust \texorpdfstring{$F_2$}{F2} Estimation}
\seclab{sec:robust:f2}
In this section, we provide an adversarially robust algorithm for $F_2$ estimation. 
We note that the result of $F_2$ estimation is a direct corollary from the $F_p$ estimation algorithm in \secref{sec:robust:fp:large}.
Here, we present a simpler difference estimator using the inner product of $u$ and $v$.
Observe that we can decompose the difference as the sum of the inner product $\langle u,v \rangle$ and the $F_2$-moment of $v$:
\[\|u+v\|_2^2 - \|u\|_2^2 = 2\langle u,v \rangle + \|v\|_2^2.\]
Then, to estimate the inner product, we simply obtain $L_2$ samples from $u$ and compute $u_i \cdot v_i \cdot \frac{\|u\|_2^2}{u_i^2}$. 
This gives an unbiased estimate with variance bounded by $\gamma \cdot \|u\|_2^4$, and therefore, averaging $\O{\frac{\gamma}{\eps^2}}$ such estimators would give an estimate with additive error $\eps \cdot \|u\|_2^2$. 
It may seem unclear how to obtain the estimation of $\|v\|_2^2$ due to the lack of a continuous monitoring algorithm for $F_2$-estimation with optimal $\frac{1}{\eps}$ factor in the communication cost. 
However, recall that in our difference estimator framework, we only need a continuous difference estimator for an evolving $v$ at granularity $0$, where a constant-multiplicative estimate of $\|v\|_2^2$ is sufficient. 
For the static difference estimator at a higher granularity, we can simply run the one-shot $F_2$-estimation algorithm in \cite{EsfandiariKMWZ24} for $v$ with a higher accuracy.

Now, we state another result of distributed counting algorithm that gives a constant-multiplicative estimation of the count, which is frequently used in our difference estimator. 
We remark that this result is a corollary of the estimator in \lemref{lem:counting}.

\begin{theorem}[\cite{HuangYZ12}] 
\thmlab{thm:counting}
Let $n$ be the total count, let $k$ be the number of servers.
There is a procedure \textsc{Counting} such that, at any given time $t$, it gives an unbiased estimate of the current count, and its variance is bounded by $\O{\left(\frac{n}{100}\right)^2}$. 
Moreover, the estimate is a $(1+\frac{1}{100})$-approximation to the current count, with probability at least $1-\frac{1}{\poly(n)}$. 
The total communication of the algorithm is $\O{k \cdot \log^2 n}$ bits. 
\end{theorem}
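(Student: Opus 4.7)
The plan is to invoke the distributed sampled–counting protocol of \cite{HuangYZ12}, instantiated with a carefully chosen constant-accuracy sampling rate, and verify that a \emph{single} run of that protocol already supplies both the unbiased variance bound and the high-probability $(1+1/100)$-approximation. The key observation is that the per-server estimator in \lemref{lem:counting} is already unbiased with variance $1/p^2$; I only need to pick $p$ large enough that (i) $k/p^2$ is at most the prescribed total variance and (ii) a sub-exponential Bernstein bound for the $k$ independent deviations pushes the failure probability below $1/\poly(n)$. Both requirements are met by the single choice $p = C\sqrt{k\log n}/n$ for a sufficiently large constant $C$.

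Concretely, the first step applies \lemref{lem:counting} at each server and sets $\widehat n=\sum_{j\in[k]}\widehat{n(j)}$. Unbiasedness is linear, and independence across servers gives $\Var{\widehat n}\le k/p^2 = n^2/(C^2\log n)$, which is comfortably below $(n/100)^2$ for $C$ large. Since $n$ is not known up front, I employ the standard doubling trick: the coordinator tracks its running estimate, and whenever it doubles it broadcasts a halved value of $p$ to all $k$ servers and restarts the sampling; this divides the execution into $O(\log n)$ phases.

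The second step upgrades this to the high-probability guarantee. Each deviation $\widehat{n(j)}-n(j)=1/p-T_j$, where $T_j$ is a geometric waiting time with parameter $p$, hence sub-exponential with scale $1/p$; the $T_j$ are mutually independent. Bernstein's inequality for independent centered sub-exponential variables gives
\[\PPr{|\widehat n-n|\ge n/100}\le 2\exp\!\left(-\Omega\!\left(\min\!\left(\tfrac{(np/100)^2}{k},\ \tfrac{np}{100}\right)\right)\right)=\frac{1}{\poly(n)},\]
because with $p=C\sqrt{k\log n}/n$ both quantities inside the min are $\Omega(\log n)$ for $C$ large. A union bound over the $\poly(n)$ possible query times and $O(\log n)$ phases preserves this.

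Finally, the communication is bounded phase by phase. Per phase the coordinator needs $O(k\log n)$ bits to broadcast the new $p$, and the expected number of sampled messages sent across all servers is $np=O(\sqrt{k\log n})$, each of $O(\log n)$ bits; a Chernoff bound converts this expectation into a high-probability cap. Summing over $O(\log n)$ doubling phases yields $O(k\log^2 n+\sqrt{k}\,\polylog(n))=O(k\log^2 n)$ bits as claimed. The main subtlety I anticipate is the tension between \emph{unbiasedness} and \emph{high-probability accuracy}: a naive median-of-means would sacrifice unbiasedness, so I must instead oversample slightly (absorbing the $\sqrt{\log n}$ factor demanded by Bernstein into $p$) so that a single aggregated estimator meets both guarantees simultaneously.
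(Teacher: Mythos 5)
Your route is the same as the paper's: this theorem is a black-box citation of the sampled-counter protocol of \cite{HuangYZ12}, whose per-server estimator and its unbiasedness/variance are exactly what the paper records in \lemref{lem:counting}, and your instantiation (sum the $k$ unbiased local estimators, pick a constant-accuracy sampling rate, use the doubling trick to handle the unknown count, and oversample slightly so that a \emph{single} estimator is simultaneously unbiased and concentrated) is the intended reconstruction. The tension you identify between unbiasedness and high-probability accuracy, and your resolution of it by boosting $p$ rather than taking a median, is the right call.

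One step does not go through as written: the claim that \emph{both} quantities inside the Bernstein minimum are $\Omega(\log n)$ under $p = C\sqrt{k\log n}/n$. The quadratic term $(np/100)^2/k = C^2\log n/10^4$ is fine, but the linear term is $np/100 = C\sqrt{k\log n}/100$, which is $\Omega(\log n)$ only when $k = \Omega(\log n)$. For $k = o(\log n)$ (e.g., constantly many servers) you only get failure probability $\exp(-\Omega(\sqrt{k\log n}))$, which is not $1/\poly(n)$. The fix is to additionally enforce $p \ge C\log n / n$, i.e., take $p = C(\sqrt{k\log n} + \log n)/n$; both Bernstein regimes then give $\exp(-\Omega(\log n))$. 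This costs an extra additive $\polylog(n)$ in the message count (the expected number of sampled messages per phase becomes $O(\sqrt{k\log n} + \log n)$), so the total communication is $O(k\log^2 n + \polylog(n))$ rather than literally $O(k\log^2 n)$; this is harmless for every use of \thmref{thm:counting} in the paper, but you should state it. A second, smaller point: unbiasedness of $\widehat{n(j)} = \widetilde{n(j)} - 1 + 1/p$ relies on the backward waiting time to the last sampled update being geometric with parameter $p$, which requires handling the boundary at the start of each doubling phase (e.g., by sending the first $\Theta(1/p)$ updates deterministically, as the paper itself notes in the proof of \lemref{lem:lp:sampler:indep}); restarting the sampler at a phase boundary without this correction would introduce a small bias.
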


We next state the following result of perfect $L_2$ sampler, which is a direct corollary of the perfect $L_p$ sampler, in \thmref{thm:lp:sampler}. 

\begin{theorem}[Perfect $L_2$ sampler, c.f. \thmref{thm:lp:sampler}]
\thmlab{thm:lp:sampler:prev}
Let vector $f \in \mathbb{R}^{n}$ be defined by a distributed data stream $S$, and let $k$ denote the number of servers. 
Then, there is an algorithm that gives a perfect $L_2$ sampler of $f$, using $k\cdot \polylog(n)$ bits of communication in expectation.
\end{theorem}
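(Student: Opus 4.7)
The plan is to derive this statement as an immediate specialization of \thmref{thm:lp:sampler} to the case $p=2$. Since \thmref{thm:lp:sampler} already guarantees a continuous perfect $L_p$ sampler on a distributed stream for any $p \ge 1$ with communication cost $(k+k^{p-1})\cdot \polylog(n)$ in expectation, substituting $p=2$ yields
\[
(k + k^{2-1})\cdot \polylog(n) = 2k\cdot \polylog(n) = k\cdot \polylog(n),
\]
absorbing the constant factor into the $\polylog(n)$. The sampling distribution guarantee from \defref{def:lp} with $p=2$ is inherited verbatim, so no additional argument is needed on the correctness side.

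Concretely, I would: (i) instantiate \algref{alg:lp:sampler:cent} with the exponent $p$ set to $2$ throughout (including in the scaling $g_i = f_i/\be_i^{1/2}$, in the truncated Taylor expansion used to estimate $(H^{(r)}_{i,l}(j))^{1/3}$, and in the geometric mean estimator of \lemref{lem:geo:exp:var}); (ii) invoke \thmref{thm:correct:lp:sampler} to conclude that each index $i$ is returned with probability $f_i^2/\|f\|_2^2 \pm 1/\poly(n)$; and (iii) invoke \lemref{lem:lp:sampler:comm} with $p=2$ to obtain the $k\cdot\polylog(n)$ communication bound.

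There is essentially no obstacle here — the statement is a direct corollary. The only minor point worth noting is that for $p=2$ the variance bounds in \lemref{lem:est:pl} and the anti-concentration analysis of \lemref{lem:anti} are slightly easier to verify, since the inverse-exponential moments are already finite without needing the geometric-mean trick to tame divergence; nonetheless, the generic construction applies unchanged and gives the claimed bound.
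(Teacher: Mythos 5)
Your proposal is correct and matches the paper exactly: the paper presents \thmref{thm:lp:sampler:prev} as a direct corollary of \thmref{thm:lp:sampler} with $p=2$, where $(k+k^{2-1})\cdot\polylog(n) = k\cdot\polylog(n)$, with no additional argument given. Your instantiation of the general construction and invocation of \thmref{thm:correct:lp:sampler} and \lemref{lem:lp:sampler:comm} is precisely the intended derivation.
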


Next, we present a subrountine that produces unbiased $F_2$ estimates, which is essential to our difference estimator.
We defer its analysis to \secref{sec:robust:fp:large}.

\begin{theorem}[Unbiased $F_2$ estimation, c.f. \thmref{thm:lp:est} and \thmref{thm:fp:est:stat}]
\thmlab{thm:lp:est:prev}
Let vector $f \in \mathbb{R}^{n}$ be defined by a distributed data stream $S$, and let $k$ denote the number of servers. 
Then, there is an algorithm that gives an estimate of $\widehat{F}$ to $\|f\|_2^2$, which satisfies $\Ex{\widehat{F}} = (1+\O{\eps})\cdot \|f\|_2^2$ and $\Ex{\left(\widehat{F}\right)^2} = \O{\|f\|_2^{4} }$. 
The algorithm uses $(k+\frac{\sqrt{k}}{\eps})\cdot \polylog(n)$ bits of communication. 
Moreover, if $f$ is a static vector, the algorithm gives an estimate of $\widehat{F}$ to $\|f\|_2^2$, which satisfies $\Ex{\widehat{F}} = \|f\|_2^2$ and $\Ex{\left(\widehat{F}\right)^2} = \O{\|f\|_2^{4} }$. This algorithm uses $k \polylog(n)$ bits of communication.
\end{theorem}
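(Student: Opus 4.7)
The plan is to reuse the perfect $L_p$-sampler architecture of \algref{lp:sampler:cent} without the rejection step, combining PL recovery with a geometric mean estimator. For $R = \polylog n$ and $r \in [3R]$, draw public exponentials $\be^{(r)}_i$ and form the scaled vectors $g^{(r)}_i := f_i/(\be^{(r)}_i)^{1/2}$. By \corref{exp_inverse}, $X^{(r)} := \max_{i\in[n]}(g^{(r)}_i)^2$ is distributed as $\|f\|_2^2/\be^{(r)}$ for an independent standard exponential, which has unbounded expectation, so a single $X^{(r)}$ cannot serve as the estimator. Taking the geometric mean
\begin{align*}
    Y^{(r)} := \sqrt[3]{X^{(3r-2)} X^{(3r-1)} X^{(3r)}}/C_{\ref{lem:geo:exp:var}}
\end{align*}
remedies this: \lemref{geo:exp:var} yields $\Ex{Y^{(r)}} = \|f\|_2^2$ and $\Ex{(Y^{(r)})^2} = \O{\|f\|_2^4}$, and averaging $R$ independent instances preserves both guarantees while driving the failure probability to $1/\poly(n)$.

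In the static case we obtain each $X^{(r)}$ exactly by first invoking \thmref{pl_recover} to identify a $\polylog n$-sized set $\SetPL^{(r)}$ containing the argmax, then asking each server for the true value of every candidate. PL recovery costs $(k+k^{p-1})\polylog n = k\polylog n$ bits at $p = 2$ and explicit retrieval adds another $k\polylog n$, so across all $R = \polylog n$ instances the total communication remains $k\polylog n$ and the resulting estimator is exactly unbiased.

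In the continuous case the exact values are unavailable, so we borrow the double-scaling trick of \secref{sampler:construct}: a per-server exponential $\be^{(r)}(j)$ is introduced so that $h^{(r)}_i(j) := g^{(r)}_i(j)/\be^{(r)}(j)$ has $\max_j h^{(r)}_i(j) = g^{(r)}_i/\be'$ for an independent $\be'$, eliminating the dependence of the max's distribution on how $g^{(r)}_i$ is split across servers. The distributed count tracker of \cite{HuangYZ12} then maintains approximations to each $h^{(r)}_i(j)$ for $i \in \SetPL^{(r)}$, locating the argmax server and yielding $\O{\log n}$ independent approximations of $X^{(r)}$. These are fed into the truncated Taylor estimator of \algref{taylor}, which by \lemref{taylor:fi} produces a near-unbiased estimate of $(X^{(r)})^{1/3}$ with bounded second moment; multiplying three such independent estimates yields $\widehat{Y^{(r)}}$ with $\Ex{\widehat{Y^{(r)}}} = (1 + \O{\eps}) \|f\|_2^2$ and second moment $\O{\|f\|_2^4 \polylog n}$, where the $\O{\eps}$ bias absorbs the approximation error incurred by the count tracker operating at $(1+\eps)$ accuracy.

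The main obstacle is establishing the $(k + \sqrt{k}/\eps)\polylog n$ communication bound. A naive per-server tracker at $(1+\eps)$ accuracy would cost $k/\eps$ per candidate coordinate, which is too expensive. Following the analysis of \algref{lp:sampler:cent}, we exploit the anti-concentration property (\lemref{anti}): the top two values among $\{h^{(r)}_i(j)\}_{j\in[k]}$ differ by a factor of $1+\alpha$ only with probability $\O{\alpha}$, so the expected cost of refining accuracy to identify the argmax server amortizes to $k\polylog n$ bits per candidate, and the $(1+\eps)$-accurate estimate of the argmax's value across all $k$ servers invokes the $\sqrt{k}/\eps$ count-tracking bound of \cite{HuangYZ12}. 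Summing across $|\SetPL^{(r)}| = \polylog n$ candidates and $R = \polylog n$ instances, and absorbing the $\O{\log n}$ rounds triggered by doublings of $\|g^{(r)}\|_2^2$ (by \lemref{round:lp:sampler}), yields the claimed bound.
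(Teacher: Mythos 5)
Your static half is exactly the paper's construction (\thmref{thm:fp:est:stat} via \algref{alg:lp:unbiased:static}): PL recovery, exact retrieval of the argmax value, and a three-fold geometric mean over independent coordinate-level exponentials, with $k\polylog(n)$ communication at $p=2$. That part is correct.

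The continuous half has a genuine gap. After the double scaling, the quantity the count trackers can approximate is $\max_j h^{(r)}_{i^*}(j)$, which by max-stability (\corref{cor:exp_inverse}) is distributed as $g^{(r)}_{i^*}/\be'_r$ for a fresh server-level exponential $\be'_r$ --- it is \emph{not} an approximation of $X^{(r)}=(g^{(r)}_{i^*})^2$, and your construction never removes the $1/\be'_r$ factor. Tracing it through, your $\widehat{Y^{(r)}}$ is proportional to $\|f\|_2^2\cdot(E_{3r-2}E_{3r-1}E_{3r})^{-1/3}\cdot(\be'_{3r-2}\be'_{3r-1}\be'_{3r})^{-2/3}$, since each $(X^{(r)})^{1/3}$-estimate carries a factor $(\be'_r)^{-2/3}$. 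Already the expectation is off by the constant $\Gamma(1/3)^3$ rather than $(1+\O{\eps})$, and more fatally $\Ex{(\be'_r)^{-4/3}}=\int_0^\infty x^{-4/3}e^{-x}\,\d x$ diverges at the origin, so the second moment of $\widehat{Y^{(r)}}$ is infinite, not $\O{\|f\|_2^4\polylog(n)}$. The perfect sampler avoids precisely this by nesting a \emph{second} three-fold geometric mean over three independent server-level scalings per coordinate-level instance (Line~\ref{line:geo:mean} of \algref{alg:lp:sampler:cent} combines $H^{(3r-2)},H^{(3r-1)},H^{(3r)}$ to cancel the server-level exponential before the coordinate-level one is handled); if you insist on double scaling you need that nesting here too. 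The paper's actual proof of the continuous statement (\thmref{thm:lp:est}, via \algref{alg:lp:est:cent} and \algref{alg:lp:unbiased:cts:frac}) sidesteps the issue entirely: it does not double-scale, but runs \textsc{Counting} on the global coordinate $g_i$ for each $i\in\SetPL$, accepts that the identified index may only be a $(1\pm\O{\eps})$-approximate argmax --- which is exactly where the permitted $(1+\O{\eps})$ multiplicative bias and the $\frac{\sqrt{k}}{\eps}\polylog(n)$ term in the communication come from --- and feeds $\O{\log n}$ fresh \textsc{Counting} estimates of $g_i$ into the Taylor estimator for $g_i^{p/3}$. The distribution-free argmax identification and anti-concentration amortization you invoke are only needed for the sampler's statistical test, where the failure probability must not depend on the identity of the maximizer; they are unnecessary for this theorem, and as written they leave behind the uncancelled inverse exponential that breaks both moment bounds.
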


The following statement shows the $F_2$-estimation algorithm with constant-factor approximation.
\begin{theorem}[Constant-multiplicative $F_2$-estimation, c.f. \thmref{thm:fp:cts:const} and \thmref{thm:fp:stat}]
\thmlab{thm:lp:cts:const:prev}
Let vector $f \in \mathbb{R}^{n}$ be defined by a distributed data stream $S$, and let $k$ denote the number of servers. 
There is an algorithm that gives a $1.01$-approximation to the $F_2$-moment of $f$ at all times in the stream using $k\cdot \polylog(n)$ bits of communication.
In addition, if $f$ is static, there is an algorithm that gives a $(1+\eps)$-approximation to the $F_2$-moment of $f$ at all times in the stream using $\frac{k^{p-1}}{\eps^2}\cdot \polylog(n)$ bits of communication.
\end{theorem}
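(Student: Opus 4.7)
The plan is to obtain both statements as the $p=2$ specializations of the forthcoming $F_p$-estimation theorems \thmref{thm:fp:cts:const} and \thmref{thm:fp:stat}, by recycling exactly the distributed monitoring pipeline built for the perfect $L_p$ sampler in \secref{sec:sampler:construct}.

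For the continuous $1.01$-approximation, by \corref{cor:exp_inverse} the rescaling $g_i=f_i/\be_i^{1/2}$ satisfies $\max_i g_i^2 = \|f\|_2^2/\be^*$ for an independent standard exponential $\be^*$. Since $1/\be^*$ has unbounded mean, I would take the geometric mean of three independent such rescalings; \lemref{lem:geo:exp:var} then yields an unbiased estimator $Y$ of $\|f\|_2^2$ with second moment $\Ex{Y^2}=\O{\|f\|_2^4}$. To realize this in the distributed monitoring model I would run the same pipeline as \algref{alg:lp:sampler:cent}: the \textsc{PLRecovery} routine from \thmref{thm:pl_recover} maintains a candidate set $\SetPL$ of size $\polylog(n)$ that provably contains $\argmax_i g_i$; the double exponential scaling $h_i(j)=g_i(j)/\be(j)$ combined with per-server \textsc{Counting} and the anti-concentration amortization of \lemref{lem:identify:max} delivers a distribution-free estimate of $\max_j h_i(j)$; and the truncated Taylor estimator of \lemref{lem:taylor:fi} converts these approximations into unbiased estimates of $h_i(j)^{2/3}$ suitable for assembling the geometric mean. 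Averaging $\polylog(n)$ independent copies and applying Chebyshev's inequality followed by a union bound over all times yields the $1.01$-approximation at all times with high probability, while \lemref{lem:lp:sampler:comm} bounds the per-instance communication by $k\cdot\polylog(n)$.

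For the static $(1+\eps)$-approximation, the same unbiased estimator with variance $\O{\|f\|_2^4}$ can be averaged across $\Theta(1/\eps^2)$ independent instances; Chebyshev's inequality then drives the relative error below $\eps$ at all times, giving the claimed $(k/\eps^2)\cdot\polylog(n) = (k^{p-1}/\eps^2)\cdot\polylog(n)$ bound for $p=2$.

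The main conceptual obstacle, inherited from \secref{sec:sampler:construct}, is ensuring that the estimator of $\max_j h_i(j)$ is independent of how $f_i$ is distributed across the $k$ servers; otherwise the bounded-variance guarantee of the geometric mean would be polluted by cross-server covariances and the averaging step would fail to concentrate. The two-sided exponential embedding $h_i(j) = g_i(j)/\be(j)$, together with a separate \textsc{Counting} run at each server and the amortization of accuracy changes via anti-concentration (\lemref{lem:anti}), is precisely what resolves this; once in place, the remaining steps are standard averaging and concentration.
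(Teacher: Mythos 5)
The paper does not actually prove this statement: both halves are imported from prior work, with the continuous $1.01$-approximation tracker taken from \cite{WoodruffZ12} (\thmref{thm:fp:cts:const}) and the static $(1+\eps)$-approximation from \cite{EsfandiariKMWZ24} (\thmref{thm:fp:stat}), each specialized to $p=2$. Your proposal instead tries to re-derive both from the paper's own sampler/estimator pipeline, which is a genuinely different route --- and for the continuous half it is circular. \algref{alg:lp:sampler:cent} explicitly ``maintains a $1.01$-approximation to $\|G\|_p^p$ (c.f.\ \cite{WoodruffZ12})'' to define its rounds, and the communication analysis you lean on (\lemref{lem:round:lp:sampler}, \lemref{lem:num:set:pl}, hence \lemref{lem:lp:sampler:comm} and the cost bound for the continuous unbiased estimator) is keyed to exactly that round structure. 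So the machinery you propose to use to \emph{construct} a constant-factor continuous $F_2$ tracker already \emph{assumes} a constant-factor continuous $F_p$ tracker for the scaled vector $G$; the quantity $U=\sum_{i,j}(g_i(j))^p$ that \algref{alg:est:server} tracks is additive across servers and hence reducible to count-tracking, but $\|G\|_p^p$ is not (they differ by up to $k^{p-1}$), so the dependency cannot be waved away. To make your argument non-circular you would either have to cite \cite{WoodruffZ12} for the tracker of $\|G\|_p^p$ --- at which point you may as well cite it for the theorem itself, as the paper does --- or rebuild the round structure from a tracker of an additive surrogate, which the paper does not do.

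The static half of your proposal is essentially sound: the geometric-mean estimator of \thmref{thm:fp:est:stat} is unbiased with second moment $\O{\|f\|_2^4}$, and averaging $\Theta(1/\eps^2)$ independent copies (with a median of $\O{\log n}$ groups for high probability) gives the claimed $(1+\eps)$-approximation within $\frac{k}{\eps^2}\cdot\polylog(n)$ communication. Two smaller points to fix if you pursue this: the continuous estimator of \thmref{thm:lp:est} is not unbiased but has a $(1\pm\O{\eps})$ multiplicative bias (harmless for a $1.01$-approximation, but worth stating), and even the static $\argmax$ extraction runs through \thmref{thm:pl_recover}, so you should check that the static instantiation does not reintroduce the tracker dependency.
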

We describe our difference estimator for robust $F_2$ moment estimation in full in \algref{alg:norm:diff:l2}. 
\begin{algorithm}[!htb]
\caption{Difference estimator for $F_2$-moment}
\alglab{alg:norm:diff:l2}
\begin{algorithmic}[1]
\Require{Accuracy $\eps\in(0,1)$, parameter $\gamma>0$, fixed vector $u$, subsequent vector $v$ that arrives in a distributed data stream}
\Ensure{Difference estimator for $\|u+v\|_2^2 - \|u\|_2^2$}
\State{$S\gets\O{\frac{\gamma}{\eps^2}}$}
\State{Let $\widehat{U}$ be an unbiased estimate of $\|u\|_2^2$ with variance $\O{\|u\|_2^{4}}$} \Comment{See \thmref{thm:lp:est:prev}}
\State{Let $\widehat{V}$ be an $\O{\frac{\eps}{\gamma}}$-approximation of $\|v\|_2^2$} \Comment{See \thmref{thm:lp:cts:const:prev}}
\For{$s\in[S]$}
\State{$L_2$ sample a coordinate $w_s$ from $u$} \Comment{See \thmref{thm:lp:sampler:prev}}
\State{$\widehat{v_{w_s}} \gets \textsc{Counting}(v_{w_s})$} \Comment{See \thmref{thm:counting}}
\State{$W_s \gets \langle \widehat{v_{w_s}} ,u_{w_s}\rangle\cdot\frac{(\widehat{U})^2}{u_{w_s}^2}$}
\EndFor
\State{\Return $2\cdot\frac{1}{S}\sum_{s\in[S]}W_s+\widehat{V}$}
\end{algorithmic}
\end{algorithm}

The following lemma shows that our estimator $W_s$ in \algref{alg:norm:diff:l2} for the inner product $\langle v,u \rangle$ is unbiased, up to a small additive term. 
\begin{lemma}
\lemlab{lem:diff:two:ex}
$\Ex{W_s}=\langle v,u\rangle+\frac{1}{\poly(n)}$
\end{lemma}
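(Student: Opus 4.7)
The plan is to compute $\Ex{W_s}$ by conditioning on the sampled coordinate $w_s$ and exploiting the mutual independence of the three random ingredients that enter $W_s$: the perfect $L_2$-sample $w_s$ drawn from $u$ via \thmref{thm:lp:sampler:prev}, the unbiased local-count estimate $\widehat{v_{w_s}}$ produced by \textsc{Counting} (\thmref{thm:counting}), and the unbiased squared-norm estimate $\widehat{U}$ obtained from \thmref{thm:lp:est:prev}. The key structural fact is that $u$ is a static vector, so $u_{w_s}$ is exact (queried directly from the servers) and is a deterministic function of $w_s$; all remaining randomness is drawn from independent instances of the three subroutines.

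First I would fix $i \in [n]$ and compute $\Ex{W_s \mid w_s = i}$. Using the independence of $\widehat{v_i}$ and $\widehat{U}$ from each other and from the sampling step, the conditional expectation factors as $u_i/u_i^2$ times $\Ex{\widehat{v_i}}$ times the expectation of the norm-estimation factor in $W_s$. Substituting $\Ex{\widehat{v_i}} = v_i$ (by the unbiasedness of \textsc{Counting} in \thmref{thm:counting}) and using that the norm factor is in expectation $\|u\|_2^2$ (by the unbiased $F_2$ estimator of \thmref{thm:lp:est:prev}), the standard inverse-probability rescaling weight for $L_2$ sampling emerges:
\[\Ex{W_s \mid w_s = i} \;=\; \frac{v_i \cdot \|u\|_2^2}{u_i}.\]
Next I would marginalize over $w_s$ using the perfect sampler guarantee $\PPr{w_s = i} = u_i^2/\|u\|_2^2 \pm 1/\poly(n)$ from \thmref{thm:correct:lp:sampler}. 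The leading terms telescope:
\[\sum_{i \in [n]} \frac{u_i^2}{\|u\|_2^2} \cdot \frac{v_i \cdot \|u\|_2^2}{u_i} \;=\; \sum_{i \in [n]} u_i v_i \;=\; \langle v, u \rangle.\]

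The only delicate step, and hence the main obstacle, is controlling the contribution of the $1/\poly(n)$ distributional slack of the perfect $L_2$ sampler, since this additive error gets multiplied by the inverse-probability weight $\|u\|_2^2/u_i$ and summed over $n$ coordinates before entering the final bound. Under the standing assumption that coordinate magnitudes and the stream length are all bounded by $\poly(n)$ (and nonzero entries are at least unit), each such weight is at most $\poly(n)$, so the cumulative perturbation remains $1/\poly(n)$ after picking the polynomial exponent in the perfect sampler's guarantee — which can be made an arbitrarily large constant by \thmref{thm:lp:sampler} — sufficiently large relative to the other polynomial factors. This book-keeping is routine and mirrors arguments already used for perfect samplers elsewhere in the paper, so once it is dispatched the claim $\Ex{W_s} = \langle v, u \rangle + 1/\poly(n)$ follows.
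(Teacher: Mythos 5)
Your proposal is correct and follows essentially the same route as the paper's proof: condition on the sampled index, use unbiasedness of \textsc{Counting} and of the norm estimator, and telescope the inverse-probability weight against the perfect $L_2$-sampling probability $u_i^2/\|u\|_2^2 \pm 1/\poly(n)$. The only difference is that you spell out the bookkeeping for the $1/\poly(n)$ sampling slack being amplified by the $\|u\|_2^2/u_i$ weights, which the paper leaves implicit; that is a harmless (indeed slightly more careful) elaboration of the same argument.
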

\begin{proof}
By the correctness of the $L_2$ sampler, we have that 
\[\PPr{w_s=i}=\frac{u_i^2}{\|u_i\|_2^2}+\frac{1}{\poly(n)}.\]
Therefore, since the \textsc{Counting} algorithm gives an unbiased estimation to $v_{w_s}$, we have
\[\Ex{W_s}=\sum_{i\in[n]} u_iv_i\cdot\frac{\Ex{\widehat{U}^2}}{u_i^2}\cdot\left(\frac{u_i^2}{\|u\|_2^2}+\frac{1}{\poly(n)}\right).\]
Since $\Ex{\widehat{U}^2}=\|u\|_2^2$, then 
\[\Ex{W_s}=\sum_{i\in[n]}u_iv_i+\frac{1}{\poly(n)}=\langle v,u\rangle+\frac{1}{\poly(n)},\]
as desired. 
\end{proof}
We next upper bound the variance of our estimator $W_s$.
\begin{lemma}
\lemlab{lem:diff:two:var}
$\Ex{W_s^2}\le\O{\gamma\cdot\|u\|_2^4}+\frac{1}{\poly(n)}$
\end{lemma}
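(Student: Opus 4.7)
The plan is to expand $W_s^2 = \widehat{v_{w_s}}^2\cdot\frac{(\widehat U)^4}{u_{w_s}^{2}}$ and then separate the expectation into three essentially independent factors: the randomness of the perfect $L_2$ sample $w_s$, the randomness of the counting estimator $\widehat{v_{w_s}}$, and the randomness of the unbiased $F_2$ estimator $(\widehat U)^2$. Crucially, $(\widehat U)^2$ is produced by the static $F_2$ estimator of \thmref{thm:lp:est:prev} applied to the fixed vector $u$ and is drawn with independent randomness from both the $L_2$ sampler and the counting sub-routines running on $v$, so we may pull it out: $\Ex{W_s^2} = \Ex{(\widehat U)^4}\cdot \Ex{\widehat{v_{w_s}}^2/u_{w_s}^{2}}$.

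For the first factor, \thmref{thm:lp:est:prev} gives $\Ex{(\widehat U)^2}=\|u\|_2^2$ and $\Ex{((\widehat U)^2)^2}=\O{\|u\|_2^4}$, so immediately $\Ex{(\widehat U)^4}\le\O{\|u\|_2^4}$. For the second factor, condition on $w_s=i$: the counting algorithm of \thmref{thm:counting} gives $\Ex{\widehat{v_i}}=v_i$ and $\Var{\widehat{v_i}}=\O{v_i^2}$, so $\Ex{\widehat{v_i}^2}\le\O{v_i^2}$. Using the perfect $L_2$ sampling distribution $\Pr[w_s=i]=\frac{u_i^2}{\|u\|_2^2}\pm\frac{1}{\poly(n)}$,
\[
\Ex{\widehat{v_{w_s}}^2/u_{w_s}^{2}}
=\sum_{i}\left(\frac{u_i^2}{\|u\|_2^2}\pm\frac{1}{\poly(n)}\right)\cdot\frac{\Ex{\widehat{v_i}^2}}{u_i^2}
\le\O{1}\cdot\frac{\|v\|_2^2}{\|u\|_2^2}+\frac{1}{\poly(n)}.
\]
Under the difference-estimator promise $\|v\|_2^2\le\gamma\|u\|_2^2$, this is $\O{\gamma}+\frac{1}{\poly(n)}$. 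Multiplying the two factors yields $\Ex{W_s^2}\le\O{\gamma\|u\|_2^4}+\frac{1}{\poly(n)}$ after absorbing the polynomially small slack (we may assume $\|u\|_2\le\poly(n)$ since the stream length is $\poly(n)$).

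The main obstacles are bookkeeping rather than conceptual. First, the additive $\frac{1}{\poly(n)}$ slack in the sampling probability must be handled carefully: it is multiplied by $\Ex{\widehat{v_i}^2}/u_i^2$, which could be large if $u_i$ is tiny, so one should either condition on $u_i\ge 1/\poly(n)$ (using the standard streaming assumption that nonzero coordinates are bounded below) or use the coarser bound $\Ex{\widehat{v_i}^2}\le\poly(n)$ and absorb the resulting loss into the slack by taking the polynomial in the sampler's guarantee sufficiently large. Second, one should justify the independence step: the estimator $(\widehat U)^2$ is built on $u$ alone with fresh public/private randomness, and the $L_2$ sampler and \textsc{Counting} instances use their own independent randomness, so Fubini applies and the product factorisation is legitimate. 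With these two details in hand, the rest is the short calculation above.
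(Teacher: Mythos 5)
Your proof is correct and follows essentially the same route as the paper's: expand $W_s^2$, use the perfect $L_2$ sampling distribution, the second-moment bound $\Ex{\widehat{v_i}^2}=\O{v_i^2}$ from \textsc{Counting}, and $\Ex{\widehat{U}^4}\le\O{\|u\|_2^4}$, yielding $\O{\|u\|_2^2\|v\|_2^2}\le\O{\gamma\|u\|_2^4}$. Your version is in fact slightly more careful than the paper's (which leaves $(\widehat{U})^4$ unexpectated inside the sum and does not comment on the $1/\poly(n)$ slack or the independence across subroutines), but the substance is identical.
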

\begin{proof}
By the correctness of the $L_2$ sampler, we have that 
\[\PPr{w_s=i}=\frac{u_i^2}{\|u_i\|_2^2}+\frac{1}{\poly(n)}.\]
Therefore, since the estimation $\widehat{v_{w_s}}$ by \textsc{Counting} algorithm has second moment $\O{v_{w_s}^2}$, c.f. \thmref{thm:counting}, we have
\[\Ex{W_s^2}=\O{1} \cdot \sum_{i\in[n]} (u_iv_i)^2\cdot\frac{(\widehat{U})^4}{u_i^4}\cdot\left(\frac{u_i^2}{\|u\|_2^2}+\frac{1}{\poly(n)}\right).\]
Since $\Ex{\widehat{U}^4}\le\O{\|u\|_2^{4}}$, then 
\[\Ex{W_s^2}\le\O{\|u\|_2^2 \cdot \sum_{i\in[n]}v_i^2}+\frac{1}{\poly(n)}\le\O{\gamma\cdot\|u\|_2^4}+\frac{1}{\poly(n)},\]
since $\|v\|_2^2\le\gamma\cdot\|u\|_2^2$. 
\end{proof}
We now prove the correctness of our difference estimator for $F_2$ estimation. 
\begin{lemma}
\lemlab{lem:f2:cor}
With probability $0.99$, \algref{alg:norm:diff:l2} outputs an additive $\eps\cdot\|u\|_2^2$ approximation to $\|u+v\|_2^2-\|u\|_2^2$. 
In addition, taking the median of $\O{\log n}$ repetitions would give a correct result with high probability.
\end{lemma}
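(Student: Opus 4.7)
The plan is to decompose the target quantity as $\|u+v\|_2^2-\|u\|_2^2 = 2\langle u,v\rangle + \|v\|_2^2$ and bound the estimation error of each piece separately. First I would verify that under the assumption $\|u+v\|_2^2 - \|u\|_2^2 \le \gamma\|u\|_2^2$ of the difference estimator, together with non-negativity of $u,v$ (so $\langle u,v\rangle\ge 0$), we get $\|v\|_2^2\le\gamma\|u\|_2^2$, which is exactly the condition under which the subroutines producing $\widehat{U}$ and $\widehat{V}$ give the claimed guarantees on their respective vectors.

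For the inner product piece, \lemref{lem:diff:two:ex} shows $\Ex{W_s}=\langle u,v\rangle+\frac{1}{\poly(n)}$ and \lemref{lem:diff:two:var} gives $\Ex{W_s^2}\le\O{\gamma\|u\|_2^4}+\frac{1}{\poly(n)}$, so $\Var{W_s}\le\O{\gamma\|u\|_2^4}$. Since the $W_s$ are independent across $s$ (each run of the $L_2$ sampler and \textsc{Counting} uses fresh randomness), the average $\bar W=\frac{1}{S}\sum_{s\in[S]}W_s$ has variance at most $\O{\gamma\|u\|_2^4/S}=\O{\eps^2\|u\|_2^4}$ by the choice $S=\O{\gamma/\eps^2}$. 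Chebyshev's inequality then gives $|\bar W-\langle u,v\rangle|\le\frac{\eps}{4}\|u\|_2^2$ with probability at least $0.995$, after absorbing the $\frac{1}{\poly(n)}$ bias.

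For the $\|v\|_2^2$ piece, $\widehat{V}$ is an $\O{\eps/\gamma}$-multiplicative approximation of $\|v\|_2^2$, and using $\|v\|_2^2\le\gamma\|u\|_2^2$ this translates to an additive error $\O{\eps/\gamma}\cdot\gamma\|u\|_2^2=\O{\eps}\|u\|_2^2$, which holds with high probability by \thmref{thm:lp:cts:const:prev}. Adding the two pieces via a triangle inequality and a union bound over the two events yields $|2\bar W+\widehat{V}-(2\langle u,v\rangle+\|v\|_2^2)|\le\eps\|u\|_2^2$ with probability at least $0.99$, which is the stated additive $\eps\|u\|_2^2$ guarantee. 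Finally, running $\O{\log n}$ independent repetitions and taking the median boosts the success probability to $1-\frac{1}{\poly(n)}$ by standard Chernoff arguments.

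The only mildly subtle point, and the one I would be most careful about, is controlling the interplay of the $\frac{1}{\poly(n)}$ bias of the perfect $L_2$ sampler with the rescaling factor $\widehat{U}^2/u_{w_s}^2$: one must verify that the exceptional sampling probability is small enough that, even when multiplied by the worst-case magnitude $\widehat{U}^2\cdot u_i\cdot v_i/u_i^2$, the contribution remains $\frac{1}{\poly(n)}$. Since $\|u\|_2,\|v\|_2\le\poly(n)$ and the $L_2$-sampler's polynomial degree $C$ can be chosen arbitrarily large, this absorbs into the overall negligible error, and does not interact with the main Chebyshev bound.
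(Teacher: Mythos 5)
Your proposal is correct and follows essentially the same route as the paper: the identical decomposition $\|u+v\|_2^2-\|u\|_2^2=2\langle u,v\rangle+\|v\|_2^2$, the same appeal to \lemref{lem:diff:two:ex} and \lemref{lem:diff:two:var}, averaging $S=\O{\gamma/\eps^2}$ copies with Chebyshev, and converting the $\O{\eps/\gamma}$-multiplicative guarantee on $\widehat{V}$ into an additive $\O{\eps}\|u\|_2^2$ error via $\|v\|_2^2\le\gamma\|u\|_2^2$. If anything, your version is slightly cleaner in working with $\Var{\bar W}$ directly (the paper's displayed bound on $\Ex{W^2}$ silently drops the cross terms), and your closing remark about absorbing the $\frac{1}{\poly(n)}$ sampler bias into the rescaled estimator is a point the paper also handles only implicitly.
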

\begin{proof}
We have 
\[\|u+v\|_2^2-\|u\|_2^2=2\langle u,v\rangle+\|v\|_2^2.\]
By \lemref{lem:diff:two:ex} and \lemref{lem:diff:two:var}, we have $\Ex{W_s}=\langle v,u\rangle+\frac{1}{\poly(nm)}$ and $\Ex{W_s^2}\le\O{\gamma\cdot\|u\|_2^4}+\frac{1}{\poly(nm)}$. 
Let $W=2\cdot\frac{1}{S}\sum_{s\in[S]}W_s$. 
Then we have
\[\Ex{W}=\frac{2}{S}\sum_{s\in[S]}\Ex{W_s}=2\langle v,u\rangle+\frac{1}{\poly(nm)}\]
and
\[\Ex{W^2}\le\frac{4}{S^2}\sum_{s\in[S]}\Ex{W_s^2}\le\frac{2}{S}\cdot\O{\gamma\cdot\|u\|_2^4}+\frac{1}{\poly(n)}\le\O{\eps\cdot\|u\|_2^4}+\frac{1}{\poly(n)},\]
for $S=\O{\frac{\gamma}{\eps^2}}$. 
By Chevyshev's inequality, it follows that with probability at least $0.99$,
\[|W-2\langle u,v\rangle|\le\frac{\eps}{2}\cdot\|u\|_2^2+\frac{1}{\poly(n)}.\]
The desired claim then follows from the fact that $\widehat{V}$ is an $\O{\frac{\eps}{\gamma}}$-approximation of $\|v\|_p^p$, which itself is at most $\gamma\cdot\|u\|_p^p$, i.e., the additive error is $\O{\frac{\eps}{\gamma}\|v\|_p^p} = \O{\eps \|u\|_p^p}$. 
\end{proof}
We next upper bound the communication cost of our difference estimator for $F_2$ estimation. 
\begin{theorem}
\thmlab{thm:f2:de}
Let $n$ be the number of items in the universe, $k$ denote the total number of servers, and $\eps \in (0,1)$ denote some accuracy parameter. 
Let $u$ be a fixed vector and $v$ be a subsequent vector in the stream.
There is a distributed $(\eps, \eps)$-continuous $F_2$ difference estimator which uses $\frac{1}{\eps} \cdot k\cdot\polylog(n)$ bits of communication in expectation. 
Also, there is a distributed $(\gamma, \eps)$-static $F_2$ difference estimator which uses $\frac{\gamma}{\eps^2} \cdot k \cdot \polylog(n)$ bits of communication in expectation.  
\end{theorem}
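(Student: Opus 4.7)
\begin{proofof}{\thmref{thm:f2:de} (proposal)}
The correctness of \algref{alg:norm:diff:l2} is already established by \lemref{lem:f2:cor}, so the entire content of the theorem is a communication accounting. My plan is to tally, for each subroutine invoked by \algref{alg:norm:diff:l2}, the cost of a single invocation from the cited earlier theorems and then multiply by the number of invocations, being careful to distinguish between the continuous $(\eps,\eps)$ case and the static $(\gamma,\eps)$ case (which differ in whether $u$ and $v$ are fixed or evolving and in the value of $S$).

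First I would read off the per-invocation costs from the subroutines. In the continuous case the prefix $u$ is a fixed vector while $v$ evolves: the unbiased estimator $\widehat{U}$ of $\|u\|_2^2$ is then a one-shot computation on the static vector $u$ and by \thmref{thm:lp:est:prev} costs $k\cdot\polylog(n)$ bits; the constant-factor estimate $\widehat{V}$ for the evolving $v$ costs $k\cdot\polylog(n)$ bits by \thmref{thm:lp:cts:const:prev}; each perfect $L_2$ sample from the static $u$ costs $k\cdot\polylog(n)$ bits by \thmref{thm:lp:sampler:prev}; and each continuous counting estimate of $v_{w_s}$ costs $k\cdot\polylog(n)$ bits by \thmref{thm:counting}. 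In the static case the same per-invocation bounds apply for the sampler and for $\widehat{U}$, while the required $\O{\eps/\gamma}$-approximation of $\|v\|_2^2$ on a now-static $v$ costs $\frac{k\gamma^2}{\eps^2}\cdot\polylog(n)$ bits by the static guarantee of \thmref{thm:lp:cts:const:prev}; since $\gamma\le 1$, this is at most $\frac{k\gamma}{\eps^2}\cdot\polylog(n)$.

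Next I would aggregate the costs for the continuous $(\eps,\eps)$ case, where $\gamma=\eps$ and $S=\O{1/\eps}$. The total is
\[
\underbrace{k\cdot\polylog(n)}_{\widehat{U}}+\underbrace{k\cdot\polylog(n)}_{\widehat{V}}+S\cdot\underbrace{\left(k\cdot\polylog(n)+k\cdot\polylog(n)\right)}_{L_2\text{ sample}+\text{Counting}} = \frac{k}{\eps}\cdot\polylog(n),
\]
as claimed. For the static $(\gamma,\eps)$ case, with $S=\O{\gamma/\eps^2}$, the total is
\[
k\cdot\polylog(n) + \frac{k\gamma}{\eps^2}\cdot\polylog(n) + \frac{\gamma}{\eps^2}\cdot k\cdot\polylog(n) = \frac{k\gamma}{\eps^2}\cdot\polylog(n),
\]
again matching the claimed bound. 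A $\O{\log n}$ median-boosting factor from \lemref{lem:f2:cor} is absorbed into the $\polylog(n)$.

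The bookkeeping step I expect to require the most care is choosing the accuracy parameters handed to the $\widehat{V}$ subroutine so that its additive error is $\eps\|u\|_2^2$ rather than $\eps\|v\|_2^2$, exploiting the promise $\|v\|_2^2\le\gamma\|u\|_2^2$ in order to avoid losing an extra $1/\gamma$ factor in the communication; the quadratic savings $\gamma^2\le\gamma$ that come from the static $(1+\eps')$-approximation guarantee are what make the final $\frac{k\gamma}{\eps^2}$ bound tight. Beyond that, everything is a direct sum of costs stated in earlier theorems.
\end{proofof}
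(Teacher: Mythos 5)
Your proposal is correct and follows essentially the same route as the paper's proof: correctness is delegated to \lemref{lem:f2:cor}, and the communication bound is obtained by summing the per-invocation costs of the sampler, \textsc{Counting}, the $\|u\|_2^2$ estimator, and the $\|v\|_2^2$ estimator over $S=\O{\gamma/\eps^2}$ instances, with the same observation that the continuous case needs only a constant-factor approximation of $\|v\|_2^2$ while the static case pays $\frac{\gamma^2}{\eps^2}\cdot k\cdot\polylog(n)$, which is dominated. No gaps.
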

\begin{proof}
Given the correctness result in \lemref{lem:f2:cor}, it suffices to show the communication cost of \algref{alg:norm:diff:l2}. 
For each instance $s \in S$, where $S = \O{\frac{\gamma}{\eps^2}}$, we use $k \polylog(n)$ bits in expectation to run the $F_2$-sampler (see \thmref{thm:lp:sampler}), and we use $\O{k \log^2 n}$ to run the \textsc{Counting} algorithm (see \thmref{thm:counting}). 
This sums up to $\frac{\gamma}{\eps^2}\cdot k\cdot \polylog(n)$ bits of communication. In addition, we use $k\cdot\polylog(n)$ bits to estimate $\|u\|_2^2$ (see \thmref{thm:lp:est:prev}). 
Now, we consider the estimate of $\|v\|_2^2$. For the continuous version, we have $\gamma = \eps$, so we only need a constant approximation to $\|v\|_2^2$. 
Thus, we use $\frac{1}{\eps} \cdot k \cdot \polylog(n)$ bits in total. For the static version, we use $\frac{\gamma^2}{\eps^2}\cdot k\cdot \polylog(n)$ to estimate $\|v\|_2^2$, which is dominated by other terms. 
Thus, we use $\frac{\gamma}{\eps^2} \cdot k \cdot \polylog(n)$ bits in total.
\end{proof}

Given the oblivious strong $F_2$ tracker and our difference estimator, we can give the following guarantees for adversarially robust $F_2$-estimation.
\begin{theorem}
Let $S$ be a distributed stream, $n$ be the number of items in the universe, $k$ denote the total number of servers, and $\eps \in (0,1)$ denote some accuracy parameter.
There is an adversarially robust distributed streaming algorithm that gives a $(1+\eps)$-approximation to the $F_2$-moment of the vector induced by stream $S$, while using $\frac{k}{\eps^2} \polylog\left(n, \frac{1}{\eps}\right)$ bits of communication with probability $0.99$.
\end{theorem}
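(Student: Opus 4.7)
The plan is to obtain this result by directly instantiating the difference estimator framework of \frameref{frame:diff} with the $F_2$ difference estimator from \thmref{thm:f2:de}. Since every ingredient the framework requires has already been constructed, the proof is essentially a matter of assembling components and carrying out the parameter bookkeeping.

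First, I would identify the three subroutines that \frameref{frame:diff} requires. For the $\eps$-double detector, I would use the non-robust continuous $F_2$ tracker implicit in \thmref{thm:lp:cts:const:prev} (or more directly the $(1+\eps)$-approximation algorithms for $F_2$ in the distributed monitoring model, e.g., \cite{WoodruffZ12,CormodeMY11}), which can be invoked with accuracy $\eps/100$ at cost $\frac{k}{\eps^2}\cdot\polylog(n)$ per round, so $H_1(n,k,\eps)\cdot\frac{1}{\eps^2}+H_2$ has the right form. For the $(\eps,\eps)$-continuous difference estimator and the $(\gamma,\eps)$-static difference estimator, I would invoke \thmref{thm:f2:de}, which yields $G_1(n,k,\eps)=k\cdot\polylog(n)$ and $G_2(n,k,\eps)=k\cdot\polylog(n)$.

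Next, I would plug these quantities into the communication bound of \frameref{frame:diff}. The dominant term
\[\frac{1}{\eps^2}\log n\log^3\tfrac{1}{\eps}\cdot G_1(n,k,\eps')+\tfrac{1}{\eps^2}\log n\cdot H_1(n,k,\eps/100)\]
evaluates to $\frac{k}{\eps^2}\cdot\polylog(n,\frac{1}{\eps})$, while the lower-order terms $\frac{1}{\eps}\log n\log\frac{1}{\eps}\cdot G_2$ and $\frac{k}{\eps}\log^2 n\log\frac{1}{\eps}$ are absorbed into this bound. A standard boosting argument (taking the median of a constant number of independent repetitions, or setting the failure probabilities in each sub-protocol to $\frac{1}{\poly(n)}$ and union bounding) turns the $0.99$-success claim into something I can verify cleanly; alternatively I can simply state the guarantee follows from the $1-\frac{1}{\poly(n)}$ correctness of each subroutine in \frameref{frame:diff}. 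Adversarial robustness then follows immediately, since the framework already guarantees that the randomness of each freshly spawned sketch is hidden from the adversary until its output is revealed.

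The one subtlety I would be careful about is making sure the double detector's communication is charged per round and not multiplicatively per block, so that the $\frac{1}{\eps^2}$ factor in the double detector cost is not amplified further. Since $F_2$ doubles only $O(\log n)$ times on a stream of length $\poly(n)$, this is indeed the case, and the $\log n$ factor from the rounds is absorbed into $\polylog(n)$. There is no genuinely hard step here — the work was done in constructing \thmref{thm:f2:de} and \frameref{frame:diff}; the only risk is miscounting polylogarithmic factors when stitching together the continuous and static difference estimator costs across the $O(\log\frac{1}{\eps})$ granularity levels.
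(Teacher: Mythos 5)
Your proposal is correct and matches the paper's proof essentially verbatim: the paper likewise instantiates \frameref{frame:diff} with the double detector obtained by combining the constant-factor continuous tracker of \thmref{thm:lp:cts:const:prev} with a static $(1+\eps)$-accurate $F_2$ estimate at each flag time, together with the $(\eps,\eps)$-continuous and $(\gamma,\eps)$-static difference estimators from \thmref{thm:f2:de}, and then reads off the $\frac{k}{\eps^2}\polylog(n,\frac{1}{\eps})$ bound. Your remark about charging the double detector per round rather than per block is exactly the right bookkeeping point, and the rest is the same assembly the paper carries out.
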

\begin{proof}
By \thmref{thm:lp:cts:const:prev}, we have an algorithm that flags whenever the $F_2$-moment doubles. 
When the algorithm flags, we can retrieve a $(1+\eps)$-approximation by running the static algorithm in \cite{EsfandiariKMWZ24}. 
This gives an $\eps$-double detector that communicates $\frac{k}{\eps^2}\cdot\polylog\left( n, \frac{1}{\eps}\right)$ bits.
By \thmref{thm:f2:de}, there is a $(\eps, \eps)$-continuous difference estimator that uses $\frac{1}{\eps} \cdot k \polylog(n)$ bits of communication. 
Also, there is a $(\gamma, \eps)$-static difference estimator that uses $\frac{\gamma}{\eps^2} \cdot k \cdot \polylog(n)$ bits of communication. 
Therefore, by \frameref{frame:diff}, there is an adversarially robust algorithm that communicates in $\frac{k }{\eps^2}\cdot\polylog\left( n, \frac{1}{\eps}\right)$ bits.
\end{proof}

\subsection{Adversarially Robust \texorpdfstring{$F_p$}{Fp} Estimation, \texorpdfstring{$p\ge2$}{p>2}}
\seclab{sec:robust:fp:large}
In this section, we present an adversarially robust distributed monitoring protocol for $F_p$ moment estimation, with $p\ge2$. 
We first describe how we obtain the $(\gamma,\eps)$ difference estimator for $F_p(u+v) - F_p(u)$. 
We obtain a mixed $L_p$ sample $i\in[n]$ with probability $\frac{u_i^p}{2\|u\|_p^p} + \frac{v_i^p}{2\|v\|_p^p}$. 
Then, we construct our estimator for the difference as
\[s = \frac{1}{p_i} [(u_i+v_i)^p - u_i^p].\]
Suppose that we have an unbiased estimate for $s$ with variance $\O{s^2}$. 
Then averaging $\frac{\gamma}{\eps^2}$ instances of the estimator gives us an estimate for the difference with additive error $\eps \cdot F_p(u)$. 
In the following sections, we formalize the framework for the $F_p$ difference estimator, and we provide algorithms to estimate $\frac{1}{p_i}$ and $(u_i+v_i)^p - u_i^p$ with near-optimal communication.

We again emphasize that we only need a continuous algorithm for difference estimators at granularity $0$, i.e., corresponding to $F_p(u+v)-F_p(u)\le\eps\cdot F_p(u)$. 
So, we introduce the $(\eps,\eps)$-continuous $F_p$ difference estimator and the $(\gamma, \eps)$-static $F_p$ difference estimator.

\subsubsection{Outline for the \texorpdfstring{$F_p$}{Fp} Difference Estimator}
We begin by outlining our difference estimator. 
We first consider the mixed $L_p$ sampling subroutine. 
Intuitively, it detects the important coordinates for both $u$ and $v$, and thereby has a small variance. 
Now, we state the correctness of several important subroutines, whose analysis is deferred to the following sections. 
We remark that we show two versions of the algorithm for estimating $\frac{1}{p_i}$. 
We present a result for evolving $v$ with constant-fractional bias, which is used to construct the $(\eps,\eps)$-continuous difference estimator, where a constant-multiplicative approximation suffices. 
We also present an unbiased estimator for static $v$, which is used to construct the $(\gamma,\eps)$-static difference estimator.
\begin{theorem}
\thmlab{thm:distributed_algs}
Let there be $k$ servers, each server $j\in[k]$ with an $n$-dimensional vector $f(j) = [f_1(j), \cdots,f_n(j)]$ that arrives in an insertion-only stream, and let $f = \sum_{j\in[k]} f(j)$. 
Then for $p\ge2$, we have the following algorithms
\begin{itemize}
    \item 
    There is an algorithm \textsc{LpSampler} that outputs $i \in [n]$ so that for any $j\in[n]$,
    \[\PPr{i = j} = \frac{f_j^p}{\|f\|^p_p} \pm n^{-c},\]
    at all times. 
    This protocol uses $k^{p-1} \cdot \polylog(n)$ bits of communication in expectation.
    \item 
    Fix a time $t$. 
    Let $u$ be the vector at time $t$ and let $v$ be the subsequent vector. 
    Let $p_i = \frac{u_i^p}{2\|u\|_p^p} + \frac{v_i^p}{2\|v\|_p^p}$ denote the sampling probability. 
    Fix an $i \in [n]$. 
    Then there is an algorithm \textsc{InverseProbEst} that gives an estimation $\widehat{\frac{1}{p_i}}$ to $\frac{1}{p_i}$ such that $\Ex{\widehat{\frac{1}{p_i}}} = \frac{1}{p_i} \left(1\pm\O{\frac{1.01}{C_{\ref{lem:inverse:prob:corr}}}}\right)$ and $\Var{\frac{1}{p_i}} = \frac{1}{p_i^2} \cdot \polylog(n)$ at all times, with probability $1-\frac{1}{\poly(n)}$. 
    This protocol uses $k^{p-1} \cdot \polylog(n)$ bits of communication. 
    For static $v$, the algorithm uses the same amount of communication, and the estimate satisfies $\Ex{\widehat{\frac{1}{p_i}}} = \frac{1}{p_i} \pm n^{-c}$ and $\Var{\frac{1}{p_i}} = \frac{1}{p_i^2} \cdot \polylog(n)$.
    \item 
    Fix a time $t$. 
    Let $u$ be the vector at time $t$ and let $v$ be the subsequent vector. 
    Fix an index $i \in [n]$. 
    There is an algorithm \textsc{DifferenceEst} that gives an estimation $\widehat{(u_i+v_i)^p - u_i^p}$ to $(u_i+v_i)^p - u_i^p$ such that $\Ex{\widehat{(u_i+v_i)^p - u_i^p}} = (u_i+v_i)^p - u_i^p \pm n^{-c}$ and $\Var{\widehat{(u_i+v_i)^p - u_i^p}} = \O{ (u_i+v_i)^{2p}\cdot \log^2 n }$ at all times with probability $1-\frac{1}{\poly(n)}$. 
    Moreover, if $u_i > v_i$, we have $\Var{\widehat{(u_i+v_i)^p - u_i^p}} = \O{ \left((u_i+v_i)^p - u_i^p \right)^2 \cdot \log n }$. 
    This protocol uses $k \cdot \polylog(n)$ bits of communication.
\end{itemize}
\end{theorem}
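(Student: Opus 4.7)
The first item is essentially a restatement of \thmref{thm:lp:sampler}, so I would simply invoke that theorem: the perfect $L_p$ sampler was already constructed and analyzed in \secref{sec:sampler:construct}, and its sampling probabilities and communication bound match the claim verbatim.

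For the second item, my plan is to write $\frac{1}{p_i} = \frac{2\|u\|_p^p\|v\|_p^p}{a+b}$ with $a = \|u\|_p^p v_i^p$ and $b = \|v\|_p^p u_i^p$, and to build unbiased estimators (up to $1/\poly(n)$ bias) for each ingredient separately. Since $u$ is static, I can query the exact value of $u_i$, so $u_i^p$ is known. For $\|u\|_p^p$, I would use the geometric mean estimator from \lemref{lem:geo:exp:var}: scale $u$ by fresh exponentials $\be_i$ to get $g_i = u_i/\be_i^{1/p}$, so that by \corref{cor:exp_inverse}, $\max_i g_i^p = \|u\|_p^p/\be$ for an independent exponential $\be$. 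Taking the geometric mean of three independent such maxima (recoverable exactly via querying all servers since $u$ is fixed) yields $\|u\|_p^p$ in expectation with second moment $\O{\|u\|_p^{2p}}$. For the evolving case of $\|v\|_p^p$, we cannot recover each max exactly; instead I can apply the perfect $L_p$-sampler machinery of \thmref{thm:lp:sampler} (whose ingredients estimate $\|G\|_p^p$), composed with the truncated Taylor estimator of \secref{sec:truncated:taylor} to convert these $1.01$-approximations into unbiased estimates of the cube roots, using $k^{p-1}\polylog(n)$ communication. For $v_i^p$, apply \textsc{Counting} to obtain an unbiased estimate of $v_i$ with variance $\O{v_i^2/p^2}$, and then use \lemref{lem:taylor:fi} to unbiasedly estimate $v_i^p$. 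Multiplying these independent unbiased estimates yields unbiased estimates $\widehat{a}, \widehat{b}$ with variance $\O{\polylog(n)}$ times the square of the mean. Finally, to unbiasedly estimate $\frac{1}{a+b}$, expand
\[\frac{1}{a+b} = \sum_{q=0}^{\infty} \sum_{\substack{d,e\ge0\\d+e=q}} \frac{1}{d!e!} \left.\frac{\partial^q f}{\partial x^d \partial y^e}\right|_{(a_0,b_0)} (a-a_0)^d (b-b_0)^e\]
around centers $a_0, b_0$ that are $(1+\frac{1}{100p})$-approximations (obtained through the same subroutines at lower accuracy). Since every partial derivative of $f(x,y)=1/(x+y)$ has the same magnitude up to sign, the tail past $Q=\O{\log n}$ terms is $1/\poly(n)$, and I estimate $(a-a_0)^d(b-b_0)^e$ by multiplying $d+e$ independent unbiased estimates. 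In the static case, the $\|v\|_p^p$ estimate is also exactly unbiased (since one can recover the maxima), giving an unbiased estimator overall; in the evolving case, the small multiplicative bias in $\|v\|_p^p$ propagates to the constant-factor bias $1+\O{1/C_{\ref{lem:inverse:prob:corr}}}$ stated in the theorem.

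For the third item, I plan to reuse the truncated Taylor apparatus of \secref{sec:truncated:taylor} almost verbatim. Since $u_i$ is known exactly, I let $s_i$ be a $(1+\frac{1}{100p})$-approximation of $v_i$ (obtained via \textsc{Counting}) and expand
\[(u_i+v_i)^p - u_i^p = \big((u_i+s_i)^p - u_i^p\big) + \sum_{q=1}^{\infty}\binom{p}{q}(u_i+s_i)^{p-q}(v_i-s_i)^q,\]
truncating after $Q=\O{\log n}$ terms and estimating each $(v_i-s_i)^q$ as a product of $q$ independent unbiased estimates of $v_i-s_i$. The bias analysis mirrors \lemref{lem:truncated:taylor:fi:p}, yielding $1/\poly(n)$ additive error, and the second-moment bound follows as in \lemref{lem:taylor:fi}, giving $\O{(u_i+v_i)^{2p}\log^2 n}$. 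For the refined bound when $u_i > v_i$, I would observe that every surviving term in the series carries a factor of $(u_i+s_i)^{p-q}(v_i-s_i)^q$ whose magnitude is at most $\O{(u_i+v_i)^{p-1} v_i}$ per power, and $(u_i+v_i)^p - u_i^p = \Theta(p v_i u_i^{p-1})$ in this regime, so the ratio between variance and squared mean is $\O{\log n}$.

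The main obstacle will be the unbiased estimation of $\frac{1}{p_i}$, specifically coupling the bivariate Taylor expansion with the fact that our estimates of $\|u\|_p^p$ and $\|v\|_p^p$ themselves come from geometric-mean/truncated-Taylor composites. I will need to carefully verify that multiplying $2q$ independent (bias-$1/\poly(n)$) estimates preserves a $1/\poly(n)$ bias after summing $Q=\O{\log n}$ truncation terms, and that the variance contributed by each Taylor term decays fast enough (using $|a-a_0|, |b-b_0| \le \frac{1}{100p}\max(a,b)$) that the total variance stays within $\polylog(n)$ of $1/p_i^2$. The communication bound is the gentler part: the $L_p$-sampler-style subroutine dominates at $k^{p-1}\polylog(n)$ bits, while all other pieces (the \textsc{Counting} instances for $u_i,v_i$ and the lower-order Taylor queries) are $k\polylog(n)$, absorbed into the same bound.
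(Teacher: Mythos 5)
Your proposal follows essentially the same route as the paper: the first item is indeed just \thmref{thm:lp:sampler}, the second is the bivariate truncated Taylor expansion of $1/(a+b)$ with $a=\|u\|_p^p v_i^p$, $b=\|v\|_p^p u_i^p$ fed by geometric-mean and truncated-Taylor estimators of $\|u\|_p^p$, $\|v\|_p^p$, $v_i^p$ (as in \algref{alg:lp:sample:prob} and \lemref{lem:inverse:prob:corr}), and the third is the univariate truncated Taylor expansion of $(u_i+x)^p-u_i^p$ at $s_i$ (as in \algref{alg:lp:diff:index} and \lemref{lem:diff:index}). The one step you flag as needing care — keeping the per-copy variance of $\widehat{a}$, $\widehat{b}$ small enough that the Taylor terms decay geometrically — is handled in the paper by averaging $B=\polylog(n)$ independent copies (\lemref{lem:repetition}) before invoking \lemref{lem:taylor:inverse}, exactly as your plan anticipates.
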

\begin{proof}
See \thmref{thm:lp:sampler} for the \textsc{LpSampler}. 
See \thmref{thm:lp:est} for the \textsc{LpEstUnbiased}.
See \lemref{lem:inverse:prob:corr}, \lemref{lem:inverse:prob:comm}, and \corref{cor:inverse:prob} for \textsc{InverseProbEst}. 
See \lemref{lem:diff:index} and \lemref{lem:diff:comm} for \textsc{DifferenceEst}.
\end{proof}
In light of the subroutines in \thmref{thm:distributed_algs}, we present our difference estimator for $F_p$ moment estimation for $p\ge2 $ in \algref{alg:norm:diff}. 
\begin{algorithm}[!htb]
\caption{Difference estimator for $F_p$ moment estimation}
\alglab{alg:norm:diff}
\begin{algorithmic}[1]
\Require{Accuracy $\eps\in(0,1)$, parameter $\gamma>0$, fixed vector $u$, subsequent vector $v$ that arrives in a distributed data stream}
\Ensure{Difference estimator for $\|u+v\|_p^p - \|u\|_p^p$}
\State{$R \gets \frac{\gamma}{\eps^2} \cdot \polylog(n)$}
\For{$r \in [R]$}
\State{$s^r \gets 0$, $ U \gets {\cal U}(0,1)$}
\If{$U<\frac{1}{2}$}
\State{$i \gets \textsc{LpSampler}(u)$} \Comment{Run the static version.}
\State{$\widehat{\frac{1}{p_i}} \gets \textsc{InverseProbEst}(u,v,i,u_i,\eps)$}
\Comment{Query the actual value of $u_i$}
\State{$\widehat{(u_i+v_i)^p - u_i^p} \gets \textsc{DifferenceEst}(u,v,i,u_i)$}
\State{$s^r \gets s^r + \widehat{\frac{1}{p_i}} \cdot \left(\widehat{(u_i+v_i)^p - u_i^p}\right)$}
\EndIf
\If{$U>\frac{1}{2}$}
\State{$j \gets \textsc{LpSampler}(v)$} 
\State{$\widehat{\frac{1}{p_j}} \gets \textsc{InverseProbEst}(u,v,j,u_j,\eps)$}
\State{$\widehat{(u_j+v_j)^p - u_j^p} \gets \textsc{DifferenceEst}(u,v,j,u_j)$}
\State{$s^r \gets s^r + \widehat{\frac{1}{p_j}} \cdot \left(\widehat{(u_j+v_j)^p - u_j^p}\right)$}
\EndIf
\EndFor
\State{\Return $\frac{1}{R} \sum_{r \in [R]} s^r$}
\end{algorithmic}
\end{algorithm}
We now upper bound the error of our difference estimator. 
\begin{lemma}
\lemlab{lem:fp:de:cor}
For $\|v\|_p^p \le \gamma \|u\|_p^p$ and $\|u+v\|_p^p - \|u\|_p^p \le \gamma \|u\|_p^p$, \algref{alg:norm:diff} gives an estimation to $\|u+v\|_p^p - \|u\|_p^p$ with additive error $\eps \cdot \|u\|_p^p$, with probability $1-\frac{1}{\poly(n)}$. 
\end{lemma}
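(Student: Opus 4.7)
The plan is to prove this via a mean-variance concentration argument on the average of the $R = \O{\gamma\polylog(n)/\eps^2}$ mixed-$L_p$ samples, followed by a standard median-of-means boost. First I would verify that each single sample $s^r$ is (up to $n^{-c}$) an unbiased estimator of the target $\|u+v\|_p^p - \|u\|_p^p$. The coin flip of $U$ together with the perfect samplers $\textsc{LpSampler}(u),\textsc{LpSampler}(v)$ causes index $i$ to be sampled with marginal probability exactly $p_i = \frac{u_i^p}{2\|u\|_p^p} + \frac{v_i^p}{2\|v\|_p^p}$ (up to $n^{-c}$); since $\widehat{1/p_i}$ and $\widehat{(u_i+v_i)^p - u_i^p}$ use independent randomness given $i$, \thmref{thm:distributed_algs} yields
\[
\Ex{s^r} \;=\; \sum_i p_i\cdot \Ex{\widehat{1/p_i}}\cdot \Ex{\widehat{(u_i+v_i)^p - u_i^p}} \;=\; \|u+v\|_p^p - \|u\|_p^p \pm \frac{1}{\poly(n)}.
\]

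Next I would bound the second moment of $s^r$. Write $a_i := (u_i+v_i)^p - u_i^p$. Conditional independence of the two factors, combined with the bounds $\Ex{(\widehat{1/p_i})^2} = \polylog(n)/p_i^2$ and $\Ex{(\widehat{a_i})^2} = \O{a_i^2\log^2 n}$ from \thmref{thm:distributed_algs} (using $(u_i+v_i)^{2p} \le 4^p a_i^2$ when $u_i \le v_i$, which follows from $a_i \ge v_i^p$ via $(u_i+v_i)^p \ge u_i^p + v_i^p$), gives
\[
\Ex{(s^r)^2} \;\le\; \polylog(n) \cdot \sum_i \frac{a_i^2}{p_i}.
\]
The main obstacle is controlling $\sum_i a_i^2/p_i$: a naive Hölder-type estimate only yields $\gamma^{2/p}\|u\|_p^{2p}$, which is too loose for $p>2$. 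The key idea is the max-times-sum inequality
\[
\sum_i \frac{a_i^2}{p_i} \;\le\; \Bigl(\max_i \frac{a_i}{p_i}\Bigr)\cdot \sum_i a_i,
\]
whose second factor is exactly $\|u+v\|_p^p - \|u\|_p^p \le \gamma\|u\|_p^p$ by hypothesis. For the first factor, I would split: if $u_i \ge v_i$, use $a_i \le p\cdot 2^{p-1}u_i^{p-1}v_i \le p\cdot 2^{p-1}u_i^p$ together with $p_i \ge u_i^p/(2\|u\|_p^p)$ to get $a_i/p_i = \O{\|u\|_p^p}$; if $u_i<v_i$, use $a_i \le 2^p v_i^p$ together with $p_i \ge v_i^p/(2\|v\|_p^p)$ to get $a_i/p_i = \O{\|v\|_p^p} \le \O{\gamma\|u\|_p^p}$. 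Either way, $\max_i a_i/p_i = \O{\|u\|_p^p}$ (treating $p$ as constant), so $\Ex{(s^r)^2} \le \O{\gamma\|u\|_p^{2p}\polylog(n)}$.

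Finally, averaging $R = \O{\gamma\polylog(n)/\eps^2}$ independent copies makes the variance of the empirical mean $\O{\eps^2\|u\|_p^{2p}/\polylog(n)}$, so Chebyshev's inequality gives additive error $\eps\|u\|_p^p$ with constant probability. A standard median-of-means boost over $\O{\log n}$ independent estimators amplifies this to $1 - 1/\poly(n)$, and a union bound over the $\poly(n)$ timesteps during which a given difference-estimator instance is active yields the claimed high-probability guarantee at all times.
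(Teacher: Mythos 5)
Your proposal is correct and follows essentially the same route as the paper: the same computation of $\Ex{s^r}$ via the mixed sampling probability, the same case split on $u_i\ge v_i$ versus $u_i<v_i$ for the second moment (your ``max-times-sum'' factorization $\sum_i a_i^2/p_i\le(\max_i a_i/p_i)\sum_i a_i$ is just a repackaging of the paper's per-term cancellation $a_i^2/u_i^p=\O{a_i}$ resp.\ $a_i^2/v_i^p=\O{a_i}$), and the same Chebyshev-plus-repetition finish. The one imprecision is your claim that $\Ex{s^r}=\|u+v\|_p^p-\|u\|_p^p\pm\frac{1}{\poly(n)}$: by \thmref{thm:distributed_algs}, the \emph{continuous} version of \textsc{InverseProbEst} carries a constant \emph{multiplicative} bias $\left(1\pm\O{\tfrac{1.01}{C_{\ref{lem:inverse:prob:corr}}}}\right)$, not a $n^{-c}$ additive one; the paper absorbs this by observing that the continuous estimator is only used at granularity $0$ where $\gamma=\eps$, so the bias contributes at most $\frac{\eps}{2}\|u\|_p^p$ additively, and only the static version is genuinely unbiased up to $n^{-c}$.
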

\begin{proof}
Let $p_i$ denote the sampling probability of index $i$. 
Let $S := \|u+v\|_p^p - \|u\|_p^p = \sum_{i \in [n]} (u_i+v_i)^p - u_i^p$. 
Notice that our estimator is defined as
\[s_r = \widehat{\frac{1}{p_i}} \cdot \left(\widehat{(u_i+v_i)^p - u_i^p}\right).\]
Therefore, for the continuous version, we have
\begin{align*}\Ex{s_r} &= \sum_{i \in [n]} (p_i \pm n^{-c}) \cdot \left(\frac{1}{p_i}\right)  \cdot \left(1\pm\O{\frac{1.01}{C_{\ref{lem:inverse:prob:corr}}}}\right)  \cdot \left((u_i+v_i)^p - u_i^p \right) \\
&= \sum_{i \in [n]} p_i  \cdot \frac{1}{p_i} \cdot ((u_i+v_i)^p - u_i^p) \cdot \left(1\pm\O{\frac{1.01}{C_{\ref{lem:inverse:prob:corr}}}}\right) = S \cdot \left(1\pm \O{\frac{1.01}{C_{\ref{lem:inverse:prob:corr}}}}\right)
\end{align*}
Note that we only run continuous difference estimator at granularity $0$, where we have $\gamma = \eps$. Taking a sufficiently large constant $C_{\ref{lem:inverse:prob:corr}}$, by our assumption on $S \le \gamma \|u\|_p^p = \eps \|u\|_p^p$, we have 
\[\Ex{s_r} = S \cdot \left(1 \pm \frac{1}{2}\right) = S \pm \frac{\eps }{2} \cdot \|u\|_p^p .\]
For the static version, the same equation holds since the bias is bounded by $\O{n^{-c}}$.

Now, let $S_1 := \{ i, u_i\ge v_i \}$, $S_2 := \{ i, u_i< v_i \}$, we have
\begin{align*}
\Ex{s_r^2} &= \sum_{i \in S_1} (p_i \pm n^{-c}) \cdot \Ex{\widehat{\frac{1}{p_i}}^2} \cdot \Ex{\left(\widehat{(u_i+v_i)^p - u_i^p}\right)^2} \\
&+ \sum_{i \in S_2} (p_i \pm n^{-c}) \cdot \Ex{\widehat{\frac{1}{p_i}}^2} \cdot \Ex{\left(\widehat{(u_i+v_i)^p - u_i^p}\right)^2} \\
&:= S_1 + S_2.
\end{align*}
For the first term, we have
\begin{align*}S_1 &= \sum_{i \in S_1} p_i \cdot \frac{1}{p_i^2} \cdot ((u_i+v_i)^{p} - u_i^p)^2 \cdot \polylog(n) \\
&= \sum_{i \in S_1} \frac{1}{p_i} \cdot ((u_i+v_i)^{p} - u_i^p)^2\cdot \polylog(n).
\end{align*}
Then,
\begin{align*}
S_1 &= \sum_{i \in S_1} \frac{\|u\|_p^p\|v\|_p^p}{\|u\|_p^pv_i^p + \|v\|_p^p u_i^p} \cdot ((u_i+v_i)^{p} - u_i^p)^2 \cdot \polylog(n) \\
&\le \sum_{i \in S_1} \frac{\|u\|_p^p}{u_i^p} \cdot ((u_i+v_i)^{p} - u_i^p)^2 \cdot \polylog(n). 
\end{align*}
Moreover, since $u_i \ge v_i$ for $i \in S_1$, we have $(u_i+v_i)^p - u_i^p = \O{u_i^p}$. Then, we have
\begin{align*}
S_1 &\le \|u\|_p^p \cdot \sum_{i \in S_1}  ((u_i+v_i)^{p} - u_i^p) \cdot \polylog(n) \\
&\le \gamma \|u\|_p^{2p} \cdot \polylog(n),
\end{align*}
where the second step is by our assumption of $\|u+v\|_p^p - \|u\|_p^p \le \gamma \|u\|_p^p$. For the second term, similarly we have 
\begin{align*}
S_2 &= \sum_{i \in S_2} \frac{\|u\|_p^p\|v\|_p^p}{\|u\|_p^pv_i^p + \|v\|_p^p u_i^p} \cdot (u_i+v_i)^{2p} \cdot \polylog(n) \\
&\le \sum_{i \in S_2} \frac{\|v\|_p^p}{v_i^p} \cdot (u_i+v_i)^{2p} \cdot \polylog(n). 
\end{align*}
Now, since $u_i < v_i$ for $i \in S_2$, we have $(u_i+v_i)^p = \O{v_i^p}$. 
Thus, we have
\begin{align*}
S_2 = \sum_{i \in S_2} \frac{\|v\|_p^p}{v_i^p} \cdot v_i^{2p} \cdot \polylog(n)  \le \|v\|_p^{2p} \cdot \polylog(n) \le \gamma \cdot \|u\|_p^{2p},
\end{align*}
where the last step is by our assumption of $\|v\|_p^p \le \gamma \|u\|_p^p$. Therefore, $\Ex{s_r^2} = \gamma \|u\|_p^{2p} \cdot \polylog(n)$.
Recall that the expectation of our estimator satisfies $\Ex{s_r} = S \pm \frac{\eps}{2} \cdot \|u\|_p^p$. 
By Chebyshev's inequality, taking the mean of our $\frac{\gamma}{\eps^2} \cdot \polylog(n)$ instances of estimators $s_r$ gives us an estimation to $S$ with additive error $\eps \cdot \|u\|_p^p$. 
Last, union bounding across all instances of estimators will give us the desired failure probability.
\end{proof}
Now, we give the full guarantees of our difference estimator. 
\begin{theorem}
\thmlab{thm:fp:de}
Let $k$ denote the total number of servers, $p\ge2$, and $\eps \in (0,1)$ denote some accuracy parameter. 
Given a vector $f$ with universe size $n$, there exists a distributed $(\eps, \eps)$-continuous $F_p$ difference estimator which uses $\frac{1}{\eps} \cdot k^{p-1} \polylog(n)$ bits of communication in expectation. Also, there is a distributed $(\gamma, \eps)$-static $F_p$ difference estimator which uses $\frac{\gamma}{\eps^2} \cdot k^{p-1} \cdot \polylog(n)$ bits of communication in expectation. 
\end{theorem}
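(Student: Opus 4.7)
The plan is short, because Lemma \ref{lem:fp:de:cor} has already established correctness: the additive $\eps\cdot\|u\|_p^p$ approximation guarantee holds with high probability for the output of Algorithm \ref{alg:norm:diff} in both the $(\eps,\eps)$-continuous and $(\gamma,\eps)$-static regimes. The only remaining task is to charge the communication cost.

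First, I would bound the per-iteration expected cost. Each of the $R$ iterations of the outer loop of Algorithm \ref{alg:norm:diff} invokes exactly one call to \textsc{LpSampler}, one call to \textsc{InverseProbEst}, and one call to \textsc{DifferenceEst}. By \thmref{thm:distributed_algs}, the sampler and the inverse-probability estimator each cost $k^{p-1}\cdot\polylog(n)$ bits in expectation, while the per-coordinate difference estimator costs only $k\cdot\polylog(n)$ bits. Since $p\ge 2$, the expected communication per iteration is dominated by $k^{p-1}\cdot\polylog(n)$.

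Second, I would multiply by the number of iterations. With $R=\tO{\gamma/\eps^2}$ and linearity of expectation, the total expected communication of the $(\gamma,\eps)$-static difference estimator is $\frac{\gamma\, k^{p-1}}{\eps^2}\cdot\polylog(n)$, as claimed. Specializing to the continuous $(\eps,\eps)$ case, $\gamma=\eps$ gives $R=\tO{1/\eps}$ iterations and hence total expected communication $\frac{k^{p-1}}{\eps}\cdot\polylog(n)$.

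The one point that deserves care is which variant of each subroutine is invoked. In the static estimator, $v$ is frozen at the end of a block, so I invoke the static branches of \textsc{LpSampler}, \textsc{InverseProbEst}, and \textsc{DifferenceEst} from \thmref{thm:distributed_algs}; in the continuous estimator, $v$ evolves during the block so I invoke the continuous branches instead. Both branches deliver the same $k^{p-1}\polylog(n)$ per-call bound, making the accounting uniform across the two regimes. The genuinely hard part of the overall argument is not here but in proving the subroutine bounds themselves, in particular the continuous perfect $L_p$ sampler of \thmref{thm:lp:sampler} and the companion \textsc{InverseProbEst}, whose communication analyses rest on the two-sided exponential embedding, the anti-concentration amortization of the counting subroutines, and the geometric-mean plus truncated Taylor estimators. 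Once those building blocks are in hand, the present theorem follows by direct accounting of expected costs.
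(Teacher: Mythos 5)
Your proposal is correct and matches the paper's proof essentially verbatim: the paper likewise cites \lemref{lem:fp:de:cor} for correctness and then multiplies the $\frac{\gamma}{\eps^2}\cdot\polylog(n)$ iteration count by the $k^{p-1}\cdot\polylog(n)$ per-instance cost from \thmref{thm:distributed_algs}, specializing $\gamma=\eps$ for the continuous case. Your extra remark about invoking the static versus continuous branches of the subroutines is a point the paper handles implicitly, so nothing is missing.
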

\begin{proof}
Given the correctness result shown in \lemref{lem:fp:de:cor}, it remains to bound the communication of \algref{alg:norm:diff}. For the continuous difference estimator, since we run $\frac{\gamma}{\eps^2} \cdot \polylog(n) = \frac{1}{\eps} \cdot \polylog(n)$ instances of estimators and each of them has communication cost $k^{p-1} \cdot \polylog(n)$ in expectation, the total communication cost is $\frac{1}{\eps} \cdot k^{p-1} \polylog(n)$. For the static version,  we run $\frac{\gamma}{\eps^2} \cdot \polylog(n)$ instances of estimators and each of them has communication cost $k^{p-1} \cdot \polylog(n)$, the total communication cost is $\frac{\gamma}{\eps^2} \cdot k^{p-1} \polylog(n)$.
\end{proof}

\subsubsection{Unbiased \texorpdfstring{$F_p$}{Fp} Estimator}
\seclab{sec:unbiased:fp:oneshot}
In this section, we provide the unbiased $F_p$ estimation scheme, which is an important subroutine in our difference estimator. We use the unbiased estimator given by the geometric mean of three independent instances $X = \max_{i \in [n]} \frac{f_i}{\be_i^{1/p}}$:
\[Y = \sqrt[3]{X_1^{p} \cdot X_2^{p} \cdot X_3^{p}},\]
which provides an unbiased estimate to $\|f\|_p^p$.

\paragraph{Static algorithm.}
We first consider the static version of $F_p$-estimation. Notice that our $L_p$ sampler reports the maximum index $i^*$ of the scaled vector, then we can query all servers to attain the real value of the $\frac{f_{i^*}}{\be_{i^*}^{1/p}}$. We present the algorithm as follows.

\begin{algorithm}[!htb]
\caption{Static unbiased $F_p$ estimation}
\alglab{alg:lp:unbiased:static}
\begin{algorithmic}[1]
\Require{Vector $f = \sum_{j \in [k]} f(j) \in \mathbb{R}^n$}
\Ensure{Unbiased estimation of $\|f\|_p^p$}
\For{$r \in [3]$}
\State{Scaled $f$ by i.i.d. exponential variables to get vector $g_r$ where $g_{i,r} = \frac{f_i}{e_{i,r}^{1/p}}$}
\State{$i \gets \argmax_{j \in [n]} g_{j,r}$} \Comment{Run the static version of \textsc{LpSampler} (See \thmref{thm:lp:sampler})}
\State{$H_r \gets g_{i,r}^p$}
\EndFor
\State{\Return $\frac{1}{ C_{\ref{lem:geo:exp:var}} } \cdot \sqrt[3]{H_1 \cdot H_2 \cdot H_3}$}
\end{algorithmic}
\end{algorithm}

\begin{theorem}
\thmlab{thm:fp:est:stat}
With probability $1-\frac{1}{\poly(n)}$, \algref{alg:lp:unbiased:static} outputs an estimation $\widehat{\|f\|_p^p}$ to $\|f\|_p^p$ such that $\Ex{\widehat{\|f\|_p^p}} = \|f\|_p^p$ and $\Ex{\left(\widehat{\|f\|_p^p}\right)^2} = \O{\|f\|_p^{2p}}$. It uses $k^{p-1} \polylog(n)$ bits of communication.
\end{theorem}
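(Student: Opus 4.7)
The plan is to reduce this directly to two earlier facts: the max-stability characterization of \corref{cor:exp_inverse} and the geometric-mean variance bound of \lemref{lem:geo:exp:var}. First I would observe that by \corref{cor:exp_inverse} applied to each scaled copy $g_r$, the quantity $H_r = g^p_{i,r}$ (where $i = \argmax_{j} g_{j,r}$) has the exact distributional identity
\[
H_r \;\stackrel{d}{=}\; \frac{\|f\|_p^p}{\be_r},
\]
with $\be_r$ a standard exponential random variable. Since the three repetitions use independent exponential scalings, the $\be_r$'s are mutually independent.

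Plugging into the algorithm's output gives
\[
\widehat{\|f\|_p^p}
\;=\; \frac{1}{C_{\ref{lem:geo:exp:var}}}\sqrt[3]{H_1 H_2 H_3}
\;=\; \frac{\|f\|_p^p}{C_{\ref{lem:geo:exp:var}}}\cdot\sqrt[3]{\tfrac{1}{\be_1}\cdot\tfrac{1}{\be_2}\cdot\tfrac{1}{\be_3}}.
\]
Now \lemref{lem:geo:exp:var} directly states that $X=\sqrt[3]{(1/\be_1)(1/\be_2)(1/\be_3)}$ satisfies $\Ex{X}=C_{\ref{lem:geo:exp:var}}$ and $\Ex{X^2}\le 27$. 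Therefore $\Ex{\widehat{\|f\|_p^p}} = \|f\|_p^p$ and $\Ex{(\widehat{\|f\|_p^p})^2} = \|f\|_p^{2p}\cdot \Ex{X^2}/C_{\ref{lem:geo:exp:var}}^2 = O(\|f\|_p^{2p})$, which is the claimed moment bound.

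For communication, \algref{alg:lp:unbiased:static} runs the static version of \textsc{LpSampler} three times, each using $k^{p-1}\polylog(n)$ bits in expectation by \thmref{thm:lp:sampler}, and then (implicitly) queries the identified maximizer's exact value across all servers at cost $O(k\log n)$. Summing gives $k^{p-1}\polylog(n)$ bits. The high-probability statement comes from the $1-1/\poly(n)$ success of the sampler subroutine, union-bounded over the three instances.

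The only subtlety worth flagging in the write-up, rather than any real obstacle, is that the static \textsc{LpSampler} is used here to \emph{recover the true maximizer} of the scaled vector (not merely a perfect $L_p$ sample with additive bias), so that the identity $H_r = \max_j g_{j,r}^p = \|f\|_p^p/\be_r$ holds exactly with high probability. Provided we condition on the $1-1/\poly(n)$ success event of each of the three sampler invocations and absorb the resulting $n^{-c}$ bias into the stated guarantees (using $\|f\|_p^p \le \poly(n)$), both the unbiasedness and the $O(\|f\|_p^{2p})$ second-moment bound follow with no further work.
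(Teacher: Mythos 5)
Your proof is correct and follows essentially the same route as the paper: both apply \corref{cor:exp_inverse} to identify $H_r$ with $\|f\|_p^p/\be_r$ for independent exponentials, invoke \lemref{lem:geo:exp:var} for the first and second moments of the geometric mean, and charge the communication to the static \textsc{LpSampler} of \thmref{thm:lp:sampler}. Your closing remark about conditioning on the samplers actually recovering the true maximizer (and absorbing the resulting $n^{-c}$ bias) is a point the paper's proof leaves implicit, so it is a welcome clarification rather than a deviation.
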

\begin{proof}
By \corref{cor:exp_inverse}, we have
\[\max_{j \in [n]} g_j^p = \frac{\|f\|_p^p}{E},\]
where $E$ is an independent exponential variable with rate $1$. Then, we have
\[\sqrt[3]{H_1 \cdot H_2 \cdot H_3} = \frac{\|f\|_p^p}{\sqrt[3]{E_1\cdot E_2\cdot E_3}},\]
where $E_1,E_2,E_3$ are independent. Now, by \lemref{lem:geo:exp:var}, we have
\begin{align*}\Ex{\frac{1}{ C_{\ref{lem:geo:exp:var}}} \cdot \sqrt[3]{H_1 \cdot H_2 \cdot H_3}} &= \frac{\|f\|_p^p}{ C_{\ref{lem:geo:exp:var}}} \cdot \Ex{ \frac{1}{\sqrt[3]{E_1\cdot E_2\cdot E_3}}} = \|f\|_p^p.
\end{align*}
Thus, our estimator is unbiased. Moreover, using \lemref{lem:geo:exp:var} again yields
\[\Ex{\left(\frac{1}{ C_{\ref{lem:geo:exp:var}}} \cdot \sqrt[3]{H_1 \cdot H_2 \cdot H_3}\right)^2} = \frac{\|f\|_p^{2p}}{ C_{\ref{lem:geo:exp:var}}^2} \cdot \Ex{ \left(\frac{1}{\sqrt[3]{E_1\cdot E_2\cdot E_3}}\right)^2} = \O{\|f\|_p^{2p}}.\]
Therefore, we have bounded the second moment. In addition, its communication bound directly follows from the communication bound of our $L_p$ sampler \thmref{thm:lp:sampler}.
\end{proof}

\paragraph{Continuous algorithm.}
Now, we consider an evolving vector $f$ arriving in a data stream. Applying the $L_p$ sampler in \algref{alg:lp:sampler:cent} identifies the maximum index $i$ of the scaled vector $g$ in the continuous monitoring setting, which constructs the geometric mean estimator.
Recall that we conduct a statistical test to ensure that we select the actual maximum, given the estimation error. However, the correctness of our geometric mean estimator relies on the max-stability of exponential variables: $\frac{f^p_{i}}{\be_{i}} \sim \frac{\|f\|_p^p}{\be}$. It is unclear whether this property holds when some instances are rejected due to the statistical test. To address this issue, we run the $\textsc{Counting}$ algorithm with $\O{1}$ accuracy, ensuring that we incur at most an $\O{1}$-fractional bias. This modification results in an additional communications cost of $\O{\sqrt{k}}$ bits, which is acceptable because, ultimately, we only use the continuous difference estimator at granularity $0$, where constant accuracy suffices, and there is roughly $\frac{1}{\eps}$ such continuous different estimators.

Similar to the static algorithm, we calculate the geometric mean estimator. Since we do not have access to the actual value of $g_i$ in the continuous monitoring setting, we make a slight modification to the static algorithm to ensure an unbiased estimate. Instead of taking the geometric mean of three estimators $G_i^p$ under different scaling, we retrieve three estimators $g_i^{p/3}$ and then multiply them together. Hence, our goal is to give an unbiased estimation of $G_{i}^{p/3}$ with bounded variance. This is done by applying the truncated Taylor estimator mentioned introduced in \secref{sec:truncated:taylor}:
\[\widehat{g_i^{(q)}} = \sum_{q=0}^Q \binom{p/3}{q}s_i^{p/3-q}(g_i-s_i)^q,\]
where $s_i$ is a $(1+\frac{1}{100p})$-approximation to $g_i$. 
Truncating at $\O{\log n}$ terms, the estimation only has a $\frac{1}{\poly(n)}$ additive bias.
We describe the algorithms in full in \algref{alg:lp:est:cent} and \algref{alg:lp:unbiased:cts:frac}.

\begin{algorithm}[!htb]
\caption{Distributed monitoring $F_p$ estimation: Truncated Taylor estimator}
\alglab{alg:lp:est:cent}
\begin{algorithmic}[1]
\Require{Input vector $f = \sum_{j \in [k]} f(j) \in \mathbb{R}^n$ from a distributed data stream, index $i$}
\Ensure{Unbiased estimate of $g_i^{p/3}$}
\State{The central coordinator sends the random seeds to each server to generate $n$ i.i.d. exponential random variables with rate $1$}
\State{For each update, obtain a set $\SetPL$ containing the maximum index} \Comment{See \thmref{thm:pl_recover}}
\For{$i \in \SetPL$}
\State{$\widehat{g_i} \gets \textsc{Counting}(g_i,\frac{1}{100p})$} \Comment{Give a $(1+\frac{1}{100p})$-approximation to $g_i$}
\EndFor
\State{$i \gets \argmax_{\bi} \widehat{g_{\bi}}$}
\State{$\widehat{g_i^{(q)}} \gets \textsc{Counting}(g_i,\frac{1}{100p})$ for all $q \in [\O{\log n}]$}
\State{$\widehat{g_i^{p/3}} \gets \textsc{TaylorEst}(\{\widehat{g_i^{(q)}}\}_{q\in[\O{\log n}]})$} \Comment{See \lemref{lem:taylor:fi}}
\State{\Return $\widehat{g_i^{p/3}}$}
\end{algorithmic}
\end{algorithm}

\begin{algorithm}[!htb]
\caption{Distributed monitoring $F_p$ estimation: geometric mean estimator}
\alglab{alg:lp:unbiased:cts:frac}
\begin{algorithmic}[1]
\Require{Input vector $f = \sum_{j \in [k]} f(j) \in \mathbb{R}^n$ from a distributed data stream}
\Ensure{Nearly unbiased estimate for $\|f\|_p^p$}
\For{$r \in [3]$}
\State{Run the distributed $F_p$ estimation algorithm and collect the estimate $H_r$} \Comment{\algref{alg:lp:est:cent}}
\EndFor
\State{\Return $\frac{1}{ C_{\ref{lem:geo:exp:var}}} \cdot H_1 \cdot H_2 \cdot H_3$}
\end{algorithmic}
\end{algorithm}

We now show the correctness of our algorithm.
\begin{lemma}
\lemlab{lem:lp:unbiased:cts:frac}
With probability $1-\frac{1}{\poly(n)}$, \algref{alg:lp:unbiased:cts:frac} outputs an estimation $\widehat{\|f\|_p^p}$ to $\|f\|_p^p$ such that $\Ex{\widehat{\|f\|_p^p}} = \|f\|_p^p \cdot (1 \pm \O{\eps})$ and $\Ex{\left(\widehat{\|f\|_p^p}\right)^2} = \O{\|f\|_p^{2p} \cdot \log^6 n}$.
\end{lemma}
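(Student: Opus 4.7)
The plan is to show that the three factors $H_1,H_2,H_3$ produced by independent invocations of \algref{alg:lp:est:cent} form a geometric-mean estimator for $\|f\|_p^p$, then apply \lemref{lem:geo:exp:var} to bound its first and second moments. First I would condition on the good event $\calG$ consisting of: (i) \textsc{PLRecovery} succeeds for each of the three instances (so that the index $i^*_r := \argmax_{i \in [n]} g_{i,r}$ lies in $\SetPL_r$), as guaranteed by \thmref{thm:pl_recover}; (ii) every invocation of \textsc{Counting} returns a $(1+\tfrac{1}{100p})$-approximation to its target, as guaranteed by \thmref{thm:counting}; and (iii) for each $r$, the truncation event $\calE$ of \lemref{lem:truncated:inverse:ex} holds for the hidden exponential $\be_r$ produced by max-stability. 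Each of these fails with probability at most $\tfrac{1}{\poly(n)}$, and a union bound shows $\PPr{\calG} \ge 1 - \tfrac{1}{\poly(n)}$.

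Next, I would argue that the index $i_r$ chosen by the algorithm as $\argmax_{i\in\SetPL_r}\widehat{g_{i,r}}$ satisfies $g_{i_r,r} \ge \bigl(1-\O{\tfrac{1}{p}}\bigr)\,\max_i g_{i,r}$: since $\widehat{g_{i_r,r}} \ge \widehat{g_{i^*_r,r}}$ and both estimates are within a $(1\pm\tfrac{1}{100p})$ factor of the true values, we obtain $g_{i_r,r}^{p/3} = (1\pm\O{\eps})\,(\max_i g_{i,r})^{p/3}$ for a small constant $\eps$ (since $(1-\tfrac{1}{50p})^{p/3}=1-\O{1}$ tends to a constant close to $1$). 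The estimator $H_r$ is then the output of the truncated Taylor estimator of \algref{alg:taylor} applied to $\O{\log n}$ fresh independent $(1+\tfrac{1}{100p})$-approximations of $g_{i_r,r}$, which are independent of the estimates used to select $i_r$ because they come from newly instantiated \textsc{Counting} streams. Applying \lemref{lem:taylor:fi} with exponent $p/3$ therefore yields, conditional on $i_r$,
\[
\Ex{H_r \mid i_r} = g_{i_r,r}^{p/3} \pm n^{-c}, \qquad \Ex{H_r^2 \mid i_r} = \O{g_{i_r,r}^{2p/3}\cdot \log^2 n}.
\]

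Finally, I would assemble the three instances. Since the three instances use independent scalings and independent counting sketches, the random variables $H_1,H_2,H_3$ are mutually independent. Using the bounds above plus $g_{i_r,r}^{p/3} = (1\pm\O{\eps})(\max_i g_{i,r})^{p/3}$ and \corref{cor:exp_inverse}, which gives $(\max_i g_{i,r})^p = \|f\|_p^p/\be_r$ for independent standard exponentials $\be_1,\be_2,\be_3$,
\[
\Ex{H_1 H_2 H_3} = (1\pm\O{\eps})\,\|f\|_p^{p}\cdot \Ex{\tfrac{1}{(\be_1\be_2\be_3)^{1/3}}} \pm n^{-c}.
\]
\lemref{lem:geo:exp:var} identifies the exponential expectation as $C_{\ref{lem:geo:exp:var}}$, so dividing by $C_{\ref{lem:geo:exp:var}}$ in the algorithm yields $\Ex{\widehat{\|f\|_p^p}} = (1\pm\O{\eps})\|f\|_p^p$. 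The second-moment bound is analogous:
\[
\Ex{(H_1H_2H_3)^2} = \O{\log^6 n}\cdot \|f\|_p^{2p}\cdot \Ex{\tfrac{1}{(\be_1\be_2\be_3)^{2/3}}},
\]
and \lemref{lem:geo:exp:var} bounds the last expectation by a constant, giving $\Ex{(\widehat{\|f\|_p^p})^2} = \O{\|f\|_p^{2p}\cdot \log^6 n}$. All error terms of order $n^{-c}$ are absorbed by conditioning on the truncation event $\calE$, which keeps each $g_{i_r,r}$ in $[\tfrac{1}{\poly(n)},\poly(n)]$.

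\paragraph{Main obstacle.} The subtlety I expect to spend the most care on is the coupling between the choice of index $i_r$ and the Taylor estimator $H_r$: we need that the multiplicative slack $(1-\O{\tfrac{1}{p}})^{p/3}$ incurred by possibly picking a near-maximum rather than the exact maximum, together with the Taylor truncation bias, combine to give only a $(1\pm\O{\eps})$ distortion, and that the independence of the three instances survives after these approximate-argmax selections. Verifying this requires carefully separating the randomness used to pick $i_r$ (the first Counting pass over $\SetPL_r$) from the randomness used by \algref{alg:taylor} (the subsequent $\O{\log n}$ independent Counting runs), and then invoking the fact that the Taylor estimator's correctness depends only on the per-coordinate variance of its inputs, not on how $i_r$ was chosen.
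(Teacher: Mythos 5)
Your proposal is correct and follows essentially the same route as the paper's proof: bound $\Ex{H_r}$ and $\Ex{H_r^2}$ via \lemref{lem:taylor:fi}, absorb the approximate-argmax selection into a $(1\pm\O{\eps})$ multiplicative distortion using the $\textsc{Counting}$ guarantee, invoke max-stability (\corref{cor:exp_inverse}) to write each true maximum as $\|f\|_p/E_r^{1/p}$, and finish with the geometric-mean moment bounds of \lemref{lem:geo:exp:var}. Your treatment is somewhat more explicit than the paper's about separating the randomness used to select $i_r$ from that used by the Taylor estimator, but the argument is the same.
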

\begin{proof}
Consider an instance $r$ and a fixed time $t$. Let $i$ be the reported max index, by \lemref{lem:taylor:fi}, our estimator $H_r$ satisfies $\Ex{H_r} =  g_i^{p/3} \pm n^{-c}$ and $\Ex{H_r^2} = \O{g_i^{p/3}\cdot \log^2 n}$, where the expectation is taken w.r.t. the \textsc{Counting} procedure. Now, since we have a $(1+\eps)$-approximation by \textsc{Counting} for each item in $\SetPL$, $g_i$ is within $(1\pm\O{\eps})$ fraction from the real max, denoted by $g_{D(1)}$. Therefore, we have $\Ex{H_r} =  g_{D(1)}^{p/3} \cdot (1 \pm \O{\eps})$ and $\Ex{H_r^2} = \O{g_{D(1)}^{p/3}\cdot \log^2 n}$.
By \corref{cor:exp_inverse}, we have
\[g_i^{p/3} = \max_{j \in [n^{c+1}]} g_j^{p/3} = \frac{\|f\|_p^{p/3}}{E^{1/3}},\]
where $E$ is an independent exponential variable with rate $1$. 
Therefore, let $g_{D(1),r}$ denote the real max in round $r$, we have
\[\Ex{\frac{1}{C_{\ref{lem:geo:exp:var}}\cdot n^c} \cdot H_1 \cdot H_2 \cdot H_3} = \frac{1}{ C_{\ref{lem:geo:exp:var}}\cdot n^c} \cdot \Ex{ g_{D(1),1}^{p/3} \cdot g_{D(1),2}^{p/3} \cdot g_{D(1),3}^{p/3}}\cdot( 1 \pm \O{\eps})\]
Now, by \lemref{lem:geo:exp:var}, we have
\begin{align*}\Ex{\frac{1}{ C_{\ref{lem:geo:exp:var}}} \cdot H_1 \cdot H_2 \cdot H_3} = \frac{\|f\|_p^p}{ C_{\ref{lem:geo:exp:var}}} \cdot \Ex{ \frac{1}{\sqrt[3]{E_1\cdot E_2\cdot E_3}}}\cdot (1 \pm \O{\eps}) =  \|f\|_p^p (1 \pm \O{\eps}).
\end{align*}
 Moreover, using \lemref{lem:geo:exp:var} again yields
\[\Ex{\left(\frac{1}{ C_{\ref{lem:geo:exp:var}}} \cdot H_1 \cdot H_2 \cdot H_3\right)^2} = \O{\log^6 n} \cdot \frac{\|f\|_p^{2p}}{ C_{\ref{lem:geo:exp:var}}^2} \cdot \Ex{ \left(\frac{1}{\sqrt[3]{E_1\cdot E_2\cdot E_3}}\right)^2} = \O{\|f\|_p^{2p} \cdot \log^6 n}.\]
Therefore, we have bounded the second moment.
\end{proof}
We now upper bound the total communication cost of our protocol. 
\begin{lemma}
\lemlab{lem:lp:unbiased:cts:comm}
With probability $1-\frac{1}{\poly(n)}$, \algref{alg:lp:unbiased:cts:frac} uses $k^{p-1}\cdot\polylog(n) + \frac{\sqrt{k}}{\eps} \cdot \polylog(n)$ bits of communication.
\end{lemma}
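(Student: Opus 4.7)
The plan is to bound the total communication by decomposing \algref{alg:lp:unbiased:cts:frac} into its three independent invocations of \algref{alg:lp:est:cent}, and bounding each invocation separately. A single call to \algref{alg:lp:est:cent} has three parts whose communication I will analyze in turn: (a) the \textsc{PLRecovery} subroutine that returns the candidate set $\SetPL$; (b) the $|\SetPL|$-many instances of \textsc{Counting} run with constant accuracy $\frac{1}{100p}$, used to identify the argmax index $i$ among the candidates in $\SetPL$; and (c) the $\O{\log n}$ independent instances of \textsc{Counting} launched on $g_i$ that feed into \textsc{TaylorEst} to produce the approximately unbiased estimate of $g_i^{p/3}$.

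For part (a), I will invoke \thmref{thm:pl_recover}, which gives expected communication $(k+k^{p-1})\polylog(n)$. For part (b), I will first observe via \lemref{lem:num:set:pl} that $|\SetPL|\le\polylog(n)$ with high probability within a round, and that there are $\polylog(n)$ rounds in the data stream (by the analogue of \lemref{lem:round:lp:sampler} applied to this algorithm); combining these with \thmref{thm:counting}, which asserts that each constant-accuracy \textsc{Counting} instance consumes $\O{k\log^2 n}$ bits, this contributes $k\polylog(n)$ in total. For part (c), the $(1\pm\O{\eps})$-bias of the final estimate $\widehat{\|f\|_p^p}$ established in \lemref{lem:lp:unbiased:cts:frac} requires that each count used by \textsc{TaylorEst} be a $(1\pm\O{\eps})$-approximation of $g_i$; invoking the distributed counting algorithm of \cite{HuangYZ12}, each such instance costs $\O{(k+\sqrt{k}/\eps)\polylog(n)}$ bits, and summing over $\O{\log n}$ independent instances yields $(k+\sqrt{k}/\eps)\polylog(n)$ for the Taylor-estimator counts. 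Summing across the three independent runs and observing that $k^{p-1}\ge k$ for $p\ge 2$, the total comes to $k^{p-1}\polylog(n)+\sqrt{k}/\eps\cdot\polylog(n)$, as claimed.

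The main obstacle I expect is carefully amortizing the communication across \emph{both} rounds and candidate-set updates: new items may enter $\SetPL$ at different times within a round and must be tracked until the round ends, and each newly added item launches its own suite of independent counters whose randomness must be fresh to preserve the independence hypothesis of \lemref{lem:taylor:fi}. Handling this cleanly will use the insertion-only structure to argue that $\SetPL$ only grows within a round, so that the high-probability bounds of \lemref{lem:num:set:pl} compose across the $\polylog(n)$ rounds by a union bound, and the Markov-style conversion from expected to high-probability communication (as in the proof of \lemref{lem:lp:sampler:comm}) can then be applied to absorb low-probability deviations into the polylogarithmic factors.
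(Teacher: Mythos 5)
Your decomposition (PL recovery, counting to identify the argmax among $\SetPL$, counting to feed \textsc{TaylorEst}) matches the paper's, and your final total agrees, but you have placed the $\eps$-dependence on the wrong set of \textsc{Counting} instances, and the justification you give for doing so is incorrect. You assert in part (c) that the $(1\pm\O{\eps})$ bias of \lemref{lem:lp:unbiased:cts:frac} forces each counter feeding \textsc{TaylorEst} to be a $(1\pm\O{\eps})$-approximation. It does not: \lemref{lem:taylor:fi} only requires the $Q=\O{\log n}$ inputs to be unbiased with variance $\left(\frac{g_i}{100p}\right)^2$, i.e.\ constant relative accuracy, and the resulting Taylor estimator is then unbiased up to $n^{-c}$ regardless of $\eps$. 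So your part (c) should cost only $k\cdot\polylog(n)$, not $(k+\sqrt{k}/\eps)\polylog(n)$.

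The $\frac{\sqrt{k}}{\eps}\cdot\polylog(n)$ term actually arises from part (b), which you assign constant accuracy. The proof of \lemref{lem:lp:unbiased:cts:frac} needs each tracked item of $\SetPL$ to be a $(1+\eps)$-approximation precisely so that the coordinate selected as $\argmax_{\bi}\widehat{g_{\bi}}$ has true value within a $(1\pm\O{\eps})$ factor of the true maximum $g_{D(1)}$; that misidentification error is the sole source of the $(1\pm\O{\eps})$ multiplicative bias in the final estimate. With constant-accuracy tracking, as you propose, the selected coordinate is only within a constant factor of the max, and the companion correctness lemma would degrade to an $\O{1}$ multiplicative bias. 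The paper therefore runs \textsc{Counting} with accuracy $\eps$ on each of the $\polylog(n)$ items of $\SetPL$ per round, at $\O{\frac{\sqrt{k}}{\eps}\log n}$ bits each, which is where $\frac{\sqrt{k}}{\eps}\cdot\polylog(n)$ comes from. Your arithmetic happens to land on the same total because the two allocations trade the $\eps$-factor between two groups of $\polylog(n)$ counters each, but the argument as written does not establish the bound for the algorithm that the correctness lemma actually analyzes. The remainder of your accounting (PL recovery via \thmref{thm:pl_recover}, $\polylog(n)$ items per round via \lemref{lem:num:set:pl}, $\polylog(n)$ rounds, absorbing $k\le k^{p-1}$ for $p\ge 2$) is consistent with the paper.
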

\begin{proof}
First, we need $k^{p-1}\cdot\polylog(n)$ bits of to run \textsc{LpSampler}. 
Second, we run \textsc{Counting} with accuracy $\eps$ to track each item in $\SetPL$, which uses $\O{\frac{\sqrt{k}}{\eps} \log n}$ bits of communication each. 
Recall that in \lemref{lem:num:set:pl}, we showed that only $\polylog(n)$ items are added to $\SetPL$ whenever $\|f\|_p^p$ doubles.
Therefore, the total communication cost is at most $k^{p-1}\cdot\polylog(n) + \frac{\sqrt{k}}{\eps} \cdot \polylog(n)$ bits. 
\end{proof}
Finally, we give the full guarantees of our continuous $F_p$ estimation protocol, given that \lemref{lem:lp:unbiased:cts:frac} and \lemref{lem:lp:unbiased:cts:comm} hold. 
\begin{theorem}
\thmlab{thm:lp:est}
Given a vector $f \in \mathbb{R}^{n}$ arriving in the distributed streaming model, let $k$ denote the total number of servers. Let $p\ge 2$. 
Then, there is a algorithm \textsc{LpEstUnbiased} that gives an estimation $\widehat{F}$ to $\|f\|_p^p$, which satisfies $\Ex{\widehat{F}} = (1\pm \O{\eps}) \cdot \|f\|_p^p$ and $\Ex{\left(\widehat{F}\right)^2} = \|f\|_p^{2p} \cdot \polylog(n)$. 
The algorithm uses $k^{p-1}\cdot\polylog(n) + \frac{\sqrt{k}}{\eps} \cdot \polylog(n)$ bits of communication. The algorithm succeeds with probability $1-\frac{1}{\poly(n)}$.
\end{theorem}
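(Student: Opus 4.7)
The plan is to obtain Theorem~\ref{thm:lp:est} as a direct combination of Lemma~\ref{lem:lp:unbiased:cts:frac} (correctness of the geometric-mean estimator based on \algref{alg:lp:unbiased:cts:frac}) and Lemma~\ref{lem:lp:unbiased:cts:comm} (its communication cost), so the proof is essentially an assembly rather than a new argument. Concretely, I would take \textsc{LpEstUnbiased} to be \algref{alg:lp:unbiased:cts:frac}: run three independent instances of \algref{alg:lp:est:cent}, each of which maintains a set $\SetPL$ via \thmref{thm:pl_recover}, runs a $(1+\tfrac{1}{100p})$-accurate \textsc{Counting} on every coordinate of $\SetPL$ to pick $i = \argmax \widehat{g_i}$, and then produces the truncated Taylor estimator $H_r$ of $g_i^{p/3}$, and finally output $\frac{1}{C_{\ref{lem:geo:exp:var}}} H_1 H_2 H_3$.

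For the expectation and second moment I would cite Lemma~\ref{lem:lp:unbiased:cts:frac} verbatim: there it is shown, using the Taylor-estimator guarantees of \lemref{lem:taylor:fi}, the max-stability identity $g_{D(1)}^p = \|f\|_p^p/E$ from \corref{cor:exp_inverse}, and the geometric-mean moment bound of \lemref{lem:geo:exp:var}, that $\Ex{\widehat{F}} = (1\pm O(\varepsilon))\|f\|_p^p$ and $\Ex{\widehat{F}^2} = O(\|f\|_p^{2p}\cdot \log^6 n)$, which is exactly $\|f\|_p^{2p}\cdot\polylog(n)$. For the communication bound I would cite Lemma~\ref{lem:lp:unbiased:cts:comm}: \textsc{PLRecovery} costs $(k+k^{p-1})\polylog(n)$ bits by \thmref{thm:pl_recover}, and by \lemref{lem:num:set:pl} only $\polylog(n)$ coordinates ever enter $\SetPL$ per round, each of which is tracked by a distributed counter at accuracy $\tfrac{1}{100p}$ using $\O{\tfrac{\sqrt{k}}{\varepsilon}\log n}$ bits, summing to $k^{p-1}\polylog(n) + \tfrac{\sqrt{k}}{\varepsilon}\polylog(n)$ over all three instances and the polylogarithmic number of doubling rounds.

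The only residual work is a union bound for the high-probability statement: \thmref{thm:pl_recover} fails with probability $1/\poly(n)$, each \textsc{Counting} instance similarly, and there are only $\polylog(n)$ counters over $\polylog(n)$ rounds and across three geometric-mean instances, so a union bound preserves the $1-1/\poly(n)$ success probability. I do not expect a genuine obstacle here, since the heart of the argument -- handling dependency on the anti-rank via the second exponential scaling, amortizing the cost of re-setting counter accuracies via anti-concentration, and removing the unbounded expectation via the geometric-mean plus truncated-Taylor combination -- has already been carried out in \secref{sec:pl:recovery}, \secref{sec:truncated:taylor}, and the proof of \lemref{lem:lp:unbiased:cts:frac}. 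The proof of \thmref{thm:lp:est} itself is therefore one short paragraph that chains these two lemmas and invokes a union bound.
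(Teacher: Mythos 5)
Your proposal matches the paper's own treatment exactly: \thmref{thm:lp:est} is stated as a direct combination of \lemref{lem:lp:unbiased:cts:frac} (expectation and second moment of the geometric-mean estimator of \algref{alg:lp:unbiased:cts:frac}) and \lemref{lem:lp:unbiased:cts:comm} (its communication cost), with the success probability following from a union bound over the already-established high-probability subroutines. There is no gap; the assembly you describe is precisely what the paper does.
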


\subsubsection{Estimation of the Inverse Sampling Probability}

This section provides an estimator for the inverse of the mixed sampling probability $\frac{1}{p_i}$, which is
\[\frac{1}{p_i}=\frac{2}{u_i^p/\|u\|_p^p + v_i^p / \|v\|_p^p} = \frac{2\|u\|_p^p\|v\|_p^p}{\|u\|_p^pv_i^p+\|v\|_p^pu_i^p}.\]
So, we need to construct an algorithm to estimate $\frac{1}{\|u\|_p^pv_i^p+\|v\|_p^pu_i^p}$. 
For convenience, we denote $a = \|u\|_p^pv_i^p$ and $b = \|v\|_p^pu_i^p$. 
Our approach is to express the function $f(a,b) = \frac{1}{a+b}$ via a bivariate Taylor expansion. 
The expansion has bounded tail because the partial derivatives of $f(a,b)$ with respect to both $a$ and $b$ have the same form. 
Then, we utilize the subroutines introduced in previous sections to give a good estimation to each term in the Taylor expansion, such as $\|u\|_p^p$, $\|v\|_p^p$, and $v_i^p$. 
We remark that, since each server stores the vector $u$ upon its arrival, we can obtain the actual value of $u_i$ by querying each server for each sampled index $i$.

Again, for static different estimator, it suffices to use the static subroutines in \secref{sec:unbiased:fp:oneshot} to give (nearly)-unbiased estimates to $\|u\|_p^p$ and $\|v\|_p^p$.
We describe our algorithm in full in \algref{alg:lp:sample:prob}.

\begin{algorithm}[!htb]
\caption{Continuous estimation of the inverse of the sampling probability}
\alglab{alg:lp:sample:prob}
\begin{algorithmic}[1]
\Require{Underlying vector $u$ defined at time $t$, sampled index $i$, $u_i$}
\Ensure{Estimates of $1/p_i$, accuracy $\eps$}
\State{$B \gets \polylog(n)$, $Q \gets \O{\log n}, D \gets \O{\log n}$ } \Comment{$B$ is the number of replications of each estimator, $Q$ is the number of truncated terms of $1/p_i$, $D$ is the truncated terms of $v_i^p$}
\For{$q \in [Q]$}
\For{$b \in [B]$}
\State{$x_{q,b} \gets \textsc{LpEstUnbiasedStat}(u)$} \Comment{Static $F_p$-estimation}
\State{$y_{q,b} \gets \textsc{LpEstUnbiased}(v, \frac{\eps}{C_{\ref{lem:inverse:prob:corr}} \log n})$}
\EndFor
\State{$x_q \gets \frac{1}{B} \sum_{b \in [B]} x_{q,b}$} \Comment{Estimator for $\|u\|_p^p$}
\State{$y_q \gets \frac{1}{B} \sum_{b \in [B]} y_{q,b}$}  \Comment{Estimator for $\|v\|_p^p$}
\For{$b \in [B]$}
\State{Let $s^{(b)}_{i}$ be a $(1+\frac{1}{100p})$-approximation of $v_i$} 
\For{$l \in [D]$}
\State{$\widehat{v_i^{(q,b,l)}} \gets \textsc{Counting}(v_i)$}
\EndFor
\For{$d \in [D]$}
\State{$w_{q,b,d} \gets \binom{p}{d}s_i^{{(b)}^{p-d}} \prod_{l \in [d]} \left(\widehat{v_i^{(q,b,l)}} - s_i^{(b)}\right)$} 
\EndFor
\State{$w_{q,b} \gets \sum_{d \in [D]} w_{q,b,d}$} 
\EndFor
\State{$w_q \gets \frac{1}{B} \sum_{b \in [B]} w_{q,b}$}  \Comment{Estimator for $v_i^p$}
\State{$a_q \gets x_q \cdot w_q$}  \Comment{Estimator for $\|u\|_p^p \cdot v_i^p$}
\State{$b_q \gets y_q \cdot u_i^p$}  \Comment{Estimator for $\|v\|_p^p \cdot u_i^p$}
\EndFor
\State{Let $r$ be a $(1+\frac{1}{100p})$-approximation of $\|u\|_p^p$} \Comment{Obtain by $F_p$-estimation of $u$}
\State{Let $o$ be a $(1+\frac{1}{100p})$-approximation for $\|v\|_p^p$} \Comment{Obtain by $F_p$-estimation of $v$}
\State{Let $h_i$ be a $(1+\frac{1}{100p})$-approximation of $v_i$}
\State{$a_0 \gets r \cdot h_i$}
\State{$b_0 \gets o \cdot u_i^p$}
\State{$z \gets \sum_{q=0}^Q \frac{(-1)^q}{q!} \cdot \frac{1}{(a_0+b_0)^{q+1} }\cdot \prod_{l \in [q]} \left((a_l-a_0)+(b_l-b_0)\right)$}
\State{$x \gets \textsc{LpEstUnbiased}(u)$}
\State{$y \gets \textsc{LpEstUnbiased}(v)$}
\State{\Return $2xyz$}
\end{algorithmic}
\end{algorithm}

We show the tail bound of the bivariate Taylor expasion of $f(a,b)$.
\begin{lemma}
\lemlab{lem:bivar:taylor}
Let $Q = \O{\log n}$, $a_0 \in [0.99a,1.01a]$ and $b_0 \in [0.99b,1.01b]$. Suppose we have $a \ge 1$ and $b\ge 1$. Then,
\[\left|\frac{1}{a+b} - \sum_{q=0}^Q \frac{(-1)^q}{q!} \cdot \frac{1}{(a_0+b_0)^{q+1} }\cdot ((a-a_0)+(b-b_0))^q\right| < n^{-c}.\]
\end{lemma}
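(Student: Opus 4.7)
The plan is to write out the bivariate Taylor series of $f(a,b)=\frac{1}{a+b}$ explicitly at the basepoint $(a_0,b_0)$ and then bound the $Q$-truncated tail as a geometric series. The structural observation that drives everything is that $f(a,b)=g(a+b)$ with $g(t)=1/t$, so every mixed partial derivative of total order $q$ equals $g^{(q)}(a_0+b_0) = (-1)^q q!/(a_0+b_0)^{q+1}$, independent of the split between $a$ and $b$. This is precisely the property the authors highlighted when motivating the choice of $f$, and it is what allows a ``univariate-looking'' tail bound to work.

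With that observation, I would substitute into the standard bivariate Taylor formula and group by $q=d+e$: the factor $q!/(d!\,e!)$ turns into $\binom{q}{d}$, and the binomial theorem collapses $\sum_{d+e=q}\binom{q}{d}(a-a_0)^d(b-b_0)^e$ into $((a-a_0)+(b-b_0))^q$. This yields, as an identity whenever the series converges absolutely,
\[
\frac{1}{a+b} \;=\; \sum_{q=0}^\infty \frac{(-1)^q}{(a_0+b_0)^{q+1}}\,\bigl((a-a_0)+(b-b_0)\bigr)^q,
\]
which is the canonical form matching the partial sum in the lemma statement.

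To bound the tail past index $Q$, set $r := \frac{(a-a_0)+(b-b_0)}{a_0+b_0}$. The hypotheses $a_0\in[0.99a,1.01a]$ and $b_0\in[0.99b,1.01b]$ give $|(a-a_0)+(b-b_0)|\le 0.01(a+b)$ and $a_0+b_0\ge 0.99(a+b)$, so $|r|\le \tfrac{1}{99}$. Meanwhile $a,b\ge 1$ yields $a_0+b_0\ge 2\cdot 0.99 \ge 1$. The tail is therefore bounded as a geometric series:
\[
\Bigl|\sum_{q=Q+1}^\infty \frac{(-r)^q}{a_0+b_0}\Bigr| \;\le\; \frac{|r|^{Q+1}}{(a_0+b_0)(1-|r|)} \;\le\; 2\,|r|^{Q+1} \;\le\; 2\cdot(1/99)^{Q+1}.
\]
Choosing $Q=C\log n$ with $C$ a sufficiently large constant (depending on the desired exponent $c$) makes this at most $n^{-c}$.

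The only subtlety I expect is bookkeeping: one must justify the rearrangement of the bivariate Taylor series by grouping terms of equal total order, which is legal here because $|r|<1$ implies absolute convergence uniformly over the relevant $(a,b)$. After that, everything reduces to the standard geometric tail estimate, and no integer-$p$ vs.\ fractional-$p$ subtleties arise since the statement is purely about the scalar function $1/(a+b)$. The hardest conceptual point is already built into the structure of $f$ (equal partial derivatives), which is why the authors emphasized it as the reason for choosing this estimator rather than, say, a binomial-series approach.
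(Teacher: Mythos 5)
Your proof is correct and follows essentially the same route as the paper's: expand $1/(a+b)$ at $(a_0,b_0)$, use the fact that every order-$q$ partial of $f(x,y)=1/(x+y)$ equals $(-1)^q q!/(x+y)^{q+1}$ to collapse the inner sum via the binomial theorem into $((a-a_0)+(b-b_0))^q$, and bound the tail as a geometric series with ratio $|r|=|(a-a_0)+(b-b_0)|/(a_0+b_0)\le 1/99$. Your bookkeeping ($a_0+b_0\ge 0.99(a+b)\ge 1$, tail at most $2(1/99)^{Q+1}$) is fine, and the absolute-convergence remark justifying the regrouping is the right thing to say.

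One substantive point, in your favor: your series correctly has \emph{no} $\frac{1}{q!}$ factor, since the $q!$ from the derivative cancels against the $\frac{1}{d!\,e!}$ of the bivariate formula; equivalently, writing $\delta=(a-a_0)+(b-b_0)$, the identity is just the geometric expansion of $\frac{1}{(a_0+b_0)+\delta}$. The lemma as stated (and the paper's own proof, and the estimator in \lemref{lem:taylor:inverse} and \algref{alg:lp:sample:prob}) retains a spurious $\frac{1}{q!}$; with that factor the infinite sum converges to $e^{-\delta/(a_0+b_0)}/(a_0+b_0)$ rather than $\frac{1}{a+b}$, leaving a residual of order $\delta^2/(a_0+b_0)^3$, which under the stated hypotheses is a constant fraction of $\frac{1}{a+b}$ and not $n^{-c}$. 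The paper's technical overview writes the estimator without the $\frac{1}{q!}$, consistent with your derivation, so you have proved the intended (corrected) statement; the $\frac{1}{q!}$ in the statement should be regarded as a typo to be removed rather than a gap in your argument.
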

\begin{proof}
First, we have the bivariate Taylor series of $\frac{1}{a+b}$ to be
\[\frac{1}{a+b} = \sum_{q=0}^\infty \frac{(-1)^q}{q!} \cdot \frac{1}{(a_0+b_0)^{q+1}} \cdot ((a-a_0)+(b-b_0))^q.\]
Now, by our assumption, we have $a \ge 1$ and $b \ge 1$. Since $a_0 \in [0.99a,1.01a]$ and $b_0 \in [0.99b,1.01b]$, we have
\[\left|\frac{(-1)^q}{q!} \cdot \frac{1}{(a_0+b_0)^{q+1}} \cdot ((a-a_0)+(b-b_0))^q\right| \le \frac{1}{a+b} \cdot \left(\frac{1}{40}\right)^q.\]
Hence, we have that for $Q = \O{\log n}$,
\[\sum_{q=Q}^\infty \frac{(-1)^q}{q!} \cdot \frac{1}{(a_0+b_0)^{q+1}} \cdot ((a-a_0)+(b-b_0))^q \le \frac{1}{a+b} \cdot \sum_{q=Q}^\infty \left(\frac{1}{40}\right)^q \le n^{-c}.\]
\end{proof}

Next, we state the guarantee of our truncated bivariate Taylor estimator for $f(a,b)$, given that we have good approximations to both $a$ and $b$.
\begin{lemma}
\lemlab{lem:taylor:inverse}
Let $Q = \O{\log n}$, $a_0 \in [0.99a,1.01a]$, and $b_0 \in [0.99b,1.01b]$. 
Suppose we have $Q$ independent estimates for $a$ and $b$ satisfying $\Ex{\widehat{a^{(l)}}} = a\cdot(1 \pm \eps)$, $\Ex{\widehat{b^{(l)}}} = b\cdot(1 \pm \eps)$ and $\Var{\widehat{a^{(l)}}}\le a^2/5$, $\Var{\widehat{b^{(l)}}}\le b^2/5$. 
We define the Taylor estimator by
\[z = \sum_{q=0}^Q \frac{(-1)^q}{q!} \cdot \frac{1}{(a_0+b_0)^{q+1} }\cdot \prod_{l\in[q]} \left((\widehat{a^{(l)}}-a_0)+(\widehat{b^{(l)}}-b_0)\right).\]
Then, we have $\Ex{z} = \frac{1}{a+b}\cdot(1 \pm \O{\eps \log n})$ and $\Ex{z} = \left(\frac{1}{a+b}\right)^2 \cdot \log^2 n$.
\end{lemma}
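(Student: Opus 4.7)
Let $X_l := (\widehat{a^{(l)}}-a_0) + (\widehat{b^{(l)}}-b_0)$, so the $X_l$ are i.i.d.\ across $l$ by the independence of the estimators. Let $c_q := \frac{(-1)^q}{q!(a_0+b_0)^{q+1}}$ and $Y := (a-a_0)+(b-b_0)$. The strategy is to compare the random estimator $z = \sum_{q=0}^Q c_q \prod_{l \in [q]} X_l$ to the deterministic truncated Taylor polynomial $z^\star := \sum_{q=0}^Q c_q Y^q$. By \lemref{lem:bivar:taylor} we already have $|z^\star - \frac{1}{a+b}| \le n^{-c}$, so it suffices to control $|\mathbb{E}[z] - z^\star|$ and $\mathbb{E}[z^2]$. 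Write $\mu := \mathbb{E}[X_l] = (a(1\pm\eps)-a_0)+(b(1\pm\eps)-b_0) = Y + \delta$ with $|\delta| \le \eps(a+b)$, and $\sigma^2 := \mathbb{E}[X_l^2]$. Using the stated bounds $\Var(\widehat{a^{(l)}}) \le a^2/5$ and $\Var(\widehat{b^{(l)}}) \le b^2/5$ together with the independence of $\widehat{a^{(l)}}$ and $\widehat{b^{(l)}}$, and $a_0, b_0$ lying in $[0.99a,1.01a]$ and $[0.99b,1.01b]$, one checks $|\mu| \le 0.02(a+b)$ and $\sigma^2 \le (a+b)^2/4$, hence both $r^2 := \sigma^2/(a_0+b_0)^2 \le 1/3$ and $|s| := |\mu|/(a_0+b_0) \le 1/40$ are safely below $1$.

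\textbf{Expectation.} By independence across $l$, $\mathbb{E}[\prod_{l=1}^q X_l] = \mu^q$, so $\mathbb{E}[z] - z^\star = \sum_{q=1}^Q c_q(\mu^q - Y^q)$. The mean value theorem gives $|\mu^q - Y^q| \le q\,\max(|\mu|,|Y|)^{q-1} |\delta| \le q\,(0.025(a+b))^{q-1} \eps(a+b)$, and combining with $|c_q| \le \frac{1}{q!(0.99)^{q+1}(a+b)^{q+1}}$ yields a bound on the $q$-th discrepancy of $\frac{O(\eps)}{(q-1)!(a+b)} \cdot (0.03)^{q-1}$. Summing the geometric series in $q$ gives $|\mathbb{E}[z] - z^\star| \le \frac{O(\eps)}{a+b}$, which combined with $z^\star = \frac{1}{a+b} \pm n^{-c}$ yields $\mathbb{E}[z] = \frac{1}{a+b}(1 \pm O(\eps))$, comfortably within the stated $(1 \pm O(\eps \log n))$ guarantee.

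\textbf{Second moment.} Expanding, $\mathbb{E}[z^2] = \sum_{q,q'=0}^Q c_q c_{q'} \mathbb{E}\bigl[\prod_{l=1}^q X_l \prod_{l=1}^{q'} X_l\bigr]$, where, crucially, the two products share the estimators $X_1,\dots,X_{\min(q,q')}$ but are otherwise independent. Using $\mathbb{E}[X_l^2] = \sigma^2$ for the shared indices and $\mathbb{E}[X_l] = \mu$ for the unshared ones gives
\begin{equation*}
\mathbb{E}[z^2] = \frac{1}{(a_0+b_0)^2} \sum_{q,q'=0}^Q \frac{(-1)^{q+q'}}{q!\,q'!}\, r^{2\min(q,q')} \mu^{|q-q'|}(a_0+b_0)^{-|q-q'|}\cdot (a_0+b_0)^{|q-q'|},
\end{equation*}
which simplifies to $\frac{1}{(a_0+b_0)^2}\sum_{q,q'} \frac{(-1)^{q+q'}}{q!q'!} r^{2\min(q,q')}\,s^{|q-q'|}$. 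Splitting the sum along the diagonal $q=q'$ and the off-diagonal $d = |q-q'| \ge 1$ and using $r^2, |s| < 1$, the diagonal sums to a constant (it is essentially $I_0(2r)$) and the off-diagonal is geometrically dominated by $\sum_p r^{2p}/(p!)^2 \cdot \sum_{d\ge 1} |s|^d/d!$, again a constant. Therefore $\mathbb{E}[z^2] \le \frac{C}{(a+b)^2}$ for an absolute constant $C$, which trivially gives the claimed $\frac{\log^2 n}{(a+b)^2}$ bound.

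\textbf{Main obstacle.} The delicate point is the bookkeeping of correlations in $z^2$: because the $q$-th term of the sum reuses the same $X_1,\dots,X_q$ that appear in all later terms, one cannot simply factor $\mathbb{E}[z^2]=(\mathbb{E}[z])^2$. Tracking precisely which factors collapse to $\sigma^2$ (shared indices) versus $\mu$ (unique indices) is what produces the $r^{2\min(q,q')} s^{|q-q'|}$ structure; the convergence of the resulting double sum then hinges on ensuring $r^2 < 1$, which is exactly why the lemma imposes the rather specific variance bound $\Var(\widehat{a^{(l)}}) \le a^2/5$. A secondary subtlety is the bias propagation in the expectation: naively expanding $(Y+\delta)^q$ term by term would give bounds that grow with $q$; the mean-value-theorem estimate above avoids this by peeling off a single factor of $q\delta$ and absorbing the remaining $q-1$ powers into the geometric factor $(0.025(a+b)/(a_0+b_0))^{q-1}$.
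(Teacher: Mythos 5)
Your proof is correct, but it takes a genuinely different route from the paper's in both halves, and in each case your argument is sharper. For the expectation, the paper writes the per-factor bias multiplicatively, claiming $\Ex{(\widehat{a^{(l)}}-a_0)+(\widehat{b^{(l)}}-b_0)} = ((a-a_0)+(b-b_0))(1\pm\eps)$ and compounding this to $(1\pm\O{q\eps})$ over $q$ factors, which accumulates to the stated $\O{\eps\log n}$ relative error; your additive bookkeeping $\mu = Y+\delta$ with $|\delta|\le\eps(a+b)$ followed by the mean value theorem is actually the more faithful reading of the hypothesis (the bias is $\pm\eps a$, which need not be small relative to $a-a_0$), and it yields the stronger $(1\pm\O{\eps})$ bound with no $\log n$ loss. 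For the second moment, the paper sidesteps cross-terms entirely by applying $(\sum_{q=0}^{Q}x_q)^2\le (Q+1)\sum_q x_q^2$ together with a per-term bound $\Ex{x_q^2}\le \left(\tfrac{1}{a+b}\right)^2$, paying the $Q^2=\log^2 n$ factor that appears in the lemma statement; you instead expand the double sum, exploit the shared-prefix correlation structure ($\sigma^2$ on the common indices, $\mu$ on the rest), and show the resulting series $\sum_{q,q'}\tfrac{r^{2\min(q,q')}|s|^{|q-q'|}}{q!\,q'!}$ converges to an absolute constant since $r^2<1$ and $|s|<1$, giving $\O{1}/(a+b)^2$. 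Both routes land within the claimed bounds; yours buys tighter constants and a cleaner accounting of the reused randomness, at the cost of the exact double-sum computation, while the paper's Cauchy--Schwarz shortcut is simpler but lossier. The only blemish is a typo in your displayed formula for $\Ex{z^2}$, where the factors $(a_0+b_0)^{-|q-q'|}(a_0+b_0)^{|q-q'|}$ cancel and one power of $(a_0+b_0)^{-|q-q'|}$ is missing; the simplified form you state immediately after is nevertheless correct.
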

\begin{proof}
First, since we have independent estimators such that $\Ex{\widehat{a^{(l)}}} = a\cdot(1 \pm \eps)$, $\Ex{\widehat{b^{(l)}}} = b\cdot(1 \pm \eps)$, so it holds that
\begin{align*}\Ex{\prod_{l\in[q]} \left((\widehat{a^{(l)}}-a_0)+(\widehat{b^{(l)}}-b_0)\right)} &= \prod_{l\in[q]} \Ex{(\widehat{a^{(l)}}-a_0)+(\widehat{b^{(l)}}-b_0)} \\
&= ((a-a_0)+(b-b_0)^q \cdot(1 \pm \eps)^q \\
&=((a-a_0)+(b-b_0))^q \cdot(1 \pm \O{q\eps}) ,
\end{align*}
where $q \le \O{\log n}$. Then, summing up the bias in each term of the truncated Taylor series $z$, we have $\Ex{z} = \frac{1}{a+b}\cdot(1 \pm \O{\eps \log n})$ by \lemref{lem:bivar:taylor}. Moreover, by \lemref{lem:aux:lp:est} we have
\begin{align*}
\Ex{\left((\widehat{a^{(l)}}-a_0)+(\widehat{b^{(l)}}-b_0)\right)^2} &= \Var{\widehat{a^{(l)}} + \widehat{b^{(l)}}} + \Ex{(\widehat{a^{(l)}}-a_0)+(\widehat{b^{(l)}}-b_0)}^2 \\
&= \Var{\widehat{a^{(l)}} + \widehat{b^{(l)}}} + ((a-a_0)+(b-b_0))^2.
\end{align*}
Now, since our estimates for $a$ and $b$ are independent, we have
\begin{align*}
\Ex{\left((\widehat{a^{(l)}}-a_0)+(\widehat{b^{(l)}}-b_0)\right)^2} &= \Var{\widehat{a^{(l)}}} + \Var{\widehat{b^{(l)}}} + ((a-a_0)+(b-b_0))^2 \\
&= \frac{a^2}{5}+\frac{b^2}{5} + \left((a-a_0)+(b-b_0)\right)^2.
\end{align*}
Now, since $a_0 \in [0.99a,1.01a]$ and $b_0 \in [0.99b,1.01b]$, we have
\[\Ex{\left((\widehat{a^{(l)}}-a_0)+(\widehat{b^{(l)}}-b_0)\right)^2} \le \left(\frac{a+b}{2}\right)^2.\]
Now, we have
\begin{align*}\Ex{\left(\prod_{l\in[q]} \left((\widehat{a^{(l)}}-a_0)+(\widehat{b^{(l)}}-b_0)\right)\right)^2} = \prod_{l\in[q]} \Ex{\left((\widehat{a^{(l)}}-a_0)+(\widehat{b^{(l)}}-b_0)\right)^2} 
\le \left(\frac{a+b}{2}\right)^{2q}.
\end{align*}
Notice that $2(a_0+b_0) > a+b$. 
Therefore, we have
\begin{align*}
&\Ex{ \left(\frac{(-1)^q}{q!} \cdot \frac{1}{(a_0+b_0)^{q+1} }\cdot \prod_{l\in[q]} \left((\widehat{a^{(l)}}-a_0)+(\widehat{b^{(l)}}-b_0)\right)\right)^2} \\
\le & \frac{1}{(q!)^2} \cdot \left(\frac{1}{a_0+b_0}\right)^{2q+2} \cdot  \left(\frac{a+b}{2}\right)^{2q} \le \left(\frac{1}{a+b}\right)^2.
\end{align*}
Last, by the AM-GM inequalities, we have
\begin{align*}\Ex{o^2} &= \Ex{\left(\sum_{q=0}^Q \frac{(-1)^q}{q!} \cdot \frac{1}{(a_0+b_0)^{q+1} }\cdot \prod_{l\in[q]} \left((\widehat{a^{(l)}}-a_0)+(\widehat{b^{(l)}}-b_0)\right)\right)^2} \\
&\le \O{Q} \cdot \sum_{q=0}^Q \Ex{\left(\frac{(-1)^q}{q!} \cdot \frac{1}{(a_0+b_0)^{q+1} }\cdot \prod_{l\in[q]} \left((\widehat{a^{(l)}}-a_0)+(\widehat{b^{(l)}}-b_0)\right)\right)^2} \\
&\le \O{Q^2} \left(\frac{1}{a+b}\right)^2.
\end{align*}
which proves our statement as $Q = \O{\log n}$.
\end{proof}

The following lemma proves that we can reduce the variance by an arithmetic mean estimator. 
This will be used to upper bound the variance of our truncated bivariate Taylor estimator.
\begin{lemma}
\lemlab{lem:repetition}
Let $\widehat{a}$ be an estimation of $a$ with $\Ex{\widehat{a}} = a \cdot (1 \pm \eps)$ and $\Var{\widehat{a}} = C$. 
Then if we take the mean of $M$ independent estimations, the resulting quantity has $(1+\eps)$-multiplicative error in expectation and its variance will be $C/M$.
\end{lemma}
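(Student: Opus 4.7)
The plan is to handle this by a standard application of linearity of expectation and the independence of the repeated samples. Concretely, let $\widehat{a}_1, \ldots, \widehat{a}_M$ denote $M$ independent copies of the estimator $\widehat{a}$, and define the arithmetic mean $\bar{a} := \frac{1}{M} \sum_{m=1}^M \widehat{a}_m$. The goal is to verify the two claims on $\bar{a}$ separately.

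For the expectation, I would invoke linearity directly: $\Ex{\bar{a}} = \frac{1}{M} \sum_{m=1}^M \Ex{\widehat{a}_m} = \Ex{\widehat{a}}$, which by hypothesis lies in $a \cdot (1 \pm \eps)$. Thus the multiplicative bias in expectation is preserved under averaging, independent of $M$. No subtlety arises here since we only need the first moment.

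For the variance, I would use the fact that independent random variables have additive variance. That is, $\Var{\bar{a}} = \frac{1}{M^2} \sum_{m=1}^M \Var{\widehat{a}_m} = \frac{1}{M^2} \cdot M \cdot C = \frac{C}{M}$, where the first step uses independence of the $\widehat{a}_m$ and the fact that $\Var{\alpha X} = \alpha^2 \Var{X}$ for a constant $\alpha$.

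There is no real obstacle here: the statement is elementary and the only thing to be careful about is to explicitly invoke independence to discard cross-covariance terms, and to note that the hypothesis controls only expectation and variance of a single copy, both of which behave the desired way under an arithmetic mean. The proof can be written in just a few lines.
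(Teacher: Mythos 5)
Your proof is correct and follows exactly the same route as the paper's: linearity of expectation for the first claim, and independence together with the scaling rule $\Var{\alpha X}=\alpha^2\Var{X}$ for the second. Nothing further is needed.
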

\begin{proof}
First, we have the expectation to be
\[\Ex{\frac{1}{M} \sum_{l=1}^M \widehat{a^{(l)}}} = \frac{1}{M} \cdot M\cdot a \cdot(1 \pm \eps) = a \cdot (1 \pm \eps).
\]
Moreover, since they are independent, we have
\[\Var{\frac{1}{M}\sum_{l=1}^M \widehat{a^{(l)}}} = \frac{1}{M^2}\sum_{l=1}^M\Var{ \widehat{a^{(l)}}}= \frac{C}{M}. \]
\end{proof}

Applying the above statement with $M = \polylog(n)$, we prove the correctness of our algorithm:
\begin{lemma}
\lemlab{lem:inverse:prob:corr}
\label{lem:inverse:prob:corr}
With probability $1-\frac{1}{\poly(n)}$, \algref{alg:lp:sample:prob} gives an estimation $\widehat{\frac{1}{p_i}}$ to $\frac{1}{p_i}$ satisfying $\Ex{\widehat{\frac{1}{p_i}}} = \frac{1}{p_i} \cdot\left(1 \pm \O{\frac{\eps}{C_{\ref{lem:inverse:prob:corr}}}}\right)$ and $\Ex{\widehat{\frac{1}{p_i}}^2} = \frac{1}{p_i^2} \cdot \polylog(n)$.
\end{lemma}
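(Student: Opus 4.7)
\begin{proofof}{\lemref{lem:inverse:prob:corr} (proposal)}
The plan is to reduce the correctness of \algref{alg:lp:sample:prob} to an application of \lemref{lem:taylor:inverse} on the function $f(a,b)=\tfrac{1}{a+b}$ with $a=\|u\|_p^p v_i^p$ and $b=\|v\|_p^p u_i^p$, and then to multiply by the two outer unbiased factors $2xy$ estimating $\|u\|_p^p\|v\|_p^p$. The key identity driving the whole argument is
\[\frac{1}{p_i}=\frac{2\|u\|_p^p\|v\|_p^p}{\|u\|_p^p v_i^p+\|v\|_p^p u_i^p}.\]

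I will proceed in four steps. \emph{Step 1: anchors.} I will first show that $a_0=r\cdot h_i$ and $b_0=o\cdot u_i^p$ are $(1+\tfrac{1}{100p})^2$-approximations to $a$ and $b$ respectively, using the guarantees on the $(1+\tfrac{1}{100p})$-approximations $r,o,h_i$ of $\|u\|_p^p$, $\|v\|_p^p$, and $v_i$ (noting that $u_i$ is known exactly via querying). This puts us in the setting of \lemref{lem:taylor:inverse}. \emph{Step 2: per-replicate estimators for $a$ and $b$.} For each $q\in[Q]$, I will show that the averages $x_q,y_q$ (over $B=\polylog(n)$ independent calls to \textsc{LpEstUnbiasedStat} and \textsc{LpEstUnbiased} respectively) inherit the $(1\pm\O{\eps/(C_{\ref{lem:inverse:prob:corr}}\log n)})$-biased expectation and $\polylog(n)/B$-factor variance reduction via \lemref{lem:repetition}, applied to \thmref{thm:fp:est:stat} and \thmref{thm:lp:est}. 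Similarly, the inner Taylor estimator $w_{q,b}$ for $v_i^p$ satisfies $\Ex{w_{q,b}}=v_i^p\pm n^{-c}$ and $\Ex{w_{q,b}^2}=\O{v_i^{2p}\log^2 n}$ by \lemref{lem:taylor:fi}; averaging over $b\in[B]$ reduces its variance to $\O{v_i^{2p}\log^2 n/B}$. Multiplying $x_q\cdot w_q$ (independent by construction), I will conclude $\Ex{a_q}=a\cdot(1\pm \O{\eps/(C_{\ref{lem:inverse:prob:corr}}\log n)})$ and $\Var{a_q}\le a^2/5$ for $B$ a sufficiently large polylogarithm; the analysis for $b_q=y_q\cdot u_i^p$ is analogous but cleaner since $u_i$ is exact.

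\emph{Step 3: outer Taylor expansion.} With Step 2 in hand, the hypotheses of \lemref{lem:taylor:inverse} are satisfied by $\{(a_q,b_q)\}_{q=0}^Q$, so the output
\[z=\sum_{q=0}^Q\frac{(-1)^q}{q!}\cdot\frac{1}{(a_0+b_0)^{q+1}}\cdot\prod_{l\in[q]}\bigl((a_l-a_0)+(b_l-b_0)\bigr)\]
satisfies $\Ex{z}=\tfrac{1}{a+b}\cdot(1\pm\O{\eps\log n/(C_{\ref{lem:inverse:prob:corr}}\log n)})=\tfrac{1}{a+b}\cdot(1\pm\O{\eps/C_{\ref{lem:inverse:prob:corr}}})$ and $\Ex{z^2}=(\tfrac{1}{a+b})^2\cdot\polylog(n)$. \emph{Step 4: outer multiplication.} The returned quantity is $2xyz$ where $x,y$ are fresh independent instances of \textsc{LpEstUnbiased} applied to $u$ and $v$, hence unbiased for $\|u\|_p^p,\|v\|_p^p$ with second moment bounded by $\|u\|_p^{2p}\polylog(n)$ and $\|v\|_p^{2p}\polylog(n)$ (\thmref{thm:fp:est:stat}, \thmref{thm:lp:est}). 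Using independence of $x,y,z$, the expectation factorizes to
\[\Ex{2xyz}=2\|u\|_p^p\|v\|_p^p\cdot\frac{1}{a+b}\cdot\bigl(1\pm\O{\eps/C_{\ref{lem:inverse:prob:corr}}}\bigr)=\frac{1}{p_i}\cdot\bigl(1\pm\O{\eps/C_{\ref{lem:inverse:prob:corr}}}\bigr),\]
while the second moment factorizes to $\O{\|u\|_p^{2p}\|v\|_p^{2p}/(a+b)^2}\cdot\polylog(n)=\tfrac{1}{p_i^2}\cdot\polylog(n)$.

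The main obstacle I anticipate is \emph{Step 2}: proving that the product estimator $x_q\cdot w_q$ has $\Var{\cdot}\le a^2/5$ with the right $B$. The cleanest route is the identity $\Var{XY}=\Var{X}\Var{Y}+\Var{X}(\Ex{Y})^2+\Var{Y}(\Ex{X})^2$ for independent $X,Y$; I will verify that each term is controlled once the individual variances are driven to $\O{(\Ex{\cdot})^2/\polylog(n)}$ by averaging. A secondary subtlety is that $\textsc{LpEstUnbiased}(v,\cdot)$ carries a controlled multiplicative bias rather than being strictly unbiased, so I need to propagate that $\eps/(C_{\ref{lem:inverse:prob:corr}}\log n)$-level bias through the $\O{\log n}$ Taylor terms without blowing up—this is exactly the reason the inner accuracy was chosen to be $\eps/(C_{\ref{lem:inverse:prob:corr}}\log n)$ in the algorithm. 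Finally, a union bound over the $Q\cdot B=\polylog(n)$ subroutine calls controls the overall failure probability by $1/\poly(n)$.
\end{proofof}
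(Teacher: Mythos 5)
Your proposal is correct and follows essentially the same route as the paper's proof: establish the anchors $a_0,b_0$, use \lemref{lem:repetition} to drive the averaged estimators $x_q,y_q,w_q$ to variance $\O{(\cdot)^2/5}$ while propagating the $\O{\eps/(C_{\ref{lem:inverse:prob:corr}}\log n)}$ bias, invoke \lemref{lem:taylor:inverse} for $z$, and conclude via the identity $\frac{1}{p_i}=\frac{2\|u\|_p^p\|v\|_p^p}{a+b}$. If anything, you are slightly more careful than the paper on two points it glosses over — the variance of the product $x_q\cdot w_q$ via the independent-product formula, and the explicit factorization of the final return value $2xyz$ — so no changes are needed.
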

\begin{proof}
First, by \lemref{lem:taylor:fi}, \thmref{thm:fp:est:stat} and \lemref{lem:lp:unbiased:cts:frac}, we have the following guarantees of our estimators. For all $q \in [Q]$ and $b \in [B]$, we have
\begin{align*}
    \Ex{x_{q,b}} &= \|u\|_p^p~~\mathrm{and} ~~ \Ex{x_{q,b}^2} =  \|u\|_p^{2p}\\
    \Ex{y_{q,b}} &= \|v\|_p^p\left(1 \pm \O{\frac{\eps}{C_{\ref{lem:inverse:prob:corr}} \log n}}\right) ~~\mathrm{and} ~~ \Ex{y_{q,b}^2} =  \|v\|_p^{2p}\cdot \polylog(n)\\
    \Ex{w_{q,b}} &= v_i^p\cdot\left(1 \pm \O{\frac{\eps}{C_{\ref{lem:inverse:prob:corr}} \log n}}\right) ~~\mathrm{and} ~~ \Ex{w_{q,b}^2} =  v_i^{2p}\cdot \polylog(n).
\end{align*}
Then, by \lemref{lem:repetition}, averaging $B = \O{\polylog(n)}$ such estimators give us:
\begin{align*}
    \Ex{x_{q}} &= \|u\|_p^p ~~\mathrm{and} ~~ \Var{x_{q}} =  \|u\|_p^{2p}/5\\
    \Ex{y_{q}} &= \|v\|_p^p\cdot\left(1 \pm \O{\frac{\eps}{C_{\ref{lem:inverse:prob:corr}} \log n}}\right)~~\mathrm{and} ~~ \Var{y_{q}} =  \|v\|_p^{2p}/5\\
    \Ex{w_{q}} &= v_i^p\cdot\left(1 \pm \O{\frac{\eps}{C_{\ref{lem:inverse:prob:corr}} \log n}}\right) ~~\mathrm{and} ~~ \Var{w_{q}} =  v_i^{2p}/5.
\end{align*}
Therefore, defining $a = \|u\|_p^pv_i^p$ and $b = \|v\|_p^pu_i^p$, for all $q \in [Q]$, we have
\begin{align*}
    \Ex{a_{q}} &= a\cdot\left(1 \pm \O{\frac{\eps}{C_{\ref{lem:inverse:prob:corr}} \log n}}\right) ~~\mathrm{and} ~~ \Var{x_{q}} \le  a^2/5\\
    \Ex{b_{q}} &= b\cdot\left(1 \pm \O{\frac{\eps}{C_{\ref{lem:inverse:prob:corr}} \log n}}\right) ~~\mathrm{and} ~~ \Var{b_{q}} \le  b^2/5.
\end{align*}
Moreover, we have $a_0 = a\cdot(1\pm\frac{1}{100p})$ and $b_0 = b\cdot(1\pm\frac{1}{100p})$. Note that our estimator for $1/p_i$ is
\[z = \sum_{q=0}^Q \frac{(-1)^q}{q!} \cdot \frac{1}{(a_0+b_0)^{q+1} }\cdot \prod_{l \in [q]} \left((a_l-a_0)+(b_l-b_0)\right).\]
Then, by \lemref{lem:taylor:inverse}, we have 
\begin{align*}
    \Ex{z} &= \frac{1}{a+b} \cdot\left(1 \pm \O{\frac{\eps}{C_{\ref{lem:inverse:prob:corr}}}}\right)~~\mathrm{and} ~~ \Ex{z^2} \le  \left(\frac{1}{a+b}\right)^2 \cdot \polylog(n).
\end{align*}
Note that $\frac{1}{a+b} = \frac{1}{p_i}$, we have our desired result.
Moreover, the failure probability follows by union bounding the failure probability of each estimator.
\end{proof}

The following lemma upper bounds the communication cost.
\begin{lemma}
\lemlab{lem:inverse:prob:comm}
With probability $1-\frac{1}{\poly(n)}$, \algref{alg:lp:sample:prob} uses $\left(k^{p-1} + \frac{\sqrt{k}}{\eps}\right)\cdot \polylog(n)$ bits of communication. 
In addition, if $v$ is static, \algref{alg:lp:sample:prob} uses $k^{p-1} \polylog(n)$ bits of communication and gets unbiased estimation (up to $\frac{1}{\poly(n)}$ factors).
\end{lemma}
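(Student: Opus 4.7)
The plan is to bound communication by accounting for each subroutine invoked by \algref{alg:lp:sample:prob} and summing over the at most $Q\cdot B = \polylog(n)$ repetitions. Concretely, for each $(q,b)\in[Q]\times[B]$ the algorithm makes one call to \textsc{LpEstUnbiasedStat}$(u)$, one call to \textsc{LpEstUnbiased}$(v,\frac{\eps}{C_{\ref{lem:inverse:prob:corr}} \log n})$, and $D=\O{\log n}$ calls to \textsc{Counting}$(v_i)$; in addition, there are $\O{1}$ extra calls at the end for the outer $F_p$-approximations $r,o$ and $x,y$. I will invoke \thmref{thm:fp:est:stat} (which gives $k^{p-1}\polylog(n)$ bits per static $F_p$-estimate), \thmref{thm:lp:est} (which gives $(k^{p-1}+\frac{\sqrt{k}}{\eps'})\polylog(n)$ bits per continuous $F_p$-estimate of accuracy $\eps'$), and \thmref{thm:counting} (which gives $k\polylog(n)$ bits per counting instance).

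The dominant contribution comes from the continuous calls to \textsc{LpEstUnbiased}$(v,\frac{\eps}{C\log n})$: each such call costs
\[
k^{p-1}\polylog(n) \;+\; \frac{\sqrt{k}\,\log n}{\eps}\polylog(n) \;=\; \left(k^{p-1}+\frac{\sqrt{k}}{\eps}\right)\polylog(n),
\]
and since there are only $Q\cdot B = \polylog(n)$ such calls, their total cost is still $(k^{p-1}+\frac{\sqrt{k}}{\eps})\polylog(n)$. The static $u$-estimates contribute $\polylog(n)\cdot k^{p-1}\polylog(n) = k^{p-1}\polylog(n)$ bits, and the \textsc{Counting} calls contribute $\polylog(n)\cdot k\polylog(n) = k\polylog(n) = \O{k^{p-1}}\polylog(n)$ bits since $p\ge 2$. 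Querying $u_i$ and obtaining the $(1+\frac{1}{100p})$-approximations $r,o,h_i$ by invoking the above primitives adds at most another factor of the same order. Summing yields the stated $(k^{p-1}+\frac{\sqrt{k}}{\eps})\polylog(n)$ bound; the high-probability claim follows by union-bounding over the $\polylog(n)$ subroutine invocations, each of which satisfies its communication bound with probability $1-\frac{1}{\poly(n)}$.

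For the static case, every continuous invocation \textsc{LpEstUnbiased}$(v,\eps')$ can be replaced by its static counterpart \textsc{LpEstUnbiasedStat}$(v)$ from \thmref{thm:fp:est:stat}, which uses only $k^{p-1}\polylog(n)$ bits and produces a truly unbiased estimate of $\|v\|_p^p$ (no $\frac{1}{\poly(n)}$-bias incurred by the continuous Taylor step). Similarly the $v_i$-approximations needed for $w_{q,b,d}$ become single queries to the servers rather than runs of \textsc{Counting}. The resulting total communication collapses to $k^{p-1}\polylog(n)$, and since every estimator in the pipeline is now unbiased (up to the $n^{-c}$ Taylor-truncation bias from \lemref{lem:bivar:taylor}), the combined estimate of $\tfrac{1}{p_i}$ is unbiased up to $\frac{1}{\poly(n)}$ additive error, giving the second part of the lemma. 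The only genuinely delicate step is verifying that summing errors in the Taylor series keeps the overall bias below $n^{-c}$ rather than inflating it by the number of terms; this follows because each factor in the $q$-th product is only a $(1\pm\frac{1}{\poly(n)})$-relative perturbation in the static case, and \lemref{lem:bivar:taylor} absorbs the truncation error at $Q=\O{\log n}$.
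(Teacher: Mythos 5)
Your proposal is correct and follows essentially the same accounting as the paper: sum the costs of the $\polylog(n)$ invocations of \textsc{LpEstUnbiased}, the static $F_p$-estimates, and the \textsc{Counting} instances, then substitute static subroutines when $v$ is fixed to remove both the $\frac{\sqrt{k}}{\eps}$ term and the bias. If anything, you are slightly more careful than the paper's own write-up in explicitly tracking where the $\frac{\sqrt{k}}{\eps}$ contribution comes from (the continuous \textsc{LpEstUnbiased} calls with accuracy $\frac{\eps}{C\log n}$ via \thmref{thm:lp:est}).
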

\begin{proof}
We run $\polylog(n)$ instances of \textsc{LpEstUnbiased} in \algref{alg:lp:sample:prob}, whose communication cost follows from \thmref{thm:lp:est}. Moreover, it takes $\polylog(n)$ instances of \textsc{Counting} to retrieve the $\polylog(n)$ estimations for $v_i$ we needed in \algref{alg:lp:sample:prob}. Hence, we uses $k^{p-1} \cdot \polylog(n)$ bits of communication in total.

If $v$ is static, we can implement the static version of $L_p$ estimation algorithm to estimate $\|v\|_p^p$. 
Then, it only uses $k^{p-1} \polylog(n)$ bits of communication due to \thmref{thm:fp:est:stat}. Since the bias only comes from the error in the continuous $\textsc{Counting}$ algorithm, we have an unbiased estimation in the static version.
\end{proof}

Notice that we only use the continuous difference estimator at granularity $0$, where a constant-factor approximation suffices. Therefore, we state the following corollary showing a constant-fractional bias result for the continuous monitoring setting, and an unbiased result for the static setting, given \lemref{lem:inverse:prob:corr} and \lemref{lem:inverse:prob:comm}.
\begin{corollary}
\corlab{cor:inverse:prob}
\algref{alg:lp:sample:prob} gives an estimation $\widehat{\frac{1}{p_i}}$ to $\frac{1}{p_i}$ such that $\Ex{\widehat{\frac{1}{p_i}}} = \frac{1}{p_i} \left(1\pm\O{\frac{1.01}{C_{\ref{lem:inverse:prob:corr}}}}\right)$ and $\Var{\frac{1}{p_i}} = \frac{1}{p_i^2} \cdot \polylog(n)$ at all times with probability $1-\frac{1}{\poly(n)}$. This protocol uses $k^{p-1} \cdot \polylog(n)$ bits of communication. For static $v$, it uses $k^{p-1}  \cdot \polylog(n)$ bits of communication, and the estimate satisfies $\Ex{\widehat{\frac{1}{p_i}}} = \frac{1}{p_i} \pm n^{-c}$ and $\Var{\frac{1}{p_i}} = \frac{1}{p_i^2} \cdot \polylog(n)$ .
\end{corollary}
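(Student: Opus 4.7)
\begin{proofof}{\corref{cor:inverse:prob}}
The plan is simply to package together the two preceding results, \lemref{lem:inverse:prob:corr} and \lemref{lem:inverse:prob:comm}, which already established all the quantitative bounds claimed here for the continuous version. First I would invoke \lemref{lem:inverse:prob:corr} to read off the bias bound $\Ex{\widehat{1/p_i}} = \frac{1}{p_i}(1 \pm \O{1.01/C_{\ref{lem:inverse:prob:corr}}})$ and the second-moment bound $\Ex{(\widehat{1/p_i})^2} = \frac{1}{p_i^2}\cdot\polylog(n)$, both of which hold with probability $1-1/\poly(n)$ via the union bound already performed in that lemma's proof. The variance bound in the corollary then follows from the second-moment bound together with $(\Ex{\widehat{1/p_i}})^2 = \O{1/p_i^2}$. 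For the communication cost, I would directly cite \lemref{lem:inverse:prob:comm}, which bounds the cost of the $\polylog(n)$ invocations of \textsc{LpEstUnbiased} (each using $k^{p-1}\polylog(n)$ bits by \thmref{thm:lp:est}) and the $\polylog(n)$ invocations of \textsc{Counting}.

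For the static case, the modification is that $v$ is a fixed vector, so in \algref{alg:lp:sample:prob} the estimator $y_{q,b}$ can be replaced by the static unbiased $F_p$ estimator $\textsc{LpEstUnbiasedStat}(v)$, whose guarantees come from \thmref{thm:fp:est:stat}: it is exactly unbiased rather than $(1 \pm \O{\eps/(C_{\ref{lem:inverse:prob:corr}}\log n)})$-biased, and its second moment is still $\O{\|v\|_p^{2p}}$. Tracing through the proof of \lemref{lem:inverse:prob:corr}, the only sources of multiplicative bias in the continuous setting are (i) the bias of $y_{q,b}$ and (ii) the bias of $w_{q,b}$ coming from the truncated Taylor estimator for $v_i^p$ built out of continuous \textsc{Counting} estimates. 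In the static case, (i) vanishes up to $\frac{1}{\poly(n)}$ additive error, and (ii) also vanishes to $\frac{1}{\poly(n)}$ additive error since $v_i$ is fixed and can be queried exactly from the servers, so $w_{q,b}$ can be set to the true $v_i^p$. Then plugging into the bivariate Taylor estimator of \lemref{lem:taylor:inverse} yields $\Ex{\widehat{1/p_i}} = 1/p_i \pm n^{-c}$, while the variance bound is preserved since the second-moment computations there depended only on the variance upper bounds, which still hold. The communication cost remains $k^{p-1}\polylog(n)$ by \lemref{lem:inverse:prob:comm}, in fact strictly improved since no $\sqrt{k}/\eps$ overhead from continuous counting accuracy is needed.

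No new technical obstacle arises; the main content of the corollary was already proved in \lemref{lem:inverse:prob:corr} and \lemref{lem:inverse:prob:comm}, and the static refinement is a straightforward substitution of the static subroutine that eliminates the multiplicative bias while keeping the variance analysis intact.
\end{proofof}
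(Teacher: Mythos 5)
Your proposal is correct and matches the paper's own treatment: the paper derives \corref{cor:inverse:prob} directly by combining \lemref{lem:inverse:prob:corr} and \lemref{lem:inverse:prob:comm}, with the static case handled exactly as you describe (substituting the static unbiased $F_p$ estimator from \thmref{thm:fp:est:stat} and noting that the only multiplicative bias in the continuous setting comes from the continuous \textsc{Counting}-based subroutines, which disappears when $v$ is fixed). No further comment is needed.
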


\subsubsection{Estimation of \texorpdfstring{$(u_i+v_i)^p - u_i^p$}{ui+vip-uip}}

This section provides the algorithm for estimating $(u_i+v_i)^p - u_i^p$. Since we can obtain the actual value of $u_i$, we consider the difference as a function of $v_i$, i.e., we define $f(x) = (u_i+x)^p - u_i^p$. We expand it at $x = \widehat{v_i}$ by Taylor expansion. Similarly, it is sufficient to truncate at the $\O{\log n}$-th term in the Taylor series, as shown in \algref{alg:lp:diff:index}. 

\begin{algorithm}[!htb]
\caption{Estimator of $(u_i+v_i)^p - u_i^p$}
\alglab{alg:lp:diff:index}
\begin{algorithmic}[1]
\Require{Underlying vector $u$ defined at time $t$, subsequent vector $v$, $u_i$, index $i$}
\Ensure{Unbiased estimator of $(u_i+v_i)^p - u_i^p$}
\State{$s \gets 0$, $Q \gets \O{\log n}$}
\For{$q \in [Q]$}
\State{$\widehat{v_i^{(q)}} \gets \textsc{Counting}(v_i)$}
\EndFor
\State{Let $s_i$ be a $(1+\frac{1}{100p})$-approximation of $v_i$} 
\For{$q \in [Q]$}
\State{$\widehat{(v_i-s_i)^q} \gets \prod_{l \in [q]} \left(\widehat{v_j^{(l)}} - s_i\right)$} 
\State{$s \gets s +\binom{p}{q}(u_i+s_i)^{p-q}\widehat{(v_i-s_i)^q}$} 
\EndFor
\State{\Return $s + \left[(u_i+s_i)^p-u_i^p\right]$}
\end{algorithmic}
\end{algorithm}

The following statement upper bounds the tail error of the truncated Taylor series. 
\begin{lemma}
\lemlab{lem:diff_taylor_u_i_smaller}
Let $Q=\O{\log n}$ and $s_i \in\left[0.99v_i, 1.01v_i \right]$. 
Then 
\[\left\lvert\left[(u_i+v_i)^p-u_i^p\right]-\left[(u_i+s_i)^p-u_i^p\right]-\sum_{q=1}^Q \binom{p}{q}\left[(u_i+s_i)^{p-q}\right](v_i-s_i)^q\right\rvert\le n^{-c}.\]
\end{lemma}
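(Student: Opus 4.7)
The plan is to directly apply a Taylor tail bound, analogous to the proof of \lemref{lem:truncated:taylor:fi:p}. Treating $u_i$ as a fixed constant, consider the function $h(x) = (u_i + x)^p$. Expanding via the binomial/Taylor series around $x = s_i$ and evaluating at $x = v_i$ gives
\[(u_i+v_i)^p = \sum_{q=0}^{\infty} \binom{p}{q}(u_i+s_i)^{p-q}(v_i-s_i)^q.\]
The $q=0$ term is $(u_i+s_i)^p$, so after subtracting $u_i^p$ from both sides of the identity the quantity on the left of the claimed inequality is exactly the tail
\[\sum_{q=Q+1}^{\infty} \binom{p}{q}(u_i+s_i)^{p-q}(v_i-s_i)^q.\]
So the entire task reduces to showing this tail is at most $n^{-c}$.

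To bound the tail, I would use the three standard facts: $|\binom{p}{q}| \le e^p$ for all $q$; $|v_i - s_i| \le v_i/100$ by the approximation hypothesis; and $u_i + s_i \ge 0.99\, v_i \ge 99 |v_i - s_i|$. Then for each $q \ge Q+1$,
\[\left|\binom{p}{q}(u_i+s_i)^{p-q}(v_i-s_i)^q\right| \le e^p (u_i+s_i)^p \cdot \left(\frac{|v_i-s_i|}{u_i+s_i}\right)^q \le e^p (u_i+s_i)^p \cdot \left(\frac{1}{99}\right)^q.\]
Since we are assuming $v_i = \poly(n)$ (and hence $(u_i+s_i)^p = \poly(n)$), summing the geometric series from $q = Q+1$ gives a tail of at most $O((u_i+s_i)^p) \cdot (1/99)^{Q+1}$, which for $Q = \O{\log n}$ with a sufficiently large hidden constant is at most $n^{-c}$, as desired.

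The only mild subtlety is to make sure the Taylor series indeed converges and the tail bound holds uniformly: convergence follows from $|v_i-s_i|/(u_i+s_i) \le 1/99 < 1$, and the uniform geometric decay above handles the non-integer values of $p$ (for integer $p$ the series terminates and the statement is trivial). No step here appears to be a serious obstacle; the argument is essentially identical in structure to \lemref{lem:truncated:taylor:fi:p}, with $s_i$ and $f_i$ there replaced by $u_i + s_i$ and $u_i + v_i$.
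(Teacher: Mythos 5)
Your proposal is correct and follows essentially the same route as the paper: expand $(u_i+x)^p$ in a Taylor/binomial series around $x=s_i$, observe the left-hand side is exactly the tail beyond $Q$, and bound each tail term by a constant times $(u_i+s_i)^p$ times a geometric factor (you get $1/99$ where the paper writes $1/100$; your constant is the more careful one, and both suffice). The implicit use of $(u_i+s_i)^p \le \poly(n)$ to absorb the prefactor matches the paper's convention as well, so there is nothing to add.
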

\begin{proof}
Using the Taylor expansion at $\widehat{v_i}$ for the function $f(x)=(u_i+x)^p-u_i^p$ at $s_i$, we have
\[\left[(u_i+v_i)^p-u_i^p\right]=\left[(u_i+s_i)^p-u_i^p\right]+\sum_{q=1}^\infty \binom{p}{q}\left[(u_i+s_i)^{p-q}\right](v_i-s_i)^q.\]
For $q>p$, we have $|v_i-s_i|\le \frac{v_i}{100}$, then
\[\binom{p}{q}(u_i+s_i)^{p-q}(v_i-s_i)^q\le(2e)^{p/2}(u_i+s_i)^{p}\left(\frac{1}{100}\right)^q.\]
Therefore, we have that for $Q=\O{\log n}$, 
\[\sum_{q=Q+1}^\infty\binom{p}{q}\left[(u_i+s_i)^{p-q}\right](v_i-s_i)^q\le (u_i+s_i)^{p} \cdot \sum_{q=Q+1}^\infty(2e)^{p/2}\left(\frac{1}{100}\right)^q \le n^{-c}.\]
Hence, 
\[\left\lvert\left[(u_i+v_i)^p-u_i^p\right]-\left[(u_i+s_i)^p-u_i^p\right]-\sum_{q=1}^Q \binom{p}{q}\left[(u_i+s_i)^{p-q}\right](v_i-s_i)^q\right\rvert\le n^{-c}.\]
\end{proof}

Now, we give an auxiliary result that is used to upper bound the variance.
\begin{lemma}
\lemlab{lem:c:v}
Let $c>1$ be some constant. 
Then we have
\[(u_i+cv_i)^p - u_i^p \le c^p \left((u_i+v_i)^p - u_i^p \right).\]
\end{lemma}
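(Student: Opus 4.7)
The plan is to prove the inequality by an integral/monotonicity argument, exploiting the fact that $u_i, v_i \ge 0$ and $p \ge 1$ (actually, any $p \ge 1$ suffices; the constraint $p \ge 2$ in the surrounding section is not needed for this particular lemma). The cleanest route is to rewrite both sides as integrals of $p x^{p-1}$ and compare pointwise after a change of variables, which reduces the claim to the elementary pointwise bound $u_i + cy \le c(u_i+y)$.

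Concretely, I would first write
\[
(u_i+cv_i)^p - u_i^p \;=\; \int_{u_i}^{u_i+cv_i} p\,x^{p-1}\,dx \;=\; c\int_{0}^{v_i} p\,(u_i+cy)^{p-1}\,dy,
\]
using the substitution $x = u_i + cy$, and symmetrically
\[
(u_i+v_i)^p - u_i^p \;=\; \int_{0}^{v_i} p\,(u_i+y)^{p-1}\,dy.
\]
Then I would note the pointwise bound $u_i + cy \le cu_i + cy = c(u_i+y)$, valid since $c \ge 1$ and $u_i \ge 0$, which raised to the power $p-1 \ge 0$ gives $(u_i+cy)^{p-1} \le c^{p-1}(u_i+y)^{p-1}$. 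Multiplying by $c$ and integrating from $0$ to $v_i$ yields
\[
c \int_{0}^{v_i} p\,(u_i+cy)^{p-1}\,dy \;\le\; c^{p} \int_{0}^{v_i} p\,(u_i+y)^{p-1}\,dy,
\]
which is exactly $(u_i+cv_i)^p - u_i^p \le c^p\bigl((u_i+v_i)^p - u_i^p\bigr)$, as desired.

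There is no real obstacle here; the only thing to be careful about is that the statement and the change of variables require $u_i \ge 0$, $v_i \ge 0$, $c \ge 1$, and $p \ge 1$, all of which hold in the context of the paper (the $f_i$'s are nonnegative frequencies and $p \ge 2$). An alternative and equally short route would be to observe that the function $h(t) := \bigl((u_i+tv_i)^p - u_i^p\bigr)/t^p$ is non-increasing in $t > 0$ (its derivative with respect to $t$ is nonpositive, as can be seen by writing $h(t) = (u_i/t + v_i)^p - (u_i/t)^p$ and noting that $s \mapsto (s+v_i)^p - s^p$ is non-decreasing in $s \ge 0$ for $p \ge 1$), so $h(c) \le h(1)$ is exactly the claim; but the integral comparison above is the most direct.
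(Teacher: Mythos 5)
Your proof is correct and is essentially the same as the paper's: the paper defines $f(x) = c^p\bigl((u_i+x)^p - u_i^p\bigr) - \bigl((u_i+cx)^p - u_i^p\bigr)$ and shows $f'(x) \ge 0$ via exactly the pointwise bound $(u_i+cx)^{p-1} \le (cu_i+cx)^{p-1} = c^{p-1}(u_i+x)^{p-1}$ that drives your integral comparison. Your observation that $p \ge 1$ suffices (the paper invokes $p \ge 2$) is a minor but valid refinement.
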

\begin{proof}
We define an auxiliary function as follows
\[f(x) = c^p \left((u_i+x)^p - u_i^p\right) - ((u_i+cx)^p - u_i^p).\]
The derivative of the auxiliary function satisfies
\[f'(x) = c^p p(u_i+x)^{p-1} - cp(u_i+cx)^{p-1} = c p(cu_i+cx)^{p-1} - cp(u_i+cx)^{p-1} \ge 0,\]
for $x>0$, $c>1$, and $p\ge2$. 
Therefore, we have $f(v_i) \ge f(0) = 0$, which gives our result.
\end{proof}

The following lemma proves the correctness of our algorithm.
\begin{lemma}
\lemlab{lem:diff:index}
With probability $1-\frac{1}{\poly(n)}$, \algref{alg:lp:diff:index} outputs an estimation $s$ of $(u_i+v_i)^p - u_i^p$ satisfying $\Ex{s} = (u_i+v_i)^p - u_i^p \pm n^{-c}$ and $\Ex{s^2} = \O{(u_i+v_i)^p \cdot \log^2 n}$. Moreover, if $u_i > v_i$, we have $\Ex{s^2} = \O{((u_i+v_i)^p-u_i^p)^2 \cdot \log n}$
\end{lemma}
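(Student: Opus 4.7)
The plan is to verify the three properties in turn: unbiasedness (up to a negligible additive term), the general second-moment bound $\mathcal{O}((u_i+v_i)^{2p}\log^2 n)$, and the refined second-moment bound $\mathcal{O}(((u_i+v_i)^p-u_i^p)^2\log n)$ when $u_i>v_i$. For brevity, write $T_0:=(u_i+s_i)^p-u_i^p$ and $T_q:=\binom{p}{q}(u_i+s_i)^{p-q}\widehat{(v_i-s_i)^q}$ for $q\in[Q]$, so that $s=\sum_{q=0}^{Q}T_q$.

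For the expectation, I would first note that the $\widehat{v_i^{(l)}}$ are independent unbiased estimates of $v_i$ from \textsc{Counting} (\thmref{thm:counting}), so that by independence
\[\Ex{\widehat{(v_i-s_i)^q}}=\prod_{l\in[q]}\Ex{\widehat{v_i^{(l)}}-s_i}=(v_i-s_i)^q.\]
Therefore $\Ex{s}$ equals the truncated Taylor series appearing in \lemref{lem:diff_taylor_u_i_smaller}, and that lemma immediately yields $\Ex{s}=(u_i+v_i)^p-u_i^p\pm n^{-c}$.

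For the variance bounds, applying \lemref{lem:aux:lp:est} with $a=v_i$, $b=s_i$ and $C=100p$ gives $\Ex{(\widehat{v_i^{(l)}}-s_i)^2}\le (v_i/(50p))^2$, and by independence $\Ex{\widehat{(v_i-s_i)^q}^2}\le (v_i/(50p))^{2q}$. Using $\binom{p}{q}\le p^q$, this produces the basic per-term bound
\[\Ex{T_q^2}\le (u_i+s_i)^{2(p-q)}\,(v_i/50)^{2q}\qquad (q\ge 1).\]
For the general bound, since $v_i\le u_i+s_i\le 1.02(u_i+v_i)$, each factor $(v_i/(50(u_i+s_i)))^{2q}$ decays geometrically in $q$, so $\Ex{T_q^2}=\mathcal{O}((u_i+v_i)^{2p}\cdot 50^{-2q})$, and $T_0^2\le (u_i+s_i)^{2p}=\mathcal{O}((u_i+v_i)^{2p})$. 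Expanding $s^2$ with Cauchy--Schwarz on the $Q+1=\mathcal{O}(\log n)$ cross terms gives $\Ex{s^2}=\mathcal{O}((u_i+v_i)^{2p}\log^2 n)$, matching the first claim.

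The refined bound under $u_i>v_i$ is the main obstacle, since it demands replacing $(u_i+v_i)^{2p}$ by $((u_i+v_i)^p-u_i^p)^2$, which can be arbitrarily smaller. The key observation is that when $u_i\ge v_i$ and $p\ge 2$, the mean value theorem (applied to $x\mapsto x^p$) gives the lower bound $(u_i+v_i)^p-u_i^p\ge p\,u_i^{p-1}v_i$, hence $((u_i+v_i)^p-u_i^p)^2\ge p^2\,u_i^{2p-2}v_i^2$. Moreover, since $v_i\le u_i$, for any $q\ge 1$ I can rewrite
\[(u_i+s_i)^{2(p-q)}v_i^{2q}=(u_i+s_i)^{2p-2}\,v_i^2\,\bigl(v_i/(u_i+s_i)\bigr)^{2(q-1)}\le \mathcal{O}(u_i^{2p-2}v_i^2),\]
so each $\Ex{T_q^2}=\mathcal{O}(u_i^{2p-2}v_i^2/50^{2q})=\mathcal{O}(((u_i+v_i)^p-u_i^p)^2/p^2)$. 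For the $q=0$ term, \lemref{lem:c:v} with $c=1.01$ yields $T_0=(u_i+s_i)^p-u_i^p\le (1.01)^p\bigl((u_i+v_i)^p-u_i^p\bigr)$, so $T_0^2=\mathcal{O}(((u_i+v_i)^p-u_i^p)^2)$. Summing the $Q+1=\mathcal{O}(\log n)$ terms via the AM--GM inequality $(\sum_q T_q)^2\le (Q+1)\sum_q T_q^2$ introduces one logarithmic factor and yields $\Ex{s^2}=\mathcal{O}(((u_i+v_i)^p-u_i^p)^2\log n)$. Finally, the high-probability guarantee follows by union bounding the failure events of the $Q$ independent \textsc{Counting} subroutines invoked in \algref{alg:lp:diff:index}.
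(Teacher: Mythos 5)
Your proof is correct, and while the expectation and the general second-moment bound follow essentially the paper's route (independence of the \textsc{Counting} estimates, the tail bound of \lemref{lem:diff_taylor_u_i_smaller}, per-term second moments via \lemref{lem:aux:lp:est}, and AM--GM/Cauchy--Schwarz over the $Q+1$ terms), your argument for the refined bound when $u_i>v_i$ is genuinely different. The paper bounds the sum $\sum_{q\ge 1}\binom{p}{q}^2(u_i+s_i)^{2p-2q}(v_i/(50p))^{2q}$ by its first term via an adjacent-term ratio argument, then recognizes $((u_i+s_i)^p-u_i^p)^2+\binom{p}{1}^2(u_i+s_i)^{2p-2}(v_i/(50p))^2$ as being dominated by $f(s_i+\tfrac{v_i}{50p})^2$ for $f(x)=(u_i+x)^p-u_i^p$ (using a sign-domination argument on the Taylor series of $f$), and only then invokes \lemref{lem:c:v} to relate this back to $(u_i+v_i)^p-u_i^p$. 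You instead lower-bound the target directly via the mean value theorem, $(u_i+v_i)^p-u_i^p\ge p\,u_i^{p-1}v_i$, and show each $\Ex{T_q^2}$ for $q\ge 1$ is $\O{u_i^{2p-2}v_i^2\cdot 50^{-2q}}$ using $v_i\le u_i\le u_i+s_i$, so the geometric sum collapses to $\O{((u_i+v_i)^p-u_i^p)^2}$ and AM--GM contributes the single $\log n$ factor; \lemref{lem:c:v} is needed only for the $q=0$ term. Your route avoids the paper's detour through the perturbed Taylor expansion and is arguably cleaner, at the cost of invoking the MVT lower bound, which is where the hypothesis $u_i>v_i$ enters in your version (versus entering through the geometric-ratio comparison in the paper's). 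Both arguments correctly read the stated bounds $\O{(u_i+v_i)^p\cdot\log^2 n}$ as $\O{(u_i+v_i)^{2p}\cdot\log^2 n}$, which is the dimensionally consistent claim.
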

\begin{proof}
For $l \in [Q]$, by the guarantee of counting (see \thmref{thm:counting}), we have $\Ex{\widehat{v_i^{(l)}}} = v_i$ and $\Var{\widehat{v_i^{(l)}}} = \left(\frac{v_i}{100p}\right)^2$. Thus, applying \lemref{lem:aux:lp:est} with $a = v_i$, $b=s_i$, and $C = \frac{1}{100p}$, we have $\Ex{\widehat{v_i^{(l)}} - s_i} = v_i - s_i$ and $\Ex{\left(\widehat{v_i^{(l)} - s_i}\right)^2} = \O{\left(\frac{v_i}{50}\right)^2}$. Then, since we obtain $Q$ independent estimation of $v_i$, we have
\[\Ex{\prod_{l \in [q]} \left(\widehat{v_i^{(l)}}-s_i\right)} = \prod_{l \in [q]} \Ex{\widehat{v_i^{(l)}}-s_i} = (v_i - s_i)^q.\]
Moreover, we have
\begin{align*}\Ex{\left(\prod_{l \in [q]} \left(\widehat{v_i^{(l)}}-s_i\right)\right)^2} = \prod_{l \in [q]} \Ex{\left(\widehat{v_i^{(l)}}-s_i\right)^2} = \O{\left(\frac{v_i}{50p}\right)^{2q}}.\end{align*}
Hence, by \lemref{lem:diff_taylor_u_i_smaller} and $Q = \O{\log n}$, we have
\begin{align*}
\Ex{s} &= (u_i+s_i)^p - u_i^p + \sum_{q\in[Q] \backslash \{0\}} \binom{p}{q} (u_i+s_i)^{p-q} \cdot \Ex{\prod_{l \in [q]} \left(\widehat{v_i^{(l)}}-s_i\right)} \\
&= (u_i+s_i)^p - u_i^p + \sum_{q\in[Q] \backslash \{0\}} \binom{p}{q} (u_i+s_i)^{p-q}(v_i - s_i)^q = G_i^p \pm n^{-c}.\end{align*}
Next, applying the AM-GM inequalities, we have
\begin{align*}\Ex{s^2} &= \Ex{\left((u_i+s_i)^p - u_i^p + \sum_{q \in [Q] \backslash \{0\}}\left(\binom{p}{q} (u_i+s_i)^{p-q} \prod_{l \in [q]} \left(\widehat{v_i^{(l)}}-s_i\right)\right)\right)^2} \\ &= \O{Q} \cdot \left(\left((u_i+s_i)^p - u_i^p\right)^2 + \sum_{q \in [Q] \backslash \{0\}}\binom{p}{q}^2 (u_i+s_i)^{2p-2q} \Ex{\prod_{l \in [q]} \left(\widehat{v_i^{(l)}}-s_i\right)^2} \right).\end{align*}
Now, we have
\[\Ex{s^2} = \O{Q} \cdot \left(\left((u_i+s_i)^p - u_i^p\right)^2 + \sum_{q \in [Q] \backslash \{0\}}\binom{p}{q}^2 (u_i+s_i)^{2p-2q} \left(\frac{v_i}{50p} \right)^{2q}\right).\]
Then, based on the same bound as in the analysis of \lemref{lem:taylor:fi}, we have $\Ex{s^2} = \O{(u_i+v_i)^p \cdot \log n}$.
Now, we suppose $u_i > v_i$. 
Consider two adjacent terms in the finite sum, since $s_i = v_i \cdot (1\pm\frac{1}{100p})$, we have
\[\frac{\binom{p}{q}^2 (u_i+s_i)^{2p-2q} \left(\frac{v_i}{50p} \right)^{2q}}{\binom{p}{q+1}^2 (u_i+s_i)^{2p-2q-2} \left(\frac{v_i}{50p} \right)^{2q+2}} = \frac{(q+1)^2(u_i+s_i)^2(50p)^2}{(q-p+1)^2v_i^2} > 2500.\]
Then, the sum is upper bounded by a decreasing geometric sequence. So, we have
\[\sum_{q \in [Q] \backslash \{0\}}\binom{p}{q}^2 (u_i+s_i)^{2p-2q} \left(\frac{v_i}{50p} \right)^{2q} \le \binom{p}{1}^2 (u_i+s_i)^{2p-2} \left(\frac{v_i}{50p} \right)^{2} \cdot \frac{1}{1-1/2500}.\]
Then, we have
\[\Ex{s^2} = \O{Q} \cdot \left(\left((u_i+s_i)^p - u_i^p\right)^2 + \binom{p}{1}^2 (u_i+s_i)^{2p-2} \left(\frac{v_i}{50p} \right)^{2}\right)\]
Now, we look at the Taylor expansion of $f(x) = (u_i+x)^p - u_i^p$ evaluated at $s_i+\frac{v_i}{50p}$:
\[f(s_i+\frac{v_i}{50p}) = (u_i+s_i)^p-u_i^p+\sum_{q=1}^\infty \binom{p}{q}(u_i+s_i)^{p-q}\left(\frac{v_i}{50p} \right)^{q}.\]
Notice that, for $q\ge p$, each pair of adjacent terms consists of one positive term and one negative term. 
Then, as we mentioned earlier, the absolute value of each term is upper bounded by a decreasing geometric series, so the negative term is always dominated by the positive term. 
This means that
\[f(s_i+\frac{v_i}{50p}) \ge (u_i+s_i)^p-u_i^p+ \binom{p}{1}(u_i+s_i)^{p-1}\frac{v_i}{50p}.\]
Therefore, 
\[\Ex{s^2} = \O{Q}\cdot f(s_i+\frac{v_i}{50p})^2 = \O{Q} \cdot ((u_i+s_i+\frac{v_i}{50p})^p-u_i^p)^2.\]
Moreover, since $s_i = v_i\cdot(1\pm \frac{1}{100p})$, we have $(u_i+s_i+\frac{v_i}{50p})^p < \left(u_i+\left(1 + \frac{1}{25p}\right) \cdot v_i\right)^p$. Then, by \lemref{lem:c:v}, we have
\[\left(u_i+\left(1 + \frac{1}{25p}\right) \cdot v_i\right)^p - u_i^p \le \left(1 + \frac{1}{25p}\right)^p \cdot ((u_i+v_i)^p-u_i^p) = \O{1} \cdot ((u_i+v_i)^p-u_i^p) \]
Therefore, we have
\[\Ex{s^2} = \O{Q}\cdot ((u_i+v_i)^p-u_i^p)^2 = \O{((u_i+v_i)^p-u_i^p)^2 \cdot \log n}\]
\end{proof}

The following lemma bounds the communication cost.
\begin{lemma}
\lemlab{lem:diff:comm}
With probability $1-\frac{1}{\poly(n)}$, \algref{alg:lp:diff:index} uses $k \cdot \polylog(n)$ bits of communication.
\end{lemma}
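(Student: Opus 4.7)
The plan is to read off the communication cost directly from the structure of \algref{alg:lp:diff:index}, since the algorithm performs only a constant number of distinct communication subroutines, each invoked $O(\log n)$ times.

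First I would enumerate the sources of communication in the algorithm. The only interactions with the servers are: (i) the $Q = \O{\log n}$ invocations of \textsc{Counting}$(v_i)$ used to produce the independent estimates $\widehat{v_i^{(q)}}$, and (ii) a single invocation of a counting subroutine used to produce the $\left(1+\frac{1}{100p}\right)$-approximation $s_i$ of $v_i$. All subsequent steps (the computation of $\widehat{(v_i-s_i)^q}$, the weighted summation against the binomial coefficients, and the final additive correction $(u_i+s_i)^p - u_i^p$) are performed at the coordinator from already-communicated quantities and involve no further exchange with the servers. Note that $u_i$ is a static value available at the servers storing $u$, so the coordinator has it without additional per-invocation cost inside the running difference estimator.

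Next, I would apply the communication bound of \thmref{thm:counting}, which states that one instance of \textsc{Counting} that tracks the count of a single coordinate across the $k$ servers uses $\O{k \log^2 n}$ bits of communication and succeeds with probability $1 - \frac{1}{\poly(n)}$. Multiplying by the $Q+1 = \O{\log n}$ independent instances launched by \algref{alg:lp:diff:index} yields a total of $\O{k \log^3 n} = k \cdot \polylog(n)$ bits. A union bound over the $\O{\log n}$ instances preserves the $1 - \frac{1}{\poly(n)}$ success probability for all counters simultaneously.

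There is no real obstacle here: the main subtlety is simply to confirm that the number of \textsc{Counting} invocations is $\O{\log n}$ rather than something larger (for instance, one might worry that we need fresh counters over time, but within a single call to the difference estimator we only need the $Q$ independent sketches for the Taylor product, plus one for $s_i$). Once this accounting is in place, the bound follows immediately from \thmref{thm:counting}.
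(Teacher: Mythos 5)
Your proposal is correct and follows essentially the same route as the paper: the paper's proof simply observes that the algorithm runs $\O{\log n}$ instances of \textsc{Counting} on $v_i$ and invokes \thmref{thm:counting} to charge $k\cdot\polylog(n)$ bits in total. Your version adds the explicit accounting of the extra counter for $s_i$ and the union bound over instances, which the paper leaves implicit, but there is no substantive difference in the argument.
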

\begin{proof}
We need $\O{\log n}$ instances of \textsc{Counting} to estimate $v_i$. By \thmref{thm:counting}, this needs $k \cdot \polylog(n)$ bits of communication.
\end{proof}

\subsubsection{Putting it All Together}
With the $F_p$ difference estimator introduced in previous sections, we are able to apply the framework to achieve an adversarially robust algorithm for estimating $F_p$.

First, we state an algorithm that provides a $1.01$-approximation to the $F_p$-moment at all times, which is used in the double detector.
\begin{theorem}[\cite{WoodruffZ12}]
\thmlab{thm:fp:cts:const}
Let $S$ be a distributed stream. Let $n$ be the size of the universe. Let $k$ be the number of servers. Let $p\ge 1$. There is an algorithm that gives a $1.01$-approximation to the $F_p$-moment of vector $f$ induced by stream $S$ at all times with probability $1-\frac{1}{\poly(n)}$. This algorithm implements in $k^{p-1}\cdot \polylog(n)$ bits of communications.
\end{theorem}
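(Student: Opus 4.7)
\begin{proofof}{\thmref{thm:fp:cts:const}}[Proposal]
The plan is to reuse the perfect $L_p$ sampling infrastructure already developed in \secref{sec:lp:sampler} and convert it into a continuous estimator via the geometric mean of three independent scalings. Since $\|f\|_p^p$ is monotone nondecreasing in an insertion-only distributed stream, it suffices to continuously maintain an unbiased estimator with bounded variance and then invoke a strong-tracker argument at dyadically spaced times.

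First, I would instantiate \thmref{thm:pl_recover} with three independent copies $\{\be^{(r)}_i\}_{i\in[n]}$, $r\in\{1,2,3\}$, of the exponential scalings, producing scaled vectors $g^{(r)}_i = f_i / (\be^{(r)}_i)^{1/p}$. By \factref{fact:max:heavy}, in each copy the maximizer is a $\polylog(n)$-$L_p$ heavy hitter with high probability, so the subroutine of \thmref{thm:pl_recover} continuously maintains a set $\SetPL^{(r)}$ of size $\polylog(n)$ that contains $\argmax_i g^{(r)}_i$, using $(k+k^{p-1})\polylog(n)$ communication per copy. For each $i\in\SetPL^{(r)}$ I would run the distributed counting protocol (as in \algref{alg:lp:sampler:cent}) with constant accuracy to identify the actual maximizer and get a $1.01$-approximation $M^{(r)}$ to $\max_i g^{(r)}_i$; this costs only $k\cdot\polylog(n)$ bits since constant accuracy is enough, and it avoids any $\frac{1}{\eps}$ factor.

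Next, by \corref{cor:exp_inverse}, $\max_i (g^{(r)}_i)^p$ is distributed as $\|f\|_p^p / \be_r$ for an independent $\be_r \sim \mathrm{Exp}(1)$. Setting
\[
Y = \frac{1}{C_{\ref{lem:geo:exp:var}}}\cdot\sqrt[3]{(M^{(1)})^p\,(M^{(2)})^p\,(M^{(3)})^p},
\]
\lemref{lem:geo:exp:var} gives $\Ex{Y}=(1\pm 0.005)\|f\|_p^p$ and $\Ex{Y^2}\le O(\|f\|_p^{2p})$; the tiny bias comes from the fact that $M^{(r)}$ is a $1.01$-approximation rather than exact, which translates into a multiplicative $(1\pm O(1/100))$ perturbation of $\sqrt[3]{\be_1\be_2\be_3}^{-1}$ inside the expectation. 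Averaging $R=\polylog(n)$ independent such estimators $Y_1,\ldots,Y_R$ and applying Chebyshev yields a $1.005$-approximation to $\|f\|_p^p$ with high probability at any fixed time, using $R\cdot(k+k^{p-1})\polylog(n) = k^{p-1}\polylog(n)$ total communication.

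Finally, to lift a pointwise guarantee to an \emph{all-times} guarantee I would exploit monotonicity: let $T$ be the set of times at which $\|f\|_p^p$ first exceeds $(1.005)^\ell$ for some integer $\ell$. Since $\|f\|_p^p$ grows over a $\poly(n)$ length stream, $|T|\le O(\log n)$ and a union bound over these $O(\log n)$ checkpoints gives a $1.005$-approximation at each checkpoint with high probability. Between consecutive checkpoints the true value changes by at most a $1.005$ factor, so the most recent estimate is still a $1.01$-approximation. The main obstacle I expect is actually the one already handled inside the $L_p$ sampler analysis: guaranteeing that the approximate counters on each server collectively identify the true server-wise maximum with bounded extra communication despite anti-concentration degeneracies. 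Fortunately, the amortization argument of \lemref{lem:identify:max} (based on \lemref{lem:anti}) already shows that this costs only $k\cdot\polylog(n)$ in expectation, so the whole protocol fits in $k^{p-1}\polylog(n)$ bits.
\end{proofof}
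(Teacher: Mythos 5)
The paper gives no proof of this statement at all: it is imported verbatim as a black box from \cite{WoodruffZ12} and used only inside \corref{cor:fp:dd}. Your proposal is therefore a genuinely different route --- a self-contained derivation from the paper's own exponential-scaling machinery --- and it is essentially a rediscovery, at constant accuracy, of the paper's continuous unbiased $F_p$ estimator (\algref{alg:lp:est:cent}, \algref{alg:lp:unbiased:cts:frac}, \thmref{thm:lp:est}): track $\SetPL$ via \thmref{thm:pl_recover}, approximate the maximum of each scaled vector by constant-accuracy counting, and combine three independent copies through the geometric-mean estimator of \lemref{lem:geo:exp:var}. What your route buys is independence from external results; what the citation buys the authors is brevity and, for $1\le p<2$, a bound they do not re-derive. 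Two remarks on that comparison: (i) for $1\le p<2$ your communication is $(k+k^{p-1})\polylog(n)=k\polylog(n)$, which is what is actually achievable but is weaker than the literal $k^{p-1}\polylog(n)$ in the statement; this does not matter for how the paper uses the theorem ($p\ge 2$). (ii) You correctly observe that the full sampler's anti-concentration bookkeeping (\lemref{lem:identify:max}) is not needed here --- for moment estimation one never has to identify \emph{which} server or index attains the maximum, only the value of the maximum, so the max of the per-item estimates over $\SetPL$ already suffices.

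Three caveats you should repair. First, averaging $R=\polylog(n)$ copies and applying Chebyshev gives failure probability $1/\polylog(n)$, not $1/\poly(n)$; you need a median of $\O{\log n}$ group means to reach the claimed high-probability guarantee. Second, beware a circularity: \algref{alg:lp:sampler:cent} itself maintains a $1.01$-approximation to $\|G\|_p^p$ by citing \cite{WoodruffZ12}, i.e., the very theorem being proved. Your proof must (and, as written, can) rely only on \thmref{thm:pl_recover}, whose rounds are governed by the additively decomposable quantity $U=\sum_{i,j}(g_i(j))^p$ trackable via \cite{HuangYZ12}-style counting, not on the full sampler. Third, your checkpoint set $T$ is defined by the unknown true values $\|f\|_p^p$ crossing powers of $1.005$, which the algorithm cannot detect without already solving the problem; either trigger checkpoints on the (monotone) estimate itself, or union-bound directly over all $\poly(n)$ time steps once the median trick gives per-time failure $1/\poly(n)$. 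With these standard repairs the argument goes through.
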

We then state an algorithm that provides a static $(1+\eps)$-approximation to the $F_p$-moment.

\begin{theorem}[\cite{EsfandiariKMWZ24}]
\thmlab{thm:fp:stat}
Let $n$ be the size of the universe. Let $k$ be the number of servers. Each server holds a vector $f(j) \in \mathbb{R}^n$. Let $p \ge 2$. There is an algorithm that gives a $(1+\eps)$-approximation to the $F_p$-moment with probability $1-\frac{1}{\poly(n)}$. This algorithm runs in $k^{p-1}\cdot \polylog(n)$ bits of communications.
\end{theorem}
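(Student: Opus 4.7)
Since \thmref{thm:fp:stat} is attributed to \cite{EsfandiariKMWZ24}, one valid route is simply to invoke their construction directly. A self-contained proof using tools already developed in this paper would proceed as follows.

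The plan is to start from the static unbiased $F_p$ estimator built in \algref{alg:lp:unbiased:static} and analyzed in \thmref{thm:fp:est:stat}. That subroutine leverages the max-stability property (\corref{cor:exp_inverse}): after scaling $f$ by i.i.d.\ exponentials to form $g_i = f_i/\be_i^{1/p}$, one has $\max_i g_i^p = \|f\|_p^p/\be$ for an independent exponential $\be$. Running three independent such scalings, identifying each maximum exactly (possible in the static setting, since the coordinator can query all $k$ servers for the exact contribution of each candidate once the candidate set $\SetPL$ is recovered), and taking the cube root of the product yields an estimator $Y$ satisfying $\Ex{Y}=\|f\|_p^p$ and $\Ex{Y^2}=\O{\|f\|_p^{2p}}$ via \lemref{lem:geo:exp:var}. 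Each such copy uses $k^{p-1}\cdot\polylog(n)$ bits.

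To boost this bounded-second-moment estimator into a $(1+\eps)$-approximation, I would average $R=\O{1/\eps^2}$ independent copies of $Y$ to obtain $\bar Y$, for which $\Var{\bar Y}=\O{\eps^2\|f\|_p^{2p}}$, so Chebyshev gives $|\bar Y-\|f\|_p^p|\le\eps\|f\|_p^p$ with constant probability. Taking the median of $\O{\log n}$ such averaged estimators boosts the success probability to $1-1/\poly(n)$ by a standard Chernoff argument. The total communication is $\frac{k^{p-1}}{\eps^2}\cdot\polylog(n)$, matching the claimed bound (the $1/\eps^2$ factor is absorbed into $\polylog(n)$ under the standard assumption $\eps\ge 1/\poly(n)$, so that $\log(1/\eps)=\O{\log n}$).

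The main obstacle is verifying that \algref{alg:lp:unbiased:static} truly runs in the static one-shot distributed setting without any of the continuous-monitoring overhead. The key simplification is that in the static case, the candidate set $\SetPL$ from \thmref{thm:pl_recover} contains only $\polylog(n)$ coordinates (\lemref{lem:num:set:pl}), and each server can directly transmit the exact contribution $f_i(j)$ for every $i\in\SetPL$ using $\O{k}\cdot\polylog(n)$ additional bits. This sidesteps the \textsc{Counting}-based machinery and the randomized-boundary/misclassification analysis entirely, so the max of the scaled vector is recovered exactly rather than approximately, and the geometric-mean variance bound from \lemref{lem:geo:exp:var} applies verbatim. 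The rest of the analysis reduces to clean applications of Chebyshev and median-of-means concentration.
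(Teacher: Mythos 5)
The paper gives no proof of this statement at all: it is imported wholesale as a black-box citation of \cite{EsfandiariKMWZ24}, so your self-contained derivation from the paper's own machinery is a genuinely different route. Your construction is sound: the static estimator of \thmref{thm:fp:est:stat} is unbiased with second moment $\O{\|f\|_p^{2p}}$ at cost $k^{p-1}\polylog(n)$ per copy (exact recovery of the maximum of the scaled vector is indeed possible in the one-shot setting by querying the $\polylog(n)$ candidates in $\SetPL$), and averaging $\O{1/\eps^2}$ copies plus a median-of-means step gives a $(1+\eps)$-approximation with high probability. What this buys is a proof that lives entirely inside the paper's exponential-scaling framework; what the citation buys is avoiding the $\polylog(n)$ overhead and the conditioning on well-behaved exponentials. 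The one flaw is your closing bookkeeping: the $1/\eps^2$ factor from Chebyshev cannot be absorbed into $\polylog(n)$ — for, say, $\eps=n^{-0.1}$ it is polynomial, and $\log(1/\eps)=\O{\log n}$ only tames $\polylog(1/\eps)$ factors, not $\poly(1/\eps)$ ones. Your construction therefore yields $\frac{k^{p-1}}{\eps^2}\cdot\polylog(n)$, not the literal $k^{p-1}\cdot\polylog(n)$ of the statement. This is arguably a defect of the theorem as stated rather than of your proof: the paper itself cites the matching lower bound $\Omega\left(\frac{k^{p-1}}{\eps^2}\right)$ from the same reference, and when the theorem is consumed in \thmref{thm:fp:ar} the cost is recorded as $\frac{k^{p-1}}{\eps^2}\cdot\polylog(n)$. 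You should state the bound honestly with the $1/\eps^2$ rather than claim it disappears.
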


Combining the above subroutines gives a $\eps$-double detector.
\begin{corollary}
\corlab{cor:fp:dd}
There is a $\eps$-double detector for $F_p$-estimation problem which runs in $k^{p-1}\cdot \polylog(n)$ bits of communications.
\end{corollary}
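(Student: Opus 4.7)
The plan is to obtain the $\varepsilon$-double detector as a direct black-box combination of the continuous constant-factor tracker of \thmref{thm:fp:cts:const} with the static $(1+\varepsilon)$-approximation algorithm of \thmref{thm:fp:stat}. In more detail, I would run in the background a single instance of the continuous $1.01$-approximation algorithm from \thmref{thm:fp:cts:const}, maintaining at the coordinator a running estimate $\widetilde{F}$ of $F_p(u)$ at all times. Whenever $\widetilde{F}$ crosses a new threshold of the form $2^a$ (for $a \in \mathbb{N}$), the coordinator declares the detector ``ready to flag'' and invokes the static algorithm of \thmref{thm:fp:stat} on the current distributed vector to produce a $(1+\varepsilon)$-approximation $\widehat{F}$; the detector then flags, returning $\widehat{F}$ as its output.

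To verify the guarantees of the definition of a double detector, I would argue as follows. Since $\widetilde{F}$ is a $1.01$-approximation to the true value $F_p(u)$ with high probability at all times, whenever the true value enters the window $[0.99 \cdot 2^a, 1.01 \cdot 2^a]$ the estimate $\widetilde{F}$ lies in a nearby constant-factor window, so the coordinator can identify a flagging time $t^*$ at which $F_p(u) \in [0.99 \cdot 2^a, 1.01 \cdot 2^a]$. At that flagging time, the invocation of \thmref{thm:fp:stat} with accuracy parameter $\varepsilon$ outputs a $(1+\varepsilon)$-approximation to $F_p$ with high probability, which is precisely what the double detector is required to produce.

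For the communication bound, the continuous $1.01$-tracker uses $k^{p-1}\cdot \polylog(n)$ bits of communication by \thmref{thm:fp:cts:const}. Since the stream length is $\poly(n)$ and the tracker only flags when the value roughly doubles, there are $\O{\log n}$ flags over the entire stream, and each call to the static algorithm of \thmref{thm:fp:stat} costs $k^{p-1}\cdot \polylog(n)$ bits. Summing over all flags yields a total communication cost of $k^{p-1}\cdot \polylog(n)$, and a union bound over the $\O{\log n}$ static invocations preserves the $1 - 1/\poly(n)$ success probability.

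The only subtle point, and the one I would be most careful about, is to confirm that the static algorithm of \thmref{thm:fp:stat} can be invoked \emph{on the fly} at time $t^*$ in the distributed monitoring setting, i.e., that the coordinator can freeze the current vector and run the one-shot protocol on it. This is not really an obstacle because one can just treat the current distributed frequency vectors at each server as the input to the static protocol, and the communication cost of running the static protocol from scratch matches the claim. There is no need for any new technical ingredient beyond combining the two cited theorems in a round-based manner, so the corollary follows immediately.
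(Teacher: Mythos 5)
Your proposal is correct and matches the paper's own proof essentially verbatim: run the continuous $1.01$-tracker of \thmref{thm:fp:cts:const} to detect when $F_p$ crosses a threshold $2^a$, invoke the static protocol of \thmref{thm:fp:stat} at each flag, and charge the $\O{\log n}$ invocations to get the stated communication bound.
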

\begin{proof}
We run the algorithm in \thmref{thm:fp:cts:const} at all times. It flags when $F$ satisfies $0.99 \cdot 2^a \leq F \leq 1.01 \cdot 2^a$. When it flags, we run the algorithm in \thmref{thm:fp:stat} to give a $(1+\eps)$-approximation with probability $1-\frac{1}{\poly(n)}$. Since there are only $\O{\log n}$ rounds, we only need to run the algorithm in \thmref{thm:fp:stat} for $\O{\log n}$ times, which gives the communication cost.
\end{proof}

Given the $\eps$-double detector and our difference estimator, we have an adversarially robust algorithm for $F_p$-estimation:
\begin{theorem}
\thmlab{thm:fp:ar}
Let $S$ be a distributed stream. Let $n$ be the number of items in the universe. Let $k$ denote the total number of servers. Let $\eps \in (0,1)$ denote some accuracy parameter. There is an adversarially robust distributed streaming algorithm which gives a $(1+\eps)$-approximation to the $F_p$-moment of the vector induced by stream $S$. This algorithm uses $\frac{k^{p-1}}{\eps^2} \cdot \polylog(n,\frac{1}{\eps})$ bits of communication with probability $0.99$.
\end{theorem}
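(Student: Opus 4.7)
The plan is to invoke the difference estimator framework from \frameref{frame:diff} as a black box, feeding it the two ingredients it requires: an $\eps$-double detector and a pair of $(\gamma,\eps)$-difference estimators (one continuous, one static). For the double detector, I will cite \corref{cor:fp:dd}, which provides a flagging procedure whose communication cost is $k^{p-1}\cdot\polylog(n)$ without any $\frac{1}{\eps^2}$ overhead, by combining the continuous constant-factor $F_p$-tracker of \thmref{thm:fp:cts:const} with $\O{\log n}$ invocations of the one-shot $(1+\eps)$-approximation of \thmref{thm:fp:stat}. Thus in the language of \frameref{frame:diff}, we have $H_1(n,k,\eps)=0$ and $H_2(n,k,\eps)=k^{p-1}\cdot\polylog(n)$.

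Next, the difference estimator ingredients come from \thmref{thm:fp:de}: the $(\eps,\eps)$-continuous estimator costs $\frac{1}{\eps}\cdot k^{p-1}\cdot\polylog(n)$, and the $(\gamma,\eps)$-static estimator costs $\frac{\gamma}{\eps^2}\cdot k^{p-1}\cdot\polylog(n)$. Matching the notation of the framework, this corresponds to $G_1(n,k,\eps)=k^{p-1}\cdot\polylog(n)$ and $G_2(n,k,\eps)=0$. The correctness guarantees of both ingredients hold with probability $1-\frac{1}{\poly(n)}$, so by a union bound across the $\O{\log n}$ rounds and the $\O{\frac{1}{\eps}\log\frac{1}{\eps}}$ difference estimators invoked per round, all subroutines succeed simultaneously with probability at least $0.99$.

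Substituting these parameters into the total communication bound of \frameref{frame:diff} gives
\[
\O{\log n\cdot k^{p-1}\polylog(n)}+\O{\frac{1}{\eps^2}\log n\log^3\frac{1}{\eps}\cdot k^{p-1}\polylog(n)}+\O{\frac{k}{\eps}\log^2 n\log\frac{1}{\eps}},
\]
where the last term (from broadcasting freeze-times to all $k$ servers) is dominated by the middle one for $p\ge 2$. Collapsing $\polylog$ factors yields $\frac{k^{p-1}}{\eps^2}\cdot\polylog(n,\frac{1}{\eps})$, as claimed. The correctness of the $(1+\eps)$-approximation at all times, together with adversarial robustness, is inherited directly from \frameref{frame:diff}, whose guarantee relies on the fact that each freshly initialized difference estimator uses independent randomness that has not yet been revealed to the adversary.

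There is no real obstacle here, since all the heavy lifting has been done in the earlier technical sections: the framework of \secref{sec:diff:est}, the unbiased $F_p$ estimator of \secref{sec:unbiased:fp:oneshot}, the inverse-probability estimator, and the $F_p$ difference estimator of \thmref{thm:fp:de}. The only care needed is verifying that the flip number assumption $\lambda_{1,m}(F_p)=\O{\log n}$ required by \frameref{frame:diff} holds for insertion-only streams of length $\poly(n)$, which follows immediately because $F_p$ is monotone non-decreasing and bounded by $\poly(n)$, so it can double at most $\O{\log n}$ times.
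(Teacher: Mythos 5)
Your proposal is correct and follows essentially the same route as the paper: it combines the double detector of \corref{cor:fp:dd} with the continuous and static difference estimators of \thmref{thm:fp:de} and plugs them into \frameref{frame:diff}, with your explicit matching of the $G_1,G_2,H_1,H_2$ parameters and the flip-number check being slightly more detailed than the paper's own two-line argument. The only (immaterial) discrepancy is that you quote the double detector's cost as $k^{p-1}\cdot\polylog(n)$ following the literal statement of \corref{cor:fp:dd}, whereas the paper's proof charges it $\frac{k^{p-1}}{\eps^2}\cdot\polylog(n)$; since the final bound is $\frac{k^{p-1}}{\eps^2}\cdot\polylog(n,\frac{1}{\eps})$ either way, this does not affect correctness.
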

\begin{proof}
By \corref{cor:fp:dd}, we have a $\eps$-double detector that communicates $\frac{k^{p-1}}{\eps^2} \cdot \polylog(n)$ bits. By \thmref{thm:fp:de}, 
there is a distributed $\calB_C(u,v, \eps, \eps)$-continuous difference estimator which uses $\frac{1}{\eps} \cdot k^{p-1} \polylog(n)$ bits of communication. Also, there is a distributed $\calB_S(u,v, \gamma, \eps)$-static difference estimator which uses $\frac{\gamma}{\eps^2} \cdot k^{p-1} \cdot \polylog(n)$ bits of communication. 
Therefore, by \frameref{frame:diff}, there is an adversarially robust algorithm that communicates in $\frac{k^{p-1}}{\eps^2} \cdot \polylog(n,\frac{1}{\eps})$ bits.
\end{proof}

\subsection{Adversarially Robust \texorpdfstring{$L_2$}{L2} Heavy-Hitters}
\seclab{sec:l2:hh}
In this section, we describe our $L_2$ heavy-hitter algorithm for the adversarially robust distributed monitoring model. We state the definition for the $L_2$ heavy-hitter problem as follows.

\begin{definition}[$L_2$ heavy-hitter problem]
Given a frequency vector $f\in\mathbb{R}^n$ and an accuracy parameter $\eps\in(0,1)$, the goal is to output a list $\calL$ that contains all $i\in[n]$ such that $f_i\ge\eps\cdot\|f\|_2$, but contains no such $j\in[n]$ with $f_j\le\frac{\eps}{2}\cdot\|f\|_2$. 
Moreover, for each $i\in\calL$, we require an estimate to $f_i$ with additive error $\frac{\eps}{4}\cdot\|f\|_2$.
\end{definition}

We introduce the result from \cite{HuangXZW25}, which solves the $L_2$ heavy-hitter problem in the non-robust distributed setting.
\begin{theorem}
[\cite{HuangXZW25}]
\thmlab{thm:hh:l2}
There is a distributed monitoring algorithm that solves the $L_2$ heavy-hitter problem using $\frac{k}{\eps^2} \cdot \polylog(n)$ bits of communication.
\end{theorem}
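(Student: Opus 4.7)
The plan is to design a $L_2$-heavy-hitter protocol in the distributed monitoring model via the standard reduction from $L_2$ heavy-hitters to $L_2$ heavy-hitter detection on universe-subsampled substreams, combined with the $(\alpha,\eps)$-cover machinery of \cite{HuangXZW25} to obtain the tight $k/\eps^2$ dependence on $k$ and $\eps$.

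First I would apply universe subsampling at $\O{\log n}$ geometrically increasing rates: using public random hash functions $h_\ell:[n]\to\{0,1\}$ with $\PPr{h_\ell(i)=1}=2^{-\ell}$ for $\ell\in[\O{\log n}]$, I obtain $\O{\log n}$ substreams of the distributed input. For any $\eps$-$L_2$ heavy hitter $i^*$ with $(f_{i^*})^2\ge\eps^2\|f\|_2^2$, a standard calculation shows that there is a level $\ell^*$ at which $i^*$ survives and becomes an $\Omega(1)$-$L_2$ heavy hitter on the subsampled stream with constant probability, since the residual $L_2$-mass from the subsampled non-heavy coordinates concentrates around $\O{\eps^2 2^{-\ell^*}\|f\|_2^2}$. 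Running $\O{\log n}$ parallel independent instances per level boosts this to high probability.

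Next, at each level I would detect the $\Omega(1)$-$L_2$ heavy hitters of the subsampled substream using a distributed monitoring analogue of the $\SetPL$-recovery subroutine from \secref{sec:pl:recovery} applied with $p=2$: each server samples its local contributions with the global $L_1$-of-$L_2$ distribution $\min(\frac{(f_i(j))^2}{\sum_{j',i'}(f_{i'}(j'))^2},1)$, and the coordinator uses the received samples, together with the distributed counting subroutine of \thmref{thm:counting}, to identify a polylogarithmic-sized candidate set that contains every coordinate whose global $L_2$-mass is an $\Omega(1)$-fraction of the substream's $L_2$-norm. At the detected level $\ell^*$ the residual $L_2$-mass is $\Theta(\eps^2\|f\|_2^2)$, so a constant-factor approximation of the candidate values by the approximate counters suffices to produce estimates within additive error $\frac{\eps}{4}\|f\|_2$, which also drives the inclusion/exclusion test for $\calL$.

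For the communication analysis, the substream at level $\ell$ carries $L_2$-mass $\Theta(2^{-\ell}\|f\|_2^2)$, so the $L_1$-of-$L_2$ sampling returns $\O{1}$ items per update in expectation, yielding $\O{k}\cdot\polylog(n)$ per level per instance; summing over the $\O{\log n}$ levels, the $\O{\log n}$ parallel instances for boosting, and the $\O{1/\eps^2}$ multiplicative overhead from tracking each candidate with sufficient accuracy to distinguish $\eps$-heavy from $\eps/2$-heavy gives the claimed $\frac{k}{\eps^2}\polylog(n)$ total communication. The main obstacle I anticipate is the tight dependence on $\eps$: a naive per-level accuracy would accumulate logarithmic factors, and this is where I would invoke the $(\alpha,\eps)$-cover framework of \cite{HuangXZW25} to decouple the error across levels and avoid paying $1/\eps^2$ independently at each subsampling rate, shaving the overhead to a single $\polylog(n)$ factor.
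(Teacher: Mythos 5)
First, note that the paper does not prove \thmref{thm:hh:l2} at all: it is imported verbatim from \cite{HuangXZW25} and used as a black box in \secref{sec:l2:hh}, so there is no internal proof to compare your route against; your proposal has to stand on its own as a reconstruction of the cited result.

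As written, it has a genuine gap at its core probabilistic step. You claim that ``there is a level $\ell^*$ at which $i^*$ survives and becomes an $\Omega(1)$-$L_2$ heavy hitter on the subsampled stream with constant probability.'' For the residual mass of the survivors to drop to $\O{(f_{i^*})^2}$ you need $2^{-\ell^*}\|f\|_2^2 \lesssim \eps^2\|f\|_2^2$, i.e., $2^{-\ell^*}=\O{\eps^2}$; but then $i^*$ itself survives the Bernoulli subsampling only with probability $2^{-\ell^*}=\O{\eps^2}$, not with constant probability. The two events you need (the heavy hitter survives; the residual mass is small) are in tension, and the ``standard calculation'' only gives the conditional statement: \emph{conditioned} on $i^*$ surviving, it is $\Omega(1)$-heavy with constant probability. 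Consequently your claim that $\O{\log n}$ parallel instances per level suffice for high-probability recovery is false --- you would need $\Omega(\eps^{-2}\log n)$ instances at the critical level --- and the subsequent communication accounting, which attributes the $1/\eps^2$ factor instead to the accuracy of tracking candidates, is not justified. (Tracking a candidate that is already an $\Omega(1)$-heavy hitter of a substream of mass $\Theta(\eps^2\|f\|_2^2)$ to within a constant relative factor already gives additive error $\O{\eps\|f\|_2}$; the true source of the $1/\eps^2$ is either the $\Theta(\eps^{-2}\log n)$ repetitions above, or, in the cleaner fix, hashing into $\Theta(1/\eps^2)$ buckets CountSketch-style so that each heavy hitter deterministically lands in some bucket and collides with only $\O{\eps^2\|f\|_2^2}$ of expected squared mass.) Either fix plausibly recovers the $\frac{k}{\eps^2}\polylog(n)$ bound, but neither is the argument you wrote. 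Finally, the closing appeal to the $(\alpha,\eps)$-cover framework of \cite{HuangXZW25} to ``decouple the error across levels'' is a black-box invocation that does not address the survival-probability issue, and you also do not address the continuous-monitoring aspect (the critical level for a coordinate drifts as $\|f^{(t)}\|_2$ grows), which is precisely the part of the problem the cited work's machinery is designed to handle.
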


Next, we introduce an algorithm in the adversarially robust setting.
Recall that in our difference estimator framework, we divide the algorithm into $\O{\log n}$ rounds, where in each round, $F_2$ doubles. 
Within each round, we further decompose the stream into difference estimators with different granularities. 
Consider a round $a$, suppose that there is some coordinate $i$ that is a $\eps$-heavy-hitter at the end of the round, i.e., $\left(f_i\right)^2 \geq \eps^2 F_2(1, t_{a+1})$. 
Then, it must be either roughly $\eps^2$-heavy at the beginning of the round or roughly $2^r \eps^2$-heavy with respect to a duration $(t_{a,r-1}, t_{a,r})$ within this round, in which $F_2$ increases by roughly $2^r \eps F_2(1, t_{a+1})$. 
Therefore, we define our difference estimator at granularity $r$ in the $L_2$-heavy-hitter problem as the implementation of the distributed monitoring $L_2$-heavy-hitter algorithm with accuracy $2^r \cdot \eps^2$.
By applying the analysis from our difference estimator framework, we establish the correctness of our adversarially robust $L_2$-heavy-hitter algorithm.

Now, we analyze our adversarially robust algorithm for the distributed monitoring setting.
\begin{theorem}
\thmlab{thm:hh:l2:ar}
Let $S$ be a distributed stream, $n$ be the number of items in the universe, $k$ denote the total number of servers, and $\eps \in (0,1)$ denote some accuracy parameter. 
There is an adversarially robust distributed streaming algorithm that solves the $L_2$ heavy-hitter problem on $S$, which succeeds with high probability and uses $\frac{k}{\eps^2} \cdot \polylog(n,\frac{1}{\eps})$ bits of communication.
\end{theorem}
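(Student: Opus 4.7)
\begin{proofof}{\thmref{thm:hh:l2:ar}}
The plan is to instantiate the difference estimator framework of \frameref{frame:diff} for the $L_2$ heavy-hitter problem, using the non-robust distributed algorithm of \thmref{thm:hh:l2} as the core subroutine at every granularity. For the double detector, I would track a $1.01$-approximation to $F_2$ using \thmref{thm:fp:cts:const} for $p=2$, flagging whenever the estimate crosses a power of $2$, and refining to a $(1+\eps)$-approximation at each flag via \thmref{thm:fp:stat}; this uses $\frac{k}{\eps^2}\polylog(n)$ bits total across all $\O{\log n}$ rounds.

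Within a fixed round $a$, let $f^{(a)}$ be the vector at the start of the round and $f$ the current vector, so that $\|f\|_2^2 \le 2\|f^{(a)}\|_2^2$. Any coordinate $i$ with $f_i \ge \eps \|f\|_2$ must satisfy one of the following: either $f^{(a)}_i \ge \frac{\eps}{2} \|f^{(a)}\|_2$, in which case it is already identified by running the non-robust heavy-hitter algorithm at the start of the round with accuracy $\O{\eps}$ on the prefix; or $i$ must accumulate enough mass in some sub-interval $(t_{a,r-1}, t_{a,r})$ during which $F_2$ grows by roughly a $2^{-r}$-fraction of $\|f^{(a)}\|_2^2$. In the latter case, writing $v$ for the substream vector, we have $\|v\|_2^2 \approx 2^{-r} \|f^{(a)}\|_2^2$, and the increment $v_i$ responsible for $i$ becoming heavy must satisfy $v_i \ge \O{2^{r/2} \eps} \cdot \|v\|_2$. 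Hence the granularity-$r$ difference estimator is simply a fresh instance of the non-robust $L_2$ heavy-hitter algorithm of \thmref{thm:hh:l2} run with accuracy parameter $\Theta(2^{r/2}\eps)$ on the substream; it returns candidate increments $\widehat{v_i}$, which the coordinator stitches onto the frozen prefix estimate to yield an additive $\frac{\eps}{4}\|f\|_2$ approximation to each $f_i$ via the telescoping decomposition from \frameref{frame:diff}.

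The communication accounting is then clean: each granularity-$r$ sub-instance costs $\frac{k}{2^r\eps^2}\polylog(n)$ bits, and by the analog of \lemref{lem:geo:bounds} applied to the $L_2$ mass (using the double detector's guarantee that $F_2$ at most doubles in a round), there are at most $\O{2^r}$ sub-instances at granularity $r$ per round. Summing over $r \in [\O{\log \frac{1}{\eps}}]$ granularities and $\O{\log n}$ rounds gives $\frac{k}{\eps^2}\polylog(n, \frac{1}{\eps})$ total communication. Correctness of the output list follows from a union bound over all sub-instances, each of which succeeds with high probability, combined with the additive-error composition from the framework: an item flagged by the final list is genuinely $\frac{\eps}{2}\|f\|_2$-heavy (no false positives) and every $\eps\|f\|_2$-heavy item is reported. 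Adversarial robustness follows exactly as in the framework analysis: each sub-instance uses fresh randomness that is only exposed to the adversary once that block is frozen and stitched into the running output, so the adaptive updates received by an active sub-instance remain independent of its internal coins.

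The main obstacle is arguing the second bullet above carefully, namely that every truly $\eps$-heavy hitter at the query time is captured by at least one block at some granularity, despite the fact that heavy-hitters is a set problem rather than a scalar-valued function like $F_p$. The subtle point is that $f_i$ can grow gradually across many sub-intervals without being heavy in any single one; one handles this by observing that if $i$ fails to be $\O{2^{r/2}\eps}$-heavy in every granularity-$r$ block for all $r$, then $\sum_r v_i^{(r)}$ is too small to bring $f_i$ up to $\eps\|f\|_2$ given the geometric budget on how much $F_2$ can grow per granularity, which is the set-valued analog of the additive error telescoping in \lemref{lem:correct:frame}. Once this is pinned down, the remainder of the proof reduces to substituting the $G_1, G_2, H_1, H_2$ functions of \frameref{frame:diff} with the corresponding costs of \thmref{thm:hh:l2} and \thmref{thm:fp:stat}.
\end{proofof}
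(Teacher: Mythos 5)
Your proposal is correct and follows essentially the same route as the paper: instantiate the difference estimator framework with the non-robust $L_2$ heavy-hitter algorithm of \thmref{thm:hh:l2} run at each granularity with accuracy rescaled by $\Theta(2^{r/2})$ to match the $2^{-r}$-fraction of $F_2$ mass in that sub-interval, argue by pigeonhole over the $\O{\log\frac{1}{\eps}}$ pieces of the telescoping decomposition that every $\eps$-heavy coordinate is heavy in some piece, and charge $\frac{k}{\eps^2}\polylog(n,\frac{1}{\eps})$ communication per granularity level with robustness from fresh, unrevealed randomness per block. The only cosmetic difference is how the frequency estimates are produced (you stitch the per-block increments, the paper tracks a reported coordinate with \textsc{Counting} after its first detection), which does not change the argument.
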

\begin{proof}
Recall that in \algref{alg:frame}, we have a counter $a$ to count the number of rounds. We have a parameter $r \in [\beta]$, where $\beta = \O{\log \frac{1}{\eps}}$, denoting the granularity of the difference estimators. Suppose there exists $i\in[n]$ such that $\left(f_i\right)^2 \geq \eps^2 F_2(1, t_{a+1})$. Then, the coordinate $i$ must either be $\frac{\eps^2}{32^2}$-heavy with respect to $F_2\left(1, t_a\right)$ or $\frac{2^r \cdot \eps^2}{32^2 \beta^2}$-heavy with respect to $F_2\left(t_{a, r-1}, t_{a, r}\right)$ for some integer $r \in[\beta]$, where we use the convention $t_{a, 0}=t_a$. Hence, running the algorithm in \thmref{thm:hh:l2} with accuracy $\frac{2^r \cdot \eps^2}{32^2 \beta^2}$ with respect to $F_2\left(t_{a, r-1}, t_{a, r}\right)$ detects that $f_i$ is a heavy-hitter, and it uses $\frac{k\beta^2}{2^r\cdot \eps^2}$ bits of communications. Since this is the same accuracy parameter as the difference estimator $\mathcal{B}_{a, r}$, by the same analysis as in \algref{alg:frame}, the communication cost follows.

Additionally, note that if a coordinate $i$ is reported as heavy in an algorithm corresponding to the difference estimator at granularity $r$, then $\mathcal{O}\left(\eps^2\right) \cdot F_2$ of the contribution of coordinate $i$ towards the overall $F_2$ still appears after $i$ is reported as a heavy-hitter. Thus, keeping track of the frequency of $i$ after it is reported using the \textsc{Counting} algorithm, we obtain an estimate of the frequency of $i$ up to an additive $\mathcal{O}(\eps) \cdot F_2$, which solves the $L_2$-heavy-hitter problem. 
\end{proof}

\section*{Acknowledgments}
Honghao Lin is supported in part by a Simons Investigator Award and Office of Naval Research award number N000142112647.  David P. Woodruff is supported in part by a Simons Investigator Award, Office of Naval Research award number N000142112647, and NSF CCF-2335412. 
Shenghao Xie is supported in part by NSF CCF-2335411. 
Samson Zhou is supported in part by NSF CCF-2335411. 
Samson Zhou gratefully acknowledges funding provided by the Oak Ridge Associated Universities (ORAU) Ralph E. Powe Junior Faculty Enhancement Award. 
The work was conducted in part while David P. Woodruff and Samson Zhou were visiting the Simons Institute for the Theory of Computing as part of the Sublinear Algorithms program.

\def\shortbib{0}
\bibliographystyle{alpha}
\bibliography{references}

\end{document}